\def\qqed{\hfill $\blacksquare$}
\newcommand{\argmin}{\operatornamewithlimits{argmin}}
\newcommand{\ave}[1]{\ensuremath{\langle#1\rangle} }
\newcommand{\TEST}{{\sc Testing Quadratic M${}_2$-Representability}}
\newcommand{\DECOMP}{{\sc Decomposition}}
\newcommand{\LAM}{{\sc Laminarization}}
\DeclareMathOperator{\dom}{dom}
\newcommand{\R}{\mathbf{R}}
\newcommand{\ER}{\overline{\mathbf{R}}}
\newcommand{\Z}{\mathbf{Z}}
\newcommand{\DF}{U_{\mathcal{A}}}
\newcommand{\HP}{U_{n,r}}
\newtheorem{thm}{\bfseries Theorem}[section]
\newtheorem{thm1}{\bfseries Theorem}
\newtheorem{lem}[thm]{\bfseries Lemma} 
\newtheorem{prop}[thm]{\bfseries Proposition} 
\newtheorem{cor}[thm]{\bfseries Corollary}
\newtheorem{cl}[thm1]{\bfseries Claim}
\newtheorem*{cl*}{\bfseries Claim}
\theoremstyle{definition}
\newtheorem{remark}[thm]{\bfseries Remark} 
\newtheorem{exmp}[thm]{\bfseries Example}
\crefname{thm}{Theorem}{Theorems}
\crefname{prop}{Proposition}{Propositions}
\crefname{lem}{Lemma}{Lemmas}
\crefname{exmp}{Example}{Examples}
\crefname{cor}{Corollary}{Corollarys}
\crefname{cl}{Claim}{Claims}
\crefname{remark}{Remark}{Remarks}
\crefname{section}{Section}{Sections}
\renewcommand*{\@fnsymbol}[1]{\ensuremath{\ifcase#1\or *\or 1\or 2\or
		3\or 4\or \dag\or \ddag \else\@ctrerr\fi}}
\begin{document}
\title{A tractable class of binary VCSPs via M-convex intersection\thanks{A preliminary version of this paper~\cite{STACS/HIMZ18} has appeared in the proceedings of the 35th International Symposium on Theoretical Aspects of Computer Science (STACS 2018).
The work was done while Yuni Iwamasa was at the University of Tokyo.}}
\author{Hiroshi Hirai\thanks{Department of Mathematical Informatics,
		Graduate School of Information Science and Technology,
		University of Tokyo, Tokyo, 113-8656, Japan.
		Email: \texttt{hirai@mist.i.u-tokyo.ac.jp}}
	\and Yuni Iwamasa\thanks{
		National Institute of Informatics, Tokyo, 101-8430, Japan.
		Email: \texttt{yuni\_iwamasa@nii.ac.jp}}
	\and Kazuo Murota\thanks{Department of Business Administration,
		Tokyo Metropolitan University, Tokyo, 192-0397, Japan.
		Email: \texttt{murota@tmu.ac.jp}} \and Stanislav \v{Z}ivn\'y\thanks{Department of Computer Science,
		University of Oxford, Oxford, OX1 3QD, United Kingdom.
		Email: \texttt{standa.zivny@cs.ox.ac.uk}}}
\maketitle

\begin{abstract}
	A binary VCSP is a general framework for the minimization problem of a function represented as the sum of unary and binary cost functions.
	An important line of VCSP research is to investigate what functions can be solved in polynomial time.
	Cooper and \v{Z}ivn\'{y} classified the tractability of binary VCSP instances according to the concept of ``triangle,''
	and showed that the only interesting tractable case is the one induced by the joint winner property (JWP).
	Recently, Iwamasa, Murota, and \v{Z}ivn\'{y} made a link between VCSP and discrete convex analysis, showing that a function satisfying the JWP can be transformed into a function represented as the sum of two quadratic M-convex functions, which can be minimized in polynomial time via an M-convex intersection algorithm if the value oracle of each M-convex function is given.
	
	In this paper,
	we give an algorithmic answer to a natural question: What binary finite-valued CSP instances can be represented as the sum of two quadratic M-convex functions and can be solved in polynomial time via an M-convex intersection algorithm?
	We solve this problem by devising a polynomial-time algorithm for obtaining a concrete form of the representation in the representable case.
	Our result presents a larger tractable class of binary finite-valued CSPs, which properly contains the JWP class.
\end{abstract}
\begin{quote}
	{\bf Keywords: }
	valued constraint satisfaction problems, discrete convex analysis, M-convexity
\end{quote}

\section{Introduction}\label{sec:intro}
The {\it valued constraint satisfaction problem (VCSP)} provides a general framework for discrete optimization (see \cite{book/Zivny12} for details).
Informally, the VCSP framework deals with the minimization problem of a function represented as the sum of ``small'' arity functions,
which are called {\it cost functions}.
It is known that various kinds of combinatorial optimization problems can be formulated in the VCSP framework.
In general, the VCSP is NP-hard.
An important line of research is to investigate what restrictions on classes of VCSP instances ensure polynomial time solvability.
Two main types of VCSPs with restrictions are {\it structure-based VCSPs} and {\it language-based VCSPs} (see e.g.,~\cite{book/KrokhinZivny17}).
Structure-based VCSPs deal with restrictions on graph structures representing the appearance of variables in a given instance.
For example,
it is known (e.g.,~\cite{book/BerteleBrioschi72}) that if the graph (named the {\it Gaifman graph}) corresponding to a VCSP instance has a bounded treewidth
then the instance can be solved in polynomial time.
Language-based VCSPs deal with restrictions on cost functions that appear in a VCSP instance.
Kolmogorov, Thapper, and \v{Z}ivn\'{y}~\cite{SICOMP/KTZ15} gave a precise characterization of tractable valued constraint languages via the basic LP relaxation.
Kolmogorov, Krokhin, and Rol\'{i}nek~\cite{SICOMP/KKR17} gave a dichotomy for all language-based VCSPs
(see also~\cite{FOCS/B17,FOCS/Z17} for a dichotomy for all language-based CSPs).

{\it Hybrid VCSPs},
which deal with a combination of structure-based and language-based restrictions,
have emerged recently~\cite{incollection/CZ17}.
Among many kinds of hybrid restrictions,
a {\it binary VCSP}, VCSP with only unary and binary cost functions, is a representative hybrid restriction that includes numerous fundamental optimization problems.
Cooper and \v{Z}ivn\'{y}~\cite{AI/CZ11} showed that
if a given binary VCSP instance satisfies the {\it joint winner property (JWP)},
then it can be minimized in polynomial time.
The same authors classified in~\cite{JAIR/CZ12} the tractability of binary VCSP instances according to the concept of ``triangle,''
and showed that the only interesting tractable case is the one induced by the JWP (see also~\cite{incollection/CZ17}).
Furthermore,
they introduced {\it cross-free convexity} as a generalization of JWP,
and devised a polynomial-time minimization algorithm for cross-free convex instances $F$
when a ``cross-free representation'' of $F$ is given; see related works below for details.

In this paper,
we introduce a novel tractability principle going beyond triangle and cross-free representation for binary finite-valued CSPs,
denoted from now on as binary VCSPs.
A binary VCSP is formulated as follows,
where $D_1, D_2, \dots, D_r$ ($r \geq 2$) are finite sets.
\begin{description}
	\item[Given:] Unary cost functions $F_p : D_p \rightarrow \R$ for $p \in \{1,2, \dots, r \}$ and binary cost functions
	$F_{pq} : D_p \times D_q \rightarrow \R$
	for $1 \leq p < q \leq r$.
	\item[Problem:] Find a minimizer of
	$F : D_1 \times D_2 \times \cdots \times D_r \rightarrow \R$
	defined by
	\begin{align}\label{eq:binary VCSP}
	F(X_1,X_2,\dots,X_r) := \sum_{1 \leq p \leq r} F_p(X_p) + \sum_{1\leq p < q \leq r}F_{pq}(X_p, X_q).
	\end{align}
\end{description}
Our tractability principle is built on {\it discrete convex analysis (DCA)}~\cite{MP/M98, book/Murota03},
which is a theory of convex functions on discrete structures.
In DCA, {\it L-convexity} and {\it M-convexity} play primary roles;
the former is a generalization of submodularity, and the latter is a generalization of matroids.
A variety of polynomially solvable problems in discrete optimization can be understood within the framework of L-convexity/M-convexity (see e.g.,~\cite{book/Murota03, incollection/M09,  JMID/M16}).
Recently, it has also turned out that discrete convexity is deeply linked to tractable classes of VCSPs.
L-convexity is closely related to the tractability of language-based VCSPs.
Various kinds of submodularity induce tractable classes of language-based VCSP instances~\cite{SICOMP/KTZ15},
and a larger class of such submodularity can be understood as L-convexity on certain graph structures~\cite{MPSA/H16}; see also~\cite{JORSJ/H18}.
On the other hand, Iwamasa, Murota, and \v{Z}ivn\'{y}~\cite{DO/IMZ18} have pointed out that M-convexity plays a role in hybrid VCSPs.
They revealed the reason for the tractability of a VCSP instance satisfying the JWP from a viewpoint of M-convexity.
We here continue this line of research,
and explore further applications of M-convexity in hybrid VCSPs.

A function $f : \{0,1\}^n \rightarrow \R \cup \{ +\infty\}$ is called {\it M-convex}~\cite{AM/M96,book/Murota03} if it satisfies the following generalization of the matroid exchange axiom: for $x = (x_1, x_2, \dots, x_n)$ and $y = (y_1, y_2, \dots, y_n)$ with $f(x) < +\infty$ and $f(y) < +\infty$,
and $i \in \{ 1,2,\dots,n \}$ with $x_i > y_i$, there exists $j \in \{ 1,2,\dots,n \}$ with $y_j > x_j$ such that
\begin{align*}
f(x) + f(y) \geq f(x - \chi_i + \chi_j) + f(y + \chi_i - \chi_j),
\end{align*}
where $\chi_i$ is the $i$th unit vector.
Although M-convex functions are defined on $\mathbf{Z}^n$ in general,
we only need functions on $\{0,1\}^n$ here.
M-convex functions on $\{0,1\}^n$ are equivalent to the negative of {\it valuated matroids} introduced by Dress and Wenzel~\cite{AML/DW90, AM/DW92}.
An M-convex function can be minimized in a greedy fashion similarly to the greedy algorithm for matroids.
Furthermore, a function $f : \{0,1\}^n \rightarrow \R \cup \{+\infty\}$ that is representable
as the sum of two M-convex functions is called {\it M${}_2$-convex}.
In particular,
$f$ is called {\it quadratically representable M${}_2$-convex (QR-M${}_2$-convex)}
if $f$ is representable as the sum of two quadratic M-convex functions.
As a generalization of matroid intersection,
the problem of minimizing an M${}_2$-convex function,
called the {\it M-convex intersection problem},
can also be solved in polynomial time if the value oracle of each constituent M-convex function is given~\cite{SIDMA/M96_1, SIDMA/M96_2}; see also~\cite[Section~5.2]{book/Murota00}.
Our proposed tractable class of VCSPs is based on this result.

Let us return to binary VCSPs.
The starting observation for relating VCSP to DCA is that
the objective function $F$ on $D_1 \times D_2 \times \cdots \times D_r$ can be regarded as a function $f$ on $\{0, 1\}^n$ with $n := \sum_{1 \leq p \leq r} |D_p|$ by the following correspondence between the domains:
\begin{align}\label{eq:D to {0,1}}
D_p := \{1,2,\dots, n_p \} \ni i\  \longleftrightarrow\ (\underbrace{0, \dots, 0,\overset{i}{\check{1}},0, \dots, 0}_{n_p}).
\end{align}
With this correspondence,
the minimization of $F$ can be transformed to that of $f$.
A binary VCSP instance $F$ is said to be {\it quadratic M${}_2$-representable}
if the function $f$ obtained from $F$ via the correspondence~\eqref{eq:D to {0,1}}
is QR-M${}_2$-convex.

It is shown in~\cite{DO/IMZ18} that a binary VCSP instance satisfying the JWP can be transformed to a quadratic M${}_2$-representable instance,\footnote[6]{In~\cite{DO/IMZ18},
	a binary VCSP instance satisfying the JWP was transformed into the sum of two quadratic M${}^\natural$-convex functions.
	It can be easily seen that this function can also be transformed into the sum of two quadratic M-convex functions.}
and two M-convex summands can be obtained in polynomial time.
Here the following natural question arises:
{\it What binary VCSP instances are quadratic M${}_2$-representable?}
In this paper, we give an algorithmic answer to this question by considering the following problem:
\begin{description}
	\item[\underline\TEST]
	\item[Given:] A binary VCSP instance $F$.
	\item[Problem:] Determine whether $F$ is quadratic M${}_2$-representable or not.
	If $F$ is quadratic M${}_2$-representable,
	obtain a decomposition $f = f_1 + f_2$ of the function $f$ into two quadratic M-convex functions $f_1$ and $f_2$,
	where $f$ is the function transformed from $F$ via~\eqref{eq:D to {0,1}}.
\end{description}

Our main result is the following:
\begin{thm}\label{thm:main}
	\TEST\ can be solved in $O(n^4)$ time.
\end{thm}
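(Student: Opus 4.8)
The plan is to reduce \TEST\ to a purely combinatorial problem about the off-diagonal coefficients of $f$, solve that problem in two phases — which I will call \DECOMP\ and \LAM\ — and bound the cost of each phase. Passing from $F$ to $f$ through~\eqref{eq:D to {0,1}}, the effective domain of $f$ is the base family $B$ of the partition matroid ``choose exactly one element of each $D_p$'', so $f$ is supported on the vectors $x$ with $x(D_p)=1$ for every $p$. On $B$ the value of $f$ is governed only by the binary costs, which I record as symmetric coefficients $b_{ij}=F_{pq}(i,j)$ for $i\in D_p$, $j\in D_q$, $p\neq q$; the within-block coefficients and the unary costs are gauge freedoms, since modifying a within-block entry, or adding a linear potential $\sum_i c_i x_i$, changes neither the minimizers nor M-convexity. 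I would first use this freedom to normalize $f$, and then search for two quadratic M-convex functions $f_1,f_2$ with effective domains $\dom f_1,\dom f_2\supseteq B$ and $\dom f_1\cap \dom f_2=B$ whose coefficient matrices sum to that of the normalized $f$.

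The enabling structural ingredient is a workable characterization of when a single quadratic function is M-convex. I would establish (or invoke) that M-convexity of a quadratic function is equivalent to a family of local exchange inequalities that, for a fixed effective domain, reduce to Monge-type quadrangle inequalities together with ``minimum-attained-twice'' (ultrametric/tropical) relations among bounded subsets of $\bigcup_p D_p$; equivalently, the coefficient pattern is representable by a laminar family of subsets of $\bigcup_p D_p$ carrying nonnegative weights. The \LAM\ phase is the constructive side of this equivalence: from a coefficient pattern already known to satisfy the conditions it builds an explicit laminar family, hence an explicit quadratic M-convex summand together with its effective domain. This phase is a hierarchical-clustering-type computation and is \emph{not} where the difficulty lies.

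The heart of the proof is the \DECOMP\ phase: deciding whether the normalized coefficient matrix splits as $b=b^{1}+b^{2}$ with each $b^{k}$ satisfying the single-function conditions, and producing such a split. The structural claim I would aim for is that the split is essentially \emph{canonical} — the portion that must be routed to the second summand at each pair and triple of blocks is forced by the precise extent to which $f$ itself fails the single-function inequalities — so that a greedy assignment either returns a valid pair $(b^1,b^2)$ or exposes a local obstruction on a bounded set of elements. The hard part will be proving both directions against this canonical construction: that an exposed obstruction genuinely precludes \emph{every} representation of $f$ as a sum of two quadratic M-convex functions — including representations that exploit the freedom in the two effective domains and the gauge freedom above — and, conversely, that in the absence of any obstruction the greedy output is simultaneously laminar-representable in both summands. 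Controlling the interaction across all triples of blocks at once, rather than pair by pair, is the delicate step.

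Finally I would account for the running time. Every condition to be tested — the quadrangle inequalities on pairs of blocks and the ultrametric relations on triples — ranges over quadruples of elements of $\bigcup_p D_p$, of which there are $O(n^4)$, and each is checked in constant time; the greedy \DECOMP\ touches each such quadruple a bounded number of times, and the two \LAM\ computations cost $O(n^2)$. Thus the whole pipeline — normalize, \DECOMP, \LAM, and verify — runs in $O(n^4)$ time and outputs an explicit decomposition $f=f_1+f_2$ into quadratic M-convex functions exactly when one exists, which is the assertion of \Cref{thm:main}.
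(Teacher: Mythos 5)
Your high-level skeleton --- normalize away the $\mathcal{A}$-linear gauge freedom, characterize a single quadratic M-convex summand by anti-ultrametric/laminar structure, and then split the coefficient matrix in two phases --- matches the paper's architecture, and your \DECOMP\ phase is broadly in the spirit of the paper's Algorithms~1--3 (pairwise Farris-type normalization followed by a greedy merge across blocks). But there is a genuine gap, and it sits exactly where you declare there is no difficulty. You describe \LAM\ as ``the constructive side'' of the ultrametric-to-laminar equivalence, a ``hierarchical-clustering-type computation'' that ``is not where the difficulty lies.'' In the paper \LAM\ is nothing of the sort: the representation $f \simeq \sum_X c(X)\ell_X$ is unique only up to the $\mathcal{A}$-equivalence generated by $\ell_X \simeq \ell_{X \cup A_p}$ (for $A_p$ disjoint from $X$) and $\ell_X \simeq \ell_{[n]\setminus X}$, so the family $\mathcal{F}$ produced by any canonical or greedy decomposition is in general \emph{not} laminar even when $f$ is QR-M${}_2$-convex (compare $\mathcal{F} = \{1357, 135, 24, 37\}$ with $\mathcal{L} = \{1357, 135, 13, 48\}$ in the introduction). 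Deciding whether such an $\mathcal{F}$ can be transformed into a laminar family by those moves is a nontrivial combinatorial decision problem in its own right; the paper devotes all of \cref{sec:(ii)} to it, via an LC-labeling found by 2-SAT (2-local cross-freeness), a structural analysis of special components and flowers to upgrade to 3-local cross-freeness, and finally uncrossing. That is also where the $O(n^4)$ bound actually comes from (the LC-graph has $O(n^2)$ nodes and $O(n^4)$ edges); your $O(n^4)$ accounting via constant-time checks over quadruples of elements is for a different, unsubstantiated reason.

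Relatedly, your assertion that the split $b = b^1 + b^2$ is ``essentially canonical'' is true only modulo the same equivalence (this is the uniqueness clause of \cref{thm:representation}), and establishing it requires first showing that one may assume the infinity patterns of the two summands partition $\mathcal{A}$ (the Claim inside the proof of \cref{prop:Type I + Type III}); you flag the freedom in the two effective domains as ``the delicate step'' but offer no mechanism to resolve it. So your proposal is a reasonable outline of the first half of the argument, but it omits the idea that actually carries the theorem: isolating laminarizability up to $\mathcal{A}$-equivalence as a separate combinatorial problem and solving it in polynomial time.
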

An M${}_2$-convex function $f$ can be minimized in $O(nr^3 + nr \log n)$ time if such a decomposition is given
(the time complexity can be easily derived from a minimization algorithm for M${}_2$-convex functions in~\cite{SIDMA/M96_2}).
Thus we obtain the following corollary of \cref{thm:main}.
\begin{cor}
	A quadratic M${}_2$-representable binary VCSP instance can be minimized in $O(n^4)$ time.
\end{cor}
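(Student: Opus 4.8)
The plan is to combine \cref{thm:main} with the M-convex intersection minimization result cited just above the corollary; the only real work is a complexity accounting, since the mathematical substance already resides in those two ingredients. Given a quadratic M${}_2$-representable binary VCSP instance $F$, I would first run the algorithm of \cref{thm:main} to produce, in $O(n^4)$ time, an explicit decomposition $f = f_1 + f_2$ of the transformed function $f$ on $\{0,1\}^n$ into two quadratic M-convex functions. Because \cref{thm:main} simultaneously certifies representability and outputs this decomposition, the first step both confirms that $F$ lies in our class and supplies exactly the data the second step consumes.

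Next I would feed $f_1$ and $f_2$ to the M-convex intersection algorithm of~\cite{SIDMA/M96_2} to compute a minimizer of $f = f_1 + f_2$. The value oracles this algorithm requires are available at negligible cost: each summand is given in explicit quadratic form, so a single evaluation takes only polynomially many arithmetic operations, already subsumed in the stated running time $O(nr^3 + nr\log n)$. A minimizer of $f$ then yields a minimizer of the original $F$ through the correspondence~\eqref{eq:D to {0,1}}, since the effective domain of $f_1 + f_2$ consists precisely of the $0/1$ vectors encoding admissible configurations (one coordinate equal to $1$ in each block $D_p$).

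For the overall bound I would invoke $r \le n$, which holds because $n = \sum_{1 \le p \le r}|D_p| \ge r$. Then $nr^3 \le n^4$ and $nr\log n \le n^2\log n = O(n^4)$, so the minimization phase costs $O(n^4)$, matching the decomposition phase; summing the two gives the claimed $O(n^4)$ total. I do not anticipate any genuine obstacle here — the only point needing the slightest care is this arithmetic absorbing the $r$-dependence into a bound in $n$ alone.
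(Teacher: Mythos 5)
Your proposal is correct and matches the paper's own argument: the paper likewise combines \cref{thm:main} with the $O(nr^3 + nr\log n)$ M-convex intersection bound derived from~\cite{SIDMA/M96_2}, with the same implicit absorption of the $r$-dependence via $r \leq n$. Your write-up merely makes explicit the arithmetic and the oracle-cost and domain-correspondence points that the paper leaves unstated.
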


\paragraph{Overview.}
We outline our approach to \TEST\
via taking a small concrete example of a quadratic M${}_2$-representable binary VCSP instance.
	Suppose that $D_1 = D_2 = D_3 = D_4 = \{0, 1\}$.
	Unary cost functions $F_1, F_2, F_3, F_4$ and binary cost functions $F_{pq}$ ($1 \leq p < q \leq 4$)
	are given by
	\begin{align}\label{eq:F instance}
	\begin{split}
	F_1 :=
	\left[
	\begin{array}{c}
	1\\
	0
	\end{array}
	\right],\quad
	F_2 :=
	\left[
	\begin{array}{c}
	0\\
	1
	\end{array}
	\right],\quad
	F_3 :=
	\left[
	\begin{array}{c}
	1\\
	0
	\end{array}
	\right],
\quad
	F_4 :=
	\left[
	\begin{array}{c}
	0\\
	1
	\end{array}
	\right],\\
	F_{12} :=
	\left[
	\begin{array}{cc}
	3 & 0 \\
	1 & 4
	\end{array}
	\right],
	\qquad
	F_{13} :=
	\left[
	\begin{array}{cc}
	2 & 0 \\
	1 & 3
	\end{array}
	\right],
	\qquad
	F_{14} :=
	\left[
	\begin{array}{cc}
	1 & 0 \\
	0 & 1
	\end{array}
	\right],\\
	F_{23} :=
	\left[
	\begin{array}{cc}
	2 & 0 \\
	0 & 2
	\end{array}
	\right],
	\qquad
	F_{24} :=
	\left[
	\begin{array}{cc}
	3 & 0 \\
	1 & 1
	\end{array}
	\right],
	\qquad
	F_{34} :=
	\left[
	\begin{array}{cc}
	2 & 0 \\
	0 & 0
	\end{array}
	\right],
	\end{split}
	\end{align}
	where $F_{pq}$ is regarded as a $2 \times 2$ matrix with the $(i,j)$-component $F_{pq}(i-1, j-1)$ for $1 \leq i,j \leq 2$,
	and
	$F_p$ is also regarded as a two-dimensional vector in a similar way.
	Based on the correspondence~\eqref{eq:D to {0,1}},
	the function $f$ on $\{0,1\}^8$ is constructed as follows
	(this construction will be introduced formally in \cref{subsec:representation}):
	\begin{align}\label{eq:f instance}
	f(x) :=
	\frac{1}{2} x^\top
	\left[
	\begin{array}{cc|cc|cc|cc}
	0 & \infty & \multicolumn{2}{c|}{\multirow{2}{*}{$F_{12}$}} & \multicolumn{2}{c|}{\multirow{2}{*}{$F_{13}$}} & \multicolumn{2}{c}{\multirow{2}{*}{$F_{14}$}} \\
	\infty & 0 & \multicolumn{2}{c|}{} & \multicolumn{2}{c|}{} & \multicolumn{2}{c}{} \\ \hline
	 \multicolumn{2}{c|}{\multirow{2}{*}{$F_{12}^\top$}} & 0 & \infty & \multicolumn{2}{c|}{\multirow{2}{*}{$F_{23}$}} & \multicolumn{2}{c}{\multirow{2}{*}{$F_{24}$}} \\
	 \multicolumn{2}{c|}{} & \infty & 0 & \multicolumn{2}{c|}{} & \multicolumn{2}{c}{} \\ \hline
	 \multicolumn{2}{c|}{\multirow{2}{*}{$F_{13}^\top$}} & \multicolumn{2}{c|}{\multirow{2}{*}{$F_{23}^\top$}} & 0 & \infty  & \multicolumn{2}{c}{\multirow{2}{*}{$F_{34}$}} \\
	 \multicolumn{2}{c|}{} & \multicolumn{2}{c|}{} & \infty & 0 & \multicolumn{2}{c}{} \\ \hline
	 \multicolumn{2}{c|}{\multirow{2}{*}{$F_{14}^\top$}} & \multicolumn{2}{c|}{\multirow{2}{*}{$F_{24}^\top$}} & \multicolumn{2}{c|}{\multirow{2}{*}{$F_{34}^\top$}} & 0 & \infty \\
	 \multicolumn{2}{c|}{} & \multicolumn{2}{c|}{} & \multicolumn{2}{c|}{} & \infty & 0
	\end{array}
	\right] x
	+
	\left[
	\begin{array}{c}
	\multirow{2}{*}{$F_1$} \\ \\ \hline
	\multirow{2}{*}{$F_2$} \\ \\ \hline
	\multirow{2}{*}{$F_3$} \\ \\ \hline
	\multirow{2}{*}{$F_4$} \\ \\
	\end{array}
	\right]^\top x
	\end{align}
	for $x \in \{0,1\}^8$ with $\sum_{1 \leq i \leq 8} x_i = 4$
	and $f(x) := +\infty$ for other $x$.
	Recall that $F$ is quadratic M${}_2$-representable if and only if $f$ is QR-M${}_2$-convex
	and that $F$ is efficiently minimizable if and only if $f$ is.
	
	Our algorithm constructs the following two M-convex summands $f_1$ and $f_2$ of $f$:
	\begin{align}
	f_1(x) :=
	\frac{1}{2}
	\left[
	\begin{array}{c}
	x_1\\
	x_3\\
	x_5\\
	x_7\\
	x_4\\
	x_8\\
	x_2\\
	x_6
	\end{array}
	\right]^\top
	\left[
	\begin{array}{cccccccc}
	6 & \multicolumn{1}{c|}{6} & \multicolumn{1}{c|}{4} & \multicolumn{1}{c|}{2} & \multicolumn{4}{c}{\multirow{4}{*}{{\huge $0$}}} \\
	6 & \multicolumn{1}{c|}{6} & \multicolumn{1}{c|}{4} & \multicolumn{1}{c|}{2} & \multicolumn{4}{c}{} \\ \cline{1-2}
	4 & 4 & \multicolumn{1}{c|}{4} & \multicolumn{1}{c|}{2} & \multicolumn{4}{c}{} \\ \cline{1-3}
	2 & 2 & 2 & \multicolumn{1}{c|}{2} & \multicolumn{4}{c}{} \\ \cline{1-4} \cline{5-6}
	\multicolumn{4}{c}{\multirow{4}{*}{{\huge $0$}}} & \multicolumn{1}{|c}{2} & \multicolumn{1}{c|}{2} & & \\
	\multicolumn{4}{c}{} & \multicolumn{1}{|c}{2} & \multicolumn{1}{c|}{2} & & \\ \cline{5-6}
	\multicolumn{4}{c}{} & & & 0 &  \\
	\multicolumn{4}{c}{} & & & & 0
	\end{array}
	\right] \left[
	\begin{array}{c}
	x_1\\
	x_3\\
	x_5\\
	x_7\\
	x_4\\
	x_8\\
	x_2\\
	x_6
	\end{array}
	\right]\label{eq:f_1 instance}
	\end{align}
	and
	\begin{align}
	f_2(x) :=
	\frac{1}{2} \left[
	\begin{array}{c}
	x_1\\
	x_2\\
	x_3\\
	x_4\\
	x_5\\
	x_6\\
	x_7\\
	x_8
	\end{array}
	\right]^\top
	\left[
	\begin{array}{cccccccc}
	0 & \multicolumn{1}{c|}{\infty} & & & \multicolumn{4}{c}{\multirow{4}{*}{{\huge $0$}}} \\
	\infty & \multicolumn{1}{c|}{0} & & & \multicolumn{4}{c}{} \\ \cline{1-2} \cline{3-4}
	& & \multicolumn{1}{|c}{0} & \multicolumn{1}{c|}{\infty} & \multicolumn{4}{c}{} \\
	& & \multicolumn{1}{|c}{\infty} & \multicolumn{1}{c|}{0} & \multicolumn{4}{c}{} \\ \cline{3-4} \cline{5-6}
	\multicolumn{4}{c}{\multirow{4}{*}{{\huge $0$}}} & \multicolumn{1}{|c}{0} & \multicolumn{1}{c|}{\infty} & & \\
	\multicolumn{4}{c}{} & \multicolumn{1}{|c}{\infty} & \multicolumn{1}{c|}{0} & & \\ \cline{5-6} \cline{7-8}
	\multicolumn{4}{c}{} & & & \multicolumn{1}{|c}{0} & \multicolumn{1}{c}{\infty}  \\
	\multicolumn{4}{c}{} & & & \multicolumn{1}{|c}{\infty} & \multicolumn{1}{c}{0}
	\end{array}
	\right] \left[
	\begin{array}{c}
	x_1\\
	x_2\\
	x_3\\
	x_4\\
	x_5\\
	x_6\\
	x_7\\
	x_8
	\end{array}
	\right]
	+
	\left[
	\begin{array}{c}
	-3\\
	7\\
	-8\\
	-1\\
	-3\\
	2\\
	0\\
	1
	\end{array}
	\right]^\top \left[
	\begin{array}{c}
	x_1\\
	x_2\\
	x_3\\
	x_4\\
	x_5\\
	x_6\\
	x_7\\
	x_8
	\end{array}
	\right]\label{eq:f_2 instance}
\end{align}
for $x \in \{0,1\}^8$ with $\sum_{1 \leq i \leq 8} x_i = 4$,
and $f_1(x) := +\infty$ and $f_2(x) := +\infty$ for other $x$.
The first function $f_1$ in~\eqref{eq:f_1 instance} is a {\it laminar convex function}~\cite[Section~6.3]{book/Murota03},
which is a typical example of M-convex functions.
Indeed, by using a laminar family $\mathcal{L} = \{ \{1,3,5,7\}, \{1, 3, 5\}, \{1, 3\}, \{4, 8\} \}$,
$f_1$ is written as
\begin{align}\label{eq:f_1 quadratic}
f_1(x) = \sum_{X \in \mathcal{L}} \left( \sum_{i \in X} x_i \right)^2.
\end{align}
The second function $f_2$ in~\eqref{eq:f_2 instance} is nothing but a linear function on the base family of the partition matroid with partition $\{ \{1,2\}, \{3,4\}, \{5, 6\}, \{7, 8\} \}$,
and hence $f_2$ is also M-convex.

We establish a representation theorem (\cref{thm:representation}), which says that QR-M${}_2$-convex functions arising from binary VCSP instances always admit the above type of the decomposition.
For a set $X \subseteq \{1,2,\dots,n\}$,
let $\overline{\ell}_X$ be the quadratic function defined on $\{0,1\}^n$ by
\begin{align}\label{eq:overline l}
\overline{\ell}_X(x) := \left(\sum_{i \in X} x_i\right)^2.
\end{align}
The theorem states that
a function $f$ arising from a binary VCSP instance is QR-M${}_2$-convex
if and only if
$f$ is
a laminar convex function restricted to the base family of the partition matroid
with partition $\mathcal{A}$ of $\{1,2,\dots,n\}$, i.e.,
\begin{align*}
f = \sum_{X \in \mathcal{L}}c_X \overline{\ell}_X + h + \delta_\mathcal{A},
\end{align*}
where $\mathcal{L}$ is a laminar family,
$c_X$ is a positive weight on $X \in \mathcal{L}$,
$h$ is a linear function,
and
$\delta_\mathcal{A}$ is the $\{0,+\infty\}$-function taking $0$ on the bases
and $+\infty$ on the non-bases.

The main difficulty in solving \TEST\ is that a representation of quadratic functions
on the base family of the partition matroid is not unique.
Indeed,
we see that
the coefficients in~\eqref{eq:f instance}
do not equal the sum of coefficients in~\eqref{eq:f_1 instance} and~\eqref{eq:f_2 instance}.
In particular,
$\overline{\ell}_X$ satisfies the following relations:
\begin{align}
\overline{\ell}_X + \delta_\mathcal{A} &= \overline{\ell}_{X \cup A_p} + h + \delta_\mathcal{A} \qquad \textrm{if $A_p \in \mathcal{A}$ and $A_p \cap X = \emptyset$},\label{eq:X + A_p}\\
\overline{\ell}_X + \delta_\mathcal{A} &= \overline{\ell}_{\{1,2,\dots,n\} \setminus X} + h' + \delta_\mathcal{A},\label{eq:n - X}
\end{align}
where $h$ and $h'$ are linear functions.
This means that
$f$ can be QR-M${}_2$-convex
even if $f$ is written as
\begin{align}\label{eq:f intro}
f = \sum_{X \in \mathcal{F}}c_X \overline{\ell}_X + h + \delta_\mathcal{A}
\end{align}
for a non-laminar family $\mathcal{F}$.
Based on this consideration,
we divide \TEST\ into two subproblems named
\DECOMP\ and \LAM.

\DECOMP\ is the problem of obtaining a representation~\eqref{eq:f intro} of a given QR-M${}_2$-convex function $f$
for some family $\mathcal{F}$ not necessarily laminar but {\it laminarizable}
by repeating the following transformations corresponding to~\eqref{eq:X + A_p} and~\eqref{eq:n - X}:
\begin{align}
&X \mapsto X \cup A_p \textrm{ or } X \setminus A_p\label{eq:set X + A_p}\\
&X \mapsto \{1,2,\dots,n\} \setminus X\label{eq:set n - X}
\end{align}
for $X \in \mathcal{F}$,
where $A_p \cap X = \emptyset$ or $A_p \subseteq X$.
We present a polynomial-time algorithm for \DECOMP\ in \cref{sec:(i)}.
\LAM\ is the problem of constructing a laminar family $\mathcal{L}$ from the family $\mathcal{F}$ obtained in \DECOMP\
by repeating the transformations~\eqref{eq:set X + A_p} and~\eqref{eq:set n - X}.
\LAM\ can be seen as a purely combinatorial problem for a set system.
We present a polynomial-time algorithm for \LAM\ in \cref{sec:(ii)}.

\begin{figure}
	\begin{center}
		\includegraphics[width=12cm]{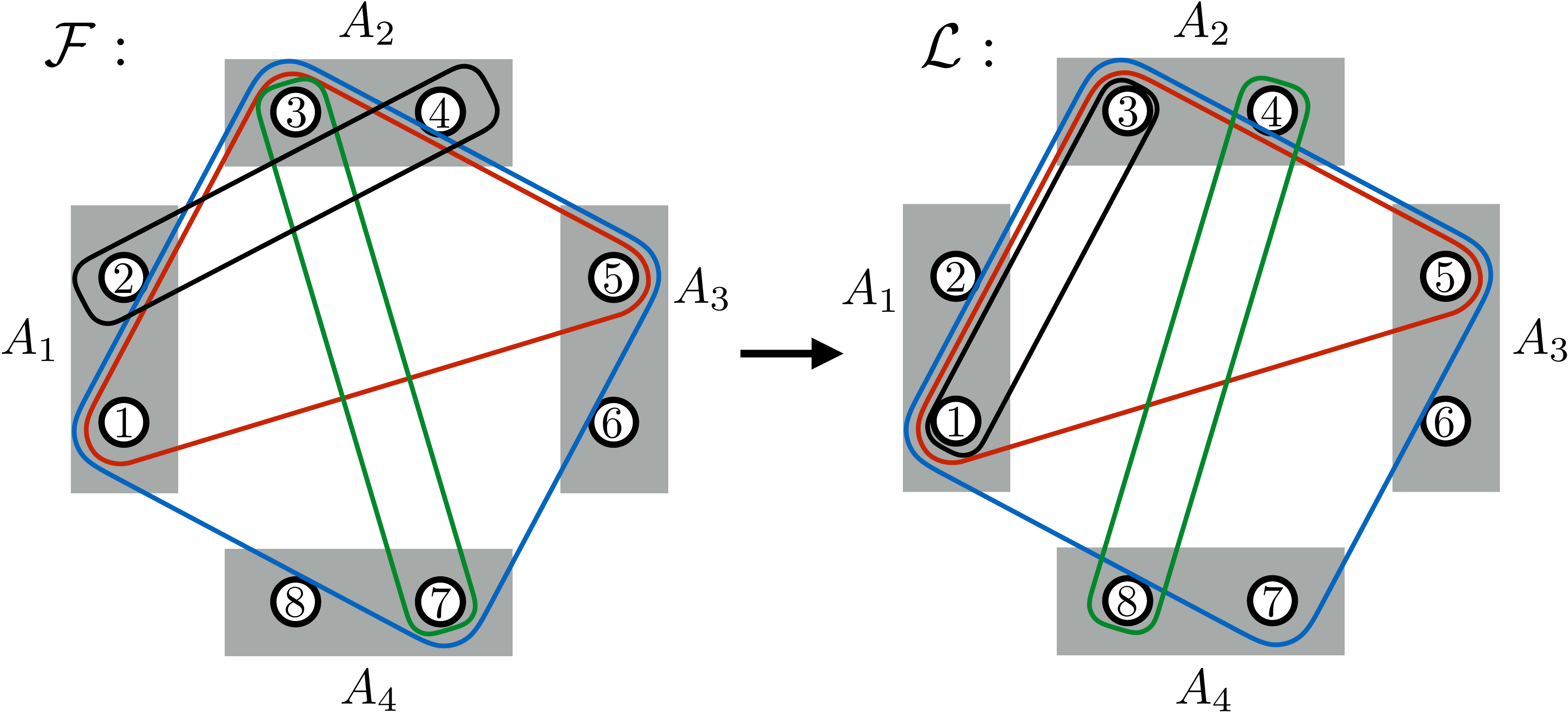}
		\caption{The left figure illustrates the input $\mathcal{F}$ of \LAM\ and the right figure illustrates an output laminar family $\mathcal{L}$, where black nodes indicate elements of $\{1,2,\dots,8\}$,
			gray rectangles indicate members in $\mathcal{A}$, and solid curves indicate four sets in $\mathcal{F}$ and $\mathcal{L}$, respectively.}
		\label{fig:Laminarization}
	\end{center}
\end{figure}

If we apply our \DECOMP\  algorithm to $f$ in~\eqref{eq:f instance},
we obtain a representation~\eqref{eq:f intro}
for a partition $\mathcal{A} := \{ \{1,2\}, \{3,4\}, \{5,6\}, \{7,8\} \}$ of $\{1,2,\dots, 8\}$
and a non-laminar family $\mathcal{F} := \{ \{1, 3, 5, 7\}, \{1, 3, 5\}, \{2, 4\}, \{3, 7\} \}$.
Then, by solving \LAM\ for $\mathcal{F}$,
we obtain a laminar family $\mathcal{L} := \{\{1,3,5,7\}, \{1, 3, 5\}, \{1, 3\}, \{4, 8\} \}$.
Indeed, we can transform $\{2,4\}$ to $\{1,3\}$ by repeating transformations~\eqref{eq:set X + A_p} and~\eqref{eq:set n - X}
since $\{1,3\} = \left(\{ 1,2,\dots,8 \} \setminus \{ 2,4 \}\right) \setminus \{ 5,6,7,8 \}$.
See Figure~\ref{fig:Laminarization}.
Thus we can verify the QR-M${}_2$-convexity of $f$
by constructing two M-convex summands of $f$.

\paragraph{Application to quadratic pseudo-Boolean function minimization.}
Pseudo-Boolean function minimization is a fundamental and well-studied problem in theoretical computer science (see e.g.,~\cite{DAM/BH02,book/CramaHammer01}).
Our result provides a new tractable class of quadratic pseudo-Boolean functions minimization.
Consider a pseudo-Boolean function $F : \{0,1\}^n \rightarrow \R$ represented as
\begin{align*}
F(x_1, x_2, \dots, x_n) = \sum_{1 \leq i < j \leq n} a_{ij} x_i x_j + \sum_{1 \leq i \leq n} a_i x_i.
\end{align*}
Then $F$ is {\it lifted} to $f : \{0,1\}^{2n} \rightarrow \R \cup \{+\infty\}$ defined by the following:
For $x \in \{0,1\}^{2n}$ with $\sum_{1 \leq i \leq 2n} x_i = n$,
\begin{align*}
f(x_1, \dots, x_n, x_{n+1}, \dots, x_{2n}) := \sum_{1 \leq i < j \leq n} a_{ij} x_i x_j + \sum_{1 \leq i \leq n}\infty \cdot x_i x_{n+i} + \sum_{1 \leq i \leq n} a_i x_i,
\end{align*}
and for other $x$, $f(x) := +\infty$.
Then $F(x_1, \dots, x_n) = f(x_1, \dots, x_n, 1 - x_1, \dots, 1 -x_n)$ for any $x \in \{0,1\}^n$.
Hence minimizing $F$ is equivalent to minimizing $f$.

We can regard $f$ as a function arising from the binary VCSP instance $F$ with the partition $\mathcal{A} := \{A_1, A_2, \dots, A_n\}$ of $\{1,2,\dots,2n\}$ given by $A_i = \{ i, n+i \}$ for $i =1,2,\dots,n$.
Therefore, if $f$ is QR-M${}_2$-convex,
then we can obtain two M-convex functions $f_1$ and $f_2$ satisfying $f = f_1 + f_2$ by our proposed algorithm,
and we can minimize $f$ (and hence $F$) in polynomial time.

To the best of our knowledge,
our new tractable class is incomparable with the existing ones,
and we are not aware of any nontrivial known tractable class contained in ours.
Tractable classes of (exactly minimizable) pseudo-Boolean functions introduced in~\cite{DAM/BH02,book/CramaHammer01} are related to
(i) bounded treewidth, (ii) submodularity, or (iii) a switching reduction
(which flips the values of a subset of the variables) to~(ii).
These tractable classes are incomparable with ours.
The minimum weight perfect bipartite matching problem constitutes another tractable class of quadratic pseudo-Boolean function minimization.
Although this problem can be formulated as a matroid intersection problem for two partition matroids,
it is outside our class since $a_{ij}$ take only finite values in our model.

\paragraph{Related works.}
\begin{itemize}
	\item
	Cooper and \v{Z}ivn\'{y}~\cite{AI/CZ11} introduced the {\it joint winner property (JWP)} for binary VCSP instances as a sufficient condition for tractability.
	A binary VCSP instance $F$ of the form~\eqref{eq:binary VCSP} is said to satisfy the {\it JWP}
	if
	\begin{align*}
	F_{ij}(a,b) \geq \min\{F_{ik}(a,c), F_{jk}(b, c)\}
	\end{align*}
	for all distinct $i,j,k \in [r]$ and all $a \in D_i, b \in D_j, c \in D_k$.
	It is shown in~\cite{AI/CZ11} that if $F$ satisfies the JWP,
	then $F$ can be transformed, in polynomial time, into a function $F'$ satisfying the JWP, $\argmin F' \subseteq \argmin F$, and
	the additional special property named the {\it Z-freeness},
	and that
	Z-free instances can be minimized in polynomial time.
	Thus, if $F$ satisfies the JWP,
	then $F$ can be minimized in polynomial time.
	Furthermore, Iwamasa, Murota, and \v{Z}ivn\'{y}~\cite{DO/IMZ18} revealed that Z-free instances are quadratic M${}_2$-representable.
	
	The tractability based on quadratic M${}_2$-representability depends solely on the function values,
	and is independent of how the function $F$ is given.
	Indeed, a quadratic M${}_2$-representable instance $F$ can be characterized 
	by the existence of a Z-free instance $F'$ that satisfies $F'(X) = F(X)$ for all $X$.
	This stands in sharp contrast with the tractability based on the JWP,
	which depends heavily on the representation of $F$.
	For example, let 
	$F(X) = \sum F_p(X_p) + \sum F_{pq}(X_p, X_q)$ 
	be a binary VCSP instance satisfying the JWP.
	By choosing a pair of distinct $p,q \in \{1,2,\dots,r\}$,
	$d \in D_p$, and $\alpha \in \R$ arbitrarily,
	replace $F_p(d)$ and $F_{pq}(d, X_q)$ by $F_p(d) + \alpha$ and $F_{pq}(d, X_q) - \alpha$, respectively.
	Then $F$ does not change but violates the JWP in general.
	Although the binary VCSP instance $F$ in~\eqref{eq:F instance} does not satisfy the JWP
	by $F_{12}(1,1) = 4$, $F_{13}(1,0) = 1$, and $F_{23}(1,0) = 0$,
	$F$ is quadratic M${}_2$-representable.
	Thus our result can explore such hidden M${}_2$-convexity.
	
	\item
	Cooper and \v{Z}ivn\'{y}~\cite{JAIR/CZ12} introduced a generalization of JWP, named {\it cross-free convexity}, for not necessarily binary VCSP instances.
	A VCSP instance $F : D_1 \times D_2 \times \cdots \times D_r \rightarrow \R \cup \{+\infty\}$ is said to be {\it cross-free convex}
	if the function $f : \{0,1\}^n \rightarrow \R \cup \{+\infty\}$ obtained from $F$ via correspondence~\eqref{eq:D to {0,1}} can be represented as
	\begin{align}\label{eq:cross-free convex}
	f(x) = \sum_{X \in \mathcal{F}} g_X\left(\sum_{i \in X} x_i\right),
	\end{align}
	where $\mathcal{F} \subseteq 2^{\{1,2,\dots,n\}}$ is {\it cross-free} and, for each $X \in \mathcal{F}$, $g_X$ is a univariate function on $\Z$ satisfying $g_X(m-1) + g_X(m+1) \geq 2g_X(m)$ for all $m \in \Z$.
	Here the equality~\eqref{eq:cross-free convex} is required for every
	$x \in \{0,1\}^n$ that corresponds to some $X \in D_1 \times D_2 \times \cdots \times D_r$ via~\eqref{eq:D to {0,1}}
	and $f(x) = +\infty$ for other $x$.
	A pair $X, Y \subseteq \{1,2,\dots,n\}$ is said to be {\it crossing}
	if $X \cap Y$, $\{1,2,\dots,n\} \setminus (X \cup Y)$, $X \setminus Y$, and $Y \setminus X$ are all nonempty,
	and a family $\mathcal{F} \subseteq 2^{\{1,2,\dots,n\}}$ is said to be {\it cross-free}
	if there is no crossing pair in $\mathcal{F}$.
	
	Cross-free convexity is a special class of M${}_2$-representability,
	where a VCSP instance $F$ is {\it M${}_2$-representable}
	if the function $f$ obtained from $F$ via correspondence~\eqref{eq:D to {0,1}} is M${}_2$-convex.
	Indeed,
	it follows from a similar argument to the M-convexity of laminar convex functions that $f$ in~\eqref{eq:cross-free convex} is M${}_2$-convex.
	Hence, a cross-free convex instance $F$ is M${}_2$-representable.
	
	Our result provides, for binary finite-valued CSPs, a polynomial-time minimization algorithm for special cross-free convex instances (quadratic M${}_2$-representable instances) even when the expression~\eqref{eq:cross-free convex} is not given.
	
	\item
	Our representation theorem (\cref{thm:representation}) is inspired by the {\it polyhedral split decomposition} due to Hirai~\cite{DCG/H06}.
	This general decomposition principle decomposes, by means of polyhedral geometry, a function on a finite set $\mathcal{D}$ of points of $\R^n$
	into a sum of simpler functions, called {\it split functions}, and a residue term.
	This aspect can be explained for our function $f$ in~\eqref{eq:f instance} roughly as follows.
	The expression $\sum_{X \in \mathcal{L} } c_X\overline{\ell}_X + f_2$ of $f$ can be viewed as the polyhedral split decomposition of $f$,
	where $\mathcal{D}$ is equal to the effective domain of $f$,
	$c_X\overline{\ell}_X$ on $\mathcal{D}$ is a sum of split functions and a linear function (cf.~\eqref{eq:l_X def}) for each $X \in \mathcal{L}$,
	and $f_2$ defined by~\eqref{eq:f_2 instance} is a residue term.
	
	\item
	Interestingly,
	\LAM\ appears in a different problem in computational biology.
	A {\it phylogenetic tree} is a graphical representation of an evolutionary history in a set of taxa in which the leaves correspond to taxa and the non-leaves correspond to speciations.
	One of the important problems in phylogenetic analysis is to assemble  
	a global phylogenetic tree
	from smaller pieces of phylogenetic trees, particularly, {\it quartet trees}.
	{\sc Quartet Compatibility} is to decide whether there is a phylogenetic tree inducing a given collection of quartet trees,
	and to construct such a phylogenetic tree if it exists.
	It is known~\cite{JC/S92} that {\sc Quartet Compatibility} is NP-hard.
	
	As a subsequent work to the present paper, Hirai and Iwamasa~\cite{ISAAC/HI18} have introduced two novel classes of quartet systems,
	named {\it complete multipartitite quartet systems} and {\it full multipartite quartet systems},
	and showed that {\sc Quartet Compatibility} for these quartet systems can be solved in polynomial time.
	In their algorithms,
	the algorithm proposed in this paper for \LAM\ is utilized for the polynomial-time solvability.
\end{itemize}

\paragraph{Notation.}
Let $\mathbf{Z}$, $\R$, $\R_+$, and $\R_{++}$ denote the sets of integers, reals, nonnegative reals, and positive reals, respectively.
In this paper, functions can take the infinite value $+ \infty$, where $a < + \infty$, $a + \infty = + \infty$ for $a \in \R$, and $0 \cdot (+\infty) = 0$.
Let $\ER := \R \cup \{+ \infty\}$.
For a function $f : \{0,1\}^n \rightarrow \overline{\R}$,
the effective domain is denoted as $\dom{f} := \{ x \in \{0,1\}^n \mid f(x) < + \infty \}$.
For a positive integer $k$, we define $[k] := \{1,2,\dots,k\}$.
We often abbreviate a set $\{i_1, i_2, \dots, i_k\}$ as $i_1 i_2 \cdots i_k$.
For $f: \{0,1\}^n \rightarrow \ER$ and $U \subseteq \{0,1\}^n$,
the function $f$ on $U$ means the ``restriction'' of $f$ obtained from $f$ by
redefining $f(x)$ as $+\infty$ for each $x \not\in U$.

\section{Representation of QR-M${}_2$-convex functions}\label{sec:toward TM2C}
For a partition $\mathcal{A} := \{A_1, A_2, \dots, A_r \}$ of $[n]$,
let $\delta_{\mathcal{A}} : \{0,1\}^n \rightarrow \ER$ be the indicator function of the base family of a partition matroid with partition $\mathcal{A}$,
that is,
$\delta_\mathcal{A}(x) := 0$ if $\sum_{i \in A_p} x_i = 1$ for each $p \in [r]$
and $\delta_\mathcal{A}(x) := +\infty$ otherwise.
Let $\DF$ be the set of characteristic vectors of the bases of a partition matroid with partition $\mathcal{A}$,
i.e., $\DF := \{ x \in \{0,1\}^n \mid \sum_{i \in A_p} x_i = 1 \ (p \in [r]) \} = \dom{\delta_\mathcal{A}}$.
Let $\HP$ be the set of characteristic vectors of the bases of the uniform matroid on $[n]$ of rank $r$,
i.e., $\HP := \{ x \in \{0,1\}^n \mid \sum_{i \in [n]} x_i = r \}$.
Note that $\DF \subsetneq \HP$ for $r \geq 2$.

\subsection{Representation theorem}\label{subsec:representation}
We introduce a class of quadratic functions on $\{0, 1\}^n$ that has a bijective correspondence to binary VCSP instances.
Let $\mathcal{A} := \{ A_1, A_2, \dots, A_r \}$ be a partition of $[n]$ with $|A_p| \geq 2$ for $p \in [r]$.
We say that $f : \{0, 1\}^n \rightarrow \ER$ is a {\it VCSP-quadratic function of type $\mathcal{A}$}
if $f$ is represented as
\begin{align}\label{eq:f}
f(x) :=
\begin{cases}
\displaystyle
\sum_{1 \leq i < j \leq n} a_{ij} x_i x_j + \sum_{1 \leq i \leq n} a_i x_i & \textrm{if $x \in \HP$},\\
+\infty & \textrm{otherwise}
\end{cases}
\end{align}
for some $a_i \in \R$ and $a_{ij} \in \ER$ such that $a_{ij} = +\infty$ for $i,j \in A_p$ ($p \in [r]$) and $a_{ij} < +\infty$ for $i \in A_p$ and $j \in A_q$ ($p,q \in [r]$, $p \neq q$).
We assume $a_{ij} = a_{ji}$ for distinct $i,j \in [n]$;
see~\eqref{eq:f instance}.

Suppose that a binary VCSP instance $F$ of the form~\eqref{eq:binary VCSP} is given,
where we assume $F_{pq} = F_{qp}$ for distinct $p,q \in [r]$.
The transformation of $F$ to $f$ based on \eqref{eq:D to {0,1}}
in Section~\ref{sec:intro} is formalized as follows.
Choose a partition $\mathcal{A} := \{A_1, A_2, \dots, A_r\}$ of $[n]$
with $|A_p| = n_p (= |D_p|)$ and identify $A_p$ with $D_p$.
Define
\begin{align*}
a_i &:= F_p(i) \qquad (i \in [n]),\\
a_{ij} &:=
\begin{cases}
F_{pq}(i, j) & (i \in A_p \textrm{ and } j \in A_q \textrm{ for some distinct }p,q \in [r]),\\
+\infty & (i,j \in A_p \textrm{ for some } p \in [r]).
\end{cases}
\end{align*}
Then the function $f$ in~\eqref{eq:f}
is a VCSP-quadratic function of type $\mathcal{A}$.

We introduce two functions that will serve as the M-convex summands of an M${}_2$-convex VCSP-quadratic function of type $\mathcal{A}$.
A function $h : \{0,1\}^n \rightarrow \ER$ is said to be {\it $\mathcal{A}$-linear}
if $h$ is a linear function on $\DF$,
that is,
if $h$ can be represented as $h(x) = \delta_\mathcal{A}(x) + \sum_{1 \leq i \leq n} u_i x_i + \gamma$
for some linear coefficient $(u_i)_{i \in [n]}$ and constant $\gamma \in \R$.
We use such $h$ as an M-convex summand.
As the other function, for technical reasons,
we use the following $\ell_X$ instead of $\overline{\ell}_X$ in~\eqref{eq:overline l};
the difference of
$\overline{\ell}_X$ and $2 \ell_X$ is linear.
For $X \subseteq [n]$, let $\ell_X : \{0,1\}^n \rightarrow \R$ be defined by
\begin{align}\label{eq:l_X def}
\ell_X(x) := \sum_{i,j \in X, i<j} x_ix_j. 
\end{align}

The following lemma guarantees the M-convexity of the two functions (like $f_1$ in~\eqref{eq:f_1 instance} and $f_2$ in~\eqref{eq:f_2 instance}) obtained in our algorithm.
Here a family $\mathcal{F} \subseteq 2^{[n]}$ is said to be {\it laminar}
if $X \subseteq Y$, $X \supseteq Y$, or $X \cap Y = \emptyset$ holds for all $X, Y \in \mathcal{F}$.
\begin{lem}\label{lem:M2 representation}
	\begin{description}
		\item[{\rm (1)}]
		An $\mathcal{A}$-linear function is M-convex.
		\item[{\rm (2)}]
		For any laminar family $\mathcal{L}$ and positive weight $c$ on $\mathcal{L}$, the function $\sum_{X \in \mathcal{L}} c(X)\ell_X$ on $\HP$ is M-convex.
	\end{description}
\end{lem}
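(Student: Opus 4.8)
The plan is to check the M-convex exchange axiom directly for (1), and to reduce (2) to the known M-convexity of laminar convex functions after peeling off a linear term.

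For (1), write $h(x) = \delta_\mathcal{A}(x) + \sum_i u_i x_i + \gamma$ and take $x,y \in \DF$ together with an index $i$ satisfying $x_i > y_i$, so $x_i = 1$ and $y_i = 0$. Let $A_p$ be the block of $\mathcal{A}$ containing $i$. Because $x,y \in \DF$ each place exactly one $1$ inside $A_p$ — at $i$ for $x$ and at some $j \in A_p \setminus \{i\}$ for $y$ — we get $y_j > x_j$, and both $x - \chi_i + \chi_j$ and $y + \chi_i - \chi_j$ again lie in $\DF$ (only the chosen element of block $A_p$ moves, all other blocks being untouched). Hence the four relevant values of $\delta_\mathcal{A}$ vanish, and since $(x-\chi_i+\chi_j)+(y+\chi_i-\chi_j) = x+y$ the linear-plus-constant part contributes identically to both sides, so the exchange holds with equality. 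Equivalently, $\delta_\mathcal{A}$ is the indicator of the base family of a partition matroid, hence M-convex, and adding a linear function preserves M-convexity.

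For (2), I would first remove the mismatch between $\ell_X$ and $\overline{\ell}_X$. On $\{0,1\}^n$ one has $x_i^2 = x_i$, so $\overline{\ell}_X(x) = \sum_{i\in X} x_i + 2\ell_X(x)$ and therefore
\[
\sum_{X\in\mathcal{L}} c(X)\ell_X = \frac12\sum_{X\in\mathcal{L}} c(X)\overline{\ell}_X - \frac12\sum_{X\in\mathcal{L}} c(X)\left(\sum_{i\in X} x_i\right),
\]
the last term being linear. As in (1), adding a linear function preserves M-convexity, so it suffices to prove that $\frac12\sum_{X\in\mathcal{L}} c(X)\overline{\ell}_X$ on $\HP$ is M-convex. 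Here $\overline{\ell}_X(x) = g\!\left(\sum_{i\in X} x_i\right)$ with $g(t) := t^2$ convex, so this is a laminar convex function associated with the laminar family $\mathcal{L}$.

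To realize it as a genuine laminar convex function on $\Z^n$ with the correct effective domain, I would enlarge $\mathcal{L}$ to the still-laminar family $\mathcal{L} \cup \{\{i\} : i\in[n]\} \cup \{[n]\}$, attaching to each singleton $\{i\}$ the box-indicator ($0$ on $\{0,1\}$ and $+\infty$ otherwise) and to $[n]$ the point-indicator at $r$; both are univariate convex and together confine the domain to $\HP = \{0,1\}^n \cap \{x : \sum_i x_i = r\}$. The result is a laminar convex function on $\Z^n$, hence M${}^\natural$-convex~\cite[Section~6.3]{book/Murota03}, and adding back the linear term keeps it M${}^\natural$-convex. I expect the only real subtlety to be the final passage to M-convexity: since the effective domain lies on the hyperplane $\{x : \sum_i x_i = r\}$, the ``drop'' move ($j=0$) in the M${}^\natural$-exchange always leaves this hyperplane and returns $+\infty$, so every exchange must be realized by a genuine $j$ with $y_j > x_j$, which is exactly the M-convex exchange axiom. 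In other words, M-convexity — whose effective domain must sit on a constant-sum hyperplane — is recovered precisely because restricting to $\HP$ forces each exchange to be a true swap rather than a drop; keeping the augmented family laminar while confining the domain to $\HP$ is the other point requiring a little care.
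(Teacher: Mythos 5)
Your proof is correct. Part (1) matches the paper's argument (indicator of the partition-matroid base family plus a linear function), with the exchange axiom checked explicitly rather than merely invoked. Part (2), however, takes a genuinely different route. The paper observes that the quadratic coefficients $a_{ij} = \sum\{c(L) \mid L \in \mathcal{L},\ i,j \in L\}$ satisfy the anti-tree-metric inequality $a_{ij}+a_{kl} \geq \min\{a_{ik}+a_{jl},\ a_{il}+a_{jk}\}$ (they are in fact anti-ultrametric) and then invokes the characterization of quadratic M-convex functions on $\HP$ from \cite{DAM/I18} (\cref{lem:Type}, case (I)); note that this case applies because all coefficients are finite, so $m(G^\infty) = n \geq 2r \geq r+2$. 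You instead peel off the linear discrepancy between $\ell_X$ and $\tfrac12\overline{\ell}_X$, realize $\tfrac12\sum_X c(X)\overline{\ell}_X$ on $\HP$ as a laminar convex function by augmenting $\mathcal{L}$ with the singletons (box indicators) and $[n]$ (point indicator at $r$), cite the classical fact that laminar convex functions are M${}^\natural$-convex, and recover M-convexity from the constant-sum constraint, which correctly forces every M${}^\natural$-exchange to be a genuine swap. Both arguments are sound; yours leans on standard DCA machinery and avoids any coefficient computation, while the paper's stays within the quadratic-coefficient framework and reuses a lemma (\cref{lem:Type}) that it needs elsewhere anyway. Your route also makes transparent why the paper can describe $f_1$ in its running example as a laminar convex function.
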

\begin{proof}
	(1).
	An $\mathcal{A}$-linear function $h$ can be viewed as a linear function on the base family of a partition matroid with partition $\mathcal{A}$.
	Hence $h$ is M-convex.
	
	(2).
	We can see that
	the quadratic coefficient of $\sum_{X \in \mathcal{L}} c(X) \ell_X$ satisfies $a_{ij} + a_{kl} \geq \min \{ a_{ik}+a_{jl}, a_{il}+a_{jk} \}$
	for every distinct $i,j,k,l \in [n]$ (see also \cref{lem:anti-ultrametric} below).
	Hence, by \cite[Theorem~3.1]{DAM/I18} (or \cref{lem:Type}~(I) below),
	$\sum_{X \in \mathcal{L}} c(X) \ell_X$ on $\HP$ is M-convex.
\end{proof}

\cref{lem:M2 representation} gives a sufficient condition for the QR-M${}_2$-convexity of a VCSP-quadratic function $f$;
if $f$ can be represented as the sum of $\sum_{X \in \mathcal{L}} c(X) \ell_X$ on $\HP$ for some laminar $\mathcal{L}$ and a linear function on $\DF$,
then $f$ is QR-M${}_2$-convex.
Our representation theorem (\cref{thm:representation}) says that
this is also a necessary condition,
that is,
a QR-M${}_2$-convex VCSP-quadratic function is always representable as the sum of $\sum_{X \in \mathcal{L}} c(X)\ell_X$ on $\HP$ for some laminar $\mathcal{L}$ and a linear function on $\DF$.

A laminar family inducing the given QR-M${}_2$-convex VCSP-quadratic function
possesses some kind of uniqueness,
which ensures the validity of our proposed algorithm.
To describe the uniqueness in \cref{thm:representation},
we introduce an equivalence relation on functions:
\begin{itemize}
	\item For two functions $f$ and $f'$ on $\{0,1\}^n$,
	we say that $f$ and $f'$ are {\it $\mathcal{A}$-linear equivalent} (or $f \simeq f'$)
	if the difference between $f$ and $f'$ is a linear function on $\DF$,
	that is,
	$f + \delta_\mathcal{A} = f' + h$ holds for some $\mathcal{A}$-linear function $h$.
\end{itemize}

The $\mathcal{A}$-linear equivalence on $\ell_X$'s can be regarded as a combinatorial property on sets $X$
by using the following notations.
\begin{itemize}
	\item We say that a set $X \subseteq [n]$ {\it cuts} $A_p$ if both $X$ and $[n] \setminus X$ have a nonempty intersection with $A_p$, i.e., $\emptyset \neq (X \cap A_p) \neq A_p$.
	\item A set $X \subseteq [n]$ is called an {\it $\mathcal{A}$-cut}
	if $X$ cuts at least two elements in $\mathcal{A}$.
	\item For $X \subseteq [n]$, the {\it cutting support} of $X$, denote by $\ave{X}$, is defined as the union of $A_p$'s cut by $X$.
	That is,
	\begin{align}\label{eq:ave}
	\ave{X} := \bigcup \{A_p \in \mathcal{A} \mid \emptyset \neq (X \cap A_p) \neq A_p \}.
	\end{align}
\end{itemize}
\begin{lem}\label{lem:sim}
	\begin{description}
		\item[{\rm (1)}]
		For $X \subseteq [n]$,
		$\ell_X + \delta_\mathcal{A}$ is not $\mathcal{A}$-linear if and only if $X$ is an $\mathcal{A}$-cut.
		\item[{\rm (2)}]
		For two $\mathcal{A}$-cuts $X$ and $Y$,
		functions $\ell_X$ and $\ell_Y$ are $\mathcal{A}$-linear equivalent if and only if
		\begin{align}\label{eq:sim}
		\{\ \ave{X} \cap X,\ \ave{X} \setminus X\ \} = \{\ \ave{Y} \cap Y,\ \ave{Y} \setminus Y\ \},
		\end{align}
		that is,
		$X$ and $Y$ have the same cutting support and yield the same bipartition on it.
	\end{description}
\end{lem}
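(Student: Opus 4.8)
The plan is to transport everything onto the base family $\DF$, where $\delta_\mathcal{A}$ vanishes, and to read $\ell_X$ off from the chosen representatives. For $x \in \DF$ let $\sigma_p(x) \in A_p$ be the unique index with $x_{\sigma_p(x)} = 1$, and write $e^X_i := 1$ if $i \in X$ and $e^X_i := 0$ otherwise. Since same-part products $x_ix_j$ vanish on $\DF$ and representatives of distinct parts are distinct, one checks $\ell_X(x) = \binom{S_X(x)}{2}$ on $\DF$, where $S_X(x)$ is the number of parts whose representative lies in $X$. Writing $S_X = m + T_X$ with $m$ the number of parts contained in $X$ and $T_X := \sum_{A_p \text{ cut by } X} e^X_{\sigma_p(x)}$ the variable contribution of the cut parts, the identity $\binom{m+T_X}{2} = \binom{m}{2} + mT_X + \binom{T_X}{2}$ isolates the only possibly nonlinear piece, $\binom{T_X}{2}$. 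I would test $\mathcal{A}$-linearity by the standard criterion on a product of simplices: a function on $\DF$ is $\mathcal{A}$-linear (equivalently, additively separable over the parts) iff every cross-part second difference $g(x^{11}) + g(x^{00}) - g(x^{10}) - g(x^{01})$ vanishes, where the four points agree off two parts $A_p, A_q$ and the superscripts toggle $\sigma_p$ between fixed $i, i' \in A_p$ and $\sigma_q$ between $j, j' \in A_q$. The computation I rely on throughout is that, after the separable terms cancel, this second difference of $\ell_X$ equals $(e^X_i - e^X_{i'})(e^X_j - e^X_{j'})$, a product of two factors in $\{-1,0,1\}$.

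For part (1): if $X$ cuts at most one part then $T_X \in \{0,1\}$, so $\binom{T_X}{2} = 0$ and $\ell_X$ is $\mathcal{A}$-linear on $\DF$; conversely, if $X$ is an $\mathcal{A}$-cut I pick two parts $A_p, A_q$ that it cuts and take $i \in A_p \cap X$, $i' \in A_p \setminus X$, $j \in A_q \cap X$, $j' \in A_q \setminus X$, so that the second difference is $1 \cdot 1 = 1 \neq 0$ and $\ell_X + \delta_\mathcal{A}$ is not $\mathcal{A}$-linear.

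For part (2), note that $\ell_X \simeq \ell_Y$ says precisely that $\ell_X - \ell_Y$ is $\mathcal{A}$-linear on $\DF$, which by the criterion above is equivalent to the matching of all cross-part second differences,
\[
(e^X_i - e^X_{i'})(e^X_j - e^X_{j'}) = (e^Y_i - e^Y_{i'})(e^Y_j - e^Y_{j'})
\]
for all $p \neq q$, $i, i' \in A_p$, $j, j' \in A_q$. The easy direction assumes \eqref{eq:sim}: it forces $\ave{X} = \ave{Y} =: Z$ and makes $X$ and $Y$ induce either the same or the complementary bipartition of $Z$. On cut parts the indicator differences then either agree or all reverse sign, and in the complementary case the two sign reversals cancel in the product; on non-cut parts both sides carry a zero factor. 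Either way the displayed identity holds.

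The main obstacle is the converse in part (2): recovering \eqref{eq:sim} from the identity. I would argue in two steps. First, $X$ and $Y$ cut the same parts: were some $A_p$ cut by $X$ but not by $Y$, choosing a second part $A_q$ cut by $X$ (one exists because $X$ is an $\mathcal{A}$-cut) together with indices straddling $X$ makes the left-hand product $\pm1$ while the right-hand product is $0$ --- a contradiction; hence $\ave{X} = \ave{Y} =: Z$. Second, I extract a single global sign: fixing one cut part with two indices on opposite sides of $X$ and running the identity against the remaining cut parts produces an $\epsilon \in \{-1,+1\}$ with $e^X_i - e^X_{i'} = \epsilon(e^Y_i - e^Y_{i'})$ for all $i, i'$ inside every cut part. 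Here $\epsilon = +1$ forces $A_p \cap X = A_p \cap Y$ on each cut part (a nonzero constant offset would leave a cut part uncut), giving $\ave{X} \cap X = \ave{Y} \cap Y$, whereas $\epsilon = -1$ forces the complementary equality; both yield \eqref{eq:sim}. Propagating one and the same sign across all cut parts, which relies on there being at least two of them and on using a common reference part, is the delicate point.
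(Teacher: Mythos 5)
Your proposal is correct, and its computational core coincides with the paper's: your cross-part second difference $(e^X_i - e^X_{i'})(e^X_j - e^X_{j'})$ is exactly the paper's quantity $\kappa_X$ in \eqref{eq:non-linearity} up to the factor $2$ coming from $\overline{\ell}_X \simeq 2\ell_X$, and both proofs use it to detect non-linearity in part (1) and non-equivalence in part (2). Where you diverge is in the organization of the ``positive'' directions. The paper works with $\overline{\ell}_X = (\sum_{i\in X}x_i)^2$ and establishes the two explicit transformations $\overline{\ell}_X \simeq \overline{\ell}_{X\cup A_p}$ and $\overline{\ell}_X \simeq \overline{\ell}_{[n]\setminus X}$, from which the only-if part of (1) and the if part of (2) follow by direct manipulation; you instead write $\ell_X = \binom{S_X}{2}$ on $\DF$, isolate $\binom{T_X}{2}$ as the sole non-separable piece, and invoke the standard characterization that a function on a product of simplices is additively separable iff all mixed second differences vanish. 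That criterion is true and standard, but it is the one ingredient you use without proof that the paper deliberately avoids needing (the paper only ever uses the easy direction, that a nonzero second difference certifies non-linearity). For the converse of (2) your sign-propagation argument through a common reference part is more systematic than the paper's, which merely exhibits a single quadruple $s,t,u,v$ with $\kappa_X \neq \kappa_Y$; you correctly identify the delicate points (ruling out $\epsilon = 0$ using a second cut part, and verifying the relation on the reference part itself), and these do go through. What your route buys is a uniform treatment of all four implications from one criterion; what the paper's route buys is self-containedness and the reusable transformations \eqref{eq:X + A_p}--\eqref{eq:n - X} that drive the later \LAM\ machinery.
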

\begin{proof}
	As in~\eqref{eq:overline l} in the introduction, 
	define $\overline{\ell}_X : \{0,1\}^n \rightarrow \R$ by $\overline{\ell}_X(x) := \left( \sum_{i \in X} x_i \right)^2$.
	Then it holds
	$\ell_X \simeq \overline{\ell}_X/2$ by $x_i^2 = x_i$ for $i \in [n]$.
	Hence it suffices to show the statements for $\overline{\ell}_X$.
	As mentioned in~\eqref{eq:X + A_p} and~\eqref{eq:n - X},
	it holds (i) $\overline{\ell}_X \simeq \overline{\ell}_{X \cup A_p}$ if $X \cap A_p = \emptyset$, 
	and (ii) $\overline{\ell}_X \simeq \overline{\ell}_{[n] \setminus X}$.
	The former follows from  $\overline{\ell}_{X \cup A_p}(x) = \left( \sum_{i \in X} x_i + \sum_{i \in A_p} x_i \right)^2
	\simeq \left( \sum_{i \in X} x_i + 1 \right)^2
	\simeq \overline{\ell}_{X}(x)$, and
	the latter follows from $\overline{\ell}_{[n] \setminus X}(x) \simeq \left( r - \sum_{i \in X} x_i \right)^2
	\simeq \overline{\ell}_{X}(x)$.
	
	(Only-if part of (1)).
	Suppose that $X$ is not an $\mathcal{A}$-cut.
	Then $\ave{X} \subseteq A_p$ holds for some $A_p \in \mathcal{A}$.
	By (i), we may assume $X \subseteq A_p$.
	Then it holds $\overline{\ell}_{X}(x) = \left( \sum_{i \in X} x_i \right)^2 = \sum_{i \in X} x_i$ for all $x \in \DF$, implying that $\overline{\ell}_{X}$ is $\mathcal{A}$-linear.
	
	(If part of (2)).
	Suppose that (\ref{eq:sim}) holds.
	Then we can construct $Y$ from $X$ by repeating the transformation $X \mapsto [n] \setminus X$,
	$X \cup A_p$, or $X \setminus A_p$ for $A_p$ with $\ave{X} \cap A_p = \emptyset$.
	Hence
	$\overline{\ell}_X \simeq \overline{\ell}_Y$ by (i) and (ii) above.

	(If part of (1)).
	To detect the non-linearity,
	we consider the following four points $x^{su}, x^{sv}, x^{tu}, x^{tv} \in \DF$
	for distinct $s,t \in A_p$ and $u,v \in A_q$ with distinct $p,q \in [r]$:
	\begin{itemize}
		\item $x^{ij}_i = x^{ij}_j = 1$ for $i = s,t$ and $j = u,v$, and
		\item $x^{su}_i = x^{sv}_i = x^{tu}_i = x^{tv}_i$ for $i \in [n] \setminus (A_p \cup A_q)$.
	\end{itemize}
	Since  $x^{su} + x^{tv} = x^{sv} + x^{tu}$, the inequality
	$\overline{\ell}_X(x^{su}) + \overline{\ell}_X(x^{tv}) \neq \overline{\ell}_X(x^{sv}) + \overline{\ell}_X(x^{tu})$ 
	implies that $\overline{\ell}_X$ is not linear on the four points.
	Let $\kappa_X := \left(\overline{\ell}_X(x^{su}) + \overline{\ell}_X(x^{tv})\right) - \left(\overline{\ell}_X(x^{sv}) + \overline{\ell}_X(x^{tu})\right)$.
	By $\overline{\ell}_X(x^{ij}) = \left( |X \cap \{i\}| + |X \cap \{j\}| + k \right)^2$ with a constant $k$, we have
	\begin{align}\label{eq:non-linearity}
	\kappa_X =
	\begin{cases}
	2 & \textrm{if $X \cap \{s,t,u,v\} = \{s,u\}$ or $\{t,v\}$},\\
	-2 & \textrm{if $X \cap \{s,t,u,v\} = \{s,v\}$ or $\{t,u\}$},\\
	0 & \textrm{otherwise}.
	\end{cases}
	\end{align}
	If $X$ is an $\mathcal{A}$-cut,
	we can choose distinct $s,t \in A_p, u,v \in A_q$
	for distinct $A_p, A_q \subseteq \ave{X}$ such that $|X \cap \{s,t\}| = |X \cap \{u,v\}| = 1$,
	and it holds $\kappa_X \neq 0$.
	
	(Only-if part of (2)).
	This can be shown in a similar way as the proof of the if part of~(1).
	Suppose that \eqref{eq:sim} does not hold.
	Then we can choose distinct $s,t \in A_p, u,v \in A_q$ with $p \neq q$
	such that $\kappa_X \neq \kappa_Y$,
	which implies that $\overline{\ell}_X$ and $\overline{\ell}_Y$ are not
	${\cal A}$-linear equivalent.
	Indeed,
	by $\{ \ave{X} \cap X, \ave{X} \setminus X \} \neq \{ \ave{Y} \cap Y, \ave{Y} \setminus Y \}$,
	there are $A_p$ and $A_q$ cut by (say) $X$ such that $\{ (A_p \cup A_q) \cap X, (A_p \cup A_q) \setminus X \} \neq \{ (A_p \cup A_q) \cap Y, (A_p \cup A_q) \setminus Y \}$.
	Hence we can choose points $s \in A_p \cap X$, $t \in A_p \setminus X$, $u \in A_q \cap X$, and $v \in A_q \setminus X$ such that $Y \cap \{s,t,u,v\} \neq \{s,u\}$ and $Y \cap \{s,t,u,v\} \neq \{t,v\}$.
	Then~\eqref{eq:non-linearity} shows $\kappa_X \neq \kappa_Y$.
\end{proof}

According to \cref{lem:sim},
we introduce the equivalence relations on sets, families, and positive weights on families,
and also introduce the concept of {\it laminarizability} as follows.
\begin{itemize}
	\item For two $\mathcal{A}$-cuts $X$ and $Y$,
	we say that $X$ and $Y$ are {\it $\mathcal{A}$-equivalent} (or $X \sim Y$)
	if $X$ and $Y$ satisfy~\eqref{eq:sim}.
	That is, $X \sim Y$ if and only if $\ell_X \simeq \ell_Y$.
	\item The $\mathcal{A}$-equivalence relation is naturally extended to $\mathcal{A}$-cut families $\mathcal{F}, \mathcal{G}$ by:
	$\mathcal{F}$ and $\mathcal{G}$ are {\it $\mathcal{A}$-equivalent} (or $\mathcal{F} \sim \mathcal{G}$) if the set of the equivalence classes of all $\mathcal{A}$-cuts in $\mathcal{F}$ coincides with that of $G$.
	\item We define the $\mathcal{A}$-equivalence relation $\sim$ between a positive weight $c$ on $\mathcal{F}$ and a positive weight $d$ on $\mathcal{G}$ by:
	$c \sim d$ if $\mathcal{F} \sim \mathcal{G}$ and $\sum_{Y \in \mathcal{F} : Y \sim X} c(Y) = \sum_{Y \in \mathcal{G} : Y \sim X} d(Y)$ for all $X \subseteq [n]$,
	where $c(X) := 0$ (resp. $d(Y) := 0$) if $X \not\in \mathcal{F}$ (resp. if $Y \not\in \mathcal{G}$).
	It is clear, by the definition of $\sim$, that if $\mathcal{F} \sim \mathcal{G}$ and $c \sim d$,
	then $\sum_{X \in \mathcal{F}} c(X) \ell_X \simeq \sum_{X \in \mathcal{G}} d(X) \ell_X$.
	\item An $\mathcal{A}$-cut family $\mathcal{F}$ is said to be {\it laminarizable}
	if there is a laminar family $\mathcal{L}$ with $\mathcal{F} \sim \mathcal{L}$.
\end{itemize}

The formal description of our representation theorem is the following.
\begin{thm}\label{thm:representation}
	Let $f$ be a VCSP-quadratic function of type $\mathcal{A} = \{ A_1, A_2, \dots, A_r\}$.
	Then $f$ is QR-M${}_2$-convex if and only if
		there exist a laminarizable $\mathcal{A}$-cut family $\mathcal{F}$ and a positive weight $c$ on $\mathcal{F}$
		such that
		\begin{align}\label{eq:(L, c)}
			f \simeq \sum_{X \in \mathcal{F}} c(X) \ell_{X}.
		\end{align}
	In addition,
	$\mathcal{F}$ and $c$ are uniquely determined up to $\sim$.
\end{thm}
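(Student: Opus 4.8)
The plan is to prove the two implications separately and then handle the uniqueness clause; throughout I work modulo the relation $\simeq$, since on $\DF$ only the cross-block quadratic coefficients of $f$ matter and $\dom f = \DF$ (every within-block coefficient of a VCSP-quadratic function is $+\infty$). For the ``if'' direction, suppose $\mathcal{F}$ is a laminarizable $\mathcal{A}$-cut family carrying a positive weight $c$ with $f \simeq \sum_{X \in \mathcal{F}} c(X)\ell_X$. I first fix a laminar $\mathcal{L} \sim \mathcal{F}$ and push $c$ forward to the positive weight $d$ on $\mathcal{L}$ obtained by summing $c$ over each $\sim$-class; then $c \sim d$, so by the remark following the definition of $\sim$ we get $\sum_{X\in\mathcal{F}} c(X)\ell_X \simeq \sum_{X\in\mathcal{L}} d(X)\ell_X$, hence $f \simeq \sum_{X \in \mathcal{L}} d(X)\ell_X$. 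Unfolding $\simeq$ gives an $\mathcal{A}$-linear $h$ with $f + \delta_\mathcal{A} = \sum_{X \in \mathcal{L}} d(X)\ell_X + h$, and since $\dom f = \DF = \dom\delta_\mathcal{A}$ this reads $f = f_1 + f_2$ with $f_1 := \sum_{X \in \mathcal{L}} d(X)\ell_X$ on $\HP$ and $f_2 := h$. These are quadratic M-convex by \cref{lem:M2 representation}(2) and (1), so $f$ is QR-M${}_2$-convex.

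For the ``only if'' direction, assume $f = g_1 + g_2$ with $g_1,g_2$ quadratic M-convex. By the characterization of quadratic M-convexity (\cref{lem:Type}, equivalently \cite[Theorem~3.1]{DAM/I18}; see also \cref{lem:anti-ultrametric}), the finite cross-block coefficients of each $g_k$ satisfy the four-point inequality $a_{ij}+a_{kl}\ge\min\{a_{ik}+a_{jl},a_{il}+a_{jk}\}$, which is exactly the tree/cross-free condition and lets me expand $g_k \simeq \sum_{X \in \mathcal{C}_k} c^k(X)\ell_X$ over a cross-free family $\mathcal{C}_k$ of $\mathcal{A}$-cuts with positive weights (splits lying inside a single block are $\mathcal{A}$-linear by \cref{lem:sim}(1) and are absorbed into the linear part). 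Adding the two expansions and merging coincident $\sim$-classes produces a positive weight $c$ on the combined family $\mathcal{F}_0 := \mathcal{C}_1 \cup \mathcal{C}_2$ with $f \simeq \sum_{X \in \mathcal{F}_0} c(X)\ell_X$; obtaining such a positive $\ell_X$-representation is precisely the content of \DECOMP.

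It then remains to replace $\mathcal{F}_0$ by a $\sim$-equivalent \emph{laminar} family, and this is the step I expect to be the main obstacle. The union of two cross-free families need not be cross-free, let alone laminar, so I must show that, using only the moves \eqref{eq:X + A_p} (adding or deleting an uncut block) and \eqref{eq:n - X} (complementation) that preserve $\simeq$, every crossing in $\mathcal{F}_0$ can be resolved. The key reduction is that each $\sim$-class is determined solely by the bipartition it induces on its cutting support $\ave{X}$, so the problem becomes one of simultaneously realizing two tree-compatible systems of such bipartitions on the partitioned ground set by nested or disjoint representatives — exactly the uncrossing performed constructively by the \LAM\ algorithm of \cref{sec:(ii)}. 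Proving that this uncrossing always succeeds for a family arising as the sum of two M-convex summands is the combinatorial heart of the theorem, and I would establish it by the same case analysis that underlies \LAM.

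For uniqueness up to $\sim$, suppose $\sum_{X \in \mathcal{F}} c(X)\ell_X \simeq f \simeq \sum_{Y \in \mathcal{G}} d(Y)\ell_Y$ with both families laminarizable and positive. Their difference is $\mathcal{A}$-linear, so by \cref{lem:sim}(1) every cross-block second difference $\kappa$ of \eqref{eq:non-linearity} agrees on the two sides; since by \cref{lem:sim}(2) each $\kappa$-value depends only on the $\sim$-class of the set, the identity collapses to an equality between the total weights the two representations place on each $\sim$-class. Reducing both sides to laminar representatives, where the $\ell_X$ are linearly independent modulo $\mathcal{A}$-linear functions (a standard property of split/tree decompositions, cf.\ \cite{DCG/H06}, reflecting that an anti-ultrametric has a unique laminar expansion), forces the present $\sim$-classes and their total weights to coincide, i.e.\ $\mathcal{F} \sim \mathcal{G}$ and $c \sim d$. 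The only delicate point is to identify ``differently dressed'' laminar representatives (e.g.\ a set and its complement, or a set enlarged by an uncut block) as genuinely $\sim$-equivalent, which is precisely what \cref{lem:sim} supplies.
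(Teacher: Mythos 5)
Your ``if'' direction is fine and matches the paper. The problem is the ``only if'' direction: after writing $f \simeq \sum_{X\in\mathcal{C}_1\cup\mathcal{C}_2} c(X)\ell_X$ you must show this family is \emph{laminarizable}, and you explicitly leave that open (``this is the step I expect to be the main obstacle \dots\ I would establish it by the same case analysis that underlies \LAM''). That step is the heart of the theorem and cannot be delegated to \LAM: the correctness of \LAM\ only says that a laminarizable input can be uncrossed, it does not show that the family produced by an arbitrary M${}_2$-decomposition is laminarizable. The paper closes this gap by a structural normalization you are missing: it first proves a Claim that $f_1,f_2$ can be modified (by replacing each superfluous $+\infty$ coefficient with a large finite value, or by passing to $\delta_{\mathcal{B}_1\setminus\{B\}}$ in the Type~II/III case) so that the families $\mathcal{B}_1,\mathcal{B}_2$ of non-isolated components of $G_{f_1}^\infty,G_{f_2}^\infty$ satisfy $\mathcal{B}_1\cap\mathcal{B}_2=\emptyset$ and $\mathcal{B}_1\cup\mathcal{B}_2=\mathcal{A}$. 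Each Type~I summand then yields, via \cref{lem:Type I anti-ultra} and \cref{lem:anti-ultrametric}, a $\mathcal{B}_k$-laminar family whose members, after deleting $\hat{A}_k=\bigcup\mathcal{B}_k$, all lie inside $[n]\setminus\hat{A}_k$; since $\hat{A}_2=[n]\setminus\hat{A}_1$, the members coming from $f_1$ are disjoint from those coming from $f_2$, so the union is laminar outright and no uncrossing is ever needed. Without this normalization your argument does not go through.

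The uniqueness argument also has a gap. You reduce it to ``linear independence of the $\ell_X$ modulo $\mathcal{A}$-linear functions'' for laminar representatives, but to compare two representations you need independence on the union $\mathcal{F}\cup\mathcal{G}$ of two laminar families, which is generally not laminar; and independence over all $\sim$-classes of $\mathcal{A}$-cuts is false (already for $\mathcal{A}=\{\{1,2,3\},\{4,5,6\}\}$ there are $18$ classes but the space of cross-block quadratics modulo $\mathcal{A}$-linear functions has dimension at most $9$), so the ``standard property'' must be proved, not cited. The paper instead identifies the laminar family intrinsically from $f$: it computes the convex closure $\overline{f}$ on $\overline{\DF}$ via a total-unimodularity argument (\cref{lem:overline f = overline l}) and shows that its set of nondifferentiable points determines $\mathcal{L}$ up to $\sim$ and then the weights, using the linear independence of the vectors $1_{A_1},\dots,1_{A_r},1_L,1_X$ for $X\not\sim L$. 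You would need to supply an argument of comparable substance here.
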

The proof of \cref{thm:representation} is given in \cref{subsec:chara,subsec:unique}.

\subsection{Two subproblems: \DECOMP\ and \LAM}\label{subsec:subproblems}
By \cref{thm:representation},
\TEST\ can be divided into the following two problems:
(i) if $f$ is QR-M${}_2$-convex,
then output a laminarizable $\mathcal{A}$-cut family $\mathcal{F}$ and a positive weight $c$ on $\mathcal{F}$ satisfying the equation~\eqref{eq:(L, c)},
and (ii) if the output $\mathcal{F}$ of (i) is laminarizable,
then find a laminar family $\mathcal{L}$ with $\mathcal{L} \sim \mathcal{F}$.
(i) and (ii) can be formulated as \DECOMP\ and \LAM,
respectively.
An $\mathcal{A}$-cut family $\mathcal{F}$ is said to be {\it non-redundant}
if no distinct $X, Y$ with $X \sim Y$ are contained in $\mathcal{F}$.
\begin{description}
	\item[\underline\DECOMP]
	\item[Given:] A VCSP-quadratic function $f$ of type $\mathcal{A}$.
	\item[Problem:]
	Either detect the non-QR-M${}_2$-convexity of $f$,
	or obtain some non-redundant $\mathcal{A}$-cut family $\mathcal{F}$ and positive weight $c$ on $\mathcal{F}$
	satisfying
	\begin{align}\label{eq:(L, c)inDEC}
	f \simeq \sum_{X \in \mathcal{F}} c(X) \ell_X.
	\end{align}
	In addition,
	in case where $f$ is QR-M${}_2$-convex,
	$\mathcal{F}$ is required to be laminarizable.
\end{description}
We emphasize that \DECOMP\ may possibly output
the decomposition~\eqref{eq:(L, c)inDEC}
even when the input $f$ is not QR-M${}_2$-convex.
However, if \DECOMP\ detects the non-QR-M${}_2$-convexity then we can conclude that the input $f$ is not QR-M${}_2$-convex.

\begin{description}
	\item[\underline\LAM]
	\item[Given:] A non-redundant $\mathcal{A}$-cut family $\mathcal{F}$.
	\item[Problem:] Determine whether
	$\mathcal{F}$ is laminarizable or not.
	If it is laminarizable,
	obtain a non-redundant laminar $\mathcal{A}$-cut family $\mathcal{L}$ with 
	$\mathcal{F} \sim \mathcal{L}$.
\end{description}

With these procedures, \TEST\ is solved as follows.
\begin{itemize}
	\item Suppose that $f$ is QR-M${}_2$-convex.
	First, by solving \DECOMP,
	we obtain a non-redundant laminarizable $\mathcal{A}$-cut family $\mathcal{F}$ and a positive weight $c$ on $\mathcal{F}$ satisfying~\eqref{eq:(L, c)inDEC} (and hence~\eqref{eq:(L, c)}).
	Then, by solving \LAM\ with $\mathcal{F}$ as its input, we obtain a non-redundant laminar $\mathcal{A}$-cut family $\mathcal{L}$ with $\mathcal{L} \sim \mathcal{F}$.
	Thus we can obtain two M-convex summands $f_1 := \sum_{X \in \mathcal{L}} c^*(X) \ell_X$ on $\HP$
	and $f_2 := f - \sum_{X \in \mathcal{L}} c^*(X) \ell_X$,
	where $c^* \sim c$.
	Such $c^*$ can easily be constructed
	as $c^*(X) := c(Y)$ for $X \in \mathcal{L}$ and $Y \in \mathcal{F}$ with $X \sim Y$.
	\item Suppose that $f$ is not QR-M${}_2$-convex.
	By solving \DECOMP,
	we can detect the non-QR-M${}_2$-convexity of $f$ or we obtain some $\mathcal{A}$-cut family $\mathcal{F}$, positive weight $c$ on $\mathcal{F}$, and $\mathcal{A}$-linear function $h$ that demonstrates~\eqref{eq:(L, c)inDEC}.
	In the former case, we are done.
	In the latter case,
	by solving \LAM\ with $\mathcal{F}$ as its input,
	we can detect the non-laminarizability of $\mathcal{F}$,
	which denies the QR-M${}_2$-convexity of $f$.
\end{itemize}
We devise an $O(rn^2)$-time algorithm for \DECOMP\ in \cref{sec:(i)} and an $O(n^4)$-time algorithm for \LAM\ in \cref{sec:(ii)}.
Thus we obtain \cref{thm:main}.

By \cref{lem:sim}~(2),
\LAM\ can be regarded as the problem of transforming a given family $\mathcal{F}$ to a laminar family
by repeating the following operation:
replace $X \in\mathcal{F}$ with $[n] \setminus X$,
$X \cup A_p$, or $X \setminus A_p$ with some $A_p$ satisfying $\ave{X} \cap A_p = \emptyset$.
Figure~\ref{fig:Laminarization} illustrates an example of the input (left) and an output (right) of \LAM.

\subsection{Proof of \cref{thm:representation}: Characterization}\label{subsec:chara}
In this subsection,
we prove the if-and-only-if part of \cref{thm:representation},
i.e.,
a VCSP-quadratic function $f$ of type $\mathcal{A}$ is QR-M${}_2$-convex if and only if
\eqref{eq:(L, c)} holds for some laminarizable $\mathcal{A}$-cut family $\mathcal{F}$ and positive weight $c$ on $\mathcal{F}$.

We first review fundamental facts about a general quadratic 
(not necessarily VCSP-quadratic) function
$g : \{0,1\}^n \rightarrow \ER$ 
represented as
\begin{align}
g(x_1,x_2,\dots,x_n) =
\begin{cases}
\displaystyle
\sum_{1 \leq i < j\leq n}a_{ij}x_ix_j + \sum_{1 \leq i \leq n} a_i x_i & \text{if $x \in \HP$},\\
+\infty & \text{otherwise},\label{eq:g}
\end{cases}
\end{align}
where $r \in \mathbf{Z}$ with $r \geq 2$, $a_i \in \R$, and $a_{ij}= a_{ji} \in \ER$.
We assume the following regularity condition (R) for $g$.
\begin{itemize}
	\item[(R):] For all $i \in [n]$,
	there is $x = (x_1, x_2, \dots, x_n)$ such that $g(x) < +\infty$ and $x_i = 1$.
\end{itemize}
Denote the indicator function of $\dom{g}$ by $\delta_{g}$,
which is defined as $\delta_g(x) := 0$ for $x \in \dom{g}$ and $\delta_g(x) := +\infty$ for $x \not\in \dom{g}$.

Let $G_g^\infty$ be the graph on node set $[n]$ such that edge $\{i,j\}$ ($i \neq j$) exists if and only if $a_{ij} = +\infty$.
Define $m(G_g^\infty)$ as the number of connected components of $G_g^\infty$.
A connected component with at least one edge is said to be {\it non-isolated}.
The number of non-isolated connected components of $G_g^\infty$ will be denoted by $m^* = m^*(G_g^\infty)$.
Let $B_1, B_2, \dots, B_{m^*}$ be the node sets of the non-isolated connected components of $G_g^\infty$.

Then the M-convexity of $g$ is characterized by the following lemma, which is a refinement of 
the results of~\cite{JJIAM/HM04} and~\cite{AAM/MS04}.
\begin{lem}[{\cite[Theorem~3.1]{DAM/I18}}]\label{lem:Type}
	A function $g$ of the form~{\rm \eqref{eq:g}} satisfying condition {\rm (R)} is M-convex if and only if
	each connected component of $G_g^\infty$ is a complete graph
	and one of the following conditions {\rm (I)}, {\rm (II)}, and {\rm (III)} holds:
	\begin{description}
		\item[(I):] $m(G_g^\infty) \geq r + 2$ and
		\begin{align}\label{eq:Type I}
		a_{ij} + a_{kl} \geq \min \{ a_{ik}+a_{jl}, a_{il}+a_{jk} \}
		\end{align}
		holds for every distinct $i,j,k,l \in [n]$.
		\item[(II):] $m(G_g^\infty) = r + 1$ and
		\begin{align}\label{eq:Type II}
		a_{ij} + a_{kl} = a_{il} + a_{jk}
		\end{align}
		holds for every $p \in [m^*]$, distinct $i,k \in B_p$, and distinct $j,l \in [n] \setminus B_p$.
		\item[(III):] $m(G_g^\infty)=r$ and
		\begin{align}\label{eq:Type III}
		a_{ij} + a_{kl} = a_{il} + a_{jk}
		\end{align}
		holds for every distinct $p,q \in [m^*]$, distinct $i,k \in B_p$, and distinct $j,l \in B_q$.
	\end{description}
	Moreover,
	{\rm (II)} or {\rm (III)} holds if and only if
	$g$ is represented as $g(x) = \delta_g(x) + \sum_{i \in [n]} u_ix_i + \gamma$
	for some $u \in \R^n$ and $\gamma \in \R$.
\end{lem}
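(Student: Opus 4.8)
The plan is to read off M-convexity from two ingredients: the combinatorial shape of $\dom g$, and a single local exchange computation, after which the three cases fall out from the count $m(G_g^\infty)$. First I would pin down $\dom g$. Since $g$ has the form~\eqref{eq:g}, $\dom g$ is the set of $x \in \HP$ with $x_ix_j = 0$ for every edge $\{i,j\}$ of $G_g^\infty$, i.e. the family of stable sets of $G_g^\infty$ of size exactly $r$. An M-convex function has a matroidal effective domain, and the stable sets of a graph form a matroid precisely when the graph is a disjoint union of cliques; so M-convexity forces each connected component of $G_g^\infty$ to be complete (a non-edge sharing a common neighbour would produce stable sets violating base exchange). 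When every component is complete, $\dom g$ is exactly the base family of the partition-type matroid that takes at most one element from each non-isolated block $B_p$ and fills the remaining slots from isolated nodes, truncated to rank $r$. Nonemptiness forces $m(G_g^\infty) \geq r$, and the values $m = r$, $m = r+1$, $m \geq r+2$ give precisely the regimes of~(III), (II), (I).

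Next I would reduce M-convexity to a finite list of quadratic inequalities via the standard local characterization: a function with matroidal domain is M-convex iff the exchange inequality holds for every pair of bases at $\ell_1$-distance $4$. For such a pair $B = S \cup \{i,k\}$ and $B' = S \cup \{j,l\}$, removing $i$ from $B$ and inserting $l$ (resp. $j$) yields the two candidate exchanges; computing $[g(B)+g(B')] - [g(B-\chi_i+\chi_l)+g(B'+\chi_i-\chi_l)]$ one sees that all terms involving $S$, all linear terms, and all $S$-to-$\{i,j,k,l\}$ cross terms cancel, leaving exactly $(a_{ik}+a_{jl}) - (a_{ij}+a_{kl})$, while the sibling exchange leaves $(a_{ik}+a_{jl}) - (a_{il}+a_{jk})$. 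Hence the existence of a valid exchange is equivalent to $a_{ik}+a_{jl} \geq \min\{a_{ij}+a_{kl},\, a_{il}+a_{jk}\}$ for this configuration.

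The case analysis is the crux. In regime~(I) the assumption $m \geq r+2$ guarantees that any four distinct indices $i,j,k,l$ admit a common completion $S$ into two bases of the above shape, so the local inequalities aggregate into the single symmetric condition~\eqref{eq:Type I} quantified over all distinct $i,j,k,l$. In regimes~(II) and~(III) the domain is rigid, each base being forced to realize a fixed selection pattern of the blocks, so the admissible four-tuples are confined to the index ranges appearing in~\eqref{eq:Type II} and~\eqref{eq:Type III}; moreover, for each such four-tuple one can exhibit two base-pairs whose exchange differences carry opposite signs, which upgrades the inequality to the stated equality. I expect this bookkeeping to be the main obstacle: one must check that the quantifier ranges in~(I)--(III) coincide exactly with the four-tuples that admit a common base-completion, so that no spurious relation is imposed and none is missed, and that the sign-reversal witnesses forcing the equalities always exist for the claimed within-block and between-block index pairs.

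Finally, for the moreover part I would show that the equalities~\eqref{eq:Type II}/\eqref{eq:Type III} are equivalent to the cross-block quadratic coefficients being additively separable, i.e. $a_{ij} = u_i + u_j$ up to an additive constant on each pair of blocks that can jointly occur in a base (the equality $a_{ij}+a_{kl}=a_{il}+a_{jk}$ is precisely the vanishing of the $2\times 2$ ``mixed difference''). Because in regimes~(II) and~(III) every base realizes a fixed block-selection pattern, such separability makes the quadratic sum $\sum_{i<j,\ i,j\in B} a_{ij}$ collapse on $\dom g$ into $\sum_{i\in B} u_i$ plus a constant, so that $g(x) = \delta_g(x) + \sum_{i\in[n]} u_i x_i + \gamma$. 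Conversely, a linear-plus-indicator function has identically separable coefficients and hence satisfies these equalities, which closes the equivalence.
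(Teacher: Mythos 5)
First, a point of reference: the paper does not prove this lemma at all --- it is imported from \cite[Theorem~3.1]{DAM/I18} (itself refining \cite{JJIAM/HM04} and \cite{AAM/MS04}) --- so there is no in-paper argument to compare yours against. Judged on its own terms, your strategy is the standard and essentially correct one: show that M-convexity forces $\dom g$ to be a matroid base family and hence each component of $G_g^\infty$ to be a clique; invoke the local exchange theorem to reduce M-convexity to exchange inequalities for pairs of bases at $\ell_1$-distance $4$; and translate those inequalities into conditions on $(a_{ij})$ according to the value of $m(G_g^\infty)$. Your computation of the exchange differences $(a_{ik}+a_{jl})-(a_{ij}+a_{kl})$ and $(a_{ik}+a_{jl})-(a_{il}+a_{jk})$ is correct, as is the observation that $m\geq r+2$ is exactly what allows four indices in distinct components to be completed to a common base pair.

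The genuine gap is that the translation step --- which is the entire substance of the theorem --- is announced rather than executed: you state that one ``must check that the quantifier ranges in (I)--(III) coincide exactly with the four-tuples that admit a common base-completion'' and that you ``expect this bookkeeping to be the main obstacle,'' but you do not carry it out. Concretely, three verifications are missing. First, in regime (I) condition~\eqref{eq:Type I} is quantified over \emph{all} distinct quadruples, including those with two indices inside a common block; there one of the three pairings is $+\infty$ and the condition silently degenerates into an equality between the two finite pairings, which must be shown to follow from the local exchanges (using that the second candidate exchange leaves $\dom g$, so the minimum collapses to a single comparison, and symmetry upgrades it to equality). Second, in regimes (II) and (III) you must actually classify the base pairs at distance $4$ (for $m=r+1$: pairs omitting the same block versus pairs omitting different blocks) and check that the forced equalities are exactly~\eqref{eq:Type II} and~\eqref{eq:Type III}, no more and no less. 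Third, the claim that a non-complete component violates exchange requires the \emph{simultaneous} exchange axiom (both $x-\chi_i+\chi_j$ and $y+\chi_i-\chi_j$ must remain in the domain), together with condition (R) and $m\geq r$ to guarantee that the witnessing bases exist; the one-line parenthetical about ``a non-edge sharing a common neighbour'' is the right idea but not a proof. Until these are written out, the proposal is a credible plan rather than a proof.
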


We say that $(a_{ij})_{i,j \in [n]}$ satisfies the {\it anti-tree metric property} if \eqref{eq:Type I} holds,
and that $(a_{ij})_{i,j \in [n]}$ satisfies the {\it anti-ultrametric property} if
\begin{align}\label{eq:anti-ultrametric}
	a_{ij} \geq \min \{ a_{ik}, a_{jk} \}
\end{align}
holds for all distinct $i,j,k \in [n]$.
It is known~\cite{book/DezaLaurent97} that the anti-ultrametric property is stronger than the anti-tree metric property~\eqref{eq:Type I}.
The anti-ultrametric property is related with a laminar family as in \cref{lem:anti-ultrametric} below.
A {\it subpartition} of $[n]$ is a family of disjoint nonempty subsets of $[n]$.
For a subpartition $\mathcal{B}$,
a family $\mathcal{L}$ is said to be {\it $\mathcal{B}$-laminar} if $\mathcal{L}$ is laminar and $B \subsetneq X$ or $B \cap X = \emptyset$ holds for each $B \in \mathcal{B}$ and $X \in \mathcal{L}$,
that is, if $\mathcal{L}$ does not intersect with $\mathcal{B}$, $\mathcal{L} \cup \mathcal{B}$ is laminar, and each $B \in \mathcal{B}$ is minimal in $\mathcal{L} \cup \mathcal{B}$.
\begin{lem}[{\cite[Lemma~8]{DO/IMZ18}}]\label{lem:anti-ultrametric}
	Let $g$ be a quadratic function with a coefficient $(a_{ij})_{i,j \in [n]}$,
	and $\mathcal{B}$ be the family of the node sets of the non-isolated connected components of $G_g^\infty$.
	Then
	$(a_{ij})_{i,j \in [n]}$ satisfies the anti-ultrametric property if and only if
	$a_{ij}$ can be represented as
	\begin{align}\label{eq:a_ij laminar}
	a_{ij} =
	\begin{cases}
	+\infty & {\rm if\ } i,j \in B {\rm \ for\ some\ } B \in  \mathcal{B},\\
	\sum \{ c(L) \mid L \in \mathcal{L};\ i,j \in L \} + \alpha^* & {\rm otherwise}
	\end{cases}
	\end{align}
	for some $\mathcal{B}$-laminar family $\mathcal{L} \subseteq 2^{[n]} \setminus [n]$ and some positive weight $c$ on $\mathcal{L}$,
	where $\alpha^* := \min_{i,j \in [n]} a_{ij}$.
\end{lem}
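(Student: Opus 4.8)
The plan is to prove both directions by reading off $(a_{ij})$ from its level sets. Throughout, for a threshold $t \in \ER$ let $H_t$ be the graph on $[n]$ whose edges are the pairs $\{i,j\}$ with $a_{ij}\ge t$. The engine of the whole argument is the following consequence of the anti-ultrametric property: for fixed $t$ the relation ``$i=j$ or $a_{ij}\ge t$'' is transitive, since $a_{ik}\ge\min\{a_{ij},a_{jk}\}\ge t$ whenever $a_{ij}\ge t$ and $a_{jk}\ge t$. Hence the connected components of $H_t$ are exactly its cliques, and as $t$ increases these partitions only refine. Taking $t=+\infty$ shows in particular that $G_g^\infty$ is a disjoint union of cliques, so the members of $\mathcal{B}$ are precisely the classes of size at least two at level $+\infty$, and $a_{ij}=+\infty$ holds if and only if $i,j$ lie in a common $B\in\mathcal{B}$.

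For the ``only if'' direction I would let $\mathcal{L}$ consist of all components of the graphs $H_t$ over finite $t>\alpha^*$ that are proper subsets of $[n]$, have size at least two, and do not coincide with any $B\in\mathcal{B}$. This family is laminar because the partitions into components of $H_t$ refine as $t$ grows, and it is $\mathcal{B}$-laminar because any finite-level component meeting a block $B$ must contain all of $B$ (the $+\infty$-edges keep $B$ connected at every finite level), while no proper subset of $B$ is ever a component. Each such $L$ is a component of $H_t$ exactly for $t$ in a half-open interval $(\beta_L,\gamma_L]$ with $\alpha^*\le\beta_L<\gamma_L<+\infty$, and I assign the weight $c(L):=\gamma_L-\beta_L>0$. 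The target identity $a_{ij}=\alpha^*+\sum\{c(L):i,j\in L\}$ for finite pairs then follows by telescoping: $i$ and $j$ lie in a common component of $H_t$ precisely for $t\le a_{ij}$, so $\sum\{c(L):i,j\in L\}$ is the total length of those levels in $(\alpha^*,a_{ij}]$ at which their common component is a proper subset of $[n]$, and when $a_{ij}$ is finite this common component is never a block and hence always lies in $\mathcal{L}$.

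The one place that needs care — and the main obstacle — is ensuring that the additive constant is exactly $\alpha^*$, i.e. that $[n]$ itself is never a component of $H_t$ for $t>\alpha^*$, so that no ``missing weight'' is hidden in the excluded full set. I would prove this straight from anti-ultrametricity: choosing any pair $p,q$ with $a_{pq}=\alpha^*$, a path from $p$ to $q$ using only edges of value $>\alpha^*$ cannot exist, since applying the triple inequality along such a path forces $a_{x_l q}=\alpha^*$ for every vertex $x_l$ on it, which contradicts that the final edge has value $>\alpha^*$. Thus $p$ and $q$ lie in distinct components of $H_t$ for every $t>\alpha^*$, so $[n]$ splits immediately above $\alpha^*$ and the telescoping sum starts exactly at $\alpha^*$.

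For the ``if'' direction I would assume the representation \eqref{eq:a_ij laminar} and verify $a_{ij}\ge\min\{a_{ik},a_{jk}\}$ for all distinct $i,j,k$. The case $a_{ij}=+\infty$ is trivial, and the remaining infinite cases reduce to the clique structure of $\mathcal{B}$ together with $\mathcal{B}$-laminarity: if, say, $a_{ik}=+\infty$ then $i,k$ share a block $B$, every $L\in\mathcal{L}$ containing $k$ contains all of $B$ and hence $i$, so $\{L:j,k\in L\}\subseteq\{L:i,j\in L\}$ and positivity of $c$ gives $a_{ij}\ge a_{jk}=\min\{a_{ik},a_{jk}\}$. For finite entries I would use that, by laminarity, the members of $\mathcal{L}$ containing a fixed element form a chain, so the sets containing a pair are exactly the ancestors of their least common member in the laminar forest; among three elements two of the three ``least common members'' coincide with the join of all three, whence two of the sums $\sum\{c(L):\cdot\in L\}$ are equal and minimal, and $c>0$ yields the inequality. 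Finally, since $\mathcal{L}$ and $c$ are recovered above as the level-set data of $(a_{ij})$, the construction is forced, which also accounts for the uniqueness implicit in the statement.
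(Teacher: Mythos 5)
Your proof is correct. Note, however, that the paper does not actually prove \cref{lem:anti-ultrametric}: the statement is imported from \cite{DO/IMZ18} (Lemma~8 there), and the only accompanying text is a one-sentence dictionary relating $\mathcal{B}$ and $\mathcal{L}$ to the complete graphs arising at level $\alpha=+\infty$ and at finite levels $\alpha$ in that reference. Your argument therefore replaces a citation with a self-contained proof, namely the classical correspondence between (anti-)ultrametrics and weighted hierarchies read off from the threshold graphs $H_t$, adapted to accommodate the value $+\infty$ and the resulting $\mathcal{B}$-laminarity. The two genuinely delicate points are both handled correctly: the transitivity of ``$a_{ij}\ge t$'' under the anti-ultrametric inequality, which makes every component of $H_t$ a clique and the level partitions refine (this is also what forces each $B\in\mathcal{B}$ to be swallowed whole by any finite-level component meeting it, giving $\mathcal{B}$-laminarity); and the verification that $[n]$ ceases to be a component immediately above $\alpha^*$, which is exactly what pins the additive constant in \eqref{eq:a_ij laminar} to $\alpha^*$ rather than to some larger value hidden in the excluded full set. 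The converse direction, via the chain structure of $\{L\in\mathcal{L}: x\in L\}$ and the observation that two of the three pairwise sums $\sum\{c(L): \cdot\in L\}$ coincide with the sum over sets containing all three elements, is the standard argument and is sound, as is the reduction of the mixed finite/infinite cases to $\mathcal{B}$-laminarity. What your route buys is independence from the external reference at the cost of about a page; what the paper's route buys is brevity, at the cost of sending the reader to \cite{DO/IMZ18} and a nontrivial notational translation.
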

\cref{lem:anti-ultrametric} follows from Lemma~8 of~\cite{DO/IMZ18}
by relating $\mathcal{B}$ to the set of complete graphs for $\alpha = +\infty$
and relating $\mathcal{L}$ to the union of the set of complete graphs for $\alpha < +\infty$,
where $\alpha$ is a parameter appearing in Lemma~8 of~\cite{DO/IMZ18}.

The following is a variation of a well-known technique (the Farris transform)
in phylogenetics~\cite{book/SempleSteel03}
to transform a tree metric to an ultrametric,
and is implied by the validity of Algorithm~I described in Section~4.1 of~\cite{DAM/I18}.
In particular, Steps~1 and~2 of Algorithm~I correspond to the following.
\begin{lem}[{\cite{DAM/I18}}]\label{lem:Type I anti-ultra}
	Suppose that $(a_{ij})_{i,j \in [n]}$ satisfies the anti-tree metric property.
	Let $\alpha^* := \min_{i,j \in [n]} a_{ij}$ and $b_{k} := \min_{j \in [n]} a_{kj} - \alpha^*$ for $k \in [n]$.
	Then $\min_{j \in [n]} a_{ij} = \alpha^*$ holds for all $i \in [n]$, and $(a_{ij} - b_i - b_j)_{i,j \in [n]}$ satisfies the anti-ultrametric property.
\end{lem}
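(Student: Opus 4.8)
The plan is to read the statement through the ``anti'' dictionary. Setting $d_{ij} := -a_{ij}$, the anti-tree metric property \eqref{eq:Type I} is exactly the classical four-point condition characterizing tree (additive) metrics, and the anti-ultrametric property \eqref{eq:anti-ultrametric} is exactly the ultrametric inequality; so the assertion is the phylogenetic fact that the additive shift $\tilde a_{ij} := a_{ij} - b_i - b_j$ (a Farris transform) turns a tree metric into an ultrametric. Throughout I write $\mu_i := \min_{l \in [n]} a_{il}$, so that $b_i = \mu_i - \alpha^*$; note $b_i \ge 0$ since $\mu_i \ge \alpha^*$, and $b_p = b_q = 0$ for a pair $(p,q)$ attaining $a_{pq} = \alpha^*$. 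I assume the relevant entries are finite and $n \ge 4$ (smaller cases being vacuous), so that every $\mu_i$ is attained and the four-point inequality has content.

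The substantive part is the anti-ultrametric conclusion, which I would prove by a direct reduction followed by a single application of the four-point inequality. Since $\tilde a_{ij} = a_{ij} - \mu_i - \mu_j + 2\alpha^*$, cancelling the common constant $2\alpha^*$ and the common term $-\mu_k$ shows that, for distinct $i,j,k$, the target inequality $\tilde a_{ij} \ge \min\{\tilde a_{ik}, \tilde a_{jk}\}$ is equivalent to
\[ a_{ij} + \mu_k \ge \min\{\, a_{ik} + \mu_j,\ a_{jk} + \mu_i \,\}. \]
To prove this, choose $m \in [n]$ with $\mu_k = a_{km}$. If $m \notin \{i,j\}$, then $i,j,k,m$ are distinct and the anti-tree metric inequality gives $a_{ij} + a_{km} \ge \min\{a_{ik} + a_{jm},\ a_{im} + a_{jk}\}$; bounding $a_{jm} \ge \mu_j$ and $a_{im} \ge \mu_i$ on the right-hand side yields exactly the displayed inequality. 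If $m = i$, then $\mu_k = a_{ik}$ and $a_{ij} + \mu_k = a_{ij} + a_{ik} \ge \mu_j + a_{ik}$ (using $a_{ij} \ge \mu_j$), which already dominates the minimum; the case $m = j$ is symmetric. The main obstacle is precisely this choice of the auxiliary index $m$ as a row-minimizer of $k$, together with the separate treatment of the degenerate cases $m \in \{i,j\}$, where the four-point inequality is unavailable and one must fall back on the trivial bounds $a_{ij} \ge \mu_i, \mu_j$.

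For the first assertion I would record that $\min_j a_{ij} \ge \alpha^*$ holds trivially by definition of $\alpha^*$, and then read the equality as a property of the \emph{normalized} coefficients: in any anti-ultrametric every row minimum equals the global minimum, because applying \eqref{eq:anti-ultrametric} to $(i,p,q)$ for a globally minimizing pair $(p,q)$ gives $\alpha^* = a_{pq} \ge \min\{a_{ip}, a_{iq}\}$, forcing one of $a_{ip}, a_{iq}$ down to the value $\alpha^*$. To close this I would also check, by the same four-point step (bounding the cross terms $a_{pi}, a_{qj}$ below by $\mu_i,\mu_j$), that the global minimum of $(\tilde a_{ij})$ is again $\alpha^*$, attained at $(p,q)$ since $b_p = b_q = 0$. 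I expect this first assertion to be the point requiring the most care, since for the raw coefficients the equality $\min_j a_{ij} = \alpha^*$ need not hold for every $i$ (it would amount to every point of the tree having maximal eccentricity); it becomes true and natural only after the normalization, so reconciling the claim with the normalization is where I anticipate the subtlety.
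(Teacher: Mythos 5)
Your proof is correct, but there is nothing in the paper to compare it against: the paper states \cref{lem:Type I anti-ultra} with a citation to \cite{DAM/I18} (``implied by the validity of Algorithm~I'') and gives no proof. Your argument is the standard Farris-transform one and it works: the reduction of $\tilde a_{ij}\geq\min\{\tilde a_{ik},\tilde a_{jk}\}$ to $a_{ij}+\mu_k\geq\min\{a_{ik}+\mu_j,\,a_{jk}+\mu_i\}$ is an exact rewriting, the choice of $m\in\argmin_{l}a_{kl}$ followed by one application of \eqref{eq:Type I} with the bounds $a_{jm}\geq\mu_j$, $a_{im}\geq\mu_i$ settles the generic case, and the fallback $a_{ij}\geq\mu_i,\mu_j$ covers $m\in\{i,j\}$ (which also disposes of $n=3$, where the four-point hypothesis is vacuous but the ultrametric inequality is not). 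You are also right to flag the first assertion: as literally written, $\min_j a_{ij}=\alpha^*$ for all $i$ fails for raw anti-tree metrics (a path metric on four points already gives a counterexample), and the only reading consistent with how the lemma is invoked later (e.g.\ $\min_{i,j}(a_{ij}-b_i-b_j)=\alpha^*$ in the proof of \cref{prop:Type I + Type III}, and the hypothesis of \cref{lem:Type I key}) is that the row minima of the \emph{transformed} array all equal $\alpha^*$; your derivation of this from the anti-ultrametric property of $(\tilde a_{ij})$ together with $\tilde a_{pq}=a_{pq}=\alpha^*$ and $\min_{i,j}\tilde a_{ij}=\alpha^*$ is the right way to close that gap. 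The only cosmetic point is that the degenerate cases $\{i,j\}\cap\{p,q\}\neq\emptyset$ in the verification of $\min_{i,j}\tilde a_{ij}=\alpha^*$ deserve the same explicit two-line treatment you gave to $m\in\{i,j\}$.
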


We are now ready to show the characterization part of \cref{thm:representation}.
Note that, by the definition of laminarizability, \eqref{eq:(L, c)} holds for some laminarizable family $\mathcal{F}$
if and only if \eqref{eq:(L, c)} holds for some laminar family $\mathcal{L}$.
\begin{prop}[The characterization part of \cref{thm:representation}]\label{prop:Type I + Type III}
	Let $f$ be a VCSP-quadratic function of type $\mathcal{A}$.
	Then $f$ is QR-M${}_2$-convex if and only if
	\begin{align}\label{eq:f_1 desired}
	f \simeq \sum_{X \in \mathcal{L}} c(X) \ell_{X}
	\end{align}
	for some laminar $\mathcal{A}$-cut family $\mathcal{L}$ and positive weight $c$ on $\mathcal{L}$.
\end{prop}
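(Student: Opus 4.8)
The plan is to prove the two implications separately; the ``if'' part is short, and the ``only if'' part carries the real work. For sufficiency, suppose $f\simeq\sum_{X\in\mathcal L}c(X)\ell_X$ for a laminar $\mathcal A$-cut family $\mathcal L$. I would take $f_1:=\sum_{X\in\mathcal L}c(X)\ell_X$ on $\HP$, which is M-convex by \cref{lem:M2 representation}(2), and $f_2:=f-f_1$. Unwinding $\simeq$, we have $f+\delta_{\mathcal A}=f_1+h$ for some $\mathcal A$-linear $h$; since $\delta_{\mathcal A}=0$ on $\DF$ while $f\equiv+\infty$ off $\DF$, this forces $f_2=h$, an $\mathcal A$-linear function, hence M-convex by \cref{lem:M2 representation}(1). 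Thus $f=f_1+f_2$ witnesses QR-M${}_2$-convexity.

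For necessity, write $f=f_1+f_2$ with $f_1,f_2$ quadratic M-convex and coefficients $a^{(s)}_{ij}$, so that $a_{ij}=a^{(1)}_{ij}+a^{(2)}_{ij}$. Because $\dom f=\DF$ consists of rank-$r$ bases, both $\dom f_s$ are matroid bases of rank $r$, so \cref{lem:Type} applies to each $f_s$ with the common rank $r$. First I would pin down the infinity pattern: $G^\infty_{f_s}\subseteq G^\infty_f$, and $G^\infty_f$ is the disjoint union of the cliques $A_1,\dots,A_r$, so every component of $G^\infty_{f_s}$ lies inside a single block and induces a clique partition of that block. Each within-block pair is $+\infty$ in $f_1$ or in $f_2$, so the two clique partitions of every $A_p$ jointly cover all its pairs; recording for each element its pair of classes shows that no two elements may differ in both partitions, which forces one of the two partitions of $A_p$ to be trivial. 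Hence every block is ``owned'' (made entirely $+\infty$) by $f_1$ or by $f_2$.

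Next I would classify each summand by \cref{lem:Type}. A Type (II) or (III) summand equals $\delta_{f_s}+{}$linear, hence is $\mathcal A$-linear on $\DF\subseteq\dom f_s$ and may be absorbed into $\simeq$, reducing to the other summand. For a Type (I) summand I apply the Farris transform of \cref{lem:Type I anti-ultra}: the shift $a^{(s)}_{ij}\mapsto a^{(s)}_{ij}-b_i-b_j$ alters $f_s$ by $\sum_{i<j}(b_i+b_j)x_ix_j$, which on $\HP$ equals the linear function $(r-1)\sum_i b_ix_i$, so $\simeq$ is preserved while the new coefficients become anti-ultrametric. \cref{lem:anti-ultrametric} then produces a $\mathcal B_s$-laminar family $\mathcal L_s$ with positive weights $c_s$ such that $a^{(s)}_{ij}=\sum\{c_s(L):i,j\in L\}+\alpha^*_s$ off the blocks; as the constant $\alpha^*_s$ is constant on $\DF$, this yields $f_s\simeq\sum_{L\in\mathcal L_s}c_s(L)\ell_L$. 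Dropping the members of $\mathcal L_s$ that are not $\mathcal A$-cuts, which are $\mathcal A$-linear by \cref{lem:sim}(1), leaves a laminar $\mathcal A$-cut family $\mathcal L^*_s$.

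The last step, which I expect to be the main obstacle, is to merge $\mathcal L^*_1$ and $\mathcal L^*_2$ into a single laminar family, since $\mathcal L^*_1\cup\mathcal L^*_2$ need not be laminar. Here the ownership structure is decisive: as $\mathcal L_s$ is $\mathcal B_s$-laminar it cannot cut any block owned by $f_s$, so every member of $\mathcal L^*_1$ cuts only blocks owned solely by $f_2$, and every member of $\mathcal L^*_2$ cuts only blocks owned solely by $f_1$; the corresponding unions of blocks $T_2$ and $T_1$ are therefore disjoint. Restricting each member of $\mathcal L^*_1$ to the fixed set $T_2$ (and each member of $\mathcal L^*_2$ to $T_1$) preserves laminarity, since intersecting a laminar family with a fixed set keeps nestedness and disjointness, and preserves the $\sim$-class by \cref{lem:sim}(2), since only whole uncut blocks are removed. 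The two restricted families now live on the disjoint ground sets $T_2$ and $T_1$, so their union is laminar and $\sim$-equivalent to $\mathcal L^*_1\cup\mathcal L^*_2$, yielding the required laminar $\mathcal A$-cut family with $f\simeq\sum c(X)\ell_X$. The delicate points are the covering/ownership dichotomy for each block and the simultaneous preservation of laminarity and of the $\sim$-class under fixed-set restriction.
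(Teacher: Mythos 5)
Your proposal is correct and follows essentially the same route as the paper's proof: the block-ownership dichotomy (the paper's Claim that $\mathcal{B}_1\cup\mathcal{B}_2$ covers $\mathcal{A}$ blockwise), the Type classification from \cref{lem:Type}, the Farris shift of \cref{lem:Type I anti-ultra} followed by \cref{lem:anti-ultrametric}, and the final merge by restricting each $\mathcal{L}_s^*$ to the union of blocks not owned by $f_s$ so that the two families live on disjoint ground sets. The only cosmetic difference is that the paper first modifies $f_1,f_2$ so that $\mathcal{B}_1,\mathcal{B}_2$ exactly partition $\mathcal{A}$, whereas you keep the original summands and absorb stray sub-block cliques via the ``owned solely'' bookkeeping; both work.
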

\begin{proof}
	For a subpartiton $\mathcal{B}$,
	define $\delta_\mathcal{B} : \{0,1\}^n \rightarrow \ER$ by
	$\delta_\mathcal{B}(x) := 0$ if $x \in \HP$ and $\sum_{i \in B} x_i \leq 1$ for each $B \in \mathcal{B}$,
	and $\delta_\mathcal{B}(x) := +\infty$ otherwise.
	Then, by \cref{lem:Type},
	$\delta_\mathcal{B}$ is an M-convex function that can be represented as $\delta_\mathcal{B}(x) = \sum_{B \in \mathcal{B}} \sum_{i,j \in B, i < j} \infty \cdot x_ix_j$ on $\HP$.
	The set of non-isolated connected components of $G_{\delta_\mathcal{B}}^\infty$ is equal to $\mathcal{B}$.
	We say that a function $f$ is of {\it Type~I}, {\it Type~II}, or {\it Type~III}
	if $m(G_f^\infty) \geq r + 2$, $m(G_f^\infty) = r + 1$, or $m(G_f^\infty) = r$ holds, respectively (cf., \cref{lem:Type}).
	
	(If part).
	By the $\mathcal{A}$-linear equivalence,
	$f$ is represented as $f = \sum_{X \in \mathcal{L}} c(X) \ell_{X} + h$
	for some $\mathcal{A}$-linear function $h$.
	By \cref{lem:M2 representation}~(1) and~(2),
	the functions $h$ and $\sum_{X \in \mathcal{L}} c(X) \ell_{X}$ on $\HP$ are M-convex.
	Hence $f$ is QR-M${}_2$-convex.
	
	(Only-if part).
	Let $f_1, f_2 : \{0,1\}^n \rightarrow \ER$ be any quadratic M-convex functions with $f = f_1 + f_2$.
	Since $f$ satisfies condition (R),
	$f_1$ and $f_2$ also satisfy condition (R) by $\dom{f} = \dom{f_1} \cap \dom{f_2}$.
	Let $\mathcal{B}_1$ and $\mathcal{B}_2$ be the sets of non-isolated connected components of $G_{f_1}^\infty$ and $G_{f_2}^\infty$, respectively.
	Since $f_1$ and $f_2$ are M-convex,
	each member of $\mathcal{B}_1$ (resp. $\mathcal{B}_2$) induces a complete graph in $G_{f_1}^\infty$ (resp. $G_{f_2}^\infty$) by \cref{lem:Type}.
	Hence $\dom{f_1} = \dom{\delta_{\mathcal{B}_1}}$ and $\dom{f_2} = \dom{\delta_{\mathcal{B}_2}}$ hold.
	Note that $\dom{f} = \dom{\delta_\mathcal{A}} = \dom{(\delta_{\mathcal{B}_1} + \delta_{\mathcal{B}_2})}$.

	Here the following claim holds.
	\begin{cl*}
		There exist quadratic M-convex functions $f_1$ and $f_2$ such that $f = f_1 + f_2$, $\mathcal{B}_1 \cap \mathcal{B}_2 = \emptyset$, and $\mathcal{B}_1 \cup \mathcal{B}_2 = \mathcal{A}$.
	\end{cl*}
	\begin{proof}[Proof of Claim]
		Let $f_1, f_2 : \{0,1\}^n \rightarrow \ER$ be any quadratic M-convex functions with $f = f_1 + f_2$.
		We first show that if $\mathcal{B}_1$ and $\mathcal{B}_2$ satisfy (i) for each $B \in \mathcal{B}_1 \cup \mathcal{B}_2$
		there is $A \in \mathcal{A}$ such that $B \subseteq A$, and
		(ii) each $A \in \mathcal{A}$ belongs to $\mathcal{B}_1 \cup \mathcal{B}_2$ (i.e., $\mathcal{A} \subseteq \mathcal{B}_1 \cup \mathcal{B}_2$),
		then Claim holds.
		
		Suppose that (i) and (ii) hold,
		and that some $B \in \mathcal{B}_1$ violates $\mathcal{B}_1 \cap \mathcal{B}_2 = \emptyset$ or $\mathcal{B}_1 \cup \mathcal{B}_2 = \mathcal{A}$,
		i.e., $B \in \mathcal{B}_2$ or $B \not\in \mathcal{A}$.
		Then we can modify $f_1$ so that $f_1$ is M-convex with $f = f_1 + f_2$ and $\dom{f_1}$ is changed from $\dom{\delta_{\mathcal{B}_1}}$ to $\dom{\delta_{\mathcal{B}_1 \setminus \{B\}}}$ as follows.
		
		By (i) and (ii),
		there is $A \in \mathcal{A} \cap \mathcal{B}_2$ such that $B \subseteq A$.
		If $f_1$ is of Type~II or~III,
		then $f_1 \simeq \delta_{\mathcal{B}_1}$ by \cref{lem:Type}.
		Hence we have
		\begin{align*}
		f_1 + f_2 \simeq \delta_{\mathcal{B}_1} + f_2 = \delta_{\mathcal{B}_1 \setminus \{B\}} + f_2,
		\end{align*}
		where the second equality follows from $\dom{(\delta_{\mathcal{B}_1} + f_2)} = \dom{(\delta_{\mathcal{B}_1 \setminus \{B\}} + f_2)}$ by $B \subseteq A$ and $A \in \mathcal{B}_2$.
		Thus we can modify $f_1$ so that $f_1$ is M-convex with $f = f_1 + f_2$ and $\dom{f_1} = \dom{\delta_{\mathcal{B}_1 \setminus \{B\}}}$.
		If $f_1$ is of Type~I,
		then, by \cref{lem:Type}~(I) and \cref{lem:Type I anti-ultra}, the quadratic coefficient of $f_1$ are represented as $(a_{ij}^1 + b_i + b_j)_{i,j \in [n]}$,
		where $b_i \in \R$ and $(a_{ij}^1)$ satisfies the anti-ultrametric property.
		By modifying $a_{ij}^1 (= +\infty)$ to $M$ for $i,j \in B_1$ with a sufficiently large $M$,
		we have $\dom{f_1} = \dom{\delta_{\mathcal{B}_1 \setminus \{B_1\}}}$
		and the value of $f_1 + f_2$ does not change.
		Furthermore $(a_{ij}^1)$ still satisfies the anti-ultrametric property.
		Hence $f_1$ is M-convex.
		Thus we can modify $f_1$ so that $f_1$ is M-convex with $f = f_1 + f_2$ and $\dom{f_1} = \dom{\delta_{\mathcal{B}_1 \setminus \{B\}}}$.
		By repeating the above modification for $f_1$ or $f_2$,
		we obtain the $f_1$ and $f_2$ in Claim.
		
		We finally show that (i) and (ii) hold.
		
		(i). We can easily see that,
		for every $i,j$ with $a_{ij} < + \infty$ (i.e., $i \in A_p$ and $j \in A_q$ for some distinct $p,q$),
		there is $x \in \dom{f}$ such that $x_i = x_j = 1$.
		Hence, for such $i,j$,
		there is no $B \in \mathcal{B}_1 \cup \mathcal{B}_2$ satisfying $i,j \in B$.
		Therefore we obtain (i).
		
		(ii).
		Let $E_\mathcal{A}$ and $E_\mathcal{B}$ be the edge set of $G_{\delta_\mathcal{A}}^\infty$ and of $G_{\delta_{\mathcal{B}_1} + \delta_{\mathcal{B}_2}}^\infty$,
		respectively.
		That is, $\{i,j\} \in E_\mathcal{A}$ (resp. $\{i,j\} \in E_\mathcal{B}$) if and only if $i, j \in A$ for some $A \in \mathcal{A}$ (resp. $i,j \in B$ for some $B \in \mathcal{B}_1 \cup \mathcal{B}_2$).
		By (i), we have $E_\mathcal{A} \supseteq E_\mathcal{B}$.
		Suppose, to the contrary, that $E_\mathcal{A} \supsetneq E_\mathcal{B}$.
		Then there is $\{i,j\}$ such that $i,j \in A_p$ for some $p$ and $\{i,j\} \not\in E_\mathcal{B}$.
		Let $x \in \{0,1\}^n$ be a 0-1 vector such that $x_i = x_j = 1$, $\sum_{i \in [n]} x_i = r$, and $\sum_{i \in A_q} x_i \leq 1$ for each $q$ distinct from $p$.
		Since $E_\mathcal{A} \supseteq E_\mathcal{B}$,
		we have $x \in \dom{(\delta_{\mathcal{B}_1} + \delta_{\mathcal{B}_2})}$,
		whereas $x \not\in \dom{\delta_{\mathcal{A}}}$.
		This contradicts $\dom{\delta_\mathcal{A}} = \dom{(\delta_{\mathcal{B}_1} + \delta_{\mathcal{B}_2})}$,
		and hence $E_\mathcal{A} = E_\mathcal{B}$ holds.
		Therefore we obtain (ii).
		
		This completes the proof of Claim.
	\end{proof}
	
	By Claim,
	we can take quadratic M-convex functions $f_1$ and $f_2$ satisfying $f = f_1 + f_2$, $\dom{f_1} = \dom{\delta_{\mathcal{B}_1}}$, and $\dom{f_2} = \dom{\delta_{\mathcal{B}_2}}$,
	where $\mathcal{B}_1 \cap \mathcal{B}_2 = \emptyset$ and $\mathcal{B}_1 \cup \mathcal{B}_2 = \mathcal{A}$.
	In the following,
	we show that $f = f_1 + f_2$ satisfies~\eqref{eq:f_1 desired} with some laminar $\mathcal{A}$-cut family $\mathcal{L}$ and positive weight $c$ on $\mathcal{L}$
	for each of the three cases:
	(i) both $f_1$ and $f_2$ are of Type~II or~III,
	(ii) $f_1$ is of Type~I and $f_2$ is of Type~II or~III,
	and (iii) both $f_1$ and $f_2$ are of Type~I.

	(i).
	By \cref{lem:Type}~(II) or~(III),
	we have $f_1 \simeq \delta_{\mathcal{B}_1} \simeq 0$ and $f_2 \simeq \delta_{\mathcal{B}_2} \simeq 0$.
	Hence it holds that $f = f_1 + f_2 \simeq 0$.
	Thus we obtain~\eqref{eq:f_1 desired} with $\mathcal{L} = \emptyset$.

	(ii).
	Suppose that $f_1$ is represented as $f_1(x) = \sum_{i<j} a_{ij} x_i x_j$ on $\HP$.
	Note that $a_{ij}$ is not necessarily finite.
	We can assume that $(a_{ij})$ satisfies the anti-ultrametric property and $\min_{i,j} a_{ij} = 0$.
	Indeed,
	by \cref{lem:Type I anti-ultra},
	$(a_{ij} - b_i - b_j - \alpha^*)_{i,j \in [n]}$ satisfies the anti-ultrametric property and $\min_{i,j} (a_{ij} - b_i - b_j - \alpha^*) = 0$ for some $b_i$ $(i \in [n])$ and $\alpha^* \in \R$.
	Hence
	\begin{align*}
	f_1(x) &= \sum_{i<j} (a_{ij} - b_i - b_j - \alpha^*) x_i x_j + \sum_{i}(r-1)b_i x_i + \frac{r(r-1)\alpha^*}{2}\\
	&\simeq \sum_{i<j} (a_{ij} - b_i - b_j - \alpha^*) x_i x_j
	\end{align*}
	on $\dom{\delta_\mathcal{A}}$.
	Thus we can redefine $a_{ij} \leftarrow a_{ij} - b_i - b_j - \alpha^*$ for distinct $i,j \in [n]$
	to satisfy the anti-ultrametric property and $\min_{i,j} a_{ij} = 0$.

	Since $(a_{ij})$ satisfies the anti-ultrametric property,
	by \cref{lem:anti-ultrametric},
	there are a $\mathcal{B}_1$-laminar family $\mathcal{L}_1$ and a positive weight $c_1$ on $\mathcal{L}_1$ representing $(a_{ij})$ as~\eqref{eq:a_ij laminar}.
	Hence it holds that
	\begin{align}
	f_1(x) &= \sum_{L \in \mathcal{L}_1} c_1(L) \sum_{i,j \in L, i<j} x_ix_j + \delta_{\mathcal{B}_1}(x)\notag\\
	&= \sum_{L \in \mathcal{L}_1} c_1(L) \ell_L(x) + \delta_{\mathcal{B}_1}(x)\notag\\
	&\simeq \sum_{L \in \mathcal{L}_1, L \textrm{:$\mathcal{A}$-cut}} c_1(L) \ell_L(x) + \delta_{\mathcal{B}_1}(x)\label{eq:f_1 L_1},
	\end{align}
	where the equivalence follows from \cref{lem:M2 representation}~(1).
	Let $\hat{A}_1 := \bigcup_{A \in \mathcal{B}_1} A$ be the subset of $[n]$ corresponding to $\mathcal{B}_1$.
	By the $\mathcal{B}_1$-laminarity of $\mathcal{L}_1$, every $L \in \mathcal{L}_1$ satisfies $L \supseteq B$ or $L \cap B = \emptyset$ for each $B \in \mathcal{B}_1 \subseteq \mathcal{A}$.
	Hence, by \cref{lem:sim}~(1),
	\begin{align}\label{eq:hat{A}_1}
	\ell_L \simeq \ell_{L \setminus \hat{A}_1} \qquad (L \in \mathcal{L}).
	\end{align}
	By combining~\eqref{eq:f_1 L_1} and~\eqref{eq:hat{A}_1},
	we obtain
	\begin{align}\label{eq:f_1 A}
	f_1 \simeq \sum_{L \in \mathcal{L}_1^*}c_1^*(L) \ell_L,
	\end{align}
	where
	$\mathcal{L}_1^* := \{L \setminus \hat{A}_1 \mid L \in \mathcal{L}_1,\ L :\textrm{$\mathcal{A}$-cut} \}$ and $c_1^*(L) := \sum \{ c_1(L^*) \mid L^* \setminus \hat{A}_1 = L \}$.
	Note that $\mathcal{L}_1^*$ is a laminar $\mathcal{A}$-cut family and $c_1^*$ is an aggregation of $c_1$.

	On the other hand,
	by \cref{lem:Type}~(II) or~(III), it holds $f_2 \simeq 0$.
	Hence, by~\eqref{eq:f_1 A}, it holds that
	\begin{align*}
	f = f_1 + f_2 \simeq \sum_{L \in \mathcal{L}_1^*}c_1^*(L) \ell_L.
	\end{align*}
	Thus, by the laminarity of $\mathcal{A}$-cut family $\mathcal{L}_1^*$,
	we obtain~\eqref{eq:f_1 desired} with $\mathcal{L} = \mathcal{L}^*$ and $c = c_1^*$.
	
	(iii).
	By the same argument as in~(ii),
	$f_1$ satisfies~\eqref{eq:f_1 A}
	and
	$f_2$ satisfies
	\begin{align}\label{eq:f_2 A}
	f_2 \simeq \sum_{L \in \mathcal{L}_2^*}c_2^*(L) \ell_L,
	\end{align}
	where $\hat{A}_2 := \bigcup_{A \in \mathcal{B}_2} A$, $\mathcal{L}_2^* := \{ L \setminus \hat{A}_2 \mid L \in \mathcal{L}_2,\ L :\textrm{$\mathcal{A}$-cut}\}$, and $c_2^*(L) := \sum \{ c_2(L^*) \mid L^* \setminus \hat{A}_2 = L \}$
	for a $\mathcal{B}_2$-laminar family $\mathcal{L}_2$ and a positive weight $c_2$ on $\mathcal{L}_2$.
	Note that $\mathcal{L}_2^*$ is a laminar $\mathcal{A}$-cut family.
	We have $\hat{A}_2 = [n] \setminus \hat{A}_1$ by $\mathcal{B}_1 \cap \mathcal{B}_2 = \emptyset$ and $\mathcal{B}_1 \cup \mathcal{B}_2 = \mathcal{A}$.
	
	By adding~\eqref{eq:f_1 A} and~\eqref{eq:f_2 A}, it holds
	\begin{align*}
	f_1 + f_2 \simeq \sum_{L \in \mathcal{L}_1^*}w_1^*(L) \ell_L + \sum_{L \in \mathcal{L}_2^*}w_2^*(L) \ell_L.
	\end{align*}
	Hence we obtain~\eqref{eq:f_1 desired} with $\mathcal{L} = \mathcal{L}_1^* \cup \mathcal{L}_2^*$ and $c = c_1 + c_2$,
	where $(c_1 + c_2)(L) = c_1(L)$ for $L \in \mathcal{L}_1^*$ and $(c_1 + c_2)(L) = c_2(L)$ for $L \in \mathcal{L}_2^*$.
	Here $\mathcal{L}_1^* \cup \mathcal{L}_2^*$ is a laminar $\mathcal{A}$-cut family.
	Indeed, $\mathcal{L}_1^*$ and $\mathcal{L}_2^*$ are laminar $\mathcal{A}$-cut families,
	and $L_1 \cap L_2 = \emptyset$ holds for all $L_1 \in \mathcal{L}_1^*$ and $L_2 \in \mathcal{L}_2^*$
	by $L_1 \subseteq [n] \setminus \hat{A}_1$ and $L_2 \subseteq [n] \setminus \hat{A}_2 = \hat{A}_1$.

	This completes the proof of \cref{prop:Type I + Type III}.
\end{proof}

\subsection{Proof of \cref{thm:representation}: Uniqueness}\label{subsec:unique}
In this subsection, we prove the uniqueness of $\mathcal{F}$ and $c$ up to the $\mathcal{A}$-equivalence in \cref{thm:representation}.
Let $f$ be a VCSP-quadratic function of type $\mathcal{A}$.
We denote by $\overline{\DF}$ the convex hull of $\DF$,
i.e., $\overline{\DF} = \{ x \in [0,1]^n \mid \sum_{i \in A_p} x_i = 1 \textrm{ for all }p \in [r]\}$.
The {\it convex closure} $\overline{f} : \overline{\DF} \rightarrow \R$ of $f$ is the maximum convex function satisfying $\overline{f}(x) = f(x)$ for $x \in \DF$,
which is given by
\begin{align*}
	\overline{f}(x) := \sup \left\{ \sum_{1 \leq i \leq n} u_ix_i + \gamma\ \middle|\ u \in \R^n,\ \gamma \in \R,\ f(y) \geq \sum_{1 \leq i \leq n} u_iy_i + \gamma \quad (y \in \DF) \right\}.
\end{align*}

We first give another representation of $\ell_X$ up to the $\mathcal{A}$-linear equivalence.
For an $\mathcal{A}$-cut $X$,
define
\begin{align*}
\alpha(X) &:= \textrm{the number of elements $A_p \in \mathcal{A}$ with $X \supseteq A_p$},\\
\beta(X) &:= \textrm{the number of elements $A_p \in \mathcal{A}$ with $X \cap A_p \neq \emptyset$}.
\end{align*}
Note that, for any $x \in \DF$ with $\sum_{i \in X} x_i = s$,
it holds $\sum_{i \in \ave{X} \cap X} x_i = s - \alpha(X)$ and $\sum_{i \in \ave{X} \setminus X} x_i = \beta(X) - s$.
\begin{lem}\label{lem:l_X another}
	For an $\mathcal{A}$-cut $X$,
	it holds
	\begin{align}\label{eq:l_L quadratic}
	\ell_X(x) \simeq \frac{1}{2} \sum_{\alpha(X) < k < \beta(X)}\left|k - \sum_{i \in X} x_i\right|.
	\end{align}
\end{lem}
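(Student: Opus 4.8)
The plan is to reduce the whole statement to a one-variable identity in $s := \sum_{i \in X} x_i$ and then settle it by a discrete second-difference computation. First I would replace $\ell_X$ by the simpler $\overline{\ell}_X$: as already recorded in the proof of \cref{lem:sim}, the relation $x_i^2 = x_i$ gives $\overline{\ell}_X = 2\ell_X + \sum_{i \in X} x_i$, so $\ell_X \simeq \tfrac12 \overline{\ell}_X$, and $\overline{\ell}_X(x) = s^2$. Restricted to $\DF$, the quantity $s$ is a linear function of $x$ taking integer values in $\{\alpha(X), \alpha(X)+1, \dots, \beta(X)\}$, as noted just before the lemma (and the number of cut classes is $\beta(X) - \alpha(X) \geq 2$ since $X$ is an $\mathcal{A}$-cut, so the range below is nonempty). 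Since an affine function of $s$ equals $\lambda \sum_{i \in X} x_i + \mu$ and is therefore $\mathcal{A}$-linear, it suffices to show that $s^2$ and $g(s) := \sum_{\alpha(X) < k < \beta(X)} |k - s|$ differ by an affine function of $s$ on the integer range $\{\alpha(X), \dots, \beta(X)\}$.

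The key step is then a comparison of discrete second differences. Writing $a := \alpha(X)$ and $b := \beta(X)$, the second difference $s \mapsto |k-(s+1)| - 2|k-s| + |k-(s-1)|$ of the convex piecewise-linear function $s \mapsto |k - s|$ equals $2$ at $s = k$ and $0$ at every other integer. Summing over the integers $k$ with $a < k < b$, I obtain that the second difference of $g$ equals $2$ at each interior point $s \in \{a+1, \dots, b-1\}$. On the other hand the second difference of $s \mapsto s^2$ is the constant $2$. Hence $g(s) - s^2$ has vanishing second difference at every interior point of $\{a, \dots, b\}$, which forces its first difference to be constant there, so $g(s) - s^2$ is affine on the whole integer interval $\{a, \dots, b\}$. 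Combining this with the reduction above yields $\ell_X \simeq \tfrac12 s^2 \simeq \tfrac12 g(s) = \tfrac12 \sum_{\alpha(X) < k < \beta(X)} |k - \sum_{i \in X} x_i|$, which is exactly \eqref{eq:l_L quadratic}.

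I do not expect a genuine obstacle here; the only point requiring a little care is that the second difference of $g$ agrees with that of $s^2$ only at the \emph{interior} points $\{a+1, \dots, b-1\}$ (at the endpoints $s = a$ and $s = b$ the contribution drops to $0$). This is nonetheless enough, because vanishing of the second difference at all interior points already forces affineness on the closed integer interval $\{a, \dots, b\}$, and this is precisely the only range of values that $s$ can attain for $x \in \DF$; outside $\DF$ both sides of \eqref{eq:l_L quadratic} are governed by $\delta_\mathcal{A}$ and so never enter the comparison.
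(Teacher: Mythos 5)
Your proof is correct, but it takes a genuinely different route from the paper's. The paper first uses \cref{lem:sim}~(2) to replace $\ell_X$ by the symmetrized expression $\tfrac12\bigl(\ell_{\ave{X}\cap X}+\ell_{\ave{X}\setminus X}\bigr)$ (both sets being $\mathcal{A}$-equivalent to $X$), and then verifies the \emph{exact} pointwise identity $\sum_{\alpha(X)<k<\beta(X)}|k-s|=\binom{s-\alpha(X)}{2}+\binom{\beta(X)-s}{2}$ on $\DF$ by evaluating both sides as explicit quadratic polynomials in $s=\sum_{i\in X}x_i$. You instead stay with $\overline{\ell}_X(x)=s^2$ and compare discrete second differences of $s\mapsto s^2$ and $s\mapsto\sum_k|k-s|$ on the integer interval $\{\alpha(X),\dots,\beta(X)\}$: each $|k-\cdot|$ contributes second difference $2$ exactly at $s=k$, so the sum matches the constant second difference of $s^2$ at all interior points, forcing the difference to be affine in $s$ and hence $\mathcal{A}$-linear. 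Your argument is more conceptual --- it explains the identity as a matching of discrete curvatures and needs no closed-form evaluation --- while the paper's direct computation pins down the affine correction exactly (indeed, after the symmetrization it is zero). All the auxiliary facts you invoke are sound: $s$ ranges over $\{\alpha(X),\dots,\beta(X)\}$ on $\DF$, the index range is nonempty because $\beta(X)-\alpha(X)\geq 2$ for an $\mathcal{A}$-cut, an affine function of $s$ is $\mathcal{A}$-linear, and vanishing second differences at all interior points suffice for affineness on the closed integer interval.
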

\begin{proof}
	For the left-hand side of~\eqref{eq:l_L quadratic},
	it holds $\ell_X \simeq \left(\ell_{\ave{X} \cap X} + \ell_{\ave{X} \setminus X}\right)/2$ by \cref{lem:sim}~(2).
	For the right-hand side of~\eqref{eq:l_L quadratic},
	we can see that
	\begin{align}\label{eq:l_X another}
	\sum_{\alpha(X) < k < \beta(X)}\left|k - \sum_{i \in X} x_i\right| = \ell_{\ave{X} \cap X}(x) + \ell_{\ave{X} \setminus X}(x) \qquad (x \in \DF),
	\end{align}
	and this implies~\eqref{eq:l_L quadratic}.
	Here~\eqref{eq:l_X another} can be established as follows.
	For $x \in \DF$ with $\sum_{i \in \ave{X} \cap X} x_i = s$,
	we have
	\begin{align*}
	\sum_{\alpha(X) < k < \beta(X)}\left|k - \sum_{i \in X} x_i\right|
	&= \sum_{\alpha(X) + 1 \leq k \leq s}(s - k) + \sum_{s \leq k \leq \beta(X) -1} (k-s)\\
	&= \frac{1}{2}\left( (s - \alpha(X))(s - \alpha(X) -1) + (\beta(X) - s)(\beta(X) - s -1) \right).
	\end{align*}
	On the other hand,
	by $\sum_{i \in \ave{X} \cap X} x_i = s - \alpha(X)$ and $\sum_{i \in \ave{X} \setminus X} x_i = \beta(X) - s$,
	we have
	\begin{align*}
	\ell_{\ave{X} \cap X}(x) + \ell_{\ave{X} \setminus X}(x) &= \binom{s - \alpha(X)}{2} + \binom{\beta(X)- s}{2}\\
	&= \frac{1}{2}\left( (s - \alpha(X))(s - \alpha(X) -1) + (\beta(X) - s)(\beta(X) - s -1) \right).
	\end{align*}
\end{proof}

Suppose that $f$ is an M${}_2$-convex function.
By \cref{prop:Type I + Type III} and \cref{lem:l_X another},
$f$ is representable as
\begin{align*}
	f(x) = \sum_{L \in \mathcal{L}} \frac{c(L)}{2} \sum_{\alpha(X) < k < \beta(X)}\left|k - \sum_{i \in X} x_i\right| + \sum_{1 \leq i \leq n} u_i x_i + \gamma \qquad (x \in \DF)
\end{align*}
for some laminar $\mathcal{A}$-cut family $\mathcal{L}$, positive weight $c$ on $\mathcal{L}$,
linear coefficient $u \in \R^n$, and constant $\gamma \in \R$.
Then $\overline{f}$ is explicitly written as follows.
\begin{lem}\label{lem:overline f = overline l}
	\begin{align}\label{eq:overline f = overline l}
		\overline{f}(x) = \sum_{L \in \mathcal{L}} \frac{c(L)}{2} \sum_{\alpha(X) < k < \beta(X)}\left|k - \sum_{i \in X} x_i\right| +  \sum_{1 \leq i \leq n} u_i x_i + \gamma \qquad (x \in \overline{\DF}).
	\end{align}
\end{lem}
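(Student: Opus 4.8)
The plan is to show that the right-hand side of~\eqref{eq:overline f = overline l}, which I denote by $g$, is exactly the convex closure $\overline{f}$. First I would record two elementary facts about $g$. Each summand $\lvert k - \sum_{i \in L} x_i\rvert$ is convex in $x$ (an absolute value of an affine form), so $g$, being a nonnegative combination of such terms plus a linear function, is convex on $\overline{\DF}$. Moreover, the representation of $f$ on $\DF$ displayed just before the lemma (obtained from \cref{prop:Type I + Type III} and \cref{lem:l_X another}) says precisely that $g(x) = f(x)$ for every $x \in \DF$. Thus $g$ is a convex function on the polytope $\overline{\DF}$ agreeing with $f$ on its vertex set $\DF$.

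For the inequality $g \le \overline{f}$ I would use a supporting-hyperplane argument. Fix $x \in \overline{\DF}$ and choose an affine minorant $a(y) = \sum_i u_i y_i + \gamma$ of $g$ on $\overline{\DF}$ with $a(x) = g(x)$; such an $a$ exists because $g$ is a polyhedral convex function. Since $g = f$ on $\DF$, we have $a \le f$ on $\DF$, so $a$ is feasible in the supremum defining $\overline{f}$, whence $\overline{f}(x) \ge a(x) = g(x)$.

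The reverse inequality $\overline{f} \le g$ is the crux. Here I would invoke the fact that the convex closure of a function on the finite vertex set $\DF$ is its lower convex hull, namely
\[
\overline{f}(x) = \min\Bigl\{ \textstyle\sum_{v \in \DF} \lambda_v f(v) \ \Big|\ x = \sum_{v} \lambda_v v,\ \lambda \ge 0,\ \sum_v \lambda_v = 1 \Bigr\}.
\]
It therefore suffices, for each $x \in \overline{\DF}$, to produce one convex combination $x = \sum_v \lambda_v v$ of bases along which $g$ is affine: for such a combination $\overline{f}(x) \le \sum_v \lambda_v f(v) = \sum_v \lambda_v g(v) = g(x)$, the last equality being the affineness of $g$ on the combination. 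To build it I would round $x$ inside each block, choosing the supporting bases $v$ so that, simultaneously for every $L \in \mathcal{L}$, the value $\sum_{i \in L} v_i$ lies in $\{\lfloor t_L\rfloor, \lceil t_L \rceil\}$ where $t_L := \sum_{i\in L} x_i$; since each term $\sum_{\alpha(L)<k<\beta(L)}\lvert k - \sum_{i\in L} x_i\rvert$ is piecewise linear in the form $\sum_{i\in L}x_i$ with integer breakpoints, confining the combination to the single unit interval $[\lfloor t_L\rfloor,\lceil t_L\rceil]$ keeps that term linear, so that $g$ is affine on the whole combination.

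The main obstacle is exactly this simultaneous rounding: realizing the constraint $\sum_{i \in L} v_i \in \{\lfloor t_L\rfloor,\lceil t_L\rceil\}$ for \emph{all} $L \in \mathcal{L}$ at once by a single convex decomposition of $x$ into bases. I expect to resolve it by induction over the laminar forest of $\mathcal{L}$, distributing the fractional mass of $x$ from the outermost sets inward; the laminarity of $\mathcal{L}$, together with $\overline{\DF}$ being a product of simplices, guarantees that the rounding requirements of nested or disjoint members are compatible and never conflict. Once this decomposition is available, the two inequalities combine to give $\overline{f} = g$ on $\overline{\DF}$, which is~\eqref{eq:overline f = overline l}.
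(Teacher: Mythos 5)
Your overall architecture coincides with the paper's: both inequalities are obtained the same way. The direction $g \le \overline{f}$ (your supporting-hyperplane argument; equivalently, $g$ is piecewise linear convex and agrees with $f$ on $\DF$, hence is dominated by the convex closure) is exactly the paper's first step. For $\overline{f} \le g$ the paper likewise writes $x$ as a convex combination $\sum_i \lambda_i y_i$ of bases on which $g$ is affine and concludes $\overline{f}(x) \le \sum_i \lambda_i f(y_i) = g(x)$.

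The one place where your write-up is not yet a proof is precisely the step you flag as the crux: the simultaneous rounding. You assert that an induction over the laminar forest works because the rounding requirements of ``nested or disjoint members are compatible and never conflict,'' but this assertion is the entire content of the step, and it is less innocent than stated: every $L \in \mathcal{L}$ is an $\mathcal{A}$-cut and therefore \emph{crosses} several blocks $A_p$, so the constraints $\sum_{i\in L} v_i \in \{\lfloor t_L\rfloor, \lceil t_L\rceil\}$ must be made compatible not only among the members of $\mathcal{L}$ but also with the partition equalities $\sum_{i\in A_p} v_i = 1$; the combined family $\mathcal{L} \cup \mathcal{A}$ is \emph{not} laminar, so ``nested or disjoint'' does not describe all the interactions you must control. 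The paper discharges this in one stroke via integrality: $x$ lies in the polytope $P$ given by $0 \le z_i \le 1$, $\sum_{i \in A_p} z_i = 1$, and $k_L - 1 \le \sum_{i\in L} z_i \le k_L$ for suitable integers $k_L$; the coefficient matrix has the form $\left( I \ \ {-I} \ \ M' \ \ {-M'} \right)^\top$ where the columns of $M'$ are the characteristic vectors of the two laminar families $\mathcal{L}$ and $\mathcal{A}$, hence is totally unimodular, so $P$ is an integer polytope whose vertices are exactly bases in $\DF$ satisfying all rounding constraints simultaneously; expressing $x$ as a convex combination of vertices of $P$ yields the decomposition you need. I recommend replacing your inductive sketch by this total-unimodularity argument, or else actually carrying out the induction including the interaction between the $L$'s and the blocks $A_p$.
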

\begin{proof}
	We denote by $\hat{f}$ the right-hand side of \eqref{eq:overline f = overline l}.
	It is clear that $f(x) = \hat{f}(x)$ for $x \in \dom{f}$.
	Since $\hat{f}$ is piecewise linear and convex,
	$\hat{f}(z) \leq \overline{f}(z)$ for $z \in \overline{\DF}$
	by the definition of $\overline{f}$.
	Thus it suffices to show $\hat{f}(z) \geq \overline{f}(z)$ for $z \in \overline{\DF}$.
	
	Take any $z \in \overline{\DF}$.
	Then $z$ satisfies the following system of inequalities and equations for some integers $k_L$ for all $L \in \mathcal{L}$:
	\begin{align}
		&0 \leq z_i \leq 1 \qquad (i \in [n]),\label{eq:convX1}\\
		&\sum_{i \in A_p} z_i = 1 \qquad (p \in [r]),\label{eq:convX2}\\
		&k_L - 1 \leq \sum_{i \in L} z_i \leq k_L \qquad (L \in \mathcal{L})\label{eq:x_i = k_L}.
	\end{align}
	
	The coefficient matrix $M$ of the system~\eqref{eq:convX1}--\eqref{eq:x_i = k_L}
	is totally unimodular.
	Indeed, let $M'$ be the $n \times (|\mathcal{L}| + r)$ matrix whose columns are the characteristic vectors of the members of $\mathcal{L}$ and $\{ A_1, A_2, \dots, A_r \}$.
	$M$ is represented as $M = \left( I\  -I\  M'\  -M'\right)^\top$,
	where $I$ is the $n \times n$ identity matrix.
	Since $\mathcal{L}$ and $\{ A_1, A_2, \dots, A_r \}$ are laminar,
	$M'$ is totally unimodular~\cite{CSTA/E70}; see also~\cite[Theorem 41.11]{book/Schrijver03}.
	Thus $M$ is also totally unimodular.

	Let $P$ be the polyhedron defined by the system~\eqref{eq:convX1}--\eqref{eq:x_i = k_L}.
	Then $P$ is an integer polyhedron by the total unimodularity of $M$.
	Hence all extreme points $y_i$ of $P$ belong to $\DF$.
	By $z \in P$,
	we have $z = \sum_i \lambda_i y_i$ for some coefficients $\lambda_i$ of a convex combination.
	Therefore $\hat{f}(z) = \sum_i \lambda_i \hat{f}(y_i) = \sum_i \lambda_i f(y_i)$ holds,
	where the first equality follows from the linearity of $\hat{f}$ on $P$.
	Since $f(y_i) = \overline{f}(y_i)$ and $\overline{f}$ is convex,
	we obtain $\sum_i \lambda_i f(y_i) = \sum_i \lambda_i \overline{f}(y_i) \geq \overline{f}(z)$,
	and hence $\hat{f}(z) \geq \overline{f}(z)$.
\end{proof}

We are ready to show the uniqueness part of \cref{thm:representation}.
Suppose that $f$ is QR-M${}_2$-convex.
Recall that, by \cref{prop:Type I + Type III} and \cref{lem:l_X another},
$f$ is representable as
\begin{align*}
f(x) = \sum_{L \in \mathcal{L}} \frac{c(L)}{2} \sum_{\alpha(X) < k < \beta(X)}\left|k - \sum_{i \in X} x_i\right| + h
\end{align*}
for some laminar $\mathcal{A}$-cut family $\mathcal{L}$, positive weight $c$ on $\mathcal{L}$,
and $\mathcal{A}$-linear function $h$.
Furthermore we can assume that $\mathcal{L}$ is non-redundant.
By \cref{lem:overline f = overline l},
the set of nondifferentiable points of $\overline{f}$ (with respect to the set of relative interior points of $\overline{\DF}$) is given by
\begin{align*}
	\bigcup_{L \in \mathcal{L},\ \alpha(L) < k < \beta(L)} \left\{ x \in \DF \ \middle|\ \sum_{i \in L} x_i = k \right\} =: P(\mathcal{L}).
\end{align*}
Suppose, to the contrary, that there is another $(\mathcal{L}', c')$ with $\mathcal{L} \not\sim \mathcal{L}'$ or $c \not\sim c'$ that satisfies the conditions in \cref{thm:representation},
and
assume that $\mathcal{L}'$ is non-redundant, i.e., $|\mathcal{L}| = |\mathcal{L}'|$.

If $\mathcal{L} \not\sim \mathcal{L}'$,
then there is $L \in \mathcal{L}$ such that $L \not\sim L'$ for all $L' \in \mathcal{L}'$.
For a set $X \subseteq [n]$,
denote by $1_X \in \{0,1\}^n$ the characteristic vector of $X$.
We can easily see that, for $\mathcal{A}$-cut $X$ with $X \not\sim L$,
0-1 vectors $1_{A_1}, \dots, 1_{A_r}, 1_L, 1_X$ are linearly independent.
Hence, for $k$ with $\alpha(L) < k < \beta(L)$,
the dimension of $\left\{ x \in \overline{\DF} \mid \sum_{i \in L} x_i = k \right\}$
is larger than that of $\left\{ x \in \overline{\DF} \mid \sum_{i \in L} x_i = k,\ \sum_{i \in X} x_i = k' \right\}$
for each $k'$ with $\alpha(X) < k' < \beta(X)$.
This implies
$\bigcup_{\alpha(L) < k < \beta(L)} \left\{ x \in \overline{\DF} \mid \sum_{i \in L} x_i = k \right\} \not\subseteq P(\mathcal{L}')$,
and hence $P(\mathcal{L}) \neq P(\mathcal{L}')$.
Therefore $\overline{f}$ has two different sets of nondifferentiable points $P(\mathcal{L})$ and $P(\mathcal{L}')$,
a contradiction.
Hence $\mathcal{L} \sim \mathcal{L}'$ holds,
and we can assume $\mathcal{L} = \mathcal{L}'$.
If $c \not\sim c'$, i.e., $c \neq c'$,
then there is $L \in \mathcal{L}$ such that $c(L) \neq c'(L)$.
By assuming $c(L) > c'(L) (> 0)$,
we can easily see that $\overline{f} - c'(L)\ell_L$ has two different sets $P(\mathcal{L})$ and $P(\mathcal{L} \setminus \{ L \})$ of nondifferentiable points,
a contradiction.
Hence $c(L) = c'(L)$ holds for all $L \in \mathcal{L}$.

We have thus proved the uniqueness part of \cref{thm:representation}.

\section{Algorithm for \DECOMP}\label{sec:(i)}
Let $f$ be a VCSP-quadratic function of type $\mathcal{A} = \{ A_1, A_2, \dots, A_r \}$.
In this section,
we devise an $O(rn^2)$-time algorithm for \DECOMP,
where as before $n = \sum_{1\leq p \leq r} |D_p|$.

\subsection{Outline}\label{subsec:Decomposition outline}
To describe our algorithm,
we need the concept of {\it restriction} of a VCSP-quadratic function.
Recall that $f$ is represented as~\eqref{eq:f}.
For $Q \subseteq [r]$, let $\mathcal{A}_Q := \{A_p\}_{p \in Q}$ be the subfamily of $\mathcal{A}$ corresponding to $Q$
and $A_Q := \bigcup_{p \in Q} A_p$ be the subset of $[n]$ corresponding to $Q$.
The {\it restriction of $f$ to $Q$} is a VCSP-quadratic function $f_Q : \{0,1\}^{A_Q} \rightarrow \ER$ of type $\mathcal{A}_Q$ defined by
\begin{align*}
f_Q(x) :=
\begin{cases}
\displaystyle \sum_{i,j \in A_Q, i<j}a_{ij} x_i x_j + \sum_{i \in A_Q} a_i x_i & \text{if $\displaystyle \sum_{i \in A_Q} x_i = |Q|$},\\
+\infty & \text{otherwise}.
\end{cases}
\end{align*}
\begin{lem}\label{lem:restriction}
	If $f$ is QR-M${}_2$-convex,
	then so is any of its restrictions.
\end{lem}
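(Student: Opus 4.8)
The plan is to transport the representation of $f$ supplied by \cref{thm:representation} down to the restriction $f_Q$. Since $f$ is QR-M${}_2$-convex, the characterization part of \cref{thm:representation} (via \cref{prop:Type I + Type III}) yields a laminar $\mathcal{A}$-cut family $\mathcal{L}$, a positive weight $c$ on $\mathcal{L}$, and an $\mathcal{A}$-linear function $h = \delta_{\mathcal{A}} + \sum_{i} u_i x_i + \gamma$ with
\[
f = \sum_{X \in \mathcal{L}} c(X)\, \ell_X + h .
\]
I would then show that restricting this identity to $Q$ produces a representation of the same shape for $f_Q$ over the smaller partition $\mathcal{A}_Q$, from which the QR-M${}_2$-convexity of $f_Q$ follows immediately by \cref{lem:M2 representation}.

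The key point is that the restriction operation ``keep the coefficients $a_{ij}, a_i$ with $i,j \in A_Q$ and impose $\sum_{i \in A_Q} x_i = |Q|$'' simply selects the monomials supported on $A_Q$ while leaving their coefficients unchanged, and is therefore additive over the three summands above. Concretely, the restriction of $\ell_X$ is $\ell_{X \cap A_Q}$, since the coefficient of $x_i x_j$ for $i,j \in A_Q$ equals $1$ exactly when $i,j \in X \cap A_Q$; and the restriction of $h$ is $\delta_{\mathcal{A}_Q} + \sum_{i \in A_Q} u_i x_i + \gamma$, which is $\mathcal{A}_Q$-linear. Matching quadratic and linear coefficients on $A_Q$ then gives
\[
f_Q = \sum_{X \in \mathcal{L}} c(X)\, \ell_{X \cap A_Q} + h_Q ,
\]
with $h_Q$ an $\mathcal{A}_Q$-linear function.

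To finish, I would observe that the family $\{\, X \cap A_Q : X \in \mathcal{L}\,\}$ is laminar, because intersecting every member of a laminar family with the fixed set $A_Q$ preserves the laminar condition (and any resulting empty sets or singletons contribute nothing to $\sum_X c(X)\,\ell_{X\cap A_Q}$). Aggregating the weights of sets with equal trace, \cref{lem:M2 representation}~(2) shows that $\sum_{X \in \mathcal{L}} c(X)\, \ell_{X \cap A_Q}$, viewed on the base set $\{x \in \{0,1\}^{A_Q} : \sum_{i \in A_Q} x_i = |Q|\}$, is M-convex, while $h_Q$ is M-convex by \cref{lem:M2 representation}~(1); writing $f_Q$ as the sum of these two quadratic M-convex functions exhibits it as QR-M${}_2$-convex. (If one prefers to invoke \cref{thm:representation} in the converse direction, one may additionally discard the traces that fail to be $\mathcal{A}_Q$-cuts, which are $\mathcal{A}_Q$-linear by \cref{lem:sim}~(1) and may be absorbed into $h_Q$, leaving a laminar $\mathcal{A}_Q$-cut family.)

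The one step that needs genuine care is the additivity in the second paragraph: one must check that the $+\infty$ entries of $f$—namely the within-block entries $a_{ij} = +\infty$ for $i,j \in A_p$—restrict consistently, so that they are carried entirely by $h_Q = \delta_{\mathcal{A}_Q} + (\text{linear})$ and $\dom f_Q$ is precisely the base family of the partition matroid for $\mathcal{A}_Q$, with every finite coefficient of $f_Q$ equal to the sum of the restricted finite coefficients of the summands. Once this bookkeeping is verified, the remainder of the argument is the purely combinatorial fact that traces of laminar families are laminar, and the conclusion is immediate.
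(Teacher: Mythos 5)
Your overall strategy---restrict a laminar representation of $f$ supplied by \cref{prop:Type I + Type III} to $A_Q$ and observe that traces of a laminar family are laminar---is essentially the paper's own proof, which phrases the same idea through \cref{lem:anti-ultrametric}: it writes $f = f' + \delta_{\mathcal{A}}$ with the quadratic coefficients of $f'$ anti-ultrametric, notes that a principal submatrix of an anti-ultrametric matrix is again anti-ultrametric, and reads off a laminar representation of $f_Q$; intersecting the laminar sets with $A_Q$ is exactly the combinatorial shadow of taking that submatrix. So the route is the same, but one step of your justification is wrong as stated.

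Namely, you claim that ``every finite coefficient of $f_Q$ [is] equal to the sum of the restricted finite coefficients of the summands,'' and you derive the restricted identity by ``matching quadratic and linear coefficients on $A_Q$.'' This cannot work: the identity $f = \sum_{X \in \mathcal{L}} c(X)\ell_X + h$ is an equality of functions on $\DF$ only, and quadratic coefficient systems representing the same function on $\DF$ need not coincide---this is precisely the non-uniqueness the paper emphasizes via \eqref{eq:X + A_p} and \eqref{eq:n - X}, and via the remark that the coefficients in \eqref{eq:f instance} do not equal the sums of those in \eqref{eq:f_1 instance} and \eqref{eq:f_2 instance}. The correct justification is that restriction is well defined up to $\mathcal{A}_Q$-linear equivalence: if two coefficient systems with finite cross-block entries define the same function on $\DF$, fix any choice of one element from each $A_p$ with $p \notin Q$ to extend $x \in \dom\delta_{\mathcal{A}_Q}$ to a point of $\DF$; the cross terms $\sum_{i \in A_Q,\, j \notin A_Q} a_{ij}x_i y_j$ together with the terms supported outside $A_Q$ form an affine function of $x$, so the two restrictions differ by an $\mathcal{A}_Q$-linear function. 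Applying this to the coefficient system of $f$ and to that of $\sum_{X} c(X)\ell_X + h$ yields $f_Q \simeq \sum_{X} c(X)\ell_{X \cap A_Q}$ with respect to $\mathcal{A}_Q$, after which your laminarity-of-traces argument, the absorption of non-$\mathcal{A}_Q$-cut traces via \cref{lem:sim}~(1), and \cref{lem:M2 representation} complete the proof. With that repair the argument is sound.
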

\begin{proof}
	By \cref{lem:anti-ultrametric} and \cref{prop:Type I + Type III},
	$f$ is representable as $f = f' + \delta_\mathcal{A}$,
	where the quadratic coefficient $(a'_{ij})_{i,j \in [n]}$ of $f'$ satisfies the anti-ultrametric property.
	Then $(a'_{ij})_{i,j \in A_Q}$ also has the anti-ultrametric property.
	Hence $f_Q$ is naturally representable as $f_Q = g + \delta_{\mathcal{A}_Q}$,
	where the quadratic coefficient of $g$ is $(a'_{ij})_{i,j \in A_Q}$.
	Thus $f_Q$ is QR-M${}_2$-convex.
\end{proof}

Suppose that $f$ is QR-M${}_2$-convex.
Then $f_Q$ is also QR-M${}_2$-convex
by \cref{lem:restriction}.
By \cref{thm:representation},
$f_Q$ can be represented as
\begin{align}\label{eq:f_Q,P_Q,c_Q}
f_Q = \sum_{X \in \mathcal{F}_Q}c_Q(X) \ell_{X} + h_Q
\end{align}
for some laminarizable $\mathcal{A}_Q$-cut family $\mathcal{F}_Q$, positive weight $c_Q$ on $\mathcal{F}_Q$, and $\mathcal{A}_Q$-linear function $h_Q$,
where $\ell_{X}$ and $h_Q$ are defined on $\{0,1\}^{A_Q}$.
Furthermore such $\mathcal{F}_Q$ and $c_Q$ are uniquely determined up to $\sim$.

Our algorithm for \DECOMP\ obtains an appropriate decomposition~\eqref{eq:f_Q,P_Q,c_Q} of $f_Q$ for $Q = \{1,2\}, \{1,2,3\}, \dots, \{1,2,3,\dots,r\}$ in turn as follows.
\begin{itemize}
	\item In the initial case for $Q = \{1,2\}$,
	we can obtain the decomposition~\eqref{eq:f_Q,P_Q,c_Q}
	with $(\mathcal{F}_Q, c_Q) = (\mathcal{L}_{12},c_{12})$ by executing Algorithm~1 for $f_{12}$ (\cref{subsec:r=2}).
	\item For each $t \geq 3$, we extend $(\mathcal{F}_{[t-1]}, c_{[t-1]})$ to $(\mathcal{F}_{[t]}, c_{[t]})$ by Algorithm~2 (\cref{subsec:r >= 3}),
	where $\mathcal{F}_{[2]} = \mathcal{L}_{12}$.
	In order to construct $(\mathcal{F}_{[t]}, c_{[t]})$ from $(\mathcal{F}_{[t-1]}, c_{[t-1]})$,
	we use $(\mathcal{L}_{pt},c_{pt})$ for all $p \in [t-1]$,
	which can be obtained by executing Algorithm~1 for $f_{pt}$.
	\item We perform the above extension step for $t=3$ to $t = r$.
	Then we can say that the resulting $\mathcal{A}$-cut family $\mathcal{F}_{[r]}$ is laminarizable,
	as required.
	This is described in Algorithm~3 (\cref{subsec:r >= 3}).
\end{itemize}
Note that our algorithm may output some decomposition~\eqref{eq:(L, c)inDEC} even when $f$ is not QR-M${}_2$-convex.
In this case,
the $\mathcal{A}$-cut family $\mathcal{F}$ output by the algorithm is not laminarizable.

\subsection{Case of $r = 2$}\label{subsec:r=2}
We consider the \DECOMP\ algorithm for the case of $r = 2$,
where $\mathcal{A}$ is a bipartition of $[n]$ represented as $\{ A_1, A_2\}$.
Note that $X$ is an $\mathcal{A}$-cut if and only if $X$ satisfies $\emptyset \neq (X \cap A_1) \neq A_1$ and $\emptyset \neq (X \cap A_2) \neq A_2$,
and that two $\mathcal{A}$-cuts $X$ and $Y$ are $\mathcal{A}$-equivalent (i.e., $X \sim Y$) if and only if $X = Y$ or $X = [n] \setminus Y$ by~\eqref{eq:sim}.
Let $f$ be a VCSP-quadratic function of type $\{A_1, A_2\}$,
and $(a_{ij})_{i,j \in [n]}$ be the quadratic coefficient of $f$,
where $a_{ij} = a_{ji}$ is always assumed.

Our algorithm makes use of the simple fact that,
for any $i^* \in [n]$ and $b \in \R$,
the modification of the coefficients as $a'_{i^*j} \leftarrow a_{i^*j} - b$
(as well as $a'_{ji^*} \leftarrow a_{ji^*} - b$)
for all $j \in [n] \setminus \{i^*\}$ does not affect the QR-M${}_2$-convexity of $f$.
Indeed,
the difference between
$\sum_{i < j} a_{ij} x_i x_j$
and
$\sum_{i < j} a'_{ij} x_i x_j$ is
an $\mathcal{A}$-linear function
since, for $x \in \DF$, it holds
\begin{align*}
\sum_{i < j} a_{ij} x_i x_j
= \left(\sum_{j : j < i^*} (a_{ji^*} - b) x_jx_{i^*} + \sum_{j : j > i^*} (a_{i^*j} - b) x_{i^*}x_j\right) + \sum_{i,j \in [n] \setminus \{i^*\}, i < j} a_{ij} x_i x_j + bx_{i^*}.
\end{align*}

We repeat the above modification of coefficients
for $i^*  =1,2,\ldots, n$ with appropriate choices of $b = b_{1}, b_{2}, \ldots, b_{n} \in \R$.
Then we test for the QR-M${}_2$-convexity
with reference to the condition (CB) below on a quadratic coefficient
$(a_{ij})_{i,j \in [n]}$:
\begin{itemize}
	\item[(CB)]
	Let the distinct values of $a_{ij}$ ($i \in A_1, j \in A_2$)
	be $\alpha_1 > \alpha_2 > \cdots > \alpha_m = \min_{i<j}a_{ij}$.
	For all $\alpha \in \{\alpha_1, \alpha_2, \dots, \alpha_{m-1} \}$,
	every non-isolated connected component of $G_\alpha := (V, E_\alpha)$ is a complete bipartite graph,
	where $E_\alpha := \{ \{i,j\} \mid i\in A_1,\ j \in A_2,\  \alpha \leq a_{ij} \}$.
\end{itemize}

The following lemma gives a sufficient condition for the QR-M${}_2$-convexity of $f$ in~\eqref{eq:f}.
\begin{lem}\label{lem:(CB)}
	If $(a_{ij} - b_i-b_j)_{i,j \in [n]}$ satisfies {\rm (CB)} for some $b_1, b_2, \dots, b_n \in \R$,
	then $f$ in~{\rm \eqref{eq:f}} is QR-M${}_2$-convex.
\end{lem}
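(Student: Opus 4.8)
The plan is to reduce to the case where $(a_{ij})_{i,j\in[n]}$ itself satisfies (CB), and then to read off an explicit decomposition of $f$ into $\ell_X$-terms whose index sets form a laminar $\mathcal{A}$-cut family, after which the characterization part \cref{prop:Type I + Type III} of \cref{thm:representation} yields QR-M${}_2$-convexity. First I would invoke the translation invariance noted just before the statement: replacing $a_{ij}$ by $a_{ij}-b_i-b_j$ alters $\sum_{i<j}a_{ij}x_ix_j$ only by an $\mathcal{A}$-linear function on $\DF$, so the function $f'$ built from $(a_{ij}-b_i-b_j)$ satisfies $f\simeq f'$. Since $\simeq$ preserves QR-M${}_2$-convexity (the difference is $\mathcal{A}$-linear, hence M-convex by \cref{lem:M2 representation}~(1)), it suffices to prove the claim assuming $(a_{ij})$ itself satisfies (CB).

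Next, with $\alpha_1>\cdots>\alpha_m=\min a_{ij}$ the distinct values, I would apply the layer-cake identity
\begin{align*}
a_{ij}=\alpha_m+\sum_{k=1}^{m-1}(\alpha_k-\alpha_{k+1})\,\mathbf{1}[\{i,j\}\in E_{\alpha_k}]\qquad(i\in A_1,\ j\in A_2),
\end{align*}
whose increments $\alpha_k-\alpha_{k+1}$ are all strictly positive. By (CB), each $E_{\alpha_k}$ is a disjoint union of complete bipartite non-isolated components on vertex pairs $(S,T)$ with $\emptyset\neq S\subseteq A_1$ and $\emptyset\neq T\subseteq A_2$. The key computation is that on $\DF$ (where, since $r=2$, exactly one coordinate of $A_1$ and one of $A_2$ equals $1$) one has $\ell_{S\cup T}(x)=\sum_{i\in S,\,j\in T}x_ix_j$, because the within-$A_1$ and within-$A_2$ products vanish on $\DF$; thus a complete bipartite component on $(S,T)$ contributes exactly $\ell_{S\cup T}$ to the quadratic part of $f$ on $\DF$, while the constant $\alpha_m\,(\sum_{i\in A_1}x_i)(\sum_{j\in A_2}x_j)=\alpha_m$ is absorbed into the $\mathcal{A}$-linear part. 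Summing the vertex sets of all non-isolated components over all thresholds gives $f\simeq\sum_{X}c(X)\,\ell_X$, with each $c(X)$ a positive sum of relevant increments $\alpha_k-\alpha_{k+1}$.

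It then remains to verify two structural facts. For laminarity I would use that the edge sets are nested, $E_{\alpha_1}\subseteq\cdots\subseteq E_{\alpha_{m-1}}$, so connected components at a higher threshold refine those at a lower one: any component of $G_{\alpha_{k'}}$ with $k'\le k$ is connected inside $G_{\alpha_k}$ and hence lies within a single component of $G_{\alpha_k}$, forcing containment or disjointness, which is precisely laminarity of the resulting index family. Finally, a component $X=S\cup T$ fails to be an $\mathcal{A}$-cut exactly when $S=A_1$ or $T=A_2$; by \cref{lem:sim}~(1) such an $\ell_X$ is $\mathcal{A}$-linear and may be discarded into $h$. Restricting to the genuine $\mathcal{A}$-cut components yields a laminar $\mathcal{A}$-cut family $\mathcal{L}$ with a positive weight $c$ satisfying \eqref{eq:f_1 desired}, whence $f$ is QR-M${}_2$-convex by \cref{prop:Type I + Type III}.

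The main obstacle I anticipate is not any deep step but the bookkeeping in this last paragraph: confirming that nested thresholds never produce a crossing pair, and correctly sorting the components into $\mathcal{A}$-cuts (kept in $\mathcal{L}$) versus non-$\mathcal{A}$-cuts (absorbed into the linear term via \cref{lem:sim}~(1)). Everything else is a direct translation of (CB) into the laminar $\ell_X$-representation guaranteed by \cref{thm:representation}.
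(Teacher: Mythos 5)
Your proposal is correct and follows essentially the same route as the paper's proof: reduce to the (CB) case via the $\mathcal{A}$-linear translation, take the nested threshold graphs $G_{\alpha_k}$, observe that their non-isolated complete-bipartite components form a laminar family whose weights are the telescoping differences of thresholds (your layer-cake sum $\sum_k(\alpha_k-\alpha_{k+1})$ over the levels at which $X$ is a component equals the paper's $\alpha_X-\alpha_{X^+}$), discard non-$\mathcal{A}$-cut components into the linear part via \cref{lem:sim}~(1), and conclude by the if-direction of \cref{prop:Type I + Type III}. No gaps.
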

\begin{proof}
	Let $f'$ be defined by the quadratic coefficient
	$(a_{ij} - b_i - b_j)$ as in \eqref{eq:f}.
	Then $f$ is QR-M${}_2$-convex if and only if $f'$ is QR-M${}_2$-convex.
	For each $s \in [m-1]$,
	denote by $\mathcal{L}^s$ the set of non-isolated connected components $L$ of $G_{\alpha_s}$.
	Their union $\mathcal{L} := \bigcup_{s = 1}^{m-1} \mathcal{L}^s$ is a laminar family.
	For $L \in \mathcal{L}$,
	denote by $L^+$ the minimal element in $\mathcal{L} \cup \{[n]\}$ properly containing $L$.
	We define $\alpha_L$ for $L \in \mathcal{L} \cup \{[n]\}$ as follows:
	$\alpha_{[n]} := \alpha_m$ and
	$\alpha_L := \alpha_s$ if $L \in \mathcal{L}^s \setminus \mathcal{L}^{s-1}$ with $s \in [m-1]$,
	where $\mathcal{L}^0 := \emptyset$.
	Since $(a_{ij} - b_i - b_j)$ satisfies (CB),
	we have
	\begin{align*}
	\sum_{i<j} (a_{ij} - b_i -b_j) x_ix_j
	&= \sum_{L \in \mathcal{L}} (\alpha_L - \alpha_{L^+}) \ell_L(x) + \alpha_m\\
	&\simeq \sum_{L \in \mathcal{L}^*} (\alpha_L - \alpha_{L^+}) \ell_L(x),
	\end{align*}
	where $\mathcal{L}^*$ is the family of $\mathcal{A}$-cuts in $\mathcal{L}$.
	We have thus obtained a representation of $f'$ in the form of \eqref{eq:(L, c)}
	with a laminar $\mathcal{A}$-cut family $\mathcal{L}^*$ and
	a positive weight $c(L) = \alpha_L - \alpha_{L^+}$ on $\mathcal{L}^*$.
	Then $f'$ is QR-M${}_2$-convex by \cref{thm:representation}, and hence
	$f$ is QR-M${}_2$-convex.
\end{proof}

The \DECOMP\ algorithm for the case of $r = 2$
is described as Algorithm~1 below.
The validity of this algorithm (\cref{prop:M_2pq})
implies that the converse of \cref{lem:(CB)} is also true,
that is,
if $f$ is QR-M${}_2$-convex
then $(a_{ij} - b_i-b_j)_{i,j \in [n]}$ satisfies (CB) by appropriate $b_i$'s,
and that such $b_i$'s can be computed easily.
\begin{description}
	\item[Algorithm~1 (for \DECOMP\ in the case of $r = 2$):]
	\item[Input:]
	A VCSP-quadratic function $f$ of type $\{A_1, A_2\}$.
	\item[Step 0:] Define $\alpha^* := \min_{i,j \in [n], i<j} a_{ij}$.
	\item[Step 1:] For $i = 1, 2, \dots, n$,
	do the following:
	Define $b_i := \min_{j \in [n] \setminus \{i\}} a_{ij} - \alpha^*$,
	and
	update $a_{ij} \leftarrow a_{ij} - b_i$ (as well as $a_{ji} \leftarrow a_{ji} - b_i$) for $j \in [n] \setminus \{i\}$.
	Then go to next $i$.
	\item[Step 2:] Check whether $(a_{ij})_{i,j \in [n]}$ satisfies (CB) or not.
	If $(a_{ij})_{i,j \in [n]}$ does not satisfy (CB),
	then output ``$f$ is not QR-M${}_2$-convex'' and stop.
	If $(a_{ij})_{i,j \in [n]}$ satisfies (CB),
	define $\alpha_1 > \alpha_2 > \cdots > \alpha_m$ and $G_\alpha$
	as in the condition (CB).
	\item[Step 3:]
	For each $s \in [m-1]$,
	denote by $\mathcal{L}^s$ the set of non-isolated connected components $L$ of $G_{\alpha_s}$.
	Define a laminar family $\mathcal{L}$ by $\mathcal{L} := \bigcup_{s = 1}^{m-1} \mathcal{L}^s$.
	For $L \in \mathcal{L}$,
	denote by $L^+$ the minimal element in $\mathcal{L} \cup \{[n]\}$ properly containing $L$.
	Define $\alpha_L$ for $L \in \mathcal{L} \cup \{[n]\}$ by:
	$\alpha_{[n]} := \alpha_m$ and 
	$\alpha_L := \alpha_s$ if  $L \in \mathcal{L}^s \setminus \mathcal{L}^{s-1}$ with $s \in [m-1]$,
	where $\mathcal{L}^0 := \emptyset$.
	Define $c : \mathcal{L} \rightarrow \R_{++}$ by $c(L) := \alpha_L - \alpha_{L^+}$.
	\item[Step 4:]
	If both $X$ and $[n] \setminus X$ belong to $\mathcal{L}$,
	then update $c$ by
	$c(X) \leftarrow c(X) + c([n] \setminus X)$ and
	remove $[n] \setminus X$ from $\mathcal{L}$.
	We consider that the new $c$ is a weight on the new $\mathcal{L}$.
	\item[Step 5:]
	Output $\mathcal{L}$ and $c$.~\qqed
\end{description}
Note that, by Step~4,
the output $\mathcal{L}$ is non-redundant.

\begin{exmp}\label{ex:instance decompotion r=2}
	For the function $f$ in~\eqref{eq:f instance},
	we execute Algorithm~1 for the restriction $f_{12}$ to $\{1,2\}$.
	Recall that $n = 4$, $a_{13} = 3$, $a_{14} = 0$, $a_{23} = 1$, $a_{24} = 4$,
	and $a_{12} = a_{34} = +\infty$.
	
	In Step~0,
	we define $\alpha^* := 0$.
	In Step~1,
	we update $a_{23} \leftarrow 0$ and $a_{24} \leftarrow 3$ (see Figure~\ref{fig:algo1}).
	\begin{figure}
		\centering
		\includegraphics[clip, width=12cm]{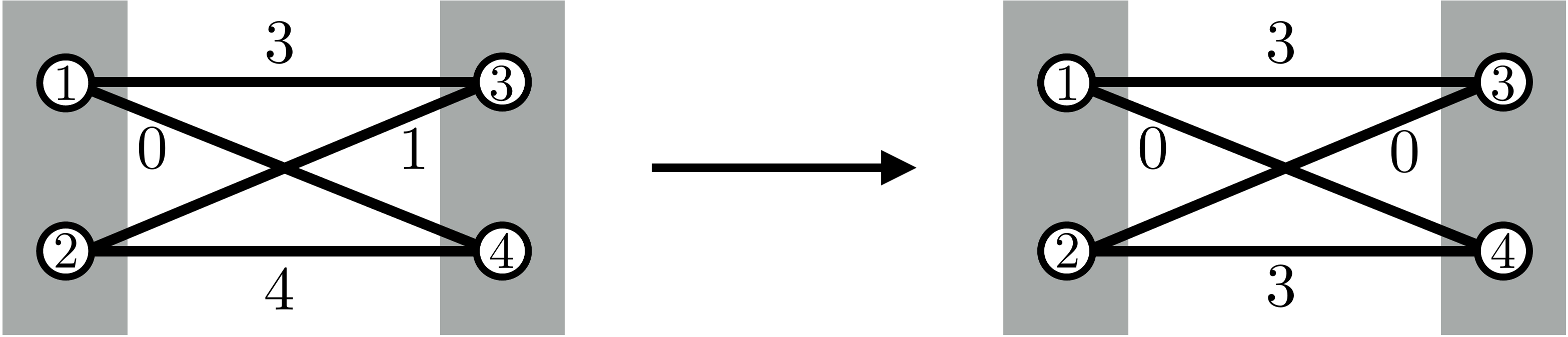}
		\caption{
			The left figure illustrates the values of $a_{13}, a_{14}, a_{23}, a_{24}$ before Step~1,
			and the right figure illustrates those values after Step~1.}
		\label{fig:algo1}
	\end{figure}
We can easily see, by Figure~\ref{fig:algo1},
$\mathcal{L} = \{ 13, 24 \}$, $\alpha_{13} = \alpha_{24} = 3$, and $\alpha_{1234} = 0$ in Step~3.
In Step~4,
we redefine $\mathcal{L}$ by $\mathcal{L} := \{24\}$ and $c : \mathcal{L} \rightarrow \R_{++}$ by $c(24) := 6$.
Then, in Step~5, we output $\mathcal{L}$ and $c$.
Note that, in Step 4,
we can also redefine $\mathcal{L}$ by $\mathcal{L} := \{13\}$ and $c : \mathcal{L} \rightarrow \R_{++}$ by $c(13) := 6$.~\qqed
\end{exmp}

\begin{prop}\label{prop:M_2pq}
	Algorithm~{\rm 1} solves \DECOMP\ in $O(n^2)$ time.
\end{prop}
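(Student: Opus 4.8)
The plan is to verify two things: that Algorithm~1 is correct (both of its output branches) and that it runs in $O(n^2)$ time. Correctness splits according to the two outputs. When $(a_{ij})$ satisfies (CB) after Step~1, Steps~3--5 reproduce verbatim the laminar family and weights built in the proof of \cref{lem:(CB)}, so $f \simeq \sum_{L\in\mathcal{L}} c(L)\ell_L$ with $\mathcal{L}$ a laminar $\mathcal{A}$-cut family; since $X \sim [n]\setminus X$ for $r=2$, the merge in Step~4 deletes complementary duplicates and leaves $\mathcal{L}$ non-redundant, and a laminar family is trivially laminarizable. Thus the ``accept'' branch is sound. The substantive task is the ``reject'' branch: I must prove the converse of \cref{lem:(CB)}, namely that if $f$ is QR-M${}_2$-convex then $(a_{ij})$ does satisfy (CB) after the specific normalization of Step~1. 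Granting this, the failure of (CB) correctly certifies non-QR-M${}_2$-convexity.

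For this converse I would route through the anti-metric theory. By \cref{prop:Type I + Type III}, a QR-M${}_2$-convex $f$ decomposes as $f=f_1+f_2$ in which the $\mathcal{A}$-cut structure is carried by a Type~I summand $f_2$; its finite quadratic coefficients $\hat a$ satisfy the anti-tree-metric inequality~\eqref{eq:Type I} and agree with $f$'s coefficients on all cross pairs $i\in A_1$, $j\in A_2$ (the within-part products vanish on $\DF$, so the within-part values of $f_2$ are free). By \cref{lem:Type I anti-ultra}, the Farris-type row-minimum transform turns an anti-tree-metric matrix into an anti-ultrametric one, and Step~1 is exactly this transform; by \cref{lem:anti-ultrametric} an anti-ultrametric matrix is the laminar ``sum over containing sets'', whose cross threshold graphs have complete-bipartite non-isolated components --- precisely condition (CB). Tracing this chain from $\hat a$ down to the cross coefficients yields (CB) for the Step-1-normalized matrix. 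As a robustness check, \cref{ex:instance decompotion r=2} and the separable-shift computations confirm that starting from $\sum_{X\ni i,j}c(X)$ perturbed by any gauge $b_i+b_j$, Step~1 returns a representative (possibly the complementary, $\sim$-equivalent one) for which (CB) holds.

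The hard part --- and where I expect the real work to lie --- is reconciling Algorithm~1's concrete bookkeeping with \cref{lem:Type I anti-ultra} across two genuine mismatches. First, $f$'s within-part entries are $+\infty$ while $\hat a$'s are finite, so~\eqref{eq:Type I} and anti-ultrametricity must be read on $\hat a$, yet Step~1 sees only cross entries; indeed $f$'s $+\infty$-matrix itself does \emph{not} satisfy~\eqref{eq:Type I}, so one cannot apply the lemma to it directly. I would bridge this by keeping the bipartite restriction throughout: the free within-values of $f_2$ are chosen to realize the laminar blocks, and I would verify that $f_2$'s full row-minimum is attained on a cross entry (so it equals the cross row-minimum used in Step~1), making Step~1 coincide with the Farris transform of $\hat a$ on the cross entries. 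Second, Step~1 subtracts the offsets $b_i$ sequentially rather than simultaneously; here I would use the separable-shift invariance of~\eqref{eq:Type I} (adding $b_i+b_j$ to each entry adds the common amount $b_i+b_j+b_k+b_l$ to both sides of the inequality) to show the anti-tree-metric property survives each subtraction and that the sequential pass still drives every cross row- and column-minimum to $\alpha^*$, whence \cref{lem:Type I anti-ultra} forces the (CB) block structure independently of the update order. Should this direct route prove delicate, the uniqueness half of \cref{thm:representation} gives a fallback: if (CB) failed, the nondifferentiable locus of $\overline f$ described in \cref{lem:overline f = overline l} could not be laminar, contradicting QR-M${}_2$-convexity.

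Finally, the running time. Steps~0 and~1 scan the $O(n^2)$ pairs once. Step~2 checks (CB) by listing the distinct finite cross values $\alpha_1>\cdots>\alpha_m$ and sweeping thresholds downward, inserting each cross edge exactly once and maintaining the bipartite blocks with a union-find structure while testing complete-bipartiteness incrementally, for $O(n^2)$ total. Steps~3--5 read off the $O(n)$ laminar sets from the nested blocks, assign the weights $c(L)=\alpha_L-\alpha_{L^+}$, and perform the Step~4 merges, all within $O(n^2)$. Hence Algorithm~1 runs in $O(n^2)$ time, which completes the proof of \cref{prop:M_2pq}.
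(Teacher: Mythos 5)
Your proposal follows essentially the same route as the paper's own proof: the ``accept'' branch is handled by \cref{lem:(CB)}, and the converse (the ``reject'' branch) is established by passing through \cref{prop:Type I + Type III}, extending the cross coefficients to a full anti-tree-metric matrix via an $\mathcal{A}$-linear gauge, normalizing with \cref{lem:Type I anti-ultra} and \cref{lem:Type I key}, and reading off (CB) from the laminar representation of \cref{lem:anti-ultrametric}, with an $O(n^2)$ threshold/recursive sweep for the complexity. The one step you leave as ``to be verified''---that the relevant row minima are attained on cross entries---is exactly where the paper inserts its small extra argument (if the global minimum $\alpha^*-\beta$ of the extended matrix is attained inside $A_1$, the anti-tree-metric inequality forces all internal $A_2$-entries up to $\alpha^*+\beta$, so the separable shift $\pm\beta/2$ restores an anti-ultrametric matrix agreeing with $(a_{ij})$ on cross pairs), which is a refinement of your stated plan rather than a different approach.
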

For the proof of the validity of Algorithm~1,
we need the following lemma.
\begin{lem}[{\cite[Lemma~4.2]{DAM/I18}}]\label{lem:Type I key}
	Suppose that $(a_{ij})_{i,j \in [n]}$ satisfies the anti-tree metric property~{\rm \eqref{eq:Type I}}
	and let $\alpha^* := \min_{i,j \in [n], i<j} a_{ij}$.
	If $\min_{j \in [n]} a_{ij} = \alpha^*$ holds for all $i \in [n]$,
	then $(a_{ij})_{i,j \in [n]}$ satisfies the anti-ultrametric property~{\rm \eqref{eq:anti-ultrametric}}.
\end{lem}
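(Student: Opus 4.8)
The plan is to prove the contrapositive by a four-point argument: assuming that the anti-ultrametric property \eqref{eq:anti-ultrametric} fails, I would produce a violation of the anti-tree-metric inequality \eqref{eq:Type I}. Failure of \eqref{eq:anti-ultrametric} means that there are distinct indices for which the triangle inequality is strict, so after relabelling there are distinct $i,j,k$ with $a_{ij} < a_{ik}$ and $a_{ij} < a_{jk}$; that is, $a_{ij}$ is the strict unique minimum among the three edge values on $\{i,j,k\}$. Since $\alpha^*$ is the global minimum we have $a_{ij} \geq \alpha^*$, and hence both $a_{ik}$ and $a_{jk}$ strictly exceed $\alpha^*$.

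The next step is to use the hypothesis $\min_{m \in [n]} a_{km} = \alpha^*$ at the apex $k$ to manufacture a fourth index. This hypothesis yields some $l \neq k$ with $a_{kl} = \alpha^*$. Because $a_{ki} = a_{ik} > \alpha^*$ and $a_{kj} = a_{jk} > \alpha^*$, such an $l$ can be neither $i$ nor $j$, so $l$ is distinct from $i,j,k$. (In the degenerate case $n = 3$ no fourth index exists at all, but then $\min_m a_{km} \geq \min\{a_{ik},a_{jk}\} > \alpha^*$ already contradicts the hypothesis, so that case is disposed of on the spot.)

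Finally I would instantiate the anti-tree-metric inequality \eqref{eq:Type I} on the quadruple $\{i,j,k,l\}$, placing the pair $\{i,j\},\{k,l\}$ on the left-hand side:
\[
a_{ij} + a_{kl} \geq \min\{a_{ik} + a_{jl},\ a_{il} + a_{jk}\}.
\]
The left-hand side equals $a_{ij} + \alpha^*$. On the right, $a_{ik} + a_{jl} > a_{ij} + \alpha^*$ because $a_{ik} > a_{ij}$ and $a_{jl} \geq \alpha^*$, and symmetrically $a_{il} + a_{jk} > a_{ij} + \alpha^*$ because $a_{jk} > a_{ij}$ and $a_{il} \geq \alpha^*$. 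Thus the minimum on the right strictly exceeds the left-hand side, contradicting \eqref{eq:Type I}. This contradiction forces the minimum edge of every triangle to be attained at least twice, which is exactly \eqref{eq:anti-ultrametric}.

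I expect the only delicate point to be verifying that the auxiliary index $l$ is genuinely distinct from $i,j,k$; this hinges on the strict inequalities $a_{ik}, a_{jk} > \alpha^*$, which come from the assumed strictness of the triangle minimum together with $a_{ij} \geq \alpha^*$. Once $l$ is in place, the remainder is a direct substitution of $a_{kl} = \alpha^*$ into the four-point inequality, with the global-minimality of $\alpha^*$ supplying the bounds $a_{jl}, a_{il} \geq \alpha^*$.
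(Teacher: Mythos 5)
Your proof is correct. The paper itself does not prove this lemma --- it is imported verbatim from \cite{DAM/I18} (Lemma~4.2) --- so there is no in-paper argument to compare against; your contrapositive four-point argument is a valid, self-contained verification. The two points that genuinely need care are exactly the ones you flag and handle: the auxiliary index $l$ with $a_{kl}=\alpha^*$ is distinct from $i,j,k$ because $a_{ik},a_{jk}>a_{ij}\geq\alpha^*$, and the $n=3$ case is vacuous since there $\min_{m}a_{km}\geq\min\{a_{ik},a_{jk}\}>\alpha^*$ already violates the normalization hypothesis; after that, instantiating \eqref{eq:Type I} on $\{i,j,k,l\}$ with the pairing $\{i,j\},\{k,l\}$ and the bounds $a_{il},a_{jl}\geq\alpha^*$ yields the desired contradiction (and the argument survives $+\infty$ entries, since a failure of \eqref{eq:anti-ultrametric} forces $a_{ij}<+\infty$ and the strict inequalities on the right-hand side only get easier when a term is infinite).
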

\begin{proof}[Proof of \cref{prop:M_2pq}]
	(Validity).
We show that
\begin{itemize}
	\item
	if $f$ is not QR-M${}_2$-convex, the algorithm terminates in Step 2, with the message that
	$f$ is not QR-M${}_2$-convex, and
	\item
	if $f$ is QR-M${}_2$-convex, the algorithm terminates in Step 5, with a correct
	representation of $f$ in the form~\eqref{eq:(L, c)}.
\end{itemize}
This means, in particular, that
the algorithm for $r=2$ always detects non-QR-M${}_2$-convexity,
and never outputs a representation~\eqref{eq:(L, c)inDEC}
with a non-laminarizable family $\mathcal{F}$
if $f$ is not QR-M${}_2$-convex.

Suppose that $f$ is not QR-M${}_2$-convex.
By (the contrapositive of) Lemma~\ref{lem:(CB)},
$(a_{ij})$ in Step~2 does not satisfy (CB).
Accordingly, the algorithm terminates in Step 2, which is legitimate.

Suppose that $f$ is QR-M${}_2$-convex.
In this case, $(a_{ij})$ in Step~2 satisfies (CB),
which is shown in Claim below.
Then the algorithm terminates in Step 5 by outputting $(\mathcal{L}, c)$.
The laminar family $\mathcal{L}$, obtained in Step~3,
is an $\mathcal{A}$-cut family.
Indeed, by the operation in Step~1,
\begin{align}\label{eq:alpha^*}
\min_{j' \in A_2} a_{ij'} = \min_{i' \in A_1} a_{i'j} = \alpha^*
\end{align}
holds
for any $i \in A_1$ and $j \in A_2$.
This implies that
each $L \in \mathcal{L}$ is an $\mathcal{A}$-cut,
since otherwise $\min_{j' \in [n] \setminus \{i\}} a_{ij'} > \alpha^*$ holds
for some $i \in L$.
Therefore, by the proof of \cref{lem:(CB)}, the output $(\mathcal{L}, c)$ gives
a correct representation of $f$ in the form~\eqref{eq:(L, c)}.

It remains to prove the following claim.

\begin{cl*}
	If $f$ is QR-M${}_2$-convex,
	then $(a_{ij})_{i,j \in [n]}$ in Step~2 satisfies {\rm (CB)}.
\end{cl*}
\begin{proof}[Proof of Claim]
	Suppose that $f$ is QR-M${}_2$-convex.
	In the following,
	we prove that there is a coefficient $(\overline{a}_{ij})$ satisfying the anti-ultrametric property such that
	$a_{ij} = \overline{a}_{ij}$ for every $i \in A_1$ and $j \in A_2$,
	where it should be clear that $a_{ij} = +\infty$ if $i,j \in A_1$ or $i,j \in A_2$.
	This implies,
	by \cref{lem:anti-ultrametric}, that
	there are a laminar family $\mathcal{L}$ and a positive weight $w$ on $\mathcal{L}$ associated with $(\overline{a}_{ij})$ as~\eqref{eq:a_ij laminar}.
	Then
	$a_{ij}$ can be represented as
	\begin{align*}
	a_{ij} =
	\begin{cases}
	\{ c(L) \mid L \in \mathcal{L} {\rm \ with\ } i,j \in L \} + \alpha^* & \textrm{if $i \in A_1$ and $j \in A_2$},\\
	+\infty & \textrm{if $i,j \in A_1$ or $i,j \in A_2$}.
	\end{cases}
	\end{align*}
	Hence
	$(a_{ij})$ satisfies (CB)
	and the laminar family obtained in Step~3 coincides with the family of $\mathcal{A}$-cuts in $\mathcal{L}$.

	We now start to prove the existence of $(\overline{a}_{ij})$.
	By \cref{lem:Type I anti-ultra} and \cref{prop:Type I + Type III},
	we have $f(x) \simeq \sum_{i<j} a_{ij}' x_i x_j$,
	where $(a_{ij}')$ is a coefficient satisfying the anti-ultrametric property.
	This implies that $\sum_{i<j} a_{ij} x_i x_j - \sum_{i<j} a_{ij}' x_i x_j$ is $\mathcal{A}$-linear.
	Hence,
	for some $b_i', b_j' \in \R$ we have $a_{ij} = a_{ij}'  + b_i' + b_j'$ for every $i \in A_1, j \in A_2$.
	Let $\overline{a}_{ij} := a_{ij}' + b_i' + b_j'$ for distinct $i,j \in [n]$.
	Then $a_{ij} = \overline{a}_{ij}$ holds for any $i \in A_1, j \in A_2$, and $(\overline{a}_{ij})$ is a coefficient satisfying the anti-tree metric property~\eqref{eq:Type I}.
	
	We can redefine the coefficient $(\overline{a}_{ij})$
	so as to meet the anti-ultrametric property while maintaining $a_{ij} = \overline{a}_{ij}$ for any $i \in A_1, j \in A_2$, as follows.
	Let $\beta := \alpha^* - \min \overline{a}_{ij}$.
	Note that $\beta \geq 0$ holds
	by~\eqref{eq:alpha^*} and $a_{ij} = \overline{a}_{ij}$ for $i \in A_1, j \in A_2$.
	First suppose $\beta = 0$.
	Then $\min \overline{a}_{ij} = \alpha^*$ holds.
	Furthermore,
	we have $\min_j \overline{a}_{ij} = \alpha^*$ for every $i \in [n]$.
	Hence, by \cref{lem:Type I key},
	$(\overline{a}_{ij})$ satisfies the anti-ultrametric property,
	as required.
	
	Next suppose $\beta > 0$.
	By $\overline{a}_{ij} \geq \alpha^*$ for any $i \in A_1, j \in A_2$,
	if $\overline{a}_{i^*j^*} = \alpha^* - \beta$,
	then $i^*, j^* \in A_1$ or $i^*, j^* \in A_2$ holds.
	Without loss of generality, we assume $i^*, j^* \in A_1$.
	Since $(\overline{a}_{ij})$ satisfies~\eqref{eq:Type I},
	it holds that $\overline{a}_{i^*j^*} + \overline{a}_{kl} \geq 2 \alpha^*$ for all distinct $k,l \in A_2$.
	Hence we have $\min_{k,l \in A_2} \overline{a}_{kl} \geq \alpha^* + \beta$.
	Let $\overline{b}_i := \beta/2$ if $i \in A_1$ and $\overline{b}_i := - \beta/2$ if $i \in A_2$.
	We redefine $\overline{a}_{ij}$ as $\overline{a}_{ij} \leftarrow \overline{a}_{ij} + \overline{b}_i + \overline{b}_j$.
	Then it is easy to see that $a_{ij} = \overline{a}_{ij}$ holds for any $i \in A_1, j \in A_2$, and that $(\overline{a}_{ij})$ is a coefficient satisfying~\eqref{eq:Type I}.
	Furthermore $\alpha^* - \min \overline{a}_{ij} = 0$ holds.
	Hence, by Lemma~\ref{lem:Type I key},
	$(\overline{a}_{ij})$ satisfies the anti-ultrametric property,
	as required.
	
	This completes the proof of Claim.
	\end{proof}
	
	(Complexity).
	It is clear that Steps~0 and~1 can be done in $O(n^2)$ time,
	and that Steps~4 and~5 can be done in $O(|\mathcal{L}|) = O(n)$ time.
	
	We show that Steps~2 and~3 can be done in $O(n^2)$ time,
	improving the $O(n^3)$ time complexity of a naive implementation.
	Our approach is based on the idea used in~\cite[Section~4.2.2]{DAM/I18} (see also~\cite{IPL/CR89, JTB/WSSB77}).
	Suppose that $f$ is QR-M${}_2$-convex,
	and that we are given some $L \in \mathcal{L}$.
	We can compute in $O(|L|^2)$ time
	the (disjoint) set $\mathcal{L}'$ of all maximal members in $\mathcal{L}$
	properly contained in $L$ as follows.
	Let $L_1 := A_1 \cap L$ and $L_2 := A_2 \cap L$.
	Observe that
	$\alpha_L = \min_{j' \in L_2} a_{ij'} = \min_{i' \in L_1} a_{i'j}$
	holds for each $i \in L_1$ and $j \in L_2$.
	Choose arbitrary $i \in L_1$,
	and compute $\argmin_{j' \in L_2} a_{ij'}$.
	If $a_{ij'}$ is constant on $j' \in L_2$,
	then there is no member of $\mathcal{L}'$ containing $i$.
	Otherwise, choose $j$ from $L_2 \setminus \argmin_{j' \in L_2} a_{ij'}$, and
	compute $\argmin_{i' \in L_1} a_{i'j}$.
	Then one can see that the (unique) member $L'$ in $\mathcal{L}$
	containing $i,j$
	is equal to the union of $L_1 \setminus \argmin_{i' \in L_1} a_{i'j}$
	and $L_2 \setminus \argmin_{j' \in L_2} a_{ij'}$.
	By repeating this procedure, we obtain $\mathcal{L}'$ in $O(|L|^2)$ time.
	Thus, starting from $L = [n]$,
	we recursively apply this procedure to the $L$'s so far obtained,
	and finally get $\mathcal{L}$ (as well as $c:\mathcal{L} \rightarrow \R_{++}$) in $O(n^2)$ time in total.
	Even when $f$ is not QR-M${}_2$-convex,
	we can apply this procedure
	and detect the non-QR-M${}_2$-convexity.
	Indeed,
	define $a_{ij}'$ as $\alpha_L$ for the final $L$ containing $i,j$ in the above procedure.
	Then $a'_{ij} = a_{ij}$ holds for any $i,j$
	if and only if $(a_{ij})$ satisfies the anti-ultrametric property, i.e., $f$ is QR-M${}_2$-convex.	
\end{proof}

\subsection{Case of $r \geq 3$}\label{subsec:r >= 3}
To obtain the decomposition~\eqref{eq:f_Q,P_Q,c_Q} of the restriction $f_Q$ for $Q = \{1,2\}, \{1,2,3\}, \dots, \{1,2,3,\dots,r\}$ in turn,
we need to extend $(\mathcal{F}_{[t-1]}, c_{[t-1]})$
to $(\mathcal{F}_{[t]}, c_{[t]})$
with the use of  $(\mathcal{L}_{pt}, c_{pt})$ ($p \in [t-1]$) for $t = 3, \dots, r$.
Algorithm~2 below corresponds to this extension procedure.

We explain the idea of the extension 
for $t = r$, i.e., from $(\mathcal{F}_{[r-1]}, c_{[r-1]})$
to $(\mathcal{F}_{[r]}, c_{[r]})$.
Suppose that we are given an $\mathcal{A}_{[r-1]}$-cut family $\mathcal{F}'$ and a positive weight $c'$ on $\mathcal{F}'$ satisfying, for $f' := f_{[r-1]}$,
\begin{align}\label{eq:f'}
f' = \sum_{X \in \mathcal{F}'} c'(X) \ell_X + h'
\end{align}
for some $\mathcal{A}_{[r-1]}$-linear function $h'$.

The extension procedure consists of two phases.
In the first phase, we construct
an $\mathcal{A}$-cut family $\mathcal{F}$ and a positive weight $c$ on $\mathcal{F}$
that represent $f$ as
\begin{align}\label{eq:initial f}
f = \sum_{X \in \mathcal{F}} c(X) \ell_X + h
\end{align}
for some $\mathcal{A}$-linear function $h$.
In this representation, however,
the family $\mathcal{F}$ is not necessarily laminarizable
even when $f$ is QR-M${}_2$-convex.
In the second phase we modify
$(\mathcal{F}, c)$ in (\ref{eq:initial f})
to another pair $(\mathcal{F}\sp{*}, c\sp{*})$
such that $\mathcal{F}\sp{*}$ is laminarizable when $f$ is QR-M${}_2$-convex.
The key operation of the second phase is called a ``composition'' operation.

The first phase is easy and straightforward.
Suppose that we have a decomposition~\eqref{eq:f'}
for $f'$ in terms of $(\mathcal{F}', c')$.
For $p =1,2,\ldots, r-1$,
we apply Algorithm~1 to $f_{pr}$
to obtain the decomposition~\eqref{eq:f_Q,P_Q,c_Q} of $f_{pr}$
in terms of $(\mathcal{L}_{pr}, c_{pr})$.
If Algorithm~1 should detect the non-QR-M${}_2$-convexity of $f_{pr}$ for some $p \in [r-1]$,
then
$f$ is not QR-M${}_2$-convex by \cref{lem:restriction}, and therefore,
we can give up our construction immediately.
Otherwise, we merge $(\mathcal{F}', c')$
and $(\mathcal{L}_{pr}, c_{pr})$
($p \in [r-1]$)
to obtain a representation of $f$.
Let
$\mathcal{F} := \mathcal{F}' \cup \bigcup_{p \in [r-1]} \mathcal{L}_{pr}$,
which is an $\mathcal{A}$-cut family,
and define a positive weight $c$ on $\mathcal{F}$
by $c(X) := c'(X)$ for $X \in \mathcal{F}'$ and $c(X) := c_{pr}(X)$ for $X \in \mathcal{L}_{pr}$.
Then, with the notation
$x|_Q := (x_i)_{i \in A_Q} \in \{0,1\}^{A_Q}$
for $x = (x_1,x_2,\dots,x_n) \in \{0,1\}^n$ and $Q \subseteq [r]$,
we have
\begin{align*}
f(x) &\simeq
\begin{cases}
\displaystyle\sum_{i,j \in A_{[r-1]}, i<j} a_{ij} x_i x_j+ \sum_{p \in [r-1]} \sum_{i,j \in A_{pr}, i<j} a_{ij}x_ix_j & \textrm{if $\displaystyle\sum_i x_i = r$},\\
+\infty & \textrm{otherwise}
\end{cases}\\
&\simeq f'(x|_{[r-1]}) + \sum_{p \in [r-1]} f_{pr}(x|_{pr})\\
&\simeq \sum_{X \in \mathcal{F}} c(X) \ell_X.
\end{align*}
Thus the representation (\ref{eq:initial f}) for $f$ is obtained.
Recall that we do not impose laminarizability on $\mathcal{F}$
even when $f$ is QR-M${}_2$-convex.
As the above argument shows, no substantial computation is required in the first phase.

The second phase consists of modifying
$(\mathcal{F}, c)$ in~\eqref{eq:initial f}
to another pair $(\mathcal{F}\sp{*}, c\sp{*})$
with the additional property that
$\mathcal{F}\sp{*}$
is laminarizable when $f$ is QR-M${}_2$-convex.
For this modification we introduce a ``composition'' operation.
Before entering into a formal description, we illustrate this modification
for simple examples in Figures~\ref{fig:algo3-1} and~\ref{fig:algo3-2}.
In Figure~\ref{fig:algo3-1}, the given family
$\mathcal{F} = \{ 24 , 15, 35 \}$
at the left is not laminar and the resulting family
$\mathcal{F}\sp{*} = \{ 135 , 24 \}$
at the right is laminar;
the new $\mathcal{A}$-cut $X^* = 135$ is
constructed by our algorithm by combining $24$, $15$, and $35$.
In Figure~\ref{fig:algo3-2}, the given family
$\mathcal{F}
= \{ 135 , 24 , 28, 37, 57 \}$
at the left is not laminarizable and the resulting family
$\mathcal{F}\sp{*}
= \{ 135, 24 , 1357, 37 \}$
at the right is not laminar but laminarizable;
the new $\mathcal{A}$-cut $X^* = 1357$
is constructed by our algorithm by combining 
$135$, $28$, $37$, and $57$.
\begin{figure}
	\centering
	\includegraphics[clip, width=12cm]{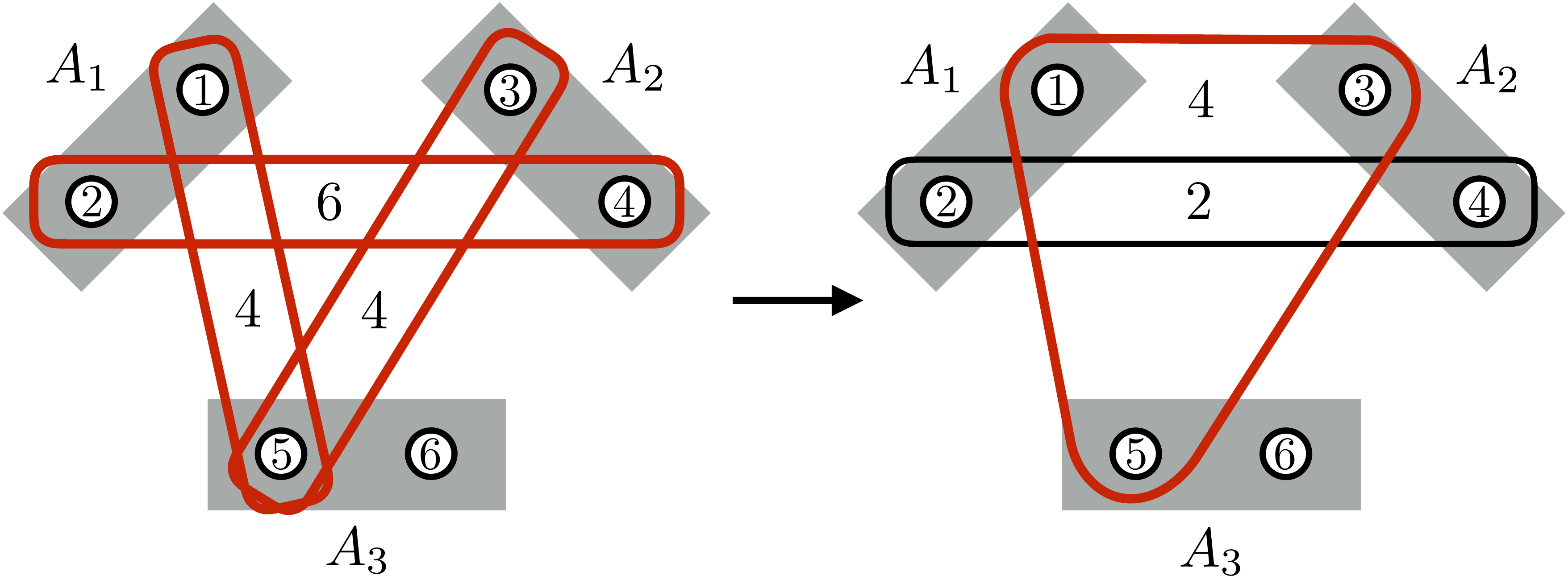}
	\caption{
		The left figure illustrates $(\mathcal{L}_{12}, c_{12})$, $(\mathcal{L}_{13}, c_{13})$, and $(\mathcal{L}_{23}, c_{23})$,
		and the right figure illustrates $(\mathcal{F}_{123}, c_{123})$.
	A pair $(15,35)$ is a composable tuple to $24$
	since $135$ satisfies $135 \sim_{12} 24$, $135 \sim_{13} 15$, and $135 \sim_{23} 35$}
	\label{fig:algo3-1}
\end{figure}
\begin{figure}
	\centering
	\includegraphics[clip, width=12cm]{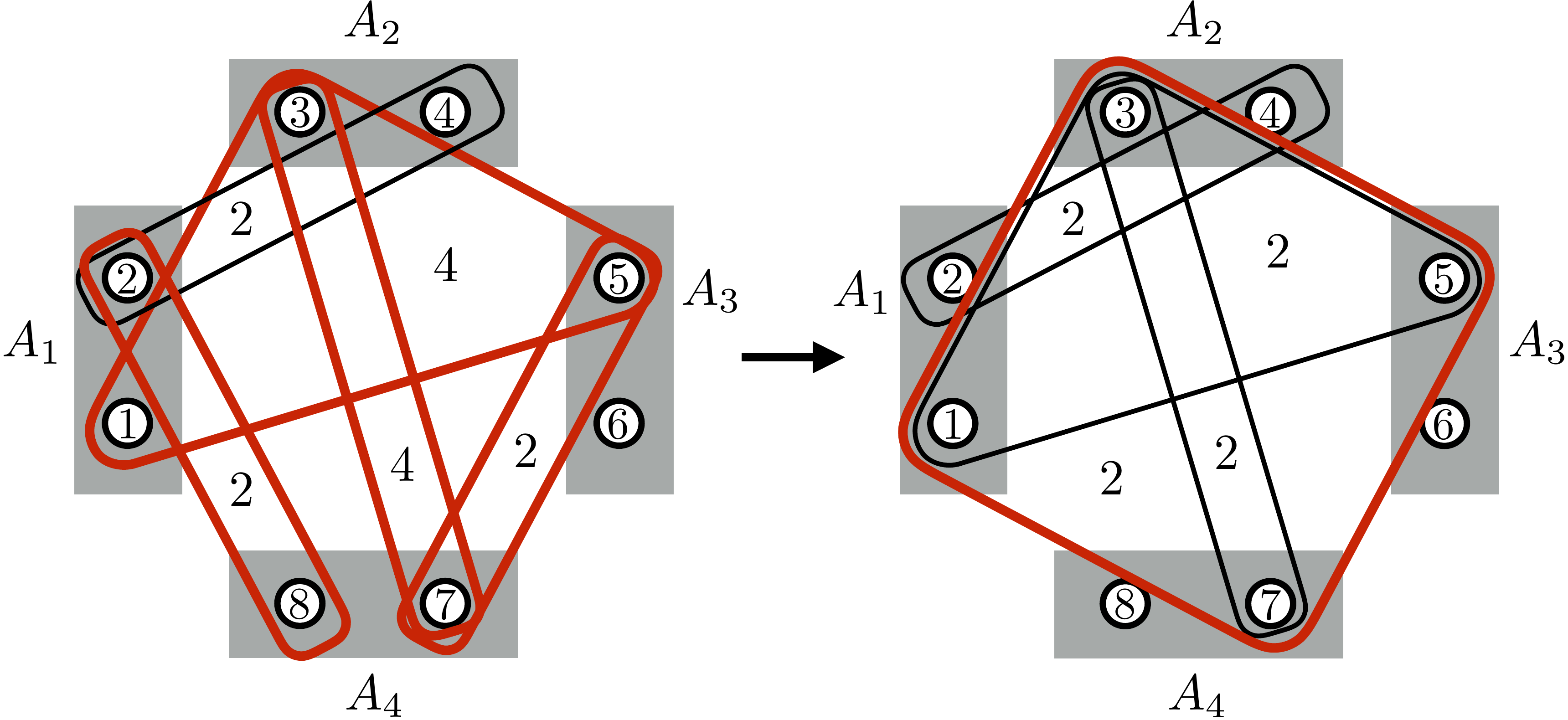}
	\caption{
		The left figure illustrates $(\mathcal{F}_{123}, c_{123})$, $(\mathcal{L}_{14}, c_{14})$, $(\mathcal{L}_{24}, c_{24})$, and $(\mathcal{L}_{34}, c_{34})$
		and the right figure illustrates $(\mathcal{F}, c)$.
		A triple $(28,37,57)$ is a composable tuple to $135$
		since $1357$ satisfies $1357 \sim_{123} 135$, $1357 \sim_{14} 28$, $1357 \sim_{24} 37$,
		and $1357 \sim_{34} 57$.
		Note that the output family $\mathcal{F}$ (described in the right) is the same as the family described in the left in Figure~\ref{fig:Laminarization}.}
	\label{fig:algo3-2}
\end{figure}

In order to explain the composition operation,
we introduce the {\it $\mathcal{A}_Q$-equivalence} $\sim_Q$
by generalizing the characterization of $\sim$ in~\eqref{eq:sim}.
For a nonempty $Q \subseteq [r]$,
we define $\sim_Q$ for $\mathcal{A}$-cuts $X$ and $Y$ by:
\begin{align*}
X \sim_Q Y \Leftrightarrow \{ \ave{X}_Q \cap X, \ave{X}_Q \setminus X \} = \{ \ave{Y}_Q \cap Y, \ave{Y}_Q \setminus Y \},
\end{align*}
where $\ave{X}_Q := \ave{X} \cap A_Q$ and $\ave{Y}_Q := \ave{Y} \cap A_Q$.
See~\eqref{eq:ave} for the notation $\ave{X}$ of the cutting support of $X$.
Note, for $\mathcal{A}_Q$-cuts $X$ and $Y$, $X \sim Y$ if and only if $X \sim_Q Y$.

Let us start the description of the composition operation.
Suppose that $X_0$ is an $\mathcal{A}_{[r-1]}$-cut
and let $\{ p_1, p_2, \dots, p_k \}$ be the set of indices $p \in [r-1]$ with $\ave{X_0} = A_{\{p_1,p_2,\dots,p_k\}}$.
We say that $(X_1, X_2, \dots, X_k)$ is a {\it composable tuple to $X_0$}
if
\begin{itemize}
	\item $\ave{X_i}$ is an $\mathcal{A}_{p_ir}$-cut (i.e., $\ave{X_i} = A_{p_ir}$) for each $i \in [k]$, and
	\item there is an $\mathcal{A}$-cut $X^*$ satisfying $X^* \sim_{[r-1]} X_0$ and $X^* \sim_{p_ir} X_i$ for $i \in [k]$.
\end{itemize}
We say that $X^*$ in the second condition is a {\it composition of $X_0$ and $(X_1, X_2, \dots, X_k)$}.
Note that a composition $X^*$ is uniquely determined up to $\sim$.
Then it holds
\begin{align}\label{eq:l_X_i = l_X^*}
\ell_{X_0}  + \ell_{X_1} + \cdots + \ell_{X_k} \simeq \ell_{\ave{X_0} \cap X^*} + \ell_{\ave{X_1} \cap X^*} + \cdots + \ell_{\ave{X_k} \cap X^*} \simeq \ell_{X^*},
\end{align}
where the first equivalence follows from \cref{lem:sim}~(2)
and the second follows form the definition of $\ell_X$.
Let $\lambda$ be a positive value with
$\lambda = \min \{ c(X_0), c(X_1), \dots, c(X_k) \}$.
By substituting~\eqref{eq:l_X_i = l_X^*} into~\eqref{eq:initial f},
we obtain
\begin{align*}
f \simeq \left( \lambda \ell_{X^*} +  \sum_{X \in \{ X_0, \dots, X_k \}} (c(X) - \lambda) \ell_{X} + \sum_{X \in \mathcal{F} \setminus \{ X_0, \dots, X_k \}} c(X) \ell_{X}\right),
\end{align*}
and the above formula provides a new decomposition of $f$.
For example,
in Figure~\ref{fig:algo3-2},
we combine $X_0 = 135, X_1 = 28, X_2 = 37, X_3 = 57$
into $X^* = 1357$ with $\lambda = 2$.

The formal description of Algorithm~2 is the following.
It is noted that,
if $\mathcal{F}$ is a non-redundant laminarizable $\mathcal{A}$-cut family,
then $|\mathcal{F}|$ is at most $2n = 2|A_{[r]}|$ (see e.g., \cite[Theorem 3.5]{book/Schrijver03}).
\begin{description}
	\item[Algorithm 2 (for extending $f'$ to $f$):]
	\item[Input:] A VCSP-quadratic function $f$ of type $\mathcal{A}$ and restriction $f' := f_{[r-1]}$ given as~\eqref{eq:f'} with $(\mathcal{F}', c')$,
	where $\mathcal{F}'$ is non-redundant and satisfies $|\mathcal{F}'| \leq 2|A_{[r-1]}|$.
	\item[Output:] 
	Either detect the non-QR-M${}_2$-convexity of $f$,
	or obtain an expression of $f$ as
	\begin{align}\label{eq:algo output}
	\sum_{X \in \mathcal{F}} c(X) \ell_{X} + \delta_\mathcal{A} + h
	\end{align}
	with a non-redundant $\mathcal{A}$-cut family $\mathcal{F}$ satisfying $|\mathcal{F}| \leq 2n = 2|A_{[r]}|$ and a positive weight $c$ on $\mathcal{F}$,
	where $h$ is $\mathcal{A}$-linear.
\end{description}
\begin{description}
	\item[Step 1:] 
	For each $p \in [r-1]$,
	execute Algorithm~1 for $f_{pr}$.
	If Algorithm~1 returns ``$f_{pr}$ is not QR-M${}_2$-convex'' for some $p \in [r-1]$,
	then output ``$f$ is not QR-M${}_2$-convex''
	and stop.
	Otherwise,
	for all $p \in [r-1]$,
	obtain $\mathcal{L}_{pr}$ and $c_{pr}$.
	Let $\mathcal{F} := \emptyset$.
	\item[Step 2:]
	While $\mathcal{F}' \neq \emptyset$, do the following:
	Let $X_0$ be an element of $\mathcal{F}'$ such that $\ave{X_0}$ is maximal.
	Let $\{ p_1, p_2, \dots, p_k \}$ be the set of indices $p \in [r-1]$ with $\ave{X_0} = A_{\{p_1,p_2,\dots,p_k\}}$.
	\begin{itemize}
		\item
		If there exists a composable tuple $(X_1, X_2, \dots, X_k)$ to $X_0$ such that $X_i \in \mathcal{L}_{p_ik}$ for $i = 1,2,\dots k$,
		then define $\lambda := \min \{c'(X_0), c_{p_1r}(X_1), \dots, c_{p_kr}(X_k)\}$ and update as
		\begin{align*}
		\mathcal{F} &\leftarrow \mathcal{F} \cup \{X^*\},\\
		c(X^*) &\leftarrow \lambda,\\
		c'(X_0) &\leftarrow c'(X_0) - \lambda,\\
		c_{p_ir}(X_i) &\leftarrow c_{p_ir}(X_i) - \lambda \qquad (i \in [k]),
		\end{align*}
		where $X^*$ is a composition of $X_0$ and $(X_1, X_2, \dots, X_k)$.
		Then remove $X_0$ from $\mathcal{F}'$ if $c'(X_0) = 0$,
		and $X_i$ from $\mathcal{L}_{p_ir}$ if $c_{p_ir}(X_i) = 0$.
		\item
		Otherwise, update as
		\begin{align*}
		\mathcal{F} &\leftarrow \mathcal{F} \cup \{X_0\},\\
		c(X_0) &\leftarrow c'(X_0),\\
		\mathcal{F}' &\leftarrow \mathcal{F}' \setminus \{X_0\}.
		\end{align*}
	\end{itemize}
	\item[Step 3:]
	Update as
	\begin{align*}
	\mathcal{F} &\leftarrow \mathcal{F} \cup \bigcup_{p \in [r-1]} \mathcal{L}_{pr},\\
	 c(X) &\leftarrow c_{pr}(X) \qquad (p \in [r-1], X \in \mathcal{L}_{pr}).
	\end{align*}
	If $|\mathcal{F}| \leq 2n$,
	then output $\mathcal{F}$ and $c$.
	Otherwise, output ``$f$ is not QR-M${}_2$-convex.''~\qqed
\end{description}

\begin{exmp}\label{ex:instance decompotion}
	Let $f$ be the VCSP-quadratic function in~\eqref{eq:f instance}.	
	We first see how Algorithm~2 runs for $f_{123}$ with the input $(\mathcal{L}_{12} = \{ 24 \}, c_{12}(24) = 6)$.
	By executing Algorithm~1 for $f_{13}$ and $f_{23}$,
	we obtain $(\mathcal{L}_{13} = \{ 15 \}, c_{13}(15) = 4)$ and $(\mathcal{L}_{23} = \{ 35\}, c_{23}(35) = 4)$.
	In Step~2,
	we compose $15, 35, 24$ to $135$ as in Figure~\ref{fig:algo3-1}.
	Then we obtain a family $\mathcal{F}_{123} := \{ 135, 24 \}$
	and a positive weight $c_{123}$ on $\mathcal{F}_{123}$
	defined by $c_{123}(135) := 4$ and $c_{123}(24) := 2$.
	We cannot execute a composition operation any more.
	Hence Algorithm~2 outputs $(\mathcal{F}_{123}, c_{123})$.

	Next we see how Algorithm~2 runs for $f = f_{1234}$ with the input $(\mathcal{F}_{123}, c_{123})$.
	By executing Algorithm~1 for $f_{14}$, $f_{24}$, and $f_{34}$,
	we obtain $(\mathcal{L}_{14} = \{ 28 \}, c_{14}(28) = 2)$,
	$(\mathcal{L}_{24} = \{ 37 \}, c_{24}(37) = 4)$, and
	$(\mathcal{L}_{34} = \{57\}, c_{34}(57) = 2)$.
	In Step~2,
	we compose $135, 28,37,57$ to $1357$ as in Figure~\ref{fig:algo3-1}.
	Then we obtain a family $\mathcal{F} := \{ 1357,135,24,37 \}$
	and a positive weight $c$ on $\mathcal{F}$
	defined by $c(X) := 2$ for all $X \in \mathcal{F}$.
	Here we remark that
	we choose a composable tuple $(28,37,57)$ to $135$
	though $(28,37)$ is also a composable tuple to $24$.
	This is because $\ave{135} \supsetneq \ave{24}$; see Step~2.
	We cannot execute a composition operation any more.
	Hence Algorithm~2 outputs $(\mathcal{F}, c)$.~\qqed
\end{exmp}

The following proposition shows that Algorithm~2 works as expected.
\begin{prop}\label{prop:c_LL}
	The following hold:
	\begin{description}
		\item[{\rm (1)}] If Algorithm~{\rm 2} outputs $(\mathcal{F}, c)$,
		then $\mathcal{F}$ is non-redundant and the function~{\rm \eqref{eq:algo output}} for $(\mathcal{F}, c)$ is equal to $f$.
		\item[{\rm (2)}] If $f$ is QR-M${}_2$-convex and $\mathcal{F}'$ is laminarizable,
		then Algorithm~{\rm 2} outputs $(\mathcal{F}, c)$
		and $\mathcal{F}$ is laminarizable.
		\item[{\rm (3)}] Algorithm~{\rm 2} runs in $O(n^2)$ time provided $|A_r| \leq \min \{ |A_1|, |A_2|, \dots, |A_{r-1}| \}$.
	\end{description}
\end{prop}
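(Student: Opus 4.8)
The plan is to prove the three claims separately: (1) by a loop invariant, (2) as a consequence of the uniqueness in \cref{thm:representation}, and (3) by a step-by-step complexity count. For (1), I would show that Algorithm~2 maintains the invariant
\[
\sum_{X \in \mathcal{F}} c(X)\ell_X + \sum_{X \in \mathcal{F}'} c'(X)\ell_X + \sum_{p \in [r-1]} \sum_{X \in \mathcal{L}_{pr}} c_{pr}(X)\ell_X \simeq f
\]
throughout Step~2. At entry to Step~2 (where $\mathcal{F}=\emptyset$) this is exactly the first-phase identity derived from \eqref{eq:f'} and the decompositions of the $f_{pr}$, i.e.\ \eqref{eq:initial f}. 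A composition step preserves it: substituting \eqref{eq:l_X_i = l_X^*} shows that subtracting $\lambda(\ell_{X_0}+\cdots+\ell_{X_k})$ from the $\mathcal{F}',\mathcal{L}_{pr}$ part and adding $\lambda\ell_{X^*}$ to the $\mathcal{F}$ part leaves the total invariant up to $\simeq$, while the ``otherwise'' branch merely transfers a term from $\mathcal{F}'$ into $\mathcal{F}$. Since the while loop terminates (each iteration strictly decreases $|\mathcal{F}'|+\sum_p|\mathcal{L}_{pr}|$, as a composition deletes the weight-minimizing member and the other branch deletes $X_0$) and empties $\mathcal{F}'$, and Step~3 empties every $\mathcal{L}_{pr}$ into $\mathcal{F}$, at termination $\sum_{X\in\mathcal{F}}c(X)\ell_X \simeq f$, which is \eqref{eq:algo output}.

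The remaining content of (1) is non-redundancy of $\mathcal{F}$, and here the key is to sort the members of $\mathcal{F}$ by their cutting support. A composition $X^*$ has $\ave{X^*}=\ave{X_0}\cup A_r$, which contains at least three blocks (since $X_0\in\mathcal{F}'$ is an $\mathcal{A}_{[r-1]}$-cut, so $k\ge 2$); an uncomposed $X_0$ has $\ave{X_0}\subseteq A_{[r-1]}$ disjoint from $A_r$; and a leftover member of $\mathcal{L}_{pr}$ has cutting support exactly $A_{pr}$ (two blocks). Hence these three groups are pairwise $\not\sim$. Within the compositions, I would argue that $X^*_1 \sim X^*_2$ restricts on $A_{[r-1]}$ to $X_0^{(1)} \sim_{[r-1]} X_0^{(2)}$, forcing $X_0^{(1)}=X_0^{(2)}$ by non-redundancy of $\mathcal{F}'$, after which the common $A_r$-bipartition forces the same composable tuple (by non-redundancy of each $\mathcal{L}_{pr}$); since a tuple is consumed once used, the same $X^*$ is never produced twice. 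The originals are non-redundant by hypothesis and the leftovers by \cref{prop:M_2pq} together with the fact that $\mathcal{L}_{pr}$ and $\mathcal{L}_{p'r}$ have distinct supports. Claim (2) then follows quickly: if $f$ is QR-M${}_2$-convex, then by \cref{lem:restriction} each $f_{pr}$ is QR-M${}_2$-convex, so Step~1 does not abort and the algorithm reaches Step~3 with a non-redundant $\mathcal{F}$ satisfying the identity above. By the uniqueness in \cref{thm:representation}, any non-redundant family representing $f$ is $\mathcal{A}$-equivalent to the unique laminarizable one, so $\mathcal{F}$ is itself laminarizable; hence $|\mathcal{F}|\le 2n$ by the bound noted before Algorithm~2, and the algorithm outputs $(\mathcal{F},c)$.

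For (3) I would bound each step. Step~1 runs Algorithm~1 on each $f_{pr}$ at cost $O((|A_p|+|A_r|)^2)$ by \cref{prop:M_2pq}; summing over $p\in[r-1]$ and using $|A_r|\le\min_p|A_p|$ (so that $\sum_p|A_p|^2\le n^2$ and $\sum_p(r-1)|A_r|^2\le n|A_r|\le n^2$) yields $O(n^2)$. Step~2 runs $O(n)$ iterations, since the combined pool has $O(n)$ members and each iteration removes at least one; indexing each $X_i$ by the bipartition it induces on $A_{p_i}$ lets one locate a composable tuple for $X_0$ and form $X^*$ in $O(n)$ time per composition, for $O(n^2)$ in total, and Step~3 is $O(n)$. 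I expect the complexity analysis to be the main obstacle: care is needed to show that the search for composable tuples stays $O(n)$ amortized per iteration and that the hypothesis $|A_r|\le\min_p|A_p|$ is precisely what keeps the repeated invocations of Algorithm~1 within the $O(n^2)$ budget. The other delicate point is the exclusion of two $\sim$-equivalent compositions in (1), but that argument is combinatorial rather than quantitative.
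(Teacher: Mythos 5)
Your treatment of part (1) is sound and matches the paper's (the paper disposes of (1) in two sentences by the same invariant reasoning, and your non-redundancy argument via cutting supports is a legitimate expansion of its ``clear by construction''), and your part (3) is on the right track, modulo the tuple-search subroutine that you yourself flag as needing care (the paper implements it by intersecting the chains $\mathcal{F}_i$ built from the laminar families $\mathcal{L}_{p_ir}$). The problem is part (2), where your argument has a genuine gap: you invoke the uniqueness in \cref{thm:representation} to conclude that ``any non-redundant family representing $f$ is $\mathcal{A}$-equivalent to the unique laminarizable one.'' The uniqueness statement in \cref{thm:representation} applies only \emph{among laminarizable families}; it does not say that every non-redundant $\mathcal{A}$-cut family $\mathcal{F}$ with $f \simeq \sum_{X\in\mathcal{F}} c(X)\ell_X$ is $\mathcal{A}$-equivalent to the laminarizable one. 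Indeed, the whole reason the composition operation in Step~2 exists is that the naive merge $\mathcal{F}' \cup \bigcup_p \mathcal{L}_{pr}$ is a non-redundant family representing $f$ that need not be laminarizable even when $f$ is QR-M${}_2$-convex. \cref{ex:instance decompotion} is a concrete counterexample to your claim: $\{24,15,35\}$ and $\{135,24\}$ both non-redundantly represent the QR-M${}_2$-convex function $f_{123}$, but they have different numbers of $\sim$-classes (three versus two), so they are not $\mathcal{A}$-equivalent and the former is not laminarizable. If your uniqueness claim were true, Step~2 of Algorithm~2 would be entirely superfluous.

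What is actually needed for (2) --- and what the paper spends most of its proof on --- is to show that the specific greedy choices of Algorithm~2 (picking $X_0$ with $\ave{X_0}$ maximal, composing with a full tuple $(X_1,\dots,X_k)$, assigning weight $\lambda=\min\{c'(X_0),c_{p_1r}(X_1),\dots\}$) reproduce exactly the members and weights of a fixed non-redundant laminarizable representative $(\mathcal{F}^*,c^*)$. The paper does this by establishing \cref{lem:index restriction} (the restriction of the laminarizable decomposition to $A_Q$ is, up to $\sim$, the unique decomposition of $f_Q$) and then proving $c(X^*)=c^*(X^*)$ for each composition via a delicate combinatorial claim about laminarizable triples $\{Y_i,Y_j,Y_0\}$, in which the maximality of $\ave{X_0}$ is used essentially. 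This yields $\mathcal{F}\subseteq\mathcal{F}^*$, whence laminarizability and $|\mathcal{F}|\le 2n$. None of this is present in your proposal, so part (2) as written does not go through.
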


For the proof of \cref{prop:c_LL} (2),
we need the following lemma.
\begin{lem}\label{lem:index restriction}
	Suppose that $f$ is QR-M${}_2$-convex,
	$\mathcal{F}$ is a laminarizable $\mathcal{A}$-cut family,
	and $c$ is a positive weight on $\mathcal{F}$,
	where $(\mathcal{F}, c)$ represents $f$ as in~\eqref{eq:(L, c)}.
	For $Q \subseteq [r]$,
	let $\mathcal{G} := \{ X \cap A_Q \mid X \in \mathcal{F}, X \cap A_Q: \mathcal{A}_Q{\rm \mathchar`-cut} \}$
	and $d$ be the positive weight on $\mathcal{G}$ defined by $d(Y) := \sum\{ c(X) \mid X \in \mathcal{F},\ X \cap A_Q = Y \}$.
	Then
	$\mathcal{F}_Q$ and $c_Q$ in~\eqref{eq:f_Q,P_Q,c_Q}
	satisfy $\mathcal{F}_Q \sim \mathcal{G}$ and $c_Q \sim d$.
\end{lem}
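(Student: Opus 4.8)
The plan is to show that the pair $(\mathcal{G}, d)$ is itself a laminarizable representation of the restriction $f_Q$ in the sense of \cref{thm:representation}, and then to invoke the uniqueness part of that theorem to conclude $\mathcal{F}_Q \sim \mathcal{G}$ and $c_Q \sim d$. Since $f$ is QR-M${}_2$-convex, so is $f_Q$ by \cref{lem:restriction}, so the data $(\mathcal{F}_Q, c_Q)$ in~\eqref{eq:f_Q,P_Q,c_Q} is indeed furnished by \cref{thm:representation}; it therefore suffices to verify that $(\mathcal{G}, d)$ enjoys the same two defining properties, namely that it represents $f_Q$ up to $\mathcal{A}_Q$-linear equivalence and that $\mathcal{G}$ is laminarizable.

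First I would establish the key identity $f_Q \simeq \sum_{Y \in \mathcal{G}} d(Y) \ell_Y$ on $\{0,1\}^{A_Q}$, the point being that restricting the function to $A_Q$ corresponds, at the level of sets, to intersecting each $X \in \mathcal{F}$ with $A_Q$. Concretely, for each $p \in [r] \setminus Q$ I fix a single element of $A_p$, and for $x$ in the base family of $\mathcal{A}_Q$ I let $\tilde{x} \in \DF$ be the extension agreeing with $x$ on $A_Q$ and using these fixed choices outside. Expanding, one gets $f(\tilde{x}) = f_Q(x) + (\text{affine in } x)$, since every monomial of $f$ meeting an index outside $A_Q$ contributes either a constant (both indices fixed) or a term linear in $x$ (one index in $A_Q$, one fixed); the same splitting gives $\ell_X(\tilde{x}) = \ell_{X \cap A_Q}(x) + (\text{affine in } x)$ for each $X \in \mathcal{F}$. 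Substituting these into the hypothesis $f \simeq \sum_{X \in \mathcal{F}} c(X)\ell_X$ and collecting all affine remainders into one $\mathcal{A}_Q$-linear function yields $f_Q \simeq \sum_{X \in \mathcal{F}} c(X)\ell_{X \cap A_Q}$. By \cref{lem:sim}~(1) applied within $\mathcal{A}_Q$, every summand for which $X \cap A_Q$ is not an $\mathcal{A}_Q$-cut is $\mathcal{A}_Q$-linear and may be discarded; grouping the surviving summands by the common value $Y = X \cap A_Q$ and summing their weights is precisely the definition of $(\mathcal{G}, d)$, giving $f_Q \simeq \sum_{Y \in \mathcal{G}} d(Y)\ell_Y$.

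Next I would check that $\mathcal{G}$ is laminarizable. Fix a laminar family $\mathcal{L}$ with $\mathcal{F} \sim \mathcal{L}$. The map $X \mapsto X \cap A_Q$ respects $\mathcal{A}$-equivalence on the sets whose restriction is an $\mathcal{A}_Q$-cut: for $p \in Q$, whether $X \cap A_Q$ cuts $A_p$ depends only on whether $X$ cuts $A_p$ (as $X \cap A_Q \cap A_p = X \cap A_p$), and the bipartition that $X \cap A_Q$ induces on $\ave{X} \cap A_Q$ is just the restriction to $A_Q$ of the bipartition of $\ave{X}$; by~\eqref{eq:sim} both data are determined by the $\sim$-class of $X$. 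Hence $\mathcal{G} \sim \mathcal{L}_Q$, where $\mathcal{L}_Q := \{\, L \cap A_Q \mid L \in \mathcal{L},\ L \cap A_Q : \mathcal{A}_Q\text{-cut}\,\}$. Since intersecting a laminar family with the fixed set $A_Q$ preserves laminarity, $\mathcal{L}_Q$ is laminar, so $\mathcal{G}$ is $\sim$-equivalent to a laminar family and is therefore laminarizable.

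With both properties verified, $(\mathcal{G}, d)$ and $(\mathcal{F}_Q, c_Q)$ are laminarizable representations of the same QR-M${}_2$-convex function $f_Q$, and the uniqueness part of \cref{thm:representation} forces $\mathcal{F}_Q \sim \mathcal{G}$ and $c_Q \sim d$, as required. I expect the main obstacle to lie in the first step: one must carefully confirm that all cross terms produced by the extension $x \mapsto \tilde{x}$ and by the splitting of $\ell_X$ are genuinely affine in $x$ on the base family of $\mathcal{A}_Q$ (hence absorbed by the $\mathcal{A}_Q$-linear equivalence $\simeq$), so that they neither change which restricted sets are $\mathcal{A}_Q$-cuts nor disturb the aggregated weights defining $d$.
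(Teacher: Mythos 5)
Your proposal is correct and follows essentially the same route as the paper: establish $f_Q \simeq \sum_{Y \in \mathcal{G}} d(Y)\ell_Y$ by restricting each $\ell_X$ to $A_Q$ and discarding the summands whose restriction is not an $\mathcal{A}_Q$-cut, observe that $\mathcal{G}$ inherits laminarizability from $\mathcal{F}$, and invoke the uniqueness part of \cref{thm:representation}. The only difference is that you spell out the extension argument and the compatibility of $X \mapsto X \cap A_Q$ with $\sim$, which the paper leaves implicit.
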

\begin{proof}
	For an $\mathcal{A}$-cut $X$ and $Q \subseteq [r]$,
	let $(\ell_X)_Q$ be the restriction of $\ell_X$ to $\{ 0,1 \}^{A_Q}$.
	Note that $(\ell_X)_Q + \delta_{\mathcal{A}_Q}$ is linear on $\dom{\delta_{\mathcal{A}_Q}}$ if and only if $X \cap A_Q$ is not an $\mathcal{A}_Q$-cut.
	Therefore it holds that
	\begin{align*}
	f_Q &\simeq \sum_{X \in \mathcal{F}} c(X) (\ell_X)_Q\\
	&\simeq \sum_{Y \in \mathcal{G}} \ell_Y \cdot \sum\{ c(X) \mid X \in \mathcal{F},\ X \cap A_Q = Y \}\\
	&= \sum_{Y \in \mathcal{G}} d(Y) \ell_Y.
	\end{align*}
	Furthermore, since $\mathcal{F}$ is laminarizable,
	so is $\mathcal{G}$.
	By the uniqueness of $\mathcal{F}_Q$ and $c_Q$ up to $\sim$ (\cref{thm:representation}),
	we obtain $\mathcal{F}_Q \sim \mathcal{G}$ and $c_Q \sim d$.
\end{proof}

We are now ready to show \cref{prop:c_LL}.
\begin{proof}[Proof of \cref{prop:c_LL}]
	(1).
	By the argument before the formal description of Algorithm~2,
	we can say that if Algorithm~2 outputs $\left( \mathcal{F}, c \right)$,
	then it constructs some decomposition of $f$.
	Hence the equality holds.
	The non-redundancy of $\mathcal{F}$ is clear by its construction.
	
	(2).
	Since $f$ is QR-M${}_2$-convex,
	so is $f_{pr}$ for $p \in [r-1]$.
	Hence, by \cref{prop:M_2pq}, Algorithm~2 does not output ``$f$ is not QR-M${}_2$-convex'' in Step~1.
	Let $\mathcal{F}^*$ be a non-redundant laminarizable $\mathcal{A}$-cut family
	and $c^*$ be a positive weight on $\mathcal{F}^*$
	that satisfy~\eqref{eq:(L, c)} for the given QR-M${}_2$-convex function $f$.
	We extend $c^*$ to a nonnegative weight on $2^{[n]}$ by defining $c^*(X) := 0$ for $X \notin \mathcal{F}$.
	We can assume that if $X \in \mathcal{F}$ and $Y \in \mathcal{F}^*$ satisfies $X \sim Y$
	then it holds $X = Y$.
	It suffices to prove (i) $c(X^*) = c^*(X^*)$ for $X^*$ obtained in the first half of Step~2,
	(ii)  $c(X_0) = c^*(X_0)$ for $X_0$ obtained in the latter half of Step~2,
	and (iii) $c(X) = c^*(X)$ for $X$ obtained in Step~3.
	Indeed, the properties (i)--(iii) imply $\mathcal{F} \subseteq \mathcal{F}^*$.
	Since $\mathcal{F}^*$ is laminarizable, so is $\mathcal{F}$
	and $|\mathcal{F}| \leq 2n$.
	Hence Algorithm~2 outputs $(\mathcal{F}, c)$ in Step~3.
	By the uniqueness of $\mathcal{F}^*$ under $\sim$ (\cref{thm:representation}),
	we can say $\mathcal{F} = \mathcal{F}^*$ and $c = c^*$.

	(i).
	Let $\lambda := \min\{ c'(X_0), c_{p_1r}(X_1), c_{p_2r}(X_2), \dots, c_{p_kr}(X_k) \}$.
	We prove $c^*(X^*) = \lambda$.
	It is easy to see that $c^*(X^*) \leq \lambda$ holds
	since, by \cref{lem:index restriction},
	we have $c'(X_0) \geq c^*(X^*)$ and $c_{p_i r}(X_i) \geq c^*(X^*)$ for $i \in [k]$.
	
	Suppose, to the contrary, that $c^*(X^*) < \lambda$ holds.
	Then the following holds:
	\begin{cl*}
		Every $Y_0 \in \mathcal{F}^*$ with $Y_0 \sim_{[r-1]} X^*$ satisfies $Y_0 \sim X^*$.
	\end{cl*}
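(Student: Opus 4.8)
The plan is to prove the Claim under the standing assumption $c^*(X^*)<\lambda$ and then read off the contradiction that closes part (i). First I would translate everything into weight identities via \cref{lem:index restriction} and the uniqueness in \cref{thm:representation}. Writing $W:=A_{\{p_1,\dots,p_k\}}$, the composition has cutting support $\ave{X^*}=A_{\{p_1,\dots,p_k,r\}}$, and by definition $X^*$ is the unique $\mathcal{A}$-cut up to $\sim$ with $X^*\sim_{[r-1]}X_0$ and $X^*\sim_{p_ir}X_i$ for all $i$. Applying \cref{lem:index restriction} with $Q=[r-1]$ and the non-redundancy of $\mathcal{F}'$ gives $c'(X_0)=\sum\{c^*(Y)\mid Y\in\mathcal{F}^*,\ Y\sim_{[r-1]}X^*\}$, and with $Q=\{p_i,r\}$ (using non-redundancy of $\mathcal{L}_{p_ir}$) gives $c_{p_ir}(X_i)=\sum\{c^*(Y)\mid Y\in\mathcal{F}^*,\ Y\restriction_{p_ir}\sim X_i\}$. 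Since $X^*$ itself contributes to each sum, these identities already yield the easy bound $c^*(X^*)\le\lambda$ recorded above; under the contrary hypothesis $c^*(X^*)<\lambda\le c_{p_ir}(X_i)$, each of these sums has a positive residual, so for every $i$ there is an auxiliary $Z_i\in\mathcal{F}^*$ with $c^*(Z_i)>0$, $Z_i\sim_{p_ir}X^*$, and $Z_i\not\sim X^*$.

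For the Claim itself, take $Y_0\in\mathcal{F}^*$ with $Y_0\sim_{[r-1]}X^*$. By the $[r-1]$-agreement, $Y_0$ cuts exactly $A_{p_1},\dots,A_{p_k}$ inside $A_{[r-1]}$ and induces there the same bipartition as $X^*$ (equivalently as $X_0$). Hence, by the uniqueness of the composition, proving $Y_0\sim X^*$ reduces to the purely combinatorial target that $Y_0$ cuts $A_r$ and induces on $A_r$ the same bipartition as $X^*$, i.e.\ $Y_0\sim_{p_ir}X_i$ for each $i$. I would attack this by passing to laminar representatives of the classes of $X^*$, $Y_0$, and the $Z_i$ in a laminar family $\mathcal{L}^*\sim\mathcal{F}^*$ (available since $\mathcal{F}^*$ is laminarizable), and then split into the two offending cases: (BadI) $Y_0$ does not cut $A_r$, so $\ave{Y_0}=W$; and (BadII) $Y_0$ cuts $A_r$ but with a bipartition different from that of $X^*$. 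In each case the goal is to exhibit a crossing pair among $\{X^*,Y_0,Z_1,\dots,Z_k\}$, contradicting the cross-freeness of $\mathcal{L}^*$.

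The mechanism I expect to use is that the representatives are forced into definite nesting relations by their shared pieces. For instance in (BadII), choosing a point $a\in (A_r\cap Y_0)\setminus(A_r\cap X^*)$ and comparing with $Z_1$ (which agrees with $X^*$ on $A_{p_1}$ and on $A_r$) pins $Z_1\subseteq Y_0$, while $Z_1$ and the representative of $X^*$ must nest because they share $A_r\cap X^*\ne\emptyset$; tracking the block on which $Z_1\not\sim X^*$ (necessarily a block lying wholly inside $Y_0$, by $Z_1\subseteq Y_0$) against the laminar structure of $\mathcal{F}'$ on $A_{[r-1]}$ (laminarizable by hypothesis in part (2)) should produce the crossing. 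Case (BadI) is handled analogously, using that $Y_0\sim X_0$ then forces the $X_0$-class to ``split'' between an $A_r$-cutting representative ($X^*$) and a non-$A_r$-cutting one ($Y_0$), again incompatible with the nesting imposed by the $Z_i$.

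The main obstacle is precisely this case analysis. Laminarizability \emph{alone} does not forbid the offending configurations — a nested pair of $\mathcal{A}$-cuts may legitimately agree on $W$ yet differ on $A_r$ — so the weight hypothesis is indispensable: it is only the auxiliary sets $Z_i$ manufactured from the positive residuals that convert an innocuous nesting into a genuine crossing. Once the Claim is established, the conclusion is immediate: by the Claim together with the non-redundancy of $\mathcal{F}^*$, the unique $Y\in\mathcal{F}^*$ with $Y\sim_{[r-1]}X^*$ is $X^*$ itself, so the $[r-1]$-identity collapses to $c'(X_0)=c^*(X^*)$; combined with $\lambda\le c'(X_0)$ this gives $c^*(X^*)\ge\lambda$, contradicting $c^*(X^*)<\lambda$. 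Hence $c^*(X^*)=\lambda=c(X^*)$, which is exactly assertion (i).
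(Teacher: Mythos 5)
Your overall skeleton is the right one (and close to the paper's): use \cref{lem:index restriction} to turn the residuals $c_{p_ir}(X_i)-c^*(X^*)>0$ into auxiliary members of $\mathcal{F}^*$ that agree with $X^*$ on $A_{p_i}\cup A_r$ but are not $\sim X^*$, then use laminarizability of $\mathcal{F}^*$ to force $Y_0$ to cut $A_r$ the same way $X^*$ does. But there is a genuine gap: you never invoke the fact that the algorithm chose $X_0$ with $\ave{X_0}$ \emph{maximal} in $\mathcal{F}'$, and without it your contradiction does not materialize. Concretely, nothing in your argument excludes the configuration in which a single $Z\in\mathcal{F}^*$ serves as every $Z_i$: take $Z$ with $\ave{Z}=\ave{X_0}\cup A_q\cup A_r$ for some fresh $q$, agreeing with $X^*$ on $\ave{X_0}\cup A_r$, and $Y_0\subsetneq Z$ with $\ave{Y_0}=\ave{X_0}$. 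Then $Y_0\sim_{[r-1]}X^*$, $Y_0\not\sim X^*$, each $Z_i=Z$ satisfies your requirements, and $\{Y_0,Z_1,\dots,Z_k\}=\{Y_0,Z\}$ is laminar --- no crossing pair exists, so both your (BadI) and (BadII) analyses stall. This configuration is killed only by observing that $Z\cap A_{[r-1]}$ would give (via \cref{lem:index restriction}) a member of $\mathcal{F}'$ with cutting support strictly containing $\ave{X_0}$. The paper uses exactly this maximality to extract \emph{two} witnesses $Y_i,Y_j$ that disagree on some $A_{p_i}$, which forces the strict squeeze $Y_i\supsetneq Y_0\supsetneq Y_j$ and hence $Y_0\cap A_r=Y_i\cap A_r=Y_j\cap A_r$; a single witness, or $k$ witnesses that happen to coincide, cannot produce this squeeze.

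A second, smaller flaw: several of your steps apply laminarity to pairs involving ``the representative of $X^*$'' in $\mathcal{L}^*$ (e.g., ``$Z_1$ and the representative of $X^*$ must nest''). Under the standing hypothesis $c^*(X^*)<\lambda$ you may well have $c^*(X^*)=0$, in which case $X^*$ is not in $\mathcal{F}^*$ at all and has no representative in $\mathcal{L}^*$; cross-freeness can only be invoked for pairs drawn from $\{Y_0,Z_1,\dots,Z_k\}\subseteq\mathcal{F}^*$. The paper's proof is careful to work only with the triple $\{Y_i,Y_j,Y_0\}\subseteq\mathcal{F}^*$ and to use $X^*$ purely as a label for equivalence classes on $A_{p_i}\cup A_r$. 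Finally, the decisive step of your plan (``should produce the crossing'') is asserted rather than argued; as the example above shows, it is false in the generality you state it, so this is where the missing maximality argument must enter.
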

	On the other hand,
	by \cref{lem:index restriction} with $c^*(X^*) < \lambda \leq c'(X_0)$,
	there must exist $Y_0 \in \mathcal{F}^*$ satisfying $Y_0 \sim_{[r-1]} X^*$ and $Y_0 \not\sim X^*$.
	This contradicts the statement of Claim,
	and hence $c^*(X^*) = \lambda$ holds, as required.
	
	We now prove Claim.
	\begin{proof}[Proof of Claim]
	Take any $Y_0 \in \mathcal{F}^*$ with $Y_0 \sim_{[r-1]} X^*$.
	By $c_{p_ir}(X_i) > c^*(X^*)$ ($i \in [k]$) and \cref{lem:index restriction},
	for every $i \in [k]$
	there is $Y \in \mathcal{F}^*$ with $Y \sim_{p_ir} X_i$ and $Y \not\sim X^*$.
	Take $Y \in \mathcal{F}^*$ satisfying $Y \not\sim X^*$ with $\{i \in [k] \mid Y \sim_{p_ir} X_i \}$ maximal among elements $Y' \in \mathcal{F}^*$ satisfying $Y' \not\sim X^*$.
	Let $I := \{i \in [k] \mid Y \sim_{p_ir} X_i \} (\neq \emptyset)$.
	By the maximality of $\ave{X_0}$ and $Y \not\sim X^*$,
	we have $[k] \setminus I \neq \emptyset$;
	otherwise $\ave{Y} \cap A_{[r-1]} \supsetneq \ave{X_0}$, contradicting the maximality of $\ave{X_0}$.

	Choose an arbitrary $j \in [k] \setminus I$.
	Then there is $Y_j \in \mathcal{F}^*$ with $Y_j \sim_{p_jr} X_j$ and $Y_j \not\sim X^*$.
	Furthermore, by the maximality of $I$,
	there is $i \in I$ such that $Y_j \not\sim_{p_ir} X_i$.
	Hence $Y_j \not\sim_{p_i} Y \sim_{p_i} X^*$ holds.
	In the following, we denote $Y$ by $Y_i$.

	Since $Y_i, Y_j, Y_0 \in \mathcal{F}^*$ and $\mathcal{F}^*$ is laminarizable,
	so is $\{Y_i, Y_j, Y_0\}$.
	Hence, by executing appropriate transformations for $\{Y_i, Y_j, Y_0\}$,
	we can make it laminar.
	We also denote the resulting laminar family by $\{Y_i, Y_j, Y_0\}$.
	We can assume $Y_i \cap A_{p_i} = Y_0 \cap A_{p_i} (\neq \emptyset)$ and $Y_j \cap A_{p_j} = Y_0 \cap A_{p_j} (\neq \emptyset)$.
	Indeed, $Y_i \cap A_{p_i} \neq Y_0 \cap A_{p_i}$ means $([n] \setminus Y_i) \cap A_{p_i} = Y_0 \cap A_{p_i}$.
	By the laminarity of $Y_i$ and $Y_0$, we have $Y_i \cap Y_0 = \emptyset$.
	Hence $\{[n] \setminus Y_i, Y_0\}$ is also laminar.
	Furthermore, note $Y_i \cap A_r = Y_j \cap A_r (\neq \emptyset)$ by $Y_i \sim_{p_ir} X_i$ and $Y_j \sim_{p_jr} X_j$.
	
	By $Y_0 \sim_{p_ip_j} X^* \not\sim_{p_ip_j} Y_i$ and laminarity,
	it holds that $Y_0 \cap A_{p_j} \supsetneq Y_i \cap A_{p_j}$ or $Y_0 \cap A_{p_j} \subsetneq Y_i \cap A_{p_j}$.
	Assume $Y_0 \cap A_{p_j} = Y_j \cap A_{p_j} \subsetneq Y_i \cap A_{p_j}$ (the argument for the other case is similar).
	Hence, by $Y_0 \cap Y_i \neq \emptyset$ and $Y_j \cap Y_i \neq \emptyset$,
	we have $Y_0 \subsetneq Y_i \supsetneq Y_j$.
	By $Y_j \not\sim_{p_i} Y_i \sim_{p_i} X$ and $Y_i \supsetneq Y_j$,
	we have $Y_0 \cap A_{p_i} = Y_i \cap A_{p_i} \supsetneq Y_j \cap A_{p_i}$.
	Hence $Y_i \supsetneq Y_0 \supsetneq Y_j$ holds.
	By $Y_i \cap A_r = Y_j \cap A_r$,
	it holds that $Y_i \cap A_r = Y_j \cap A_r = Y_0 \cap A_r$.
	This means $Y_0 \sim X^*$.
	\end{proof}
	
	(ii).
	By \cref{lem:index restriction}, it holds that
	\begin{align*}
	c'(X_0) &= \sum \{ c^*(Y) \mid Y \in \mathcal{F}^*,\ Y \sim_{[r-1]} X_0 \}\\
	&= c^*(X_0) + \sum \{ c^*(Y) \mid Y \in \mathcal{F}^*,\ \ave{Y} \supseteq A_r,\ X_0 \sim_{[r-1]} Y \}.
	\end{align*}
	Here the second term must be zero.
	Otherwise, by \cref{lem:index restriction},
	we would have found $X_1, X_2, \dots, X_k$ in Step~2.
	Therefore $c'(X_0) = c^*(X_0)$ holds.
	Thus we obtain $c(X_0) = c'(X_0) = c^*(X_0)$.
	
	(iii).
	We can show $c(X) = c_{pr}(X) = c^*(X)$ for any $p \in [r-1]$ and $X \in \mathcal{L}_{pr}$ by a similar argument as for (ii).

	(3).
	Note that $|\mathcal{F}'| = O(|A_{[r-1]}|)$ and $|\mathcal{L}_{pr}| = O(|A_{pr}|)$ for any $p \in [r-1]$.
	By the assumption $|A_r| \leq \min \{ |A_1|, |A_2|, \dots, |A_{r-1}| \}$,
	it holds that $r|A_r| = O(n)$.
	Step~1 can be done in $O(\sum_{p \in [r-1]}(|A_p| + |A_r|)^2) = O(\sum_{p \in [r-1]}|A_p|^2) = O(n^2)$ time by \cref{prop:M_2pq}.
	In Step~2,
	we first need to sort the elements in $\mathcal{F}'$ with respect to set-inclusion ordering in $O(|A_{[r-1]}| \log |A_{[r-1]}|) = O(n \log n)$ time (this is done only once).
	In each iteration,
	we search for $\{ X_1, X_2, \dots, X_k \}$ satisfying the conditions described in Step~2.
	This can be done in $O(|\bigcup_p \mathcal{L}_{pr}|) = O(n + r|A_r|) = O(n)$ time by using the structure of $\mathcal{L}_{p_ir}$ as follows.
	
	We first construct $\mathcal{F}_i$ from $\mathcal{L}_{p_ir}$ as $\mathcal{F}_i := \mathcal{G}_i \cup \overline{\mathcal{G}_i}$
	for all $i \in [k]$ in $O(|\bigcup_p \mathcal{L}_{pr}|) = O(n)$ time,
	where
	\begin{align*}
	\mathcal{G}_i &:= \{ X_i \cap A_r \mid X_i \in \mathcal{L}_{p_ir},\ X_i \cap A_{p_i} = X_0 \cap A_{p_i} \},\\
	\overline{\mathcal{G}_i} &:= \{ A_r \setminus X_i \mid X_i \in \mathcal{L}_{p_ir},\ A_{p_i} \setminus X_i = X_0 \cap A_{p_i} \}.
	\end{align*}
	Note that $F \cup (X_0 \cap A_{p_i}) \in \mathcal{L}_{p_ir}$ if $F \in \mathcal{G}_i$
	and $(A_r \setminus F) \cup (A_{p_i} \setminus X_0) \in \mathcal{L}_{p_ir}$ if $F \in \overline{\mathcal{G}_i}$.
	We can easily see that there exists $\{ X_1, X_2, \dots, X_k \}$ satisfying the conditions in Step~2 if and only if $\bigcap_{i \in [k]} \mathcal{F}_i \neq \emptyset$.
	By the laminarity of $\mathcal{L}_{p_ir}$,
	$\mathcal{F}_i$ is a chain,
	and can be represented as $\mathcal{F}_i = \{ F_i^1, F_i^2, \dots, F_i^{q_i} \}$ for $i \in [k]$,
	where $F_i^1 \supsetneq F_i^2 \supsetneq \cdots \supsetneq F_i^{q_i}$
	(this chain can be obtained while constructing $\mathcal{L}_{p_ir}$ in Algorithm~1).
	If $\bigcap_{i \in [k]} \mathcal{F}_i \neq \emptyset$, we can obtain $F \in \bigcap_{i \in [k]} \mathcal{F}_i$ in $O\left(\sum_i|\mathcal{F}_i|\right) = O(n)$ time.
	Indeed, take the maximal elements $F_1^1, F_2^1, \dots, F_k^1$ in $\mathcal{F}_1, \mathcal{F}_2, \dots, \mathcal{F}_k$, respectively.
	If all $i$ satisfy $F_i^1 = \bigcap_j F_j^1$,
	then output $\bigcap_j F_j^1$.
	Otherwise, for each $i$ with $F_i^1 \supsetneq \bigcap_j F_j^1$,
	update $\mathcal{F}_i \leftarrow \mathcal{F}_i \setminus \{F_i^1\}$,
	and do the same thing.
	By repeating this procedure,
	we can verify $\bigcap_{i \in [k]} \mathcal{F}_i = \emptyset$ or obtain $F \in \bigcap_{i \in [k]} \mathcal{F}_i$.
	From this $F$ in $\bigcap_{i \in [k]} \mathcal{F}_i$,
	we can easily construct the desired $X_i$ as
	\begin{align*}
	X_i =
	\begin{cases}
	F \cup (X_0 \cap A_{p_i}) & \textrm{if $F \in \mathcal{G}_i$},\\
	(A_r \setminus F) \cup (A_{p_i} \setminus X_0) & \textrm{if $F \in \overline{\mathcal{G}_i}$}
	\end{cases}
	\end{align*}
	for each $i \in [k]$.
	Thus we can find $\{ X_1, X_2, \dots, X_k \}$ satisfying the conditions in Step~2 in $O(n)$ time.
	
	Furthermore we can calculate $\min \{ c'(X_0), c_{p_1r}(X_1), c_{p_2r}(X_2), \dots, c_{p_kr}(X_k) \}$ in $O(k) = O(n)$ time.
	Since $|\mathcal{F}'| + |\bigcup_p \mathcal{L}_{pr}|$ decreases at least by one in each iteration in Step~2,
	the number of iterations in Step~2 is bounded by $O(|\mathcal{F}'| + |\bigcup_{p}\mathcal{L}_{pr}|) = O(n)$.
	Hence Step~2 can be done in $O(n^2)$ time.
	
	Step~3 can be done in $O(|\bigcup_p \mathcal{L}_{pr}| + n) = O(n)$ time.
	Hence the running-time of Algorithm~2 is bounded by $O(n^2)$.
\end{proof}

Our proposed algorithm for \DECOMP\ can be summarized as follows.
\begin{description}
	\item[Algorithm~3 (for \DECOMP):]
	\item[Step 0:]
	Rename $A_1, A_2, \dots, A_r$ so as to satisfy $|A_1| \geq |A_2| \geq \cdots \geq |A_r|$.
	\item[Step 1:]
	Execute Algorithm~1 for the restriction $f_{12}$.
	If Algorithm~1 returns ``$f_{12}$ is not QR-M${}_2$-convex,''
	then output ``$f$ is not QR-M${}_2$-convex''
	and stop.
	Otherwise, obtain $\mathcal{L}_{12}$ and $c_{12}$.
	\item[Step 2:]
	For $t = 3, \dots, r$,
	execute Algorithm~2 for $(\mathcal{F}_{[t-1]},c_{[t-1]})$,
	where $\mathcal{F}_{[2]} = \mathcal{L}_{12}$ and $c_{[2]} = c_{12}$.
	If Algorithm~2 returns ``$f_{[t]}$ is not QR-M${}_2$-convex,''
	output ``$f$ is not QR-M${}_2$-convex''
	and stop.
	Otherwise, obtain $(\mathcal{F}_{[t]},c_{[t]})$.
	\item[Step 3:]
	Output $(\mathcal{F}_{[r]},c_{[r]})$.~\qqed
\end{description}

\begin{thm}\label{thm:algo 3}
	Algorithm~{\rm 3} solves \DECOMP\ in $O(rn^2)$ time.
\end{thm}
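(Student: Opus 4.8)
The plan is to prove correctness by induction on the index $t$ appearing in Step~2, using the already-established \cref{prop:M_2pq} and \cref{prop:c_LL} as black boxes, and then to bound the running time by summing the per-step costs, the crucial point being that the renaming in Step~0 is exactly what makes the size hypothesis of \cref{prop:c_LL}~(3) hold at every iteration. Throughout I would maintain the invariant that, as long as no call has reported non-QR-M${}_2$-convexity, the current pair $(\mathcal{F}_{[t]}, c_{[t]})$ is non-redundant and represents $f_{[t]}$ in the form~\eqref{eq:algo output}; this is exactly \cref{prop:c_LL}~(1), with \cref{prop:M_2pq} supplying the base case $t=2$ (where $\mathcal{F}_{[2]} = \mathcal{L}_{12}$ represents $f_{12}$). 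Taking $t=r$ then shows that any output $(\mathcal{F}_{[r]}, c_{[r]})$ is a correct non-redundant decomposition of $f = f_{[r]}$ in the sense of~\eqref{eq:(L, c)inDEC}, which is what \DECOMP\ requires when $f$ is not QR-M${}_2$-convex.

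For the two remaining correctness obligations --- that detection is sound, and that the output family is laminarizable when $f$ is QR-M${}_2$-convex --- I would argue the contrapositive, handling both at once. Assume $f$ is QR-M${}_2$-convex. By \cref{lem:restriction} every restriction $f_{[t]}$ is then QR-M${}_2$-convex, so Algorithm~1 applied to $f_{12}$ returns a laminar (hence laminarizable) family $\mathcal{L}_{12}$ by \cref{prop:M_2pq}, giving the base case of the induction. For the inductive step, $\mathcal{F}_{[t-1]}$ is laminarizable by the induction hypothesis while $f_{[t]}$ is QR-M${}_2$-convex, so \cref{prop:c_LL}~(2) guarantees that Algorithm~2 does produce a pair and that $\mathcal{F}_{[t]}$ is laminarizable; in particular no ``not QR-M${}_2$-convex'' message is ever emitted. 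This simultaneously shows that Algorithm~3 outputs a laminarizable $\mathcal{F}_{[r]}$ in the QR-M${}_2$-convex case and, read backwards, that any reported detection is correct. The one point that is not local and needs the induction is the detection in Step~3 of Algorithm~2 triggered by $|\mathcal{F}| > 2n$: its soundness rests precisely on the fact that a non-redundant laminarizable $\mathcal{A}$-cut family has at most $2n$ members, so that exceeding this bound certifies non-laminarizability and hence (via the contrapositive just established) non-QR-M${}_2$-convexity of $f$.

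For the running time, Step~0 costs $O(r \log r) = O(n \log n)$ for sorting, and Step~1 costs $O((|A_1|+|A_2|)^2) = O(n^2)$ by \cref{prop:M_2pq}. In Step~2 the renaming $|A_1| \geq |A_2| \geq \cdots \geq |A_r|$ ensures $|A_t| \leq \min\{|A_1|, \dots, |A_{t-1}|\}$ for each $t$, so the hypothesis of \cref{prop:c_LL}~(3) is met and each of the $r-2$ invocations of Algorithm~2 runs in $O(n^2)$ time; summing gives $O(rn^2)$. Step~3 is $O(1)$. Hence the total is $O(rn^2)$, as claimed.

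I do not expect a genuinely hard step here: the theorem is essentially a synthesis of \cref{prop:M_2pq} and \cref{prop:c_LL}, whose proofs already carry the analytic weight. The only subtleties to get exactly right are the bookkeeping of the induction --- in particular that the laminarizability invariant propagates across the iterations so that the $|\mathcal{F}| > 2n$ test in Algorithm~2's Step~3 is a sound certificate of non-QR-M${}_2$-convexity --- and the verification that the sorting in Step~0 is what licenses the uniform $O(n^2)$ bound per extension through \cref{prop:c_LL}~(3).
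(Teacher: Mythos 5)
Your proposal is correct and follows essentially the same route as the paper's proof: the base case via \cref{prop:M_2pq}, the representation invariant via \cref{prop:c_LL}~(1), laminarizability and soundness of detection via the contrapositive through \cref{lem:restriction} and \cref{prop:c_LL}~(2), and the $O(rn^2)$ bound from \cref{prop:c_LL}~(3) enabled by the sorting in Step~0. Your write-up merely makes the induction and the role of the $|\mathcal{F}| > 2n$ test slightly more explicit than the paper does.
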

\begin{proof}
	Step~0 can be done in $O(r\log r)$ time.
	Since the running-time of Algorithm~2 for $t$ is bounded by $O(|A_{[t+1]}|^2) = O(n^2)$ by \cref{prop:c_LL}~(3),
	the running-time of Algorithm~3 is bounded by $O(rn^2)$.
	
	The validity of Algorithm~3 can be proved as follows.
	Suppose that Algorithm~3 stops at Step~1 or Step~2.
	By \cref{prop:M_2pq} and \cref{prop:c_LL}~(2),
	$f$ is not QR-M${}_2$-convex.
	Hence Algorithm~3 works correctly.
	
	Suppose that Algorithm~3 reaches Step~3.
	Since $f_{[2]} \simeq_{[2]} \sum_{X \in \mathcal{F}_{[2]}} c_{[2]}(X) \ell_{X}$ by \cref{prop:M_2pq},
	we obtain $f_{[t]} \simeq_{[t]} \sum_{X \in \mathcal{F}_{[t]}} c_{[t]}(X) \ell_{X}$
	for all $t = 3, \dots, r$
	by \cref{prop:c_LL}~(1),
	where $\simeq_Q$ denotes the $\mathcal{A}_Q$-linear equivalence
	and this notation is used for $Q = [2], \dots, [r]$ here.
	Thus
	we have $f \simeq \sum_{X \in \mathcal{F}_{[r]}} c_{[r]}(X) \ell_{X}$ holds.
	Furthermore, if $f$ is QR-M${}_2$-convex,
	then $\mathcal{F}_{[2]} (= \mathcal{L}_{12})$ is laminarizable by \cref{lem:restriction} and \cref{prop:M_2pq}.
	Hence $\mathcal{F}_{[3]}, \dots, \mathcal{F}_{[r]}$ are laminarizable by \cref{prop:c_LL}~(2).
	Thus Algorithm~3 works correctly.
\end{proof}

\section{Algorithm for \LAM}\label{sec:(ii)}
For a VCSP-quadratic function $f$ of type $\mathcal{A} = \{A_1, A_2, \dots, A_r\}$,
suppose that we have obtained a non-redundant $\mathcal{A}$-cut family $\mathcal{F}$ by solving \DECOMP.
The next step for solving \TEST\ is to check for the laminarizability of $\mathcal{F}$.

Recall that a pair $X, Y \subseteq [n]$ is said to be {\it crossing}
if $X \cap Y$, $[n] \setminus (X \cup Y)$, $X \setminus Y$, and $Y \setminus X$ are all nonempty.
An $\mathcal{A}$-cut family $\mathcal{G}$ is said to be {\it cross-free}
if there is no crossing pair in $\mathcal{G}$.
From a cross-free $\mathcal{A}$-cut family $\mathcal{G}$,
we can easily construct a laminar $\mathcal{A}$-cut family $\mathcal{A}$-equivalent to $\mathcal{G}$
by switching $X \mapsto [n] \setminus X$ for appropriate $X \in \mathcal{G}$ (see e.g.,~\cite[Section~2.2]{book/KorteVygen10});
this can be done in $O(|\mathcal{G}|)$ time.
Furthermore, if $\mathcal{F}$ is laminarizable,
then we can always construct a cross-free family $\mathcal{A}$-equivalent to $\mathcal{F}$
without using transformation $X \mapsto [n] \setminus X$.
Thus our goal is to construct a cross-free family $\mathcal{A}$-equivalent to the input family $\mathcal{F}$
by repeating appropriate transformations for $X \in \mathcal{F}$ as
$X \mapsto X \cup A_p$ or $X \mapsto X \setminus A_p$ with some $A_p$ satisfying $\ave{X} \cap A_p = \emptyset$.
Recall that $\ave{X}$ denote the cutting support of $X$ defined in~\eqref{eq:ave}.

In this section,
we devise a polynomial-time algorithm for constructing a desired cross-free family.
Our algorithm makes use of weaker notions of cross-freeness,
called 2- and 3-local cross-freeness.
The existence of a cross-free family is characterized by the existence of a {\it 2-locally cross-free family} (\cref{subsec:LCF}).
The existence of a 2-locally cross-free family can be checked easily by solving a 2-SAT problem.
If a 2-locally cross-free family exists,
then a {\it 3-locally cross-free family} also exists,
and can be constructed in polynomial time (\cref{subsec:Const 3LCF}).
From a 3-locally cross-free family,
we can construct a desired cross-free family in polynomial time via the {\it uncrossing operation} (\cref{subsec:fLCF}).
Thus we solve \LAM.

\subsection{Preliminaries}\label{subsec:outline}
We use the following notations and terminologies.
For $X \in \mathcal{F}$, let $\overline{X} := [n] \setminus X$;
note $X \sim \overline{X}$ by~\eqref{eq:sim}.
For $\mathcal{A}$-cuts $X, Y, Z$, we define $\ave{XY} := \ave{X} \cap \ave{Y}$ and $\ave{XYZ} := \ave{X} \cap \ave{Y} \cap \ave{Z}$.
For $X \in \mathcal{F}$ and $Q \subseteq [r]$ with $A_Q \subseteq \ave{X}$,
the {\it partition line of $X$ on $A_Q$} is a bipartition $\{ X \cap A_Q, \overline{X} \cap A_Q \}$ of $A_Q$.
For $A \subseteq [n]$,
if $X \cap A \subseteq Y \cap A$ holds,
we say {\it $X \subseteq Y$ on $A$}.

Without loss of generality, we can assume the following:
\begin{itemize}
	\item
	$|\mathcal{F}|$ is at most $2n$.
	\item
	For distinct $X, Y \in \mathcal{F}$ with $\ave{XY} \neq \emptyset$,
	one of
	$X \subseteq Y$, $X \subseteq \overline{Y}$, $X \supseteq Y$, and $X \supseteq \overline{Y}$ holds on $\ave{XY}$.
	\item
	For all distinct $X, Y \in \mathcal{F}$,
	both $\ave{X} \setminus \ave{Y}$ and $\ave{Y} \setminus \ave{X}$ are nonempty.
\end{itemize}
If the first or the second condition fails,
then $\mathcal{F}$ is not laminarizable.
The third condition is satisfied by the following preprocessing.
For each $X \in \mathcal{F}$,
we add a new set $A_X$ with $|A_X| = 2$ to the ground set $[n]$ and to the partition $\mathcal{A}$ of $[n]$;
the ground set will be $[n] \cup \bigcup_{X \in \mathcal{F}} A_X$ and the partition will be $\mathcal{A} \cup \{ A_X \mid X \in \mathcal{F}\}$.
Define $X_+ := X \cup \{x\}$, where $x$ is one of the two elements of $A_X$ and
$\mathcal{F}_+ := \{ X_+ \mid X \in \mathcal{F} \}$.
Note $\ave{X_+} = \ave{X} \cup A_X$ and $\ave{X_+} \setminus \ave{Y_+} \neq \emptyset$ for all $X_+, Y_+ \in \mathcal{F}_+$.
Then it is easily seen that there exists a cross-free family $\mathcal{L}$ with $\mathcal{L} \sim \mathcal{F}$ if and only if there exists a cross-free family $\mathcal{L}_+$ with $\mathcal{L}_+ \sim \mathcal{F}_+$.
Furthermore we can construct the cross-free family $\mathcal{L}$ from $\mathcal{L}_+$ by restricting $\mathcal{L}_+$ to $[n]$,
that is, $\mathcal{L} = \{ L \cap [n] \mid L \in \mathcal{L}_+ \}$.

\subsection{2-local cross-freeness}\label{subsec:LCF}
For $A \subseteq [n]$,
a pair $X, Y \subseteq [n]$ is said to be {\it crossing on $A$}
if $(X \cap Y) \cap A$, $A \setminus (X \cup Y)$, $(X \setminus Y) \cap A$, and $(Y \setminus X) \cap A$ are all nonempty.
An $\mathcal{A}$-cut family $\mathcal{G}$ is said to be {\it cross-free on $A$}
if there is no crossing pair on $A$ in $\mathcal{G}$.
An $\mathcal{A}$-cut family $\mathcal{G}$ is called {\it 2-locally cross-free}
if no $X, Y \in \mathcal{G}$ are crossing on $\ave{X} \cup \ave{Y}$.
A cross-free family is 2-locally cross-free.
We denote the ordered pair $(X, Y)$ by $XY$.

Our goal of this subsection is to construct a 2-locally cross-free family $\mathcal{F}^*$
that is
$\mathcal{A}$-equivalent to the input $\mathcal{F}$ (if it exists).
Such $\mathcal{F}^*$ consists of $X^*$
that is obtained from each $X \in \mathcal{F}$
by adding or deleting some $A_p$ not intersecting with the cutting support $\ave{X}$ of $X$,
i.e.,
$X^* = \left( X \setminus \bigcup_{p \in I} A_{p}   \right) \cup  \left(\bigcup_{p \in J} A_{p} \right)$
for some $I, J \subseteq [r]$,
where $A_p \cap \ave{X} = \emptyset$ for all $p \in I \cup J$.
By the 2-local cross-freeness,
for each ordered pair $XY$ of members $X, Y$ in $\mathcal{F}$,
either one of the following holds:
\begin{itemize}
	\setlength{\leftskip}{5mm}
	\item[($XY$:0)]
	$X^*$ contains no $A_p$ contained in $\ave{Y} \setminus \ave{X}$,
	i.e., $X^* \cap (\ave{Y} \setminus \ave{X}) = \emptyset$.
	\item[($XY$:1)]
	$X^*$ contains every $A_p$ contained in $\ave{Y} \setminus \ave{X}$,
	i.e., $X^* \supseteq \ave{Y} \setminus \ave{X}$.
\end{itemize}
It turns out that a desired 2-locally cross-free family is obtained by specifying ($XY$:0) or ($XY$:1),
called the {\it label} of $XY$,
for all ordered pairs $XY$.
We observe that the labels satisfy the following properties:
\begin{itemize}
	\item
	Suppose that $\ave{XY} \neq \emptyset$ and the partition lines of $X,Y$ on $\ave{XY}$ are different.
	Then the labels of $XY$ and $YX$ are determined uniquely by their mutual configuration.
	For example,
	if $X \subsetneq Y$ on $\ave{XY}$,
	then we have $X^* \subseteq Y^*$ on $\ave{X} \cup \ave{Y}$,
	namely, ($XY$:0) and ($YX$:1) hold.
	Also if $X \subsetneq \overline{Y}$ on $\ave{XY}$,
	then we have $X^* \cap Y^* = \emptyset$ on $\ave{X} \cup \ave{Y}$;
	($XY$:0) and ($YX$:0) hold.
	Similarly for the remaining cases,
	$X \supsetneq Y$ or $X \supsetneq \overline{Y}$ on $\ave{XY}$.
	\item
	Suppose that $\ave{XY} \neq \emptyset$ and the partition lines of $X,Y$ on $\ave{XY}$ are the same.
	In this case,
	the labels of $XY$ and $YX$ are not uniquely determined.
	If the label of $YX$ is given,
	then the label of $XY$ is determined according to the mutual configuration of $X$ and $Y$ on $\ave{XY}$.
	For example,
	suppose that we have $X = Y$ on $\ave{XY}$.
	Then ($YX$:1) implies ($XY$:0) and vice versa.
	\item
	Suppose that $X,Y, Z \in \mathcal{F}$ satisfy $\ave{YZ} \setminus \ave{X} \neq \emptyset$.
	Then the labels of $XY$ and $XZ$ must be the same.
	Indeed, if ($XY$:1) holds,
	i.e., $X^* \supseteq \ave{Y} \setminus \ave{X}$,
	then $X^* \cap (\ave{Z} \setminus \ave{X})$ is nonempty on $\ave{X} \cup \ave{Z}$.
	This implies that ($XZ$:1) holds.
\end{itemize}

An {\it LC-labeling} $s$ for $\mathcal{F}$ is a function on the set of ordered pairs of distinct members in $\mathcal{F}$ satisfying the above properties, i.e.,
\begin{align}
(s(XY), s(YX)) &=
\begin{cases}
(0, 0) & \textrm{if $X \subsetneq \overline{Y}$ on $\ave{XY}$},\\
(0, 1) & \textrm{if $X \subsetneq Y$ on $\ave{XY}$},\\
(1, 0) & \textrm{if $X \supsetneq Y$ on $\ave{XY}$},\\
(1, 1) & \textrm{if $X \supsetneq \overline{Y}$ on $\ave{XY}$},
\end{cases}\label{eq:fixed}\\
s(XY) &=
\begin{cases}
s(YX) & \textrm{if $X \subseteq \overline{Y}$ or $X \supseteq \overline{Y}$ on $\ave{XY}$},\\
1 - s(YX) & \textrm{if $X \subseteq Y$ or $X \supseteq Y$ on $\ave{XY}$},
\end{cases}\label{eq:swapped}\\
s(XY) &= s(XZ) \quad \textrm{if $\ave{YZ} \setminus \ave{X} \neq \emptyset$},\label{eq:prefixed}
\end{align}
where~\eqref{eq:fixed} and~\eqref{eq:swapped} apply only when $\ave{XY} \neq \emptyset$.
Here LC stands for Local Cross-freeness.

From the definition,
it is obvious that
any 2-locally cross-free family $\mathcal{F}^*$ that is $\mathcal{A}$-equivalent to $\mathcal{F}$ (without taking complements) gives rise to an LC-labeling $s$ for $\mathcal{F}$.
Indeed,
define $s(XY) := 0$ if $X^*$ is in case ($XY$:0)
and $s(XY) := 1$ if $X^*$ is in case ($XY$:1).
The converse is also possible.
Let $s$ be an LC-labeling for $\mathcal{F}$.
Consider the following procedure for each $X \in \mathcal{F}$:
For each $A_p \in \mathcal{A}$ with $A_p \subseteq \ave{Y} \setminus \ave{X}$ for some $Y$,
if $s(XY) = 1$, then add $A_p$ to $X$,
and if $s(XY) = 0$, then delete $A_p$ from $X$.
Let $X^s$ denote the resulting set.
Thanks to the condition~\eqref{eq:prefixed},
this procedure is independent of the choice of $Y$ and is well-defined.
Accordingly, define $\mathcal{F}^s$ by
\begin{align}\label{eq:F^s}
\mathcal{F}^s := \{ X^s \mid X \in \mathcal{F} \}.
\end{align}
Then $\mathcal{F}^s$ is indeed 2-locally cross-free.
To see this,
it suffices to consider $X,Y$ with $\ave{XY} \neq \emptyset$.
By~\eqref{eq:fixed} and~\eqref{eq:swapped},
it holds $X^s \subseteq Y^s$, $X^s \supseteq Y^s$, $X^s \cap Y^s = \emptyset$, or $(\ave{X} \cup \ave{Y}) \setminus (X^s \cup Y^s) = \emptyset$ on $\ave{X} \cup \ave{Y}$.
Thus the following holds.
\begin{prop}\label{prop:LCF}
	There exists a 2-locally cross-free family $\mathcal{A}$-equivalent to $\mathcal{F}$
	if and only if there exists an LC-labeling $s$ for $\mathcal{F}$.
	To be specific,
	$\mathcal{F}^s$ is a 2-locally cross-free family $\mathcal{A}$-equivalent to $\mathcal{F}$.
\end{prop}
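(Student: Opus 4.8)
The statement is an equivalence whose two directions are essentially laid out in the discussion preceding it; the plan is to turn that discussion into a proof, supplying the two points that are not completely routine: the well-definedness of the labeling induced by a $2$-locally cross-free family, and the exhaustive verification that the family $\mathcal{F}^s$ built from an LC-labeling is $2$-locally cross-free.

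For the direction ``$2$-locally cross-free family $\Rightarrow$ LC-labeling'', let $\mathcal{G}$ be any $2$-locally cross-free family with $\mathcal{G}\sim\mathcal{F}$, and choose for each $X\in\mathcal{F}$ a representative $X^*\in\mathcal{G}$ with $X^*\sim X$. I would first reduce to the case where no complement is taken: since $\ave{\overline{X}}=\ave{X}$ and complementing one member merely permutes the four crossing regions of each pair on $\ave{X}\cup\ave{Y}$, $2$-local cross-freeness is preserved, so after replacing those $X^*$ lying on the ``$\overline{X}$ side'' by their complements we may assume each $X^*$ is obtained from $X$ by adding or removing whole blocks $A_p$ disjoint from $\ave{X}$, with $\ave{X^*}=\ave{X}$. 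Next I would check that exactly one of ($XY$:0), ($XY$:1) holds for every ordered pair, so that $s(XY)$ is well-defined: on a block $A_p\subseteq\ave{Y}\setminus\ave{X}$ one has $X^*\cap A_p\in\{\emptyset,A_p\}$ while $Y^*\cap A_p=Y\cap A_p$ is a proper nonempty subset, and if $X^*$ took one such block fully and missed another the four regions on $\ave{X}\cup\ave{Y}$ would all be nonempty, contradicting $2$-local cross-freeness. Finally I would verify the three defining relations of an LC-labeling: \eqref{eq:prefixed} holds because a block $A_p\subseteq\ave{YZ}\setminus\ave{X}$ lies in both $\ave{Y}\setminus\ave{X}$ and $\ave{Z}\setminus\ave{X}$, so $s(XY)$ and $s(XZ)$ are read off from the same membership $A_p\subseteq X^*$; and \eqref{eq:fixed}, \eqref{eq:swapped} follow by restricting the non-crossing configuration of $X^*,Y^*$ on $\ave{X}\cup\ave{Y}$ down to $\ave{XY}$, where $X^*=X$ and $Y^*=Y$, matching the four configurations (containment either way, disjoint, covering) with the prescribed label pairs.

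For the converse I would construct $X^s$ exactly as in the text and verify three things. Well-definedness of $X^s$ is immediate from \eqref{eq:prefixed} by the same remark as above. The $\mathcal{A}$-equivalence $X^s\sim X$ holds because $X^s$ differs from $X$ only by blocks $A_p$ with $A_p\cap\ave{X}=\emptyset$, whence $\ell_{X^s}\simeq\ell_X$ by \cref{lem:sim}~(2), so $\mathcal{F}^s\sim\mathcal{F}$. The remaining and principal task is to show that no $X^s,Y^s$ cross on $A:=\ave{X}\cup\ave{Y}$. My plan here is a bookkeeping argument: partition $A$ into $\ave{XY}$, $\ave{X}\setminus\ave{Y}$, and $\ave{Y}\setminus\ave{X}$, and record, part by part, which of the four crossing regions $X^s\cap Y^s$, $A\setminus(X^s\cup Y^s)$, $X^s\setminus Y^s$, $Y^s\setminus X^s$ receives a nonempty contribution.

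The contributions are forced: on $\ave{X}\setminus\ave{Y}$ we have $X^s=X$ cutting every block while $Y^s$ is uniformly all or none according to $s(YX)$, so this part feeds exactly two regions depending on $s(YX)$; symmetrically $\ave{Y}\setminus\ave{X}$ feeds two regions depending on $s(XY)$; and on $\ave{XY}$ both sets equal $X,Y$ and the fed regions are dictated by the relation of $X,Y$ there. When $\ave{XY}\neq\emptyset$, relations \eqref{eq:fixed}/\eqref{eq:swapped} tie $(s(XY),s(YX))$ to that relation precisely so that some single region is fed by no part at all; for example $X\subsetneq Y$ on $\ave{XY}$ forces $(s(XY),s(YX))=(0,1)$, which leaves $X^s\setminus Y^s$ empty and gives $X^s\subseteq Y^s$ on $A$. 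Running through the four strict cases and the two equality cases of \eqref{eq:swapped} exhausts all possibilities with $\ave{XY}\neq\emptyset$, and when $\ave{XY}=\emptyset$ only two parts remain and a direct check of the four combinations of $s(XY),s(YX)$ again leaves one region empty. In each case $X^s,Y^s$ do not cross on $A$, so $\mathcal{F}^s$ is $2$-locally cross-free, which together with the two displayed properties proves the proposition. I expect this last case analysis---keeping the part-by-part region tallies consistent with the global emptiness of one region---to be the main obstacle, though it is mechanical rather than deep.
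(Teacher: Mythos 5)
Your proposal is correct and follows essentially the same route as the paper, which derives the label conditions \eqref{eq:fixed}--\eqref{eq:prefixed} from the non-crossing configurations on $\ave{X}\cup\ave{Y}$ and then checks that $\mathcal{F}^s$ is 2-locally cross-free by the same region-by-region case analysis; you merely make explicit the details the paper leaves implicit (the reduction to representatives taken without complementation, the uniform all-or-nothing behaviour of $X^*$ on the blocks of $\ave{Y}\setminus\ave{X}$, and the exhaustive tally over the three parts of $\ave{X}\cup\ave{Y}$). No gap.
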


In order to find an LC-labeling in a greedy fashion,
we introduce the {\it LC-graph},
which is also utilized for constructing a 3-locally cross-free family $\mathcal{A}$-equivalent to $\mathcal{F}$ in \cref{subsec:Const 3LCF}.
The {\it LC-graph} $G(\mathcal{F}) = ( V(\mathcal{F}), E_\textrm{s} \cup E_\textrm{p} )$ of the input $\mathcal{F}$ is defined by
\begin{align*}
V(\mathcal{F}) &:= \{ XY \mid X, Y \in \mathcal{F}, X \neq Y\},\\
E_\textrm{s} &:= \{ \{ XY, YX \} \mid \ave{XY} \neq \emptyset \},\\
E_\textrm{p} &:= \{ \{XY, XZ\} \mid Y \neq Z,\ \ave{YZ} \setminus \ave{X} \neq \emptyset \}.
\end{align*}
Note that the structure of LC-graph depends only on the family $\{ \ave{X} \mid X \in \mathcal{F} \}$ of cutting supports.
We call an edge $e \in E_\textrm{s}$ a {\it swapped edge},
which corresponds to~\eqref{eq:fixed} and~\eqref{eq:swapped},
and an edge $e \in E_\textrm{p}$ a {\it prefixed edge},
which corresponds to~\eqref{eq:prefixed}.
By the second assumption mentioned in \cref{subsec:outline},
exactly two types of swapped edges $e = \{ XY, YX \}$ can be distinguished;
(i) $X \subseteq Y$ or $X \supseteq Y$ on $\ave{XY}$
and (ii) $X \subseteq \overline{Y}$ or $X \supseteq \overline{Y}$ on $\ave{XY}$.
The former type of swapped edges will be called {\it flipping} (since $s(XY) = 1 - s(YX)$),
and the latter type {\it non-flipping} (since $s(XY) = s(YX)$).
See Figure~\ref{fig:instance LC graph} for an example of LC-graph.

\begin{figure}
	\centering
		\includegraphics[width=10cm]{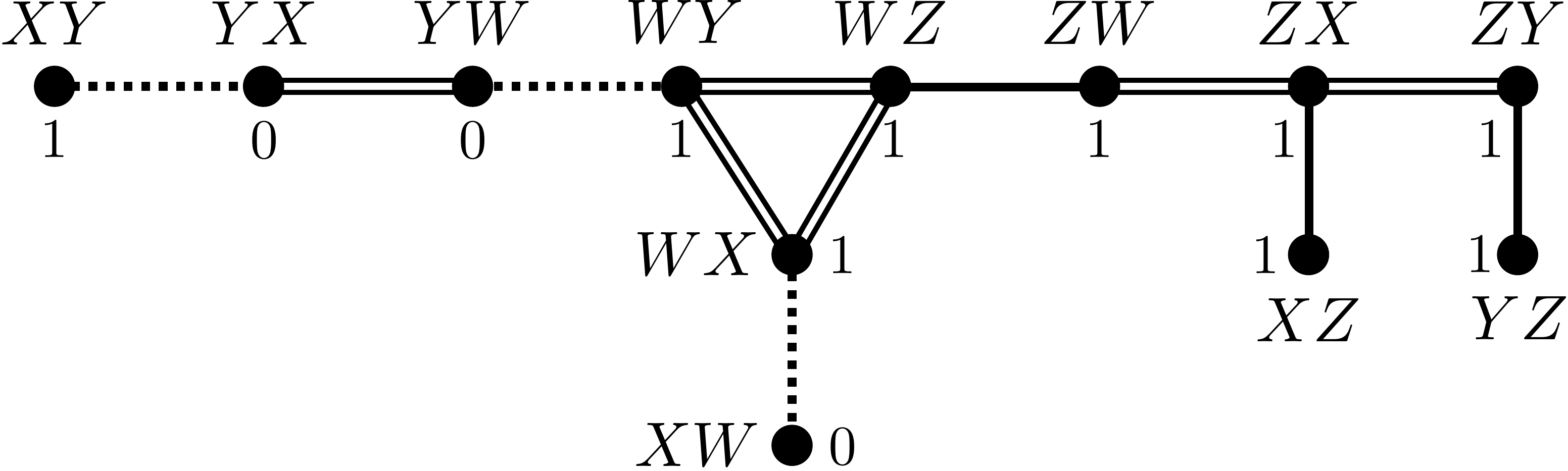}
		\caption{The LC-graph $G(\mathcal{F})$ for $\mathcal{F} = \{ X,Y,Z,W \}$ in \cref{ex:instance 2-local},
			where the edges denoted by double lines are prefixed edges,
			and the others are swapped edges.
			Flipping and non-flipping swapped edges are denoted by dotted and solid line, respectively.
			The numbers $0/1$ at the nodes denote the LC-labeling $s$ in case of setting $s(XY) = 1$.}
		\label{fig:instance LC graph}
\end{figure}

An LC-labeling is nothing but a feasible solution for the 2-SAT problem defined by the constraints~\eqref{eq:fixed}--\eqref{eq:prefixed}.
Therefore we can check the existence of an LC-labeling $s$ greedily in $O(|E_\textrm{s} \cup E_\textrm{p}|) = O(n^4)$ time.
Node $XY \in V(\mathcal{F})$ is said to be {\it fixed} if the value of an LC-labeling $s$ for $XY$ is determined as~\eqref{eq:fixed},
that is, if $\ave{XY} \neq \emptyset$ and the partition lines of $X$ and $Y$ on $\ave{XY}$ are different,
and $XY$ is said to be {\it defined} if the value of $s(XY)$ has been defined.
The algorithm is as follows.
\begin{enumerate}
	\item For each fixed node $XY$,
	define $s(XY)$ according to~\eqref{eq:fixed}.
	\item 
	In each connected component of $G(\mathcal{F})$, execute a 
	breadth-first search from a defined node $XY$,
	and define $s(ZW)$ for all reached nodes $ZW$ according to~\eqref{eq:swapped} and~\eqref{eq:prefixed}.
	If a conflict in value assignment to $s(ZW)$ is detected during this process,
	output ``there is no LC-labeling.''
	\item If there is an undefined node,
	choose any undefined node $XY$,
	and define $s(XY)$ as $0$ or $1$ arbitrarily.
	Then go to 2.
\end{enumerate}

\begin{exmp}\label{ex:instance 2-local}
	We consider the family $\mathcal{F}$ obtained in \cref{ex:instance decompotion}.
	After applying the preprocessing to $\mathcal{F}$,
	it holds
	$\mathcal{F} = \{ X, Y, Z, W \}$,
	where $X := 1357a$, $Y := 135b$, $Z := 24c$, and $W := 37d$
	with the partition $\mathcal{A} = \{ 12, 34, 56, 78, aa', bb', cc', dd' \}$ of the ground set $N := 12345678aa'bb'cc'dd'$.
	The LC-graph $G(\mathcal{F})$ is illustrated in Figure~\ref{fig:instance LC graph}.
	
	We obtain an LC-labeling $s : V(\mathcal{F}) \rightarrow \{0,1\}$ by defining, for example, $s(XY) := 1$.
	According to~\eqref{eq:fixed}--\eqref{eq:prefixed},
	the all labels are determined as
	$s(X'Y') = 0$ for $X'Y' \in \{ YX, YW, XW \}$
	and $s(X'Y') = 1$ otherwise.
	Then $X^s = 1357abb'cc'$, $Y^s = 135bcc'$,
	$Z^s = 245678aa'bb'cdd'$,
	$W^s = 123567aa'bb'cc'd$,
	and
	$\mathcal{F}^s$ is a cross-free family with $\mathcal{F}^s \sim \mathcal{F}$.
	Thus $\mathcal{F}' := \{ X^s, Y^s, N \setminus Z^s, N \setminus W^s \}$ is a laminar family with $\mathcal{F}' \sim \mathcal{F}$.
	
	Recall that the original $\mathcal{F}$ is a family of subsets of $12345678$.
	Let $\mathcal{L}$ be the family of $\mathcal{F}'$ restricted to $12345678$,
	i.e., $\mathcal{L} = \{ 1357, 135, 13, 48 \}$,
	which is the same one as the family inducing M-convex summand $f_1$ defined in~\eqref{eq:f_1 instance};
	see also Figure~\ref{fig:Laminarization}.~\qqed
\end{exmp}

\subsection{3-local cross-freeness}\label{subsec:fLCF}
An $\mathcal{A}$-cut family $\mathcal{G}$ is called {\it 3-locally cross-free} if $\mathcal{G}$ is 2-locally cross-free
and $\{X, Y, Z\}$ is cross-free on the union of the cutting supports $\ave{X} \cup \ave{Y} \cup \ave{Z}$
for all $X, Y, Z \in \mathcal{G}$ that have a nonempty intersection of the cutting supports,
i.e., $\ave{XYZ} \neq \emptyset$.
A cross-free family is 3-locally cross-free,
and a 3-locally cross-free family is 2-locally cross-free,
whereas the converse is not true (see \cref{remark:not3but2}).
We write $X \subseteq^* Y$ to mean $X \subseteq Y$ on $\ave{X} \cup \ave{Y}$.

Our objective of this subsection is to give an algorithm for
constructing a desired cross-free family
from a 3-locally cross-free family $\mathcal{A}$-equivalent to the input $\mathcal{F}$.
The algorithm consists of repeated applications of an elementary operation
that preserves 3-local cross-freeness.
The operation is defined by \eqref{eq:G'} below, and is referred to as
the {\it uncrossing operation} to $X, Y$.
By the 2-local cross-freeness of $\mathcal{G}$,
the two cases in~\eqref{eq:G'} exhaust all possibilities for $X, Y \in \mathcal{G}$.
\begin{prop}\label{prop:fLCF close}
	Suppose that $\mathcal{G}$ is 3-locally cross-free.
	For $X, Y \in \mathcal{G}$,
	define
	\begin{align}\label{eq:G'}
	\mathcal{G}' :=
	\begin{cases}
	\mathcal{G} \setminus \{X, Y\} \cup \{ X \cap Y, X \cup Y \} & {\rm if\ } X \subseteq^* Y {\rm\ or\ } Y \subseteq^* X,\\
	\mathcal{G} \setminus \{X, Y\} \cup \{ X \setminus Y, Y \setminus X \} & {\rm if\ } X \subseteq^* \overline{T} {\rm\ or\ } \overline{Y} \subseteq^* X.
	\end{cases}
	\end{align}
	Then $\mathcal{G}'$ is a 3-locally cross-free family $\mathcal{A}$-equivalent to $\mathcal{G}$.
\end{prop}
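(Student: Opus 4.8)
The plan is to reduce the disjoint case of \eqref{eq:G'} to the nested one and then do the real work there. In the case $X\subseteq^*\overline{Y}$ or $\overline{Y}\subseteq^* X$, the operation replaces $\{X,Y\}$ by $\{X\setminus Y,\,Y\setminus X\}=\{X\cap\overline Y,\,\overline{X\cup\overline Y}\}$. Since $\overline Y\sim Y$ and $\ave{\overline Y}=\ave Y$, this is precisely the nested-case operation applied to the pair $X,\overline Y$ (producing $X\cap\overline Y$ and $X\cup\overline Y$), followed by complementing $X\cup\overline Y$. As complementation preserves cutting supports, $\mathcal A$-equivalence classes, and the crossing relation, it suffices to treat the nested case $X\subseteq^* Y$, and I write $P:=X\cap Y$, $R:=X\cup Y$.

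First I would record the local structure forced by $X\subseteq^* Y$. Splitting the ground set into $\ave{XY}$, $\ave X\setminus\ave Y$, $\ave Y\setminus\ave X$, and the complement of $\ave X\cup\ave Y$, and using that a class $A_p$ not cut by a set lies wholly inside or outside it, one checks that every class cut by $X$ in $\ave X\setminus\ave Y$ is contained in $Y$, and every class cut by $Y$ in $\ave Y\setminus\ave X$ is disjoint from $X$. Consequently $\ave P=\ave X$ with $P=X$ on $\ave X$, and $\ave R=\ave Y$ with $R=Y$ on $\ave Y$; hence $P\sim X$ and $R\sim Y$ by \cref{lem:sim}~(2). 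Because all other members of $\mathcal G$ are untouched, the set of $\mathcal A$-equivalence classes is unchanged, so $\mathcal G'\sim\mathcal G$. The same bookkeeping shows that $P$ and $X$ differ only on classes $A_p\subseteq X\setminus Y$ lying outside $\ave X\cup\ave Y$ (there $P$ drops $A_p$), and $R$ and $Y$ differ only on the same classes (there $R$ adds $A_p$); moreover $P\subseteq R$ as sets, so the new pair $(P,R)$ never crosses.

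It then remains to prove that $\mathcal G'$ is 3-locally cross-free. Triples and pairs among the untouched members are inherited from $\mathcal G$, and $(P,R)$ is handled by $P\subseteq R$, so the task is to verify non-crossing of $(P,Z)$ and $(R,Z)$ on $\ave X\cup\ave Z$ (for 2-local) and on $\ave X\cup\ave Z\cup\ave W$ (for the triples $\{P,Z,W\}$, $\{R,Z,W\}$, $\{P,R,Z\}$), for old members $Z,W$. Since $\ave{PZW}=\ave{XZW}$, whenever such a triple has nonempty cutting-support intersection so does $\{X,Z,W\}$, and then the pair $(Z,W)$ is non-crossing on $\ave X\cup\ave Z\cup\ave W$ by the 3-local cross-freeness of $\mathcal G$. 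For $(P,Z)$ I would argue by contradiction: a hypothetical crossing of $P=X\cap Y$ with $Z$ on the union $\ave X\cup\ave Z\cup\ave W$ exhibits, in its four Boolean regions, witnesses whose memberships in $X$ and in $Y$ can be read off from the class-by-class description above; tracing these back produces a crossing of one of the original pairs among $X,Y,Z$ (or, using the passive ground provided by $\ave W$, among $X,Y,W$) on the corresponding triple-union, contradicting the 3-local cross-freeness of $\mathcal G$. The case of $(R,Z)$ is symmetric via $R=X\cup Y$.

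The step I expect to be the main obstacle is exactly this translation of a crossing of $(P,Z)$ into a forbidden crossing of an original pair. Local cross-freeness is measured relative to cutting supports, so on the region $(\ave Z\cup\ave W)\setminus(\ave X\cup\ave Y)$ the set $P$ may genuinely differ from $X$, and there the hypothesis $X\subseteq^* Y$ gives no direct control; in fact the naive ``uncrossing preserves non-crossing'' principle \emph{fails} on such enlarged ground sets, because two sets that are nested on their own cutting supports may be in a covering relation on a larger set, and then an intersection can cross a third set. What rescues the argument is that the modifications happen only on whole partition classes $A_p\subseteq X\setminus Y$ and that $\mathcal G$ is triple-wise (not merely pairwise) cross-free: the former keeps each witness consistently on one side of $X$ and of $Y$, and the latter supplies the non-crossing relations on the full union $\ave X\cup\ave Z\cup\ave W$ needed to rule out every would-be crossing. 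Verifying that these relations combine without leaving a residual crossing is the crux, and it is precisely where 3-local cross-freeness of $\mathcal G$ is indispensable.
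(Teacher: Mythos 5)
Your reduction of the disjoint case to the nested case via complementation is fine, and your verification that $P:=X\cap Y\sim X$, $R:=X\cup Y\sim Y$ (hence $\mathcal{G}'\sim\mathcal{G}$) and that $P,R$ differ from $X,Y$ only on whole classes $A_p\subseteq X\setminus Y$ outside $\ave{X}\cup\ave{Y}$ matches the paper's argument. The problem is that the proof stops exactly where the work begins. The entire content of the proposition is the claim that the pairs $(P,Z)$ and $(R,Z)$ do not cross on $\ave{X}\cup\ave{Z}$ (and the corresponding triples on triple unions), and for this you offer only a plan --- ``argue by contradiction, trace the witnesses back, and a crossing of an original pair appears'' --- followed by an explicit admission that ``verifying that these relations combine without leaving a residual crossing is the crux.'' A named but unexecuted crux is a gap, not a proof: as you yourself observe, the naive principle fails (the paper's \cref{remark:not3but2} gives $X=1356$, $Y=1347$, $Z=1578$, where $X\cap Y$ crosses $Z$ even though the original family is 2-locally cross-free), so the conclusion genuinely depends on a case analysis that you have not performed, and nothing in the proposal certifies that every configuration of $\ave{X},\ave{Y},\ave{Z}$ is covered.

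The paper closes this gap by first proving \cref{lem:cross-free on S}, which shows that in a 2-locally cross-free family a triple $\{X,Y,Z\}$ is automatically cross-free on $\ave{X}\cup\ave{Y}\cup\ave{Z}$ except in four explicitly listed configurations of cutting supports: (i) $\ave{XZ}=\ave{YZ}=\emptyset$, (ii)/(iii) exactly one of $\ave{XZ},\ave{YZ}$ nonempty with $\ave{XY}=\emptyset$, and (iv) $\ave{XY}=\ave{YZ}=\ave{ZX}\neq\emptyset$. In the generic (cross-free) case the partition lines of $P,R$ on the triple union coincide with those of $X,Y$, so nothing can go wrong; cases (i)--(iii) are dispatched by short direct arguments; and case (iv) is the unique place where the 3-local cross-freeness of $\mathcal{G}$ is invoked. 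Without this lemma (or an equivalent exhaustive classification) your contradiction argument has no guarantee of terminating in a forbidden configuration, so you would need to supply it, or an argument of comparable precision, to complete the proof.
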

The proof of \cref{prop:fLCF close} is given at the end of this subsection.

\begin{description}
	\item[Algorithm~4 (for constructing a cross-free family):]
	\item[Input:] A 3-locally cross-free family $\mathcal{G}$.
	\item[Step 1:]
	While there is a crossing pair $X, Y$ in $\mathcal{G}$,
	apply the uncrossing operation to $X, Y$ and modify $\mathcal{G}$ accordingly.
	\item[Step 2:] Output $\mathcal{G}$.~\qqed
\end{description}

\begin{prop}\label{prop:algo 4}
	Algorithm~{\rm 4} runs in $O(n^2)$ time, and the output $\mathcal{G}$ is cross-free.
\end{prop}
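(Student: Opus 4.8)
The plan is to establish the two assertions of the proposition—that the output is cross-free and that the procedure runs in $O(n^2)$ time—by separating correctness and termination (which are easy, given the earlier results) from the running-time bound (which is the real work).

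For correctness I would argue by a loop invariant. \cref{prop:fLCF close} says that a single uncrossing operation applied to a crossing pair $X,Y$ of a $3$-locally cross-free family yields a $3$-locally cross-free family that is $\mathcal{A}$-equivalent to the original. Hence, by induction on the number of iterations of Step~1, the family $\mathcal{G}$ stays $3$-locally cross-free and $\mathcal{A}$-equivalent to the input throughout. Step~1 exits precisely when $\mathcal{G}$ has no crossing pair, which is the definition of cross-freeness, so the output is a cross-free family $\mathcal{A}$-equivalent to the input. Thus correctness reduces to showing that Step~1 halts.

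For termination I would introduce the potential
\[
\Phi(\mathcal{G}) := \sum_{X \in \mathcal{G}} |X|\cdot|\overline{X}|,
\]
a nonnegative integer, and show that every uncrossing strictly decreases it. Writing $a := |X\setminus Y|$, $b := |X\cap Y|$, $c := |Y\setminus X|$, and $d := |\overline{X}\cap\overline{Y}|$, a crossing pair has $a,b,c,d\ge 1$. A short expansion shows that the case $X\subseteq^* Y$ or $Y\subseteq^* X$ (replacing $X,Y$ by $X\cap Y$ and $X\cup Y$) decreases $\Phi$ by exactly $2ac$, while the case $X\subseteq^*\overline{Y}$ or $\overline{Y}\subseteq^* X$ (replacing $X,Y$ by $X\setminus Y$ and $Y\setminus X$) decreases $\Phi$ by exactly $2bd$; both are at least $2$. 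Since $\Phi\ge 0$, Step~1 performs finitely many iterations, which together with the invariant gives the cross-freeness claim.

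The main obstacle is the $O(n^2)$ time bound, for which the crude count above is too weak (it only bounds the number of iterations by $\Phi(\mathcal{G})/2$, which can be of order $n^3$). Here I would argue two points. First, a single iteration is implementable in $O(n)$ time: if each member of $\mathcal{G}$ is stored by its partition lines on its cutting support together with one bit per block $A_p$ disjoint from that support, then detecting a crossing pair and forming the two replacement sets both reduce to scanning $O(n)$ incidences. Second—the crux—I would aim to show that the number of uncrossing operations is $O(n)$ rather than $O(n^2)$. The structural fact to exploit is that uncrossing never alters the cutting supports $\ave{X}$, so the only data being resolved is the assignment of the blocks lying outside the cutting supports; I would set up a charging (amortized) argument, or find a sharper monovariant than $\Phi$, certifying that each operation makes permanent progress toward the essentially unique target cross-free family guaranteed by the $2$-SAT phase, so that no block is reassigned more than a bounded number of times. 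Combining the $O(n)$ per-iteration cost with an $O(n)$ bound on the number of iterations would yield the claimed $O(n^2)$ running time; pinning down this last bound is the part I expect to require the most care.
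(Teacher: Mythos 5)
Your correctness argument (loop invariant via \cref{prop:fLCF close}) and your termination potential $\Phi(\mathcal{G})=\sum_X |X|\cdot|\overline{X}|$ are both fine as far as they go, but the proposal has a genuine gap exactly where you flag it: the bound on the number of uncrossing operations is never established. You correctly observe that $\Phi$ only yields an $O(n^3)$ iteration count, and you then propose to prove an $O(n)$ bound via a charging argument that is left entirely as a plan (``pinning down this last bound is the part I expect to require the most care''). Without that piece the claimed $O(n^2)$ running time is not proved.

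The missing ingredient is the standard uncrossing monovariant, which the paper uses and which sits at precisely the right granularity: the \emph{number of crossing pairs} in $\mathcal{G}$. Since $|\mathcal{G}|=O(n)$, this count is $O(n^2)$ at the start. Each uncrossing operation applied to a crossing pair $\{X,Y\}$ strictly decreases it, because (a) the replacement pair $\{X\cap Y, X\cup Y\}$ (resp.\ $\{X\setminus Y, Y\setminus X\}$) is nested (resp.\ disjoint) and hence not crossing, while $\{X,Y\}$ was crossing, and (b) for every third member $Z$, the number of crossing pairs among $\{X\cap Y,Z\}$ and $\{X\cup Y,Z\}$ is at most the number among $\{X,Z\}$ and $\{Y,Z\}$ --- an elementary check. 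This gives $O(n^2)$ iterations directly, with no need for the sharper (and unproven) $O(n)$ bound you were aiming for; note also that your target of $O(n)$ iterations times $O(n)$ per iteration is a strictly harder route than necessary, and there is no indication that an $O(n)$ iteration bound even holds.
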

\begin{proof}
	The number of crossing pairs in input $\mathcal{G}$ is at most $O(n^2)$ (since $|\mathcal{G}| = O(n)$).
	Take any $\{X, Y\} \subseteq \mathcal{G}$ which is crossing.
	Since the replacement $X \mapsto \overline{X}$ or $Y \mapsto \overline{Y}$ does not change the (non-)cross-freeness of $\{X, Y\}$, $\{X, Z\}$, and $\{Y, Z\}$ for $Z \in \mathcal{G}$,
	we can assume $X \subseteq^* Y$ or $Y \subseteq^* X$ by appropriate replacement.
	Let $\mathcal{G}'$ be the family resulting from the uncrossing operation on $X, Y$.
	Then it is easily verified that, for any $Z \in \mathcal{G} \setminus \{X, Y\}$,
	the number of crossing pairs in $\{\{ X \cap Y, Z \}, \{X \cup Y, Z\} \}$ is at most that in $\{ \{X, Z\}, \{Y,Z\} \}$.	
	Since $\{X \cap Y, X \cup Y\}$ is not crossing, the number of crossing pairs decreases at least by one.
	Furthermore, by \cref{prop:fLCF close},
	$\mathcal{G}'$ is also a 3-locally cross-free family $\mathcal{A}$-equivalent to $\mathcal{F}$.
	Eventually, we arrive at
	a cross-free family $\mathcal{A}$-equivalent to $\mathcal{F}$.
	The above process involves at most $O(n^2)$ uncrossing operations.
\end{proof}

\begin{remark}\label{remark:not3but2}
	It is worth mentioning that the uncrossing operation does not preserve
	2-local cross-freeness.
	For example,
	we define $X := 1356$, $Y := 1347$, and $Z := 1578$
	with a partition $\{12, 34, 56, 78\}$.
	Note that $\{X, Y, Z\}$ is not 3-locally cross-free but 2-locally cross-free.
	
	We consider to execute the uncrossing operation to $X, Y$.
	Then the resulting family is $\{ X \cap Y, X \cup Y, Z \}$.
	Since $X \cap Y = 13$ and $Z = 1578$,
	$\{X \cap Y, Z\}$ is crossing on $\ave{X \cap Y} \cup \ave{Z} = 123456$.~\qqed
\end{remark}

The rest of this subsection is devoted to the proof of \cref{prop:fLCF close}.
We first note the following facts,
which are also used in the proof of \cref{prop:algo 5} in \cref{subsec:Const 3LCF}.
\begin{lem}\label{lem:cross-free on S}
	Let $\mathcal{G}$ be a 2-locally cross-free family.
	A triple $\{X, Y, Z\} \subseteq \mathcal{G}$ is cross-free on $\ave{X} \cup \ave{Y} \cup \ave{Z}$
	if one of the following conditions holds:
	\begin{description}
		\item[{\rm (1)}] $\ave{XY} \neq \emptyset$, and $\{X, Y\}$ is cross-free on $\ave{X} \cup \ave{Y} \cup \ave{Z}$.
		\item[{\rm (2)}] $\ave{XY} \not\subseteq \ave{Z}$, and $\ave{XZ}$ or $\ave{YZ}$ is nonempty.
		\item[{\rm (3)}] The partition lines of $X, Y, Z$ on $\ave{XYZ}$ are not the same.
		\item[{\rm (4)}] $\ave{XY} = \ave{ZY} \neq \emptyset$,
		and there is a path $(XY, XY_1, \dots, XY_k)$ in $G(\mathcal{G})$
		such that $\{X, Y_k, Z\}$ is cross-free on $\ave{X} \cup \ave{Y_k} \cup \ave{Z}$.
	\end{description}
\end{lem}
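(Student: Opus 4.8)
The plan is to reduce the triple statement to three pairwise statements and then dispatch the four conditions one at a time. Since $\{X,Y,Z\}$ being cross-free on $S:=\ave{X}\cup\ave{Y}\cup\ave{Z}$ means precisely that none of the pairs $\{X,Y\}$, $\{X,Z\}$, $\{Y,Z\}$ is crossing on $S$, and since $\mathcal{G}$ is $2$-locally cross-free, every pair is already non-crossing on its own support union. So for a pair, say $\{X,Y\}$, the only danger is that enlarging the ground region to $S$ revives a crossing. The structural fact I would record first is that the enlargement is $R_{XY}:=S\setminus(\ave{X}\cup\ave{Y})=\ave{Z}\setminus(\ave{X}\cup\ave{Y})$, and every $A_p\subseteq R_{XY}$ is cut by neither $X$ nor $Y$, hence lies entirely inside a single one of the four quadrants $X\cap Y$, $X\setminus Y$, $Y\setminus X$, $[n]\setminus(X\cup Y)$. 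Since passing from $\ave{X}\cup\ave{Y}$ to $S$ can only add quadrants, $\{X,Y\}$ is crossing on $S$ if and only if $R_{XY}$ contains an $A_p$ in each quadrant that is empty on $\ave{X}\cup\ave{Y}$. This reduces the entire lemma to bookkeeping over which quadrant each $A_p$ of the third support falls into.

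Next I would set up the normal form for a pair with $\ave{XY}\neq\emptyset$: the second standing assumption of the preliminaries forces exactly one of $X\subseteq Y$, $X\subseteq\overline{Y}$, $X\supseteq Y$, $X\supseteq\overline{Y}$ on $\ave{XY}$, and $2$-local cross-freeness promotes this to the matching $\subseteq^*$-relation on all of $\ave{X}\cup\ave{Y}$; this pins down which quadrant is empty there. With this in hand, conditions (1)-(3) are direct. For (1), where $\{X,Y\}$ is already cross-free on all of $S$, I would use $\ave{XY}\neq\emptyset$ to fix the global nested/disjoint type of $X,Y$ on $S$; then for $\{X,Z\}$ (and symmetrically $\{Y,Z\}$) each $A_p\subseteq R_{XZ}\subseteq\ave{Y}$ lies in a quadrant of $\{X,Y\}$ dictated by that fixed type, which via the pairwise relation $\{Y,Z\}$ forces its $\{X,Z\}$-quadrant away from the forbidden one. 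For (2), the hypothesis $\ave{XY}\not\subseteq\ave{Z}$ yields an $A_p$ cut by both $X$ and $Y$ but constant under $Z$; combined with $\ave{XZ}$ or $\ave{YZ}$ nonempty this anchors the position of $Z$ and rules out the third support completing a crossing. For (3), distinct partition lines on $\ave{XYZ}$ exhibit a common cut on which the three pairwise $\subseteq^*$-relations point in incompatible directions, so no quadrant is simultaneously empty and fillable.

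The main obstacle is condition (4). Here the path $(XY,XY_1,\dots,XY_k)$ lives entirely among nodes with first coordinate $X$, so it uses only prefixed edges of the LC-graph, each imposing $\ave{Y_iY_{i+1}}\setminus\ave{X}\neq\emptyset$. My plan is to transport cross-freeness of $\{X,Y_k,Z\}$ back to $\{X,Y,Z\}$ along this path by induction, using $\ave{XY}=\ave{ZY}$ to keep the $Y$-side and $Z$-side aligned. Concretely, I would show that each prefixed edge preserves, for the fixed set $X$, the assignment of the relevant $A_p$'s to quadrants relative to $Z$, so that the empty-quadrant pattern controlling cross-freeness is invariant from step to step. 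The delicate part, and where I expect the real work to lie, is verifying this invariance across a single prefixed edge: it requires combining the prefixed-edge condition with $2$-local cross-freeness of the intermediate pairs $\{Y_i,Y_{i+1}\}$ and $\{Y_i,Z\}$ to conclude that the quadrant of $X$ (and of $Z$) on $\ave{Y_{i+1}}\setminus\ave{X}$ matches that on $\ave{Y_i}\setminus\ave{X}$. Chaining these consistencies is the crux of the argument, whereas (1)-(3) are comparatively mechanical once the quadrant reduction of the first paragraph is in place.
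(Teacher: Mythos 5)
Your reduction of cross-freeness on $S=\ave{X}\cup\ave{Y}\cup\ave{Z}$ to a quadrant-filling condition on $R_{XY}=\ave{Z}\setminus(\ave{X}\cup\ave{Y})$ is sound, and for conditions (1)--(3) the ingredients you name (the $\subseteq^*$ normal form coming from $2$-local cross-freeness, plus the pairwise relations with the third set) are exactly the ones the paper's proof uses; with the case analysis actually carried out (in particular the subcase $\emptyset\neq\ave{XY}\subseteq\ave{Z}$ of (1), which needs its own argument) these parts would go through.

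The genuine gap is in condition (4). Your plan is to induct along the path and show that each prefixed edge preserves the quadrant assignment of $Z$ on $\ave{Y_{i+1}}\setminus\ave{X}$. But the only data a prefixed edge $\{XY_i,XY_{i+1}\}$ provides is some $A_p\subseteq\ave{Y_iY_{i+1}}\setminus\ave{X}$, and nothing prevents this $A_p$ from being cut by $Z$; in that case it gives no handle on $Z$'s position relative to $Y_{i+1}$, so the invariance you want to chain is not verifiable edge by edge as stated (and the hypothesis $\ave{XY}=\ave{ZY}$ constrains only the first node, not the intermediate $Y_i$). The missing idea is a preliminary case split on the relation between $X$ and $Z$: if $X\subseteq^* Z$, then $X$ misses $(\ave{Y}\cup\ave{Z})\setminus\ave{X}$, so $\{X,Y\}$ is cross-free on $S$ and (1) applies; in the hard case $Z\subseteq^* X\subseteq^* Y$ one first proves that $\ave{Z}$ is disjoint from $\ave{Y_i}\setminus\ave{X}$ for every $i$ (otherwise the path could be extended by a prefixed edge to $XZ$, forcing $X\subseteq^* Z$ and contradicting $Z\subseteq^* X$). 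That disjointness is what guarantees $\ave{Y_iY_{i+1}}\setminus\ave{Z}\neq\emptyset$, so that $ZY,ZY_1,\dots,ZY_k$ is itself a path in $G(\mathcal{G})$, along which the relation $Z\subseteq^* Y_k$ (extracted from the assumed cross-freeness of $\{X,Y_k,Z\}$) propagates back to $Z\subseteq^* Y$; then $\{Y,Z\}$ is cross-free on $S$ and (1) finishes. Without this step your induction has nothing to stand on.
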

\begin{proof}
	Let $S := \ave{X} \cup \ave{Y} \cup \ave{Z}$.
	Note that $\{X, Y\}$ is 2-locally cross-free if and only if
	so is $\{\overline{X}, Y\}$.
	Hence, by appropriate replacement $X \mapsto \overline{X}$ and/or $Y \mapsto \overline{Y}$,
	we can assume $X \subseteq^* Y$; we often use such replacement in this proof.
	
	(1).
	By symmetry,
	it suffices to show that $\{X, Z\}$ is cross-free on $S$.
	We assume $X \subseteq^* Z$ (the argument for the case of $Z \subseteq^* X$ is similar).
	There are two cases: (i) $\ave{XY} \setminus \ave{Z} \neq \emptyset$ and (ii) $(\emptyset \neq) \ave{XY} \subseteq \ave{Z}$.
	Note that $X \subseteq^* Z$ implies $Z \supseteq \ave{X} \setminus \ave{Z}$ and $X \cap (\ave{Z} \setminus \ave{X}) = \emptyset$.
	
	(i). By the 2-local cross-freeness of $\{Y, Z\}$ and $\ave{XY} \setminus \ave{Z} \neq \emptyset$,
	$Z \supseteq \ave{X} \setminus \ave{Z}$ implies $Z \supseteq \ave{Y} \setminus \ave{Z}$,
	and hence $Z \supseteq (\ave{X} \cup \ave{Y}) \setminus \ave{Z}$ holds.
	Thus $X \subseteq Z$ holds on $S$.
	
	(ii). We can assume $Y \subseteq X$ or $X \subseteq Y$ on $S$.
	Then, by $\emptyset \neq \ave{XY} = \ave{XYZ}$ and the 2-local cross-freeness of $\{Y, Z\}$,
	we have $Y \subseteq^* Z$ or $Z \subseteq^* Y$.
	If $Y \subseteq^* Z$,
	then $Z \supseteq \ave{Y} \setminus \ave{Z}$ holds on $S$.
	Hence $X \subseteq Z$ holds on $S$.
	If $Z \subseteq^* Y$,
	then $X \subseteq Y$ must hold on $S$ by $X \subseteq^* Z$.
	This means $X \subseteq^* Y$, i.e., $X \cap (\ave{Y} \setminus \ave{X}) = \emptyset$.
	Hence $X \subseteq Z$ holds on $S$.

	(2).
	We can assume $X \subseteq^* Y$ and $\ave{XZ} \neq \emptyset$.
	By $\ave{XY} \not\subseteq \ave{Z}$ and $\ave{XZ} \neq \emptyset$,
	there are two cases: (i) $\ave{XZ} \not\subseteq \ave{Y}$ or (ii) $(\emptyset \neq) \ave{XZ} \subsetneq \ave{XY}$.
	
	(i).
	$X \subseteq^* Y$ implies
	$Y \supseteq \ave{X} \setminus \ave{Y}$.
	By $\ave{XZ} \not\subseteq \ave{Y}$,
	we have $Y \cap (\ave{Z} \setminus \ave{Y}) \neq \emptyset$.
	Hence, by the 2-local cross-freeness of $\{Y, Z\}$,
	$Y$ must contain $\ave{Z} \setminus \ave{Y}$.
	Therefore, it holds that $X \subseteq Y$ on $S$;
	then we use (1) (note $\ave{XY} \neq \emptyset$).
	
	(ii).
	We assume $X \subseteq^* Z$ by the 2-local cross-freeness of $\{X, Z\}$
	(the argument for the case of $Z \subseteq^* X$ is similar).
	This implies $Z \supseteq \ave{X} \setminus \ave{Z}$.
	By $\emptyset \neq \ave{XZ} \subsetneq \ave{XY}$,
	we have $Z \cap (\ave{Y} \setminus \ave{Z}) \neq \emptyset$.
	Hence, by the 2-local cross-freeness of $\{Y, Z\}$,
	$Z$ must contain $\ave{Y} \setminus \ave{Z}$.
	Therefore, it holds that $X \subseteq Z$ on $S$;
	then we use (1).

	(3).
	Note that $\ave{XY}$, $\ave{YZ}$, and $\ave{ZX}$ are all nonempty.
	We can assume that both $X$ and $Y$ properly contain $Z$ in $\ave{XYZ}$.
	Necessarily $Z$ is disjoint from $(\ave{X} \cup \ave{Y}) \setminus \ave{Z}$
	by the 2-local cross-freeness of $\{X,Z\}$ and $\{Y,Z\}$.
	Hence $\{X, Z\}$ (or $\{Y, Z\}$) is cross-free on $S$;
	then we use (1).
	
	(4).
	We can assume $X \subseteq^* Y$ by the 2-local cross-freeness of $\{X, Y\}$.
	Then we can also assume $X \subseteq^* Z$ or $Z \subseteq^* X$.
	If $X \subseteq^* Z$,
	then $X$ does not meet $(\ave{Y} \cup \ave{Z}) \setminus \ave{X}$,
	and $\{X, Y\}$ is cross-free on $S$;
	then we use (1).
	Hence suppose $Z \subseteq^* X$.
	By $X \subseteq^* Y$ and the 2-local cross-freeness of $\{X, Y_i\}$ for $i \in [k]$,
	it must hold that $X \subseteq^* Y_i$ for $i \in [k]$.
	Since $\{X, Y_k, Z\}$ is cross-free on $\ave{X} \cup \ave{Y_k} \cup \ave{Z}$,
	it holds that $Z \subseteq X \subseteq Y_k$ on $\ave{X} \cup \ave{Y_k} \cup \ave{Z}$.
	Here $\ave{Z}$ cannot meet $\ave{Y_i} \setminus \ave{X}$,
	since
	otherwise sequence $XY, XY_1, \dots, XY_i, XZ$ also forms a path in $G(\mathcal{G})$
	and hence it holds that $X \subseteq^* Z$,
	a contradiction to $Z \subseteq^* X$.
	By this fact together with $\ave{Y_iY_{i+1}} \setminus \ave{X} \neq \emptyset$,
	we can say $\ave{Y_iY_{i+1}} \setminus \ave{Z} \neq \emptyset$.
	Hence, by $\ave{XY} = \ave{ZY}$,
	the sequence $ZY, ZY_1, \dots, ZY_k$ also forms a path in $G(\mathcal{G})$.
	By $Z \subseteq^* Y_k$ and the 2-local cross-freeness of $\{Z, Y_i\}$ for $i \in [k]$,
	we have $Z \subseteq^* Y$.
	Now $Z \subseteq^* X$ and $Z \subseteq^* Y$ hold.
	This means that $Z$ does not meet $(\ave{X} \cup \ave{Y}) \setminus \ave{Z}$,
	which implies that $\{Y, Z\}$ is cross-free on $S$;
	then we use (1).
\end{proof}

We are now ready to give the proof of \cref{prop:fLCF close}.
\begin{proof}[Proof of \cref{prop:fLCF close}]
	We only prove that if $X \subseteq^* Y$,
	then $\mathcal{G}' := \mathcal{G} \setminus \{ X, Y \} \cup \{ X \cap Y, X \cup Y \}$ is 3-locally cross-free with $\mathcal{G}' \sim \mathcal{G}$; the other case is similar.
	
	First we prove $\mathcal{G}' \sim \mathcal{G}$,
	that is,
	we show $X \sim X \cap Y$ and $Y \sim X \cup Y$.
	By $X \subseteq^* Y$, we have $X = X \cap Y$ on $\ave{X} \cup \ave{Y}$ and $Y = X \cup Y$ on $\ave{X} \cup \ave{Y}$.
	Furthermore,
	for any $p \in [r]$ with $A_p \cap (\ave{X} \cup \ave{Y}) = \emptyset$,
	$X \cap Y \supseteq A_p$ or $(X \cap Y) \cap A_p = \emptyset$ holds and $X \cup Y \supseteq A_p$ or $(X \cup Y) \cap A_p = \emptyset$ holds.
	This means $X \sim X \cap Y$ and $Y \sim X \cup Y$;
	then $\ave{X} = \ave{X \cap Y}$ and $\ave{Y} = \ave{X \cup Y}$ follow.
	
	Next we show that $\mathcal{G}'$ is 2-locally cross-free.
	Since the partition lines of $X$ and $Y$ are the same as those of $X \cap Y$ and $X \cup Y$,
	$\{ X \cap Y, X \cup Y \}$ is also cross-free on $\ave{X} \cup \ave{Y}$.
	Hence $\{X \cap Y, X \cup Y\}$ is 2-locally cross-free.
	In the following,
	we prove that
	$\{ X \cap Y, X \cup Y, Z \}$ is 2-locally cross-free for each $Z \in \mathcal{G} \setminus \{X,Y \}$.

	If $\{X, Y\}$ is cross-free on $\ave{X} \cup \ave{Y} \cup \ave{Z}$,
	then the partition lines of $X$ and $Y$ on $\ave{X} \cup \ave{Y} \cup \ave{Z}$ are the same as those of $X \cap Y$ and $X \cup Y$.
	Hence, by the 2-local cross-freeness of $\mathcal{G}$,
	we obtain that $\{X \cap Y, X \cup Y, Z\}$ is also 2-locally cross-free.
	Therefore, it suffices to deal with the cases of (i) $\ave{XZ} = \ave{YZ} = \emptyset$,
	(ii) $\ave{XZ} \neq \emptyset$ and $\ave{XY} = \ave{YZ} = \emptyset$,
	(iii) $\ave{YZ} \neq \emptyset$ and $\ave{XY} = \ave{XZ} = \emptyset$,
	and (iv) $\ave{XY} = \ave{YZ} = \ave{ZX} \neq \emptyset$.
	Indeed, for other cases,
	$\{X, Y, Z\}$ is cross-free on $\ave{X} \cup \ave{Y} \cup \ave{Z}$
	by \cref{lem:cross-free on S} (2),
	reducing to the cross-free case above.
	
	(i).
	By the 2-local cross-freeness of $\mathcal{G}$,
	we have both [$X \supseteq \ave{Z}$ or $X \cap \ave{Z} = \emptyset$] and [$Y \supseteq \ave{Z}$ or $Y \cap \ave{Z} = \emptyset$].
	Hence both [$(X \cap Y) \supseteq \ave{Z}$ or $(X \cap Y) \cap \ave{Z} = \emptyset$] and [$(X \cup Y) \supseteq \ave{Z}$ or $(X \cup Y) \cap \ave{Z} = \emptyset$] hold.
	Therefore $\{X \cap Y, X \cup Y, Z\}$ is 2-locally cross-free.
	
	(ii) and (iii).
	By symmetry, we show (ii) only.
	By $X \subseteq^* Y$,
	we have $Y \supseteq \ave{X} \setminus \ave{Y}$.
	By $\ave{XZ} \neq \emptyset$ and $\ave{XY} = \ave{YZ} = \emptyset$,
	it holds that $Y \cap (\ave{Z} \setminus \ave{Y}) \neq \emptyset$.
	By the 2-local cross-freeness of $\{Y, Z\}$,
	$Y$ must contain $\ave{Z} \setminus \ave{Y}$.
	Therefore $X \subseteq Y$ holds on $\ave{X} \cup \ave{Y} \cup \ave{Z}$, reducing to the cross-free case.
	
	(iv). $\ave{XY} = \ave{YZ} = \ave{ZX} \neq \emptyset$ implies $\ave{XYZ} \neq \emptyset$.
	Hence, by the 3-local cross-freeness of $\mathcal{G}$,
	$\{X, Y, Z\}$ is cross-free on $\ave{X} \cup \ave{Y} \cup \ave{Z}$, reducing to the cross-free case.

	Finally,
	we show that $\mathcal{G}'$ is 3-locally cross-free.
	Take distinct $S, T, U \in \mathcal{G}'$ with $\ave{STU} \neq \emptyset$.
	If $\{ S, T, U\} \cap \{ X \cap Y, X \cup Y \} = \emptyset$,
	then $\{ S, T, U\}$ does not change in the construction of $\mathcal{G}'$.
	Hence $\{ S, T, U\}$ is cross-free on $\ave{S} \cup \ave{T} \cup \ave{U}$.
	If $|\{ S, T, U\} \cap \{ X \cap Y, X \cup Y \}| = 1$,
	then $\{ S, T, U\} \setminus \{ X \cap Y, X \cup Y \}$ is cross-free on $\ave{S} \cup \ave{T} \cup \ave{U}$.
	By the 2-local cross-freeness of $\mathcal{G}'$ shown above and \cref{lem:cross-free on S} (1),
	$\{ S, T, U\}$ is also cross-free on $\ave{S} \cup \ave{T} \cup \ave{U}$.
	If $|\{ S, T, U\} \cap \{ X \cap Y, X \cup Y \}| = 2$ (assume $S = X \cap Y$ and $T = X \cup Y$),
	then the partition lines of $X \cap Y$ and $X \cup Y$ on $\ave{X} \cup \ave{Y} \cup \ave{U}$ do not change in the construction of $\mathcal{G}'$,
	since $\{X, Y, U\}$ is cross-free on $\ave{X} \cup \ave{Y} \cup \ave{U}$.
	Thus $\{ X \cap Y, X \cup Y, U\}$ is cross-free on $\ave{X} \cup \ave{Y} \cup \ave{U}$ = $\ave{X \cap Y} \cup \ave{X \cup Y} \cup \ave{U}$.
	This completes the proof of \cref{prop:fLCF close}.
\end{proof}

\subsection{Constructing 3-locally cross-free family}\label{subsec:Const 3LCF}
Our final task is to show that,
for an input $\mathcal{F}$ that is $\mathcal{A}$-equivalent to a 2-locally cross-free family,
we can always construct a 3-locally cross-free family in polynomial time.
Specifically, we use the LC-graph $G(\mathcal{F})$ introduced in \cref{subsec:LCF},
and construct an LC-labeling $s$ with the property that
the family $\mathcal{F}^s$ in~\eqref{eq:F^s}
transformed from $\mathcal{F}$ by $s$
is 3-locally cross-free.
While the existence of an LC-labeling is guaranteed by
the assumed $\mathcal{A}$-equivalence of $\mathcal{F}$ to a 2-locally cross-free family (\cref{prop:LCF}),
we need to exploit a certain intriguing structure inherent
in an LC-graph before we can construct such a special LC-labeling.

\cref{lem:cross-free on S} indicates that, more often than not,
a triple $X, Y, Z$ in any 2-locally cross-free family is cross-free on $\ave{X} \cup \ave{Y} \cup \ave{Z}$.
To construct a 3-locally cross-free family, particular cares are needed for those triples $X, Y, Z$ with $\ave{XY} = \ave{YZ} = \ave{ZX} \neq \emptyset$
for which there exists no path $(XY, XY_1, \dots, XY_k)$ satisfying $\ave{XY} \neq \ave{XY_k} \neq \emptyset$.
Indeed,
suppose that $\ave{XY}$, $\ave{YZ}$, and $\ave{XZ}$ are nonempty.
If $\ave{XY} \neq \ave{YZ}$,
then it holds that $\ave{XY} \not\subseteq \ave{Z}$ or $\ave{YZ} \not\subseteq \ave{X}$.
Hence, by \cref{lem:cross-free on S}~(2),
$\{X,Y,Z\}$ is cross-free on $\ave{X} \cup \ave{Y} \cup \ave{Z}$.
If $\ave{XY} = \ave{YZ} = \ave{ZX} \neq \emptyset$ and there is a path $(XY, XY_1, \dots, XY_k)$ satisfying $\ave{XY} \neq \ave{XY_k} \neq \emptyset$,
then, by the above argument for $\ave{XY_k} \neq \ave{YZ}$,
$\{X, Y_k, Z\}$ is cross-free on $\ave{X} \cup \ave{Y_k} \cup \ave{Z}$.
Hence, by \cref{lem:cross-free on S}~(4),
$\{X,Y,Z\}$ is cross-free on $\ave{X} \cup \ave{Y} \cup \ave{Z}$.

This motivates the notion of {\it special nodes} and {\it special connected components} in the LC-graph $G(\mathcal{F})$.
For distinct $X,Y \in \mathcal{F}$, define
\begin{align*}
R(XY) &:= \{ Z \in \mathcal{F} \mid \textrm{There is a path } (XY, XY_1, \dots, XZ) \textrm{ using only prefixed edges}\},\\
R^*(XY) &:= \{ Z \in R(XY) \mid \ave{XZ} \neq \emptyset \}.
\end{align*}
We say that a node $XY$ (or an ordered pair of $X$ and $Y$) with $\ave{XY} \neq \emptyset$ is {\it special} if $\ave{XZ} = \ave{XY}$ holds for all $Z \in R^*(XY)$.
For $X, Y \in \mathcal{F}$ with $XY$ and $YX$ both being special,
let $v(XY)$ denote the connected component (as a set of nodes) containing $XY$ (and $YX$) in $G(\mathcal{F})$.
We call such a component {\it special}.
Let $v^*(XY)$ denote the set of nodes $ZW$ in $v(XY)$ with $\ave{ZW} \neq \emptyset$.

A special component has an intriguing structure;
the proof is given at the end of this section.
\begin{prop}\label{prop:special connected component}
	If both $XY$ and $YX$ are special,
	then the following hold.
	\begin{description}
		\item[{\rm (1)}] $v(XY) = (R^*(XY) \times R(YX)) \cup (R^*(YX) \times R(XY))$.
		\item[{\rm (2)}] $v^*(XY) = (R^*(XY) \times R^*(YX)) \cup (R^*(YX) \times R^*(XY))$.
		\item[{\rm (3)}] If $ZW \in v^*(XY)$,
		then $ZW$ is special and $\ave{ZW} = \ave{XY}$.
	\end{description}
\end{prop}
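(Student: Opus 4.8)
The plan is to fix notation and then reduce all three parts to a single reachability lemma. Write $S := \ave{XY}$ (so $S=\ave{YX}$), $P := R^*(XY)$, and $Q := R^*(YX)$. It helps to reformulate prefixed reachability: since a prefixed edge fixes the first coordinate, $R(XY)$ is precisely the connected component of $Y$ in the auxiliary graph $H_X$ on vertex set $\mathcal{F}$ in which $U,U'$ are adjacent iff $\ave{U}\cap\ave{U'}\not\subseteq\ave{X}$ (note that $X$ itself is isolated in $H_X$). In this language $R^*(XY)$ is the set of vertices $U$ of that component with $\ave{XU}\neq\emptyset$, and ``$XY$ is special'' says exactly that $\ave{XU}=S$ for every such $U$.

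The heart of the matter is the following lemma, from which (1)--(3) follow: for every $Z\in P$ one has $R(ZX)=R(YX)$, and moreover $ZX$ is special with $\ave{ZX}=S$ and $R^*(ZX)=Q$ (symmetrically for $W\in Q$). Granting this, I would prove (1) by two inclusions. For $\supseteq$, a pair $ZW$ with $Z\in P$ and $W\in R(YX)$ is reached from $XY$ along the walk $XY\to XZ$ (prefixed, as $Z\in R(XY)$), then $XZ\to ZX$ (swapped, as $\ave{XZ}=S\neq\emptyset$), then $ZX\to ZW$ (prefixed, using $W\in R(ZX)=R(YX)$); the block $R^*(YX)\times R(XY)$ is symmetric. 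For $\subseteq$, I would verify that the right-hand side contains $XY$ and is closed under swapped and prefixed edges (closure being immediate from the lemma and $R(ZX)=R(YX)$), so it contains all of $v(XY)$. Part (3) then drops out: if $ZW\in v^*(XY)$, then by (1) we may take $Z\in P$, $W\in R(YX)$ with $\ave{ZW}\neq\emptyset$; since $R(ZX)=R(YX)$ and $ZX$ is special with $R^*(ZX)=Q$, nonemptiness forces $W\in Q$, whence $\ave{ZW}=S$ and $ZW$ is special (it shares the component, hence the same $R^*$, with $ZX$). Finally (2) is the restriction of (1) to nonempty cutting supports: a pair with $Z\in P$, $W\in Q$ satisfies $S\subseteq\ave{Z}\cap\ave{W}=\ave{ZW}$, while (3) gives the converse.

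It remains to prove the lemma, where the real work lies. The elementary identity is that for $Z\in P$ the supports $\ave{Y}$ and $\ave{Z}$ agree on $\ave{X}$: indeed $\ave{X}\cap\ave{Z}=\ave{XZ}=S=\ave{XY}=\ave{X}\cap\ave{Y}$, so the symmetric difference $D:=(\ave{Y}\setminus\ave{Z})\cup(\ave{Z}\setminus\ave{Y})$ is disjoint from $\ave{X}$ (an element of $\ave{Z}\setminus\ave{Y}$ lying in $\ave{X}$ would lie in $\ave{X}\cap\ave{Z}=S\subseteq\ave{Y}$, a contradiction). I would prove the strengthened claim $(\star)$: $\ave{U}\cap D=\emptyset$ for every $U\in R(YX)$. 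Once $(\star)$ holds, $\ave{Y}$ and $\ave{Z}$ agree on $\ave{U}\supseteq\ave{UU'}$, so each $U\in R(YX)$ has the same neighbours in $H_Y$ and in $H_Z$; hence the component of $X$ coincides in the two graphs, i.e. $R(ZX)=R(YX)$, with no separate reverse inclusion needed. Transferring $(\star)$ then gives $\ave{ZU}=\ave{YU}\in\{\emptyset,S\}$ for all $U\in R(YX)$, which is exactly the specialness of $ZX$ together with $R^*(ZX)=Q$.

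I would establish $(\star)$ by induction on the distance of $U$ from $X$ in $H_Y$, using specialness of $YX$ to split each reached vertex into the \emph{$S$-full} case $\ave{YU}=S$ (so $S\subseteq\ave{U}$) and the \emph{$S$-empty} case $\ave{YU}=\emptyset$ (so $\ave{U}\cap S=\emptyset$), together with the incomparability preprocessing assumption of \cref{subsec:outline}. I expect the main obstacle to be exactly the inductive step for a freshly reached vertex $U'$: the hypothesis controls $\ave{Y}$ versus $\ave{Z}$ only on the already-visited part $\ave{U}$, so the delicate point is to exclude a ground element $w\in(\ave{Z}\setminus\ave{Y})\cap\ave{U'}$ (necessarily $w\notin\ave{X}$). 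Such a $w$ would make $Z$ and $U'$ adjacent in $H_X$ (since $\ave{ZU'}\ni w\not\subseteq\ave{X}$), placing $U'$ in the $H_X$-component of $Z$, namely $R(XY)$; being simultaneously in $R(XY)$ and in $R(YX)$, the two specialness hypotheses on $XY$ and $YX$ would then clash and force $w\in S$, a contradiction. Pinning down this clash—showing that no cut part can be shared by $Z$ and $U'$ off $\ave{X}$—is where both specialness assumptions, rather than just one, are genuinely used, and is the crux of the whole proposition.
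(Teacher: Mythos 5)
Your overall architecture mirrors the paper's: your ``key lemma'' (for $Z\in R^*(XY)$, $R(ZX)=R(YX)$ and $\ave{ZU}=\ave{YU}$ for all $U$ in that set) is exactly the paper's Claim~2, and your derivation of (1)--(3) from it is sound and essentially the paper's, with your closure argument replacing the paper's ``at most one swapped edge'' claim. The easy half of your inductive step for $(\star)$ also works: a $w\in(\ave{Y}\setminus\ave{Z})\cap\ave{U'}$ with $U'\in R(YX)$ forces $\ave{YU'}\neq\emptyset$, hence $\ave{YU'}=S\subseteq\ave{X}$ by specialness of $YX$, contradicting $w\notin\ave{X}$.

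The gap is precisely at the point you flag as the crux, and it is not closed by the mechanism you sketch. Having placed $U'$ in $R(XY)\cap R(YX)$, you assert that ``the two specialness hypotheses \dots clash and force $w\in S$.'' But applying specialness pointwise only gives $\ave{XU'}\in\{\emptyset,S\}$ and $\ave{YU'}\in\{\emptyset,S\}$, and since $w\notin\ave{X}\cup\ave{Y}$ the element $w$ lies in neither $\ave{XU'}$ nor $\ave{YU'}$, so no contradiction about $w$ follows. What you actually need is the global statement $R(XY)\cap R(YX)=\emptyset$, which is the paper's Claim~1 and requires its own argument: assuming the intersection is nonempty, one first notes $\ave{XZ}\subseteq\ave{YZ}$ for every $Z\in R(XY)$ (by specialness of $XY$), then takes $Z$ in the intersection at minimum $H_Y$-distance from $X$ and walks one step back along the path $(YX,\dots,YZ)$; the transfer $\ave{XZ_k}\subseteq\ave{YZ_k}$ shows the predecessor is also in the intersection, so by minimality there is a prefixed edge $\{YX,YZ\}$, i.e.\ $\ave{XZ}\setminus\ave{Y}\neq\emptyset$, giving $\emptyset\neq\ave{XZ}\neq\ave{XY}$ and contradicting specialness of $XY$. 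Without this separate shortest-path induction (or an equivalent), your $(\star)$, and hence the whole proposition, is unproven.
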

For a special component $v = v(XY)$,
we call $\ave{XY}$ the {\it center} of $v$;
this is well-defined by \cref{prop:special connected component}~(3).
For $Q \subseteq [r]$,
the set $\mathcal{C}$ of all special components whose center coincides with $A_Q$
is called the {\it $Q$-flower} if the size $|\mathcal{C}|$ is at least two.
The following proposition gives a concrete representation of the $Q$-flower;
the proof is given at the end of this section.
\begin{prop}\label{prop:set of special nodes}
	A $Q$-flower is given as
	\begin{align*}
		\{ v(X_iX_j) \mid 1 \leq i < j \leq p \}
	\end{align*}
	for some $p \geq 3$ and distinct $X_1, X_2, \dots, X_p \in \mathcal{F}$ such that $R(X_iX_j) = R(X_{i'}X_j)$ for all $i$ and $i' < j$, and $R(X_iX_j) \cap R(X_{i'}X_{j'}) = \emptyset$ for all distinct $j, j' \in [p]$, $i < j$, and $i' < j'$.
\end{prop}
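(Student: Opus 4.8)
The plan is to reduce the proposition to four structural facts about the special nodes with center $A_Q$ and then to read off the claimed indexed family. Throughout I will lean on \cref{prop:special connected component}~(1), which says that a special component $v(XY)$ is the node set $(R^*(XY)\times R(YX))\cup(R^*(YX)\times R(XY))$; thus $v(XY)$ is completely determined, as a set of nodes, by the two prefixed-reachability classes $R(XY)$ and $R(YX)$. The proposition will follow once I establish: \textbf{(a)} among special center-$A_Q$ nodes, $R(XY)$ depends only on the second coordinate $Y$, not on the first coordinate $X$ (this is the identity $R(X_iX_j)=R(X_{i'}X_j)$); \textbf{(b)} the distinct classes occurring in the flower are pairwise disjoint; \textbf{(c)} each flower component is determined by the unordered pair of its two classes; and \textbf{(d)} every pair of the occurring classes is realized by a flower component, with at least three classes present.

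First I would prove \textbf{(a)}. The elementary engine is that if $\ave{X}\cap\ave{Y}=A_Q$ then $\ave{Y}\setminus\ave{X}=\ave{Y}\setminus A_Q$; more generally, since $XY$ is special, every $Z\in R(XY)$ has $\ave{X}\cap\ave{Z}\in\{\emptyset,A_Q\}$, whence $\ave{Z}\setminus\ave{X}=\ave{Z}\setminus A_Q$, so that $\ave{ZW}\setminus\ave{X}=\ave{W}\cap(\ave{Z}\setminus A_Q)$ is \emph{independent} of the first coordinate $X$. Given a second special center-$A_Q$ node $X'Y$, I would show $R(XY)\subseteq R(X'Y)$ by induction on the length of a prefixed path $XY=XZ_0,XZ_1,\dots,XZ_m$, maintaining the invariant that the parallel sequence $X'Z_0,\dots,X'Z_k$ is also a prefixed path. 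This transfers verbatim: each intermediate $Z_k$ lies in both $R(XY)$ and $R(X'Y)$, so $\ave{X}\cap\ave{Z_k}$ and $\ave{X'}\cap\ave{Z_k}$ are both in $\{\emptyset,A_Q\}$, and the edge condition $\ave{Z_kZ_{k+1}}\setminus\ave{X}\neq\emptyset$ becomes $\ave{Z_{k+1}}\cap(\ave{Z_k}\setminus A_Q)\neq\emptyset$, which is exactly $\ave{Z_kZ_{k+1}}\setminus\ave{X'}\neq\emptyset$. Symmetry gives equality. This lets me define, for each second coordinate $Y$ of a special center-$A_Q$ node, a well-defined \emph{petal} $R[Y]:=R(XY)$, and using \cref{prop:special connected component}~(3) one checks $R[W]=R[Y]$ for all $W\in R^*[Y]$.

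Next I would dispatch \textbf{(c)} and \textbf{(b)}. Injectivity \textbf{(c)} is immediate from the displayed formula for $v(XY)$: two flower components with the same unordered petal pair $\{R[X],R[Y]\}$ have identical node sets. For disjointness \textbf{(b)}, if two distinct petals $R_j,R_{j'}$ shared an element $W$, then choosing a representative $X_{i_0}$ with index outside $\{j,j'\}$, property \textbf{(a)} gives $R(X_{i_0}X_j)=R_j$ and $R(X_{i_0}X_{j'})=R_{j'}$; but these are prefixed-reachability classes \emph{with the same first coordinate}, so sharing $W$ forces $R_j=R_{j'}$, a contradiction. Finally $p\ge 3$ follows from $|\mathcal{C}|\ge 2$: by injectivity two distinct flower components give two distinct unordered petal pairs, which is impossible with only two petals, so at least three petals occur.

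The main obstacle I anticipate is the completeness statement \textbf{(d)}: that the flower is \emph{all} of $\{v(X_iX_j):1\le i<j\le p\}$, i.e.\ every pair of occurring petals is genuinely realized by a special center-$A_Q$ component. Here the combinatorial geometry of cutting supports must be used, not merely the reachability bookkeeping. Given apex representatives $X_i\in R_i$, $X_j\in R_j$ of distinct petals, I would need to verify $\ave{X_i}\cap\ave{X_j}=A_Q$ (so $X_iX_j$ is a center-$A_Q$ node) and that $X_iX_j$ is special. For the equality of cutting supports I would invoke that $\mathcal{F}$ is $\mathcal{A}$-equivalent to a $2$-locally cross-free family together with \cref{lem:cross-free on S}, to exclude the possibility that the cutting supports of two different petals meet outside $A_Q$; for specialness I would re-run the reachability argument of \textbf{(a)} to see that $R(X_iX_j)=R_j$ and $R(X_jX_i)=R_i$ consist only of sets $Z$ with $\ave{X_iZ}\in\{\emptyset,A_Q\}$. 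Assembling these facts, the map sending a flower component to its unordered petal pair is a bijection onto $\binom{[p]}{2}$, and selecting one representative $X_j$ per petal yields the asserted description, with properties \textbf{(a)} and \textbf{(b)} built in by construction.
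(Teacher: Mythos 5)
Your four-part decomposition identifies the right ingredients, and parts (a) and (c) are sound (your (a) is essentially \cref{lem:special} applied in both directions, since both nodes are special with the same cutting-support intersection $A_Q$). But there are two genuine gaps. First, your disjointness argument (b) is circular: to conclude $R(X_{i_0}X_j)=R_j$ and $R(X_{i_0}X_{j'})=R_{j'}$ from (a), you need $X_{i_0}X_j$ and $X_{i_0}X_{j'}$ to already be special nodes with center $A_Q$ --- but that is precisely the realization statement (d), which you prove afterwards and whose correct proof in turn relies on disjointness. The paper avoids this by proving disjointness directly for an arbitrary pair of special center-$A_Q$ components $v(X_1X_2)$, $v(Y_1Y_2)$: if $R(X_1X_2)\cap R(Y_1Y_2)\neq\emptyset$, a shortest-prefixed-path argument produces a common element $Z\in R^*(X_1X_2)\cap R^*(Y_1Y_2)$, and then \cref{lem:special} applied to $X_1Z$ and $Y_1Z$ in both directions gives $R(X_1X_2)=R(Y_1Y_2)$; no third petal is needed.

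Second, the tool you propose for the key step of (d) --- establishing $\ave{X_iX_j}=A_Q$ --- cannot work. Two-local cross-freeness and \cref{lem:cross-free on S} constrain the sets of $\mathcal{F}$, not their cutting supports; cutting supports are invariant under $\mathcal{A}$-equivalence, and two nested sets are perfectly cross-free while their cutting supports may intersect in an arbitrary union of blocks, so no cross-freeness hypothesis can force the intersection down to $A_Q$ (indeed the paper proves the proposition with no cross-freeness assumption at all). The fact that does the work is purely graph-theoretic and uses the disjointness of the two petals: for $X_2'\in R(X_1X_2)$ and $Y_2'\in R(Y_1Y_2)$, each of $\ave{X_2'}$, $\ave{Y_2'}$ either contains $A_Q$ or misses it by \cref{prop:special connected component}~(3), so if $\emptyset\neq\ave{X_2'Y_2'}\neq A_Q$ then $\ave{X_2'Y_2'}\setminus\ave{X_1}=\ave{X_2'Y_2'}\setminus\ave{X_1X_2'}\neq\emptyset$, and the resulting prefixed edge $\{X_1X_2',X_1Y_2'\}$ would merge the two petals --- a contradiction. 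The same claim, combined with $R(X_2Y_2)\subseteq R(Y_1Y_2)$ from \cref{lem:special}, is what yields the specialness of $X_2Y_2$; ``re-running (a)'' does not suffice because at that point you do not yet know that $X_2Y_2$ has center $A_Q$.
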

The above $X_1, X_2, \dots, X_p$ are called the {\it representatives} of the $Q$-flower.

\begin{exmp}\label{ex:Q-flower}
	Let $\mathcal{F} = \{S,T,U,V,X,Y,Z\}$ be the $\mathcal{A}$-cut family illustrated in Figure~\ref{fig:3-local},
	its LC-graph $G(\mathcal{F})$ being illustrated in Figure~\ref{fig:Q-flower}.
	In $G(\mathcal{F})$,
	there are six special components $v(ST) (= v(SV))$, $v(SX)$, $v(TX) (= v(VX))$, $v(XY)$, $v(XZ)$, and $v(YZ)$.
	We can see that $\{ v(ST), v(SX), v(TX) \}$ is the $\{1\}$-flower
	and $\{v(XY), v(XZ), v(YZ)\}$ is the $\{2\}$-flower.~\qqed
\end{exmp}
\begin{figure}
			\centering
			\includegraphics[width=9cm]{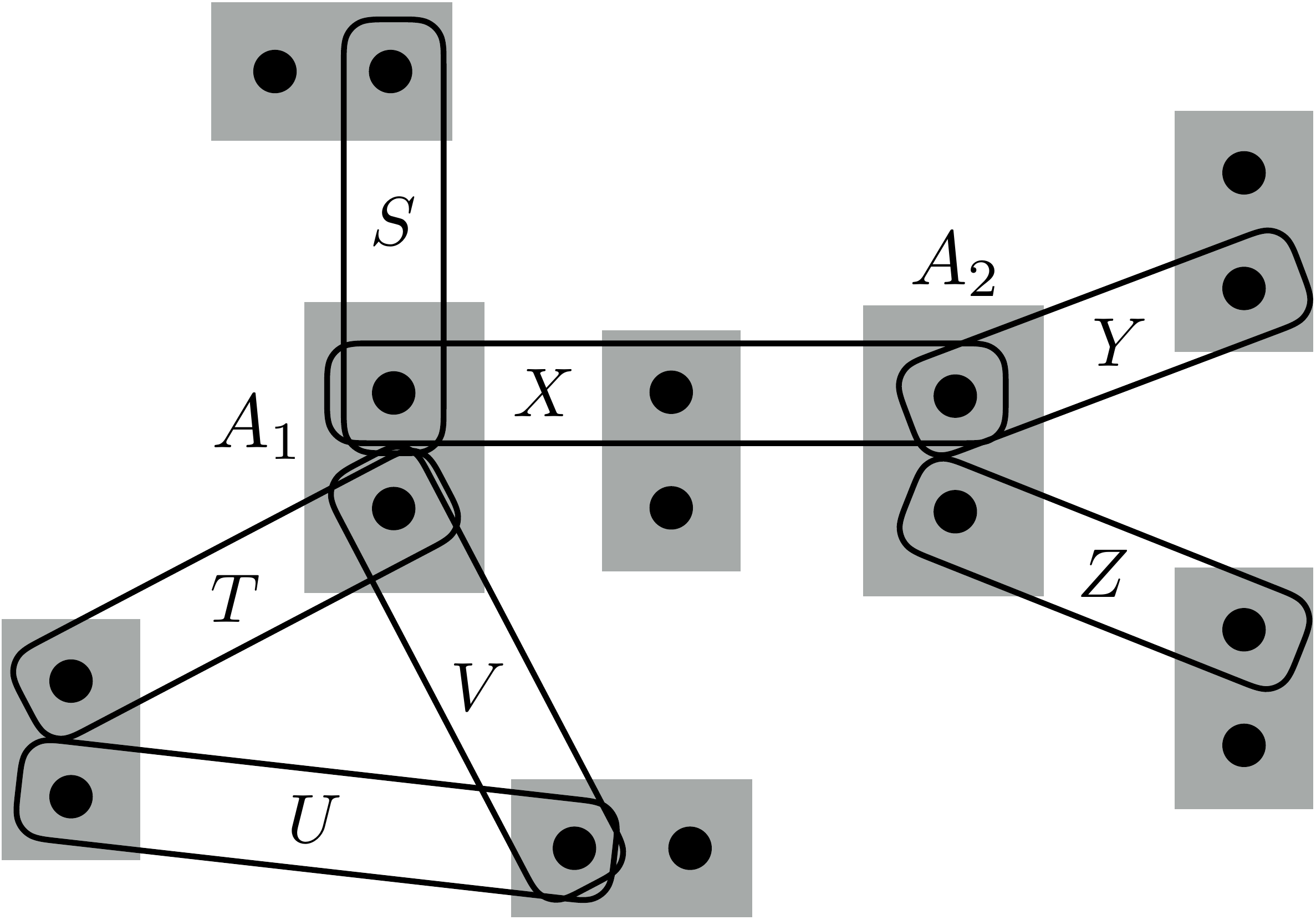}
			\caption{Black nodes indicate elements of $[n]$,
				gray rectangles indicate elements of $\mathcal{A}$, and solid curves indicate elements of $\mathcal{F} = \{ S,T,U,V,X,Y,Z \}$.
				It holds that $A_1 = \ave{ST} = \ave{TX} = \ave{SX}$ and $A_2 = \ave{XY} = \ave{YZ} = \ave{XZ}$.}
			\label{fig:3-local}
\end{figure}
\begin{figure}
			\centering
			\includegraphics[width=16cm]{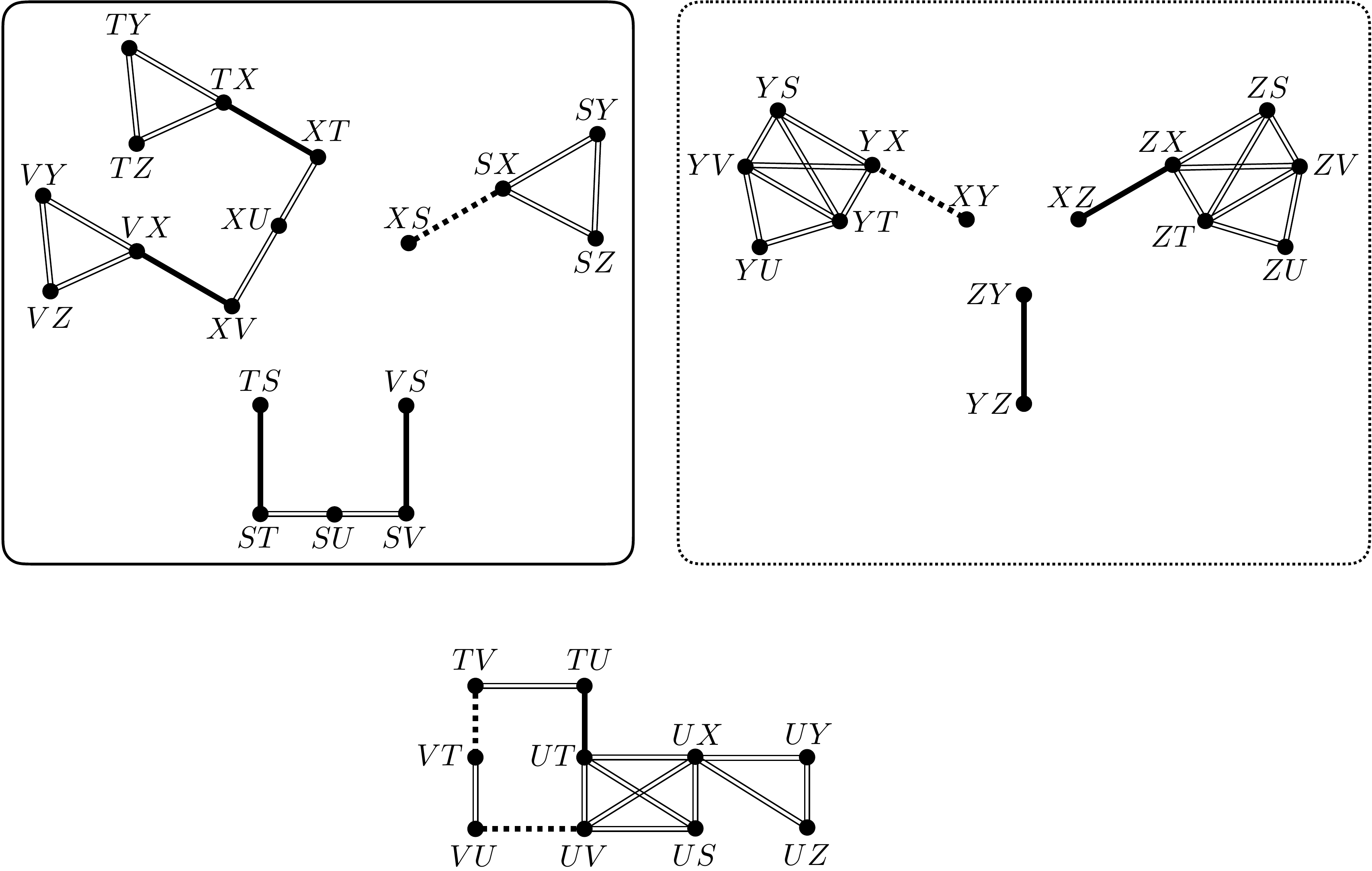}
			\caption{
				The LC-graph $G(\mathcal{F})$ for $\mathcal{F} = \{ S,T,U,V,X,Y,Z \}$ defined in Figure~\ref{fig:3-local}.
				$\{1\}$-flower (resp. $\{2\}$-flower) consists of the connected components included in the left solid curve (resp. the right dotted curve).
			}
			\label{fig:Q-flower}
\end{figure}

A component $v$ is said to be {\it fixed} if $v$ contains a fixed node,
and {\it free} otherwise.
A special component $v(XY)$ in the $Q$-flower is free if and only if
the partition lines of $X'$ and $Y'$ on $A_Q$ are the same for all $X' \in R^*(YX)$ and $Y' \in R^*(XY)$.
A {\it free $Q$-flower} is a maximal set of free components in the $Q$-flower such that the partition lines on $A_Q$ are the same.
Now the set of free components of the $Q$-flower is partitioned to free $Q$-flowers each of which is represented as
\begin{align*}
	\{ v(X_{i_s}X_{i_t}) \mid 1 \leq s < t \leq q \}
\end{align*}
with a subset $\{X_{i_1}X_{i_2}, \dots, X_{i_q}\}$ of the representatives.
A free $Q$-flower (for some $Q \subseteq [r]$) is also called a {\it free flower}.

We now provide a polynomial-time algorithm to construct a 3-locally cross-free family $\mathcal{F}^s$
by defining an appropriate LC-labeling $s$.
\begin{description}
	\item[Algorithm 5 (for constructing a 3-locally cross-free family):]
	\item[Step 0:] Determine whether there exists a 2-locally cross-free family $\mathcal{A}$-equivalent to $\mathcal{F}$.
	If not,
	then output ``$\mathcal{F}$ is not laminarizable'' and stop.
	\item[Step 1:]
	For all fixed nodes $XY$,
	define $s(XY)$ according to~\eqref{eq:fixed}.
	By a breath-first search, define $s$ on all other nodes in fixed components appropriately.
	\item[Step 2:] For each component $v$ which is free and not special,
	take any node $XY$ in $v$.
	Define $s(XY)$ as $0$ or $1$ arbitrarily, and define $s(ZW)$ appropriately for all nodes $ZW$ in $v$.
	Then all the remaining (undefined) components are special and free.
	\item[Step 3:]
	For each free flower, which is assumed to be represented as $\{ v(X_i X_j) \mid 1 \leq i < j \leq q \}$,
	do the following:
	\begin{description}
		\item[3-1:] Define the value of $s(X_iX_j)$ for $i,j \in [q]$ with $i < j$ so that $\{X_1^s, X_2^s, \dots, X_q^s\}$ is cross-free on $\bigcup_{i \in [q]}\ave{X_i}$;
		such a labeling is given, for example, as
		\begin{align}\label{eq:s}
			s(X_iX_j) :=
			\begin{cases}
			0 & \textrm{if $X_i = X_1$ on $A_Q$},\\
			1 & \textrm{if $X_i = \overline{X}_1$ on $A_Q$},
			\end{cases}
		\end{align}
		where $A_Q$ is the center of the free flower.
		\item[3-2:] Define $s(ZW)$ appropriately for all $ZW \in v(X_iX_j)$.
	\end{description}
	\item[Step 4:]
	Output $\mathcal{F}^s$.~\qqed
\end{description}

\begin{exmp}\label{ex:algo 5}
	We consider $\mathcal{F} = \{S,T,U,V,X,Y,Z\}$ in Figure~\ref{fig:3-local}
	and its LC-graph $G(\mathcal{F})$ in Figure~\ref{fig:Q-flower}.
	We execute Algorithm~5 for $G(\mathcal{F})$.
	
	We can easily determine that there exists a 2-locally cross-free family $\mathcal{A}$-equivalent to $\mathcal{F}$,
	and that there is no fixed node in $G(\mathcal{F})$.
	In Step~2, there is one component $v$ which is free and not special in $G(\mathcal{F})$ (the one at the bottom in Figure~\ref{fig:Q-flower}).
	We take, say, $TV \in v$ and define $s(TV) := 1$.
	Then,
	by~\eqref{eq:swapped} and~\eqref{eq:prefixed},
	we have $s(X'Y') = 0$ for $X'Y' \in \{ VT, VU \}$ and $s(X'Y') = 1$ for other nodes in $v$.
	We consider Step~3.
	Two flowers ($\{1\}$-flower and $\{2\}$-flower) exist in $G(\mathcal{F})$ (see \cref{ex:Q-flower}).
	Hence,
	for the $\{1\}$-flower $\{ v(ST), v(SX), v(TX) \}$,
	we define $s(ST) = s(SX) = s(TX) := 0$,
	and for the $\{2\}$-flower $\{v(XY), v(XZ), v(YZ)\}$,
	we define $s(XY) = s(XZ) = s(YZ) := 0$.
	Then we define the other values according to~\eqref{eq:swapped} and~\eqref{eq:prefixed}.
	Thus we can construct an LC-labeling inducing a 3-locally cross-free family.~\qqed
\end{exmp}

\begin{prop}\label{prop:algo 5}
	The output $\mathcal{F}^s$ is 3-locally cross-free, and Algorithm~{\rm 5} runs in $O(n^4)$ time.
\end{prop}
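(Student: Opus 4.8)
The plan is to prove the two assertions of the proposition separately: correctness, that the family $\mathcal{F}^s$ produced by Algorithm~5 is 3-locally cross-free, and the $O(n^4)$ running-time bound. For correctness I would first confirm that the function $s$ built in Steps~1--3 is a genuine LC-labeling, i.e. that it obeys \eqref{eq:fixed}--\eqref{eq:prefixed}. Step~1 assigns the forced values on fixed nodes through \eqref{eq:fixed} and propagates them over each fixed component by breadth-first search, applying only \eqref{eq:swapped} and \eqref{eq:prefixed}; Steps~2 and~3 make a single free choice in each remaining (necessarily free) component and propagate identically. Since distinct special components are distinct connected components of $G(\mathcal{F})$, the flower rule \eqref{eq:s} used in Step~3 imposes no relation that could clash with the swapped/prefixed constraints inside any one component, and global consistency of the propagation is guaranteed by the feasibility verified in Step~0 together with \cref{prop:LCF}. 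Thus $s$ is an LC-labeling, so by \cref{prop:LCF} the family $\mathcal{F}^s$ is already 2-locally cross-free; I would also record that $\ave{X^s}=\ave{X}$ for every $X$, because $X\mapsto X^s$ only adds or deletes whole classes $A_p$ disjoint from $\ave{X}$.

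It then remains to upgrade this to 3-local cross-freeness. I would take an arbitrary triple $X,Y,Z\in\mathcal{F}$ with $\ave{X^sY^sZ^s}=\ave{XYZ}\neq\emptyset$ and show that $\{X^s,Y^s,Z^s\}$ is cross-free on $\ave{X}\cup\ave{Y}\cup\ave{Z}$. \cref{lem:cross-free on S} handles almost every configuration: parts~(2) and~(3) dispose of each triple whose three cutting supports fail to pairwise coincide in a common $A_Q$ or whose partition lines on that $A_Q$ differ, and, whenever some ordered pair is non-special, part~(4) provides an escape path that yields cross-freeness. The only surviving case is $\ave{XY}=\ave{YZ}=\ave{ZX}=A_Q\neq\emptyset$ with all three ordered pairs special and with the partition lines of $X^s,Y^s,Z^s$ on $A_Q$ all equal. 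Equal partition lines force each pair to be free, since a fixed node has unequal lines on its center, so $v(XY)$, $v(YZ)$, $v(ZX)$ lie in one and the same free $Q$-flower. Using \cref{prop:set of special nodes} I would identify representatives $X_a,X_b,X_c$ of this flower attached to $X,Y,Z$, and \cref{prop:special connected component} together with the prefixed-edge propagation gives that $X^s$ agrees with $X_a^s$ on $\ave{X}$, and similarly for $Y,Z$. Since the rule \eqref{eq:s} is designed exactly so that $\{X_a^s,X_b^s,X_c^s\}$ is cross-free on $\bigcup_i\ave{X_i}\supseteq\ave{X}\cup\ave{Y}\cup\ave{Z}$, the original triple is cross-free on that union as well.

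This last case is the step I expect to be the main obstacle. One must verify that the single labeling prescription \eqref{eq:s}, applied independently flower by flower, simultaneously resolves \emph{every} hard triple, and that triples mixing members of different flowers, or mixing special sets with non-special ones, genuinely fall under \cref{lem:cross-free on S}. The delicate bookkeeping is the reduction of each of $X,Y,Z$ to a representative through the sets $R(\cdot)$ and $R^*(\cdot)$ and the verification that the membership relations of $X^s,Y^s,Z^s$ on the shared center $A_Q$ coincide with those of the representatives; the structural results \cref{prop:special connected component,prop:set of special nodes} are precisely what make this transfer possible.

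For the running time, the dominant cost is the manipulation of the LC-graph $G(\mathcal{F})$, which has $O(|\mathcal{F}|^2)=O(n^2)$ nodes. Step~0 is the LC-labeling (equivalently 2-SAT) feasibility test, performed in $O(|E_\textrm{s}\cup E_\textrm{p}|)=O(n^4)$ time as already observed when the LC-graph was introduced in \cref{subsec:LCF}. Computing all cutting supports and the intersections $\ave{XY}$ and differences $\ave{YZ}\setminus\ave{X}$ needed to build $E_\textrm{s}$ and $E_\textrm{p}$, and then identifying the fixed, free, and special nodes, the special components, and the flower decomposition via \cref{prop:special connected component,prop:set of special nodes}, are reachability and breadth-first-search computations on $G(\mathcal{F})$ and cost $O(n^4)$ in total. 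The label propagation of Steps~1--3 is a further breadth-first search over $G(\mathcal{F})$, hence $O(n^4)$, and assembling $\mathcal{F}^s$ in Step~4 costs only $O(n|\mathcal{F}|)=O(n^2)$. Summing these contributions, Algorithm~5 runs in $O(n^4)$ time, completing the proof.
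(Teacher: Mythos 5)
Your overall strategy matches the paper's: establish 2-local cross-freeness via \cref{prop:LCF}, note $\ave{X^s}=\ave{X}$, eliminate most triples with \cref{lem:cross-free on S}~(2)--(4), and handle the remaining all-special triples through the flower structure and the rule~\eqref{eq:s}, with the same $O(n^4)$ accounting. But there is one concrete error in your case analysis. You claim that once the partition lines of $X^s,Y^s,Z^s$ on the common center $A_Q$ are all equal, ``each pair is free, since a fixed node has unequal lines on its center.'' This confuses a fixed \emph{node} with a fixed \emph{component}: a special component $v(XY)$ is fixed as soon as it contains \emph{some} fixed node, and by \cref{prop:special connected component}~(2) its nodes with nonempty cutting support range over all pairs $X'Y'$ with $X'\in R^*(YX)$, $Y'\in R^*(XY)$. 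Such a component can perfectly well contain a pair $X'Y'$ with different partition lines on $A_Q$ even though the particular $X,Y$ you are looking at have identical lines. In that situation $v(XY)$ is fixed, its labels are forced by the Step~1 propagation from the fixed node rather than chosen by~\eqref{eq:s}, and your free-flower argument does not apply to it.

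The paper closes exactly this case by a separate argument: if one of $v(X_iX_j)$, $v(X_iX_k)$, $v(X_jX_k)$ is fixed, it produces a witness $\hat X\in R^*(X_kX_i)$ whose partition line on $A_Q$ differs from those of $Y$ and $Z$, applies \cref{lem:cross-free on S}~(3) to conclude that $\{\hat X^s,Y^s,Z^s\}$ is cross-free on $\ave{\hat X}\cup\ave{Y}\cup\ave{Z}$, and then transfers this back to $\{X^s,Y^s,Z^s\}$ along the prefixed path from $Y\hat X$ to $YX$ using \cref{lem:cross-free on S}~(4). You need to add this fixed-component branch; only after it is disposed of can you conclude that the three components lie in a single free $Q$-flower and invoke~\eqref{eq:s}. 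The rest of your argument, including the reduction to representatives via $R(\cdot)$, $R^*(\cdot)$ and repeated use of part~(4), as well as the running-time analysis, is in line with the paper.
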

\begin{proof}
	We show the 3-local cross-freeness of $\mathcal{F}^s$.
	Recall that $\mathcal{F}^s$ is 2-locally cross-free
	and $\ave{X^s} = \ave{X}$ for $X^s \in \mathcal{F}^s$ (and $X \in \mathcal{F}$).
	Take any triple $\{X^s, Y^s, Z^s\}$ with $\ave{XYZ} \neq \emptyset$.
	It suffices to deal with the case of $\ave{XY} = \ave{YZ} = \ave{ZX} \neq \emptyset$ by \cref{lem:cross-free on S}~(2).
	If $XY$ is not special,
	there is a path $(XY, XY_1, \dots, XY_k)$ in $G(\mathcal{F})$ such that $\emptyset \neq \ave{XY_k} \neq \ave{XY} = \ave{XZ}$.
	Here $\ave{XY_k} \not\subseteq \ave{Z}$ or $\ave{XZ} \not\subseteq \ave{Y_k}$ holds.
	By \cref{lem:cross-free on S}~(2),
	$\{X^s, Y_k^s, Z^s\}$ is cross-free on $\ave{X} \cup \ave{Y_k} \cup \ave{Z}$.
	Hence, by \cref{lem:cross-free on S}~(4),
	$\{X, Y, Z\}$ is cross-free on $\ave{X} \cup \ave{Y} \cup \ave{Z}$.
	Therefore, we assume that $XY, YX, YZ,ZY, ZX,XZ$ are special.
	
	We can suppose that $XY, YZ, ZX$ belong to special components of the $Q$-flower $\{ v(X_iX_j) \mid 1 \leq i < j \leq p \}$,
	i.e.,
	$\ave{XY} = \ave{YZ} = \ave{ZX} = A_Q$.
	By \cref{prop:set of special nodes},
	we can assume $X \in R^*(X_kX_i)$, $Y \in R^*(X_iX_j)$, and $Z \in R^*(X_jX_k)$ for distinct $i,j,k \in [p]$ with $i < j < k$.
	
	Suppose that $v(X_iX_j)$, $v(X_iX_k)$, or $v(X_jX_k)$ is fixed.
	Then we can assume that there is $\hat{X} \in R^*(X_kX_i)$ such that
	the partition lines of $\hat{X}, Y, Z$ are not the same.
	By \cref{lem:cross-free on S}~(3),
	$\{ \hat{X}^s, Y^s, Z^s \}$ is cross-free on $\ave{\hat{X}} \cup \ave{Y} \cup \ave{Z}$.
	Furthermore, since there is a path $(YX= YX_0, YX_1, \dots, YX_k = Y\hat{X})$,
	$\{X^s, Y^s, Z^s\}$ is cross-free on $\ave{X} \cup \ave{Y} \cup \ave{Z}$
	by \cref{lem:cross-free on S}~(4).
	
	Suppose that $v(X_iX_j)$, $v(X_iX_k)$, and $v(X_jX_k)$ are free.
	Then $v(X_iX_j), v(X_iX_k), v(X_jX_k)$ are contained in the same free $Q$-flower.
	By the definition of $s$ (cf.~\eqref{eq:s}),
	$\{ X_i^s, X_j^s, X_k^s \}$ is cross-free on $\ave{X_i} \cup \ave{X_j} \cup \ave{X_k}$.
	By applying repeatedly \cref{lem:cross-free on S} (4), $\{X^s, Y^s, Z^s\}$ is cross-free on $\ave{X} \cup \ave{Y} \cup \ave{Z}$.

	Finally we see the running-time of Algorithm~5.
	By the argument at the end of \cref{subsec:LCF},
	Step~0 can be done in $O(n^4)$ time.
	We can also obtain an appropriate value of each $s(XY)$ in Steps~1--3 in $O(n^4)$ time.
	From $s$, we can construct $\mathcal{F}^s$ in $O(|V(\mathcal{F})|) = O(n^2)$ time.
	Thus the running-time of Algorithm~5 is bounded by $O(n^4)$.
\end{proof}

By \cref{prop:algo 4,prop:algo 5},
we obtain the following theorem.
\begin{thm}\label{thm:algo 4 5}
	Algorithms~{\rm 4} and~{\rm 5} solve \LAM\ in $O(n^4)$ time.
\end{thm}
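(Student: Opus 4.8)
The plan is to assemble the three propositions proved in this section—\cref{prop:LCF}, \cref{prop:algo 5}, and \cref{prop:algo 4}—into a single correctness-and-complexity argument, using the reduction from laminarizability to cross-freeness set up in \cref{subsec:outline}. The point to make precise is that the pipeline ``Algorithm~5 then Algorithm~4'' both decides laminarizability and, in the affirmative case, produces the witness $\mathcal{L}$.

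First I would recall the logical backbone. By the preprocessing and remarks in \cref{subsec:outline}, $\mathcal{F}$ is laminarizable if and only if there is a cross-free $\mathcal{A}$-cut family $\mathcal{A}$-equivalent to $\mathcal{F}$: a laminar family is cross-free, and conversely a cross-free family is turned into a laminar one by complementing appropriate members in $O(|\mathcal{G}|)$ time. Since a cross-free family is $3$-locally cross-free and a $3$-locally cross-free family is $2$-locally cross-free, laminarizability immediately forces the existence of a $2$-locally cross-free family $\mathcal{A}$-equivalent to $\mathcal{F}$, which by \cref{prop:LCF} amounts to the existence of an LC-labeling. This settles the ``only if'' direction of correctness: when Step~0 of Algorithm~5 detects that no LC-labeling exists, reporting ``$\mathcal{F}$ is not laminarizable'' is legitimate.

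The substantive converse is precisely what the bootstrapping chain buys us. If a $2$-locally cross-free family does exist, then \cref{prop:algo 5} guarantees that Algorithm~5 outputs a $3$-locally cross-free family $\mathcal{F}^s$ with $\mathcal{F}^s \sim \mathcal{F}$; feeding $\mathcal{F}^s$ to Algorithm~4, \cref{prop:algo 4} produces a cross-free family $\mathcal{A}$-equivalent to $\mathcal{F}$; complementing as in \cref{subsec:outline} then yields a laminar $\mathcal{A}$-cut family $\mathcal{L}$ with $\mathcal{F} \sim \mathcal{L}$. Thus mere $2$-local cross-freeness already certifies full laminarizability, and since every transformation preserves the $\sim$-equivalence classes bijectively, the non-redundancy of the input $\mathcal{F}$ is inherited by the output $\mathcal{L}$, as required by \LAM.

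For the running time I would simply total the costs: the emptiness check in Step~0 of Algorithm~5 solves the underlying $2$-SAT instance in $O(|E_{\mathrm{s}} \cup E_{\mathrm{p}}|) = O(n^4)$ time, the construction of $s$ and of $\mathcal{F}^s$ in Steps~1--4 is $O(n^4)$ by \cref{prop:algo 5}, Algorithm~4 runs in $O(n^2)$ by \cref{prop:algo 4}, and the final cross-free-to-laminar conversion is $O(n)$; the dominant term is $O(n^4)$. I expect no real obstacle, since all the hard structural work—the analysis of special components and $Q$-flowers underpinning \cref{prop:algo 5}, and the uncrossing invariant underpinning \cref{prop:algo 4}—has already been discharged. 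The only care needed is to articulate the equivalence ``$\mathcal{F}$ is laminarizable $\iff$ a $2$-locally cross-free family exists'' cleanly, so that the $O(n^4)$-time check in Step~0 is seen to be a complete decision procedure rather than a mere necessary condition.
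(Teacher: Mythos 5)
Your proposal is correct and follows essentially the same route as the paper, which derives the theorem directly from \cref{prop:algo 4} and \cref{prop:algo 5} together with the laminarizability-to-cross-freeness reduction of \cref{subsec:outline}; you merely make explicit the chain ``laminarizable $\iff$ cross-free exists $\iff$ $2$-locally cross-free exists'' that justifies Step~0 of Algorithm~5 as a complete decision procedure. The running-time accounting also matches the paper's.
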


The rest of this section is devoted to proving \cref{prop:special connected component,prop:set of special nodes}.
First we show a key lemma about special nodes.
\begin{lem}\label{lem:special}
	If $XY$ is special and $\ave{XY} = \ave{X'Y}$ for some $X'$,
	then $R(X'Y) \subseteq R(XY)$, and $\ave{X'Z} \supseteq \ave{XZ}$ for any $Z \in R(X'Y)$.
\end{lem}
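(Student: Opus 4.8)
The plan is to translate the statement into a purely combinatorial claim about cutting supports and then prove both conclusions simultaneously by induction along prefixed paths. Since each cutting support $\ave{W}$ is a union of parts $A_p$, every intersection, difference, and nonemptiness assertion among cutting supports is decided part-by-part: a part $A_p$ meets $\ave{W}$ if and only if $A_p\subseteq\ave{W}$, and $\ave{VW}=\ave{V}\cap\ave{W}$ by definition. In particular, a prefixed edge $\{XV,XW\}$ exists exactly when $\ave{VW}\setminus\ave{X}\neq\emptyset$, and since such an edge keeps the first coordinate fixed, $R(X'Y)$ is precisely the set of endpoints $Z$ of chains $Y=Y_0,Y_1,\dots,Y_k=Z$ with $\ave{Y_iY_{i+1}}\setminus\ave{X'}\neq\emptyset$ for every $i$ (and similarly for $R(XY)$ with $X$ in place of $X'$).

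With this reformulation I would prove by induction on the chain length $k$ that every $Z\in R(X'Y)$ satisfies both (a) $Z\in R(XY)$ and (b) $\ave{X'Z}\supseteq\ave{XZ}$; the two conclusions of the lemma are exactly (a) and (b) collected over all such $Z$. The base case $k=0$ is $Z=Y$, where (a) holds trivially and the hypothesis $\ave{XY}=\ave{X'Y}$ gives (b) with equality.

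For the inductive step assume (a) and (b) for $Y_{k-1}$, and let $\ave{Y_{k-1}Z}\setminus\ave{X'}\neq\emptyset$. To obtain (a), I would show that the prefixed edge $\{XY_{k-1},XZ\}$ exists. Choose a part $A_p\subseteq\ave{Y_{k-1}Z}\setminus\ave{X'}$. If $A_p\subseteq\ave{X}$, then $A_p\subseteq\ave{X}\cap\ave{Y_{k-1}}=\ave{XY_{k-1}}\subseteq\ave{X'Y_{k-1}}\subseteq\ave{X'}$ by (b) for $Y_{k-1}$, contradicting $A_p\cap\ave{X'}=\emptyset$; hence $A_p\subseteq\ave{Y_{k-1}Z}\setminus\ave{X}$, so $\ave{Y_{k-1}Z}\setminus\ave{X}\neq\emptyset$, and since $Y_{k-1}\in R(XY)$ we get $Z\in R(XY)$. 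For (b), take any part $A_q\subseteq\ave{XZ}$ (if $\ave{XZ}=\emptyset$ there is nothing to prove). Then $\ave{XZ}\neq\emptyset$ together with $Z\in R(XY)$ gives $Z\in R^*(XY)$, so speciality of $XY$ yields $\ave{XZ}=\ave{XY}=\ave{X'Y}\subseteq\ave{X'}$, whence $A_q\subseteq\ave{X'}\cap\ave{Z}=\ave{X'Z}$; as $A_q$ was arbitrary, $\ave{XZ}\subseteq\ave{X'Z}$.

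The delicate point is that (a) and (b) are genuinely interlocked and must be carried together in a single induction: establishing (a) for $Z$ relies on the inclusion (b) one step earlier (to rule out the witness part $A_p$ lying in $\ave{X}$), while establishing (b) for $Z$ relies on (a) for $Z$ to invoke speciality of $XY$, which is then combined with the starting hypothesis $\ave{XY}=\ave{X'Y}$. I expect this coupling, rather than any individual set computation, to be the main thing to get right.
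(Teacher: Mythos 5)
Your proof is correct and follows essentially the same route as the paper's: an induction along the prefixed path that simultaneously carries membership $Z\in R(XY)$ and the inclusion $\ave{X'Z}\supseteq\ave{XZ}$, using the inclusion at step $k-1$ to transfer the prefixed edge from $X'$ to $X$, and speciality of $XY$ plus $\ave{XY}=\ave{X'Y}\subseteq\ave{X'}$ to get the inclusion at step $k$. The only differences are cosmetic (arguing part-by-part and proving the inclusion directly rather than by contradiction, with the exclusion $Z\neq X$ implicit in $\ave{Y_{k-1}Z}\setminus\ave{X}\neq\emptyset$ where the paper checks it explicitly).
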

\begin{proof}
	We prove $Y_k \in R(XY)$ and $\ave{X'Y_k} \supseteq \ave{XY_k}$ by induction on the length $k$ of a path $(X'Y = X'Y_0, X'Y_1, \dots, X'Y_k)$.
	For $k = 0$, we have $Y_0 = Y \in R(XY)$ and $\ave{X'Y_0} \supseteq \ave{XY_0}$.
	For the induction step, suppose that $Y_{k} \in R(XY)$ and $\ave{X' Y_k} \supseteq \ave{X Y_k}$ for $k \geq 0$.
	Since a prefixed edge $\{X' Y_k, X' Y_{k+1}\}$ exists,
	we have $\ave{Y_k Y_{k+1}} \setminus \ave{X' Y_k} \neq \emptyset$.
	Then $Y_{k+1} \neq X$ holds.
	Indeed,
	if $Y_{k+1} = X$,
	then $\ave{X Y_k} \setminus \ave{X' Y_k} \neq \emptyset$,
	a contradiction to $\ave{X' Y_k} \supseteq \ave{X Y_k}$.
	By $\ave{X' Y_k} \supseteq \ave{X Y_k}$,
	we obtain $\ave{Y_k Y_{k+1}} \setminus \ave{X Y_k} \neq \emptyset$.
	Hence there is a prefixed edge $\{X Y_k, X Y_{k+1}\}$.
	This means $Y_{k+1} \in R(XY)$.
	
	Suppose, to the contrary, that $\ave{X' Y_{k+1}} \not\supseteq \ave{X Y_{k+1}}$, i.e.,
	$\ave{X Y_{k+1}} \setminus \ave{X' Y_{k+1}} \neq \emptyset$ holds.
	Note that $\ave{X Y_{k+1}} \setminus \ave{X' Y_{k+1}} = \ave{X Y_{k+1}} \setminus \ave{X X'}$ holds.
	Furthermore, by $\ave{XY} = \ave{X'Y}$,
	we obtain $\ave{XX'} \supseteq \ave{XY}$.
	Hence we have $\ave{X Y_{k+1}} \setminus \ave{XY} \neq \emptyset$.
	However, since $XY$ is special and $Y_{k+1} \in R(XY)$,
	it must hold that $\ave{XY_{k+1}} = \ave{XY}$ or $\ave{XY_{k+1}} = \emptyset$;
	this is a contradiction.
\end{proof}

\begin{proof}[Proof of \cref{prop:special connected component}]
	First we show the following three claims.
	\setcounter{thm1}{0}
	\begin{cl}\label{cl:empty}
		$R(XY) \cap R(YX) = \emptyset$.
	\end{cl}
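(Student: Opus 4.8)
The plan is to argue by contradiction: assume some $W$ lies in $R(XY)\cap R(YX)$ and extract a contradiction from the \emph{specialness} of $XY$ alone. First I would unpack the two memberships. From $W\in R(YX)$ I get a path $YX=YZ_0,YZ_1,\dots,YZ_l=YW$ in $G(\mathcal{F})$ using only prefixed edges, which I may take to be simple; thus $Z_0=X$, $Z_l=W$, the $Z_j$ are distinct, and each step satisfies the prefixed-edge condition $\ave{Z_{j-1}Z_j}\setminus\ave{Y}\neq\emptyset$. From $W\in R(XY)$ I know $Z_l=W\in R(XY)$, whereas $Z_0=X\notin R(XY)$ (the second coordinate of a node in $R(XY)$ is never $X$). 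Hence there is a smallest index $j\geq 1$ with $Z_j\in R(XY)$, and this is the index I would focus on.

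The key input is that specialness of $XY$ controls the whole component $R(XY)$, not merely a single node: for every $Z\in R(XY)$ one has $\ave{XZ}\subseteq\ave{XY}$, since $\ave{XZ}=\ave{XY}$ when $Z\in R^*(XY)$ and $\ave{XZ}=\emptyset$ otherwise. Applying this to $Z_j$ gives $\ave{XZ_j}\subseteq\ave{XY}=\ave{X}\cap\ave{Y}\subseteq\ave{Y}$. I would then split on whether $j=1$. If $j=1$, the step $Z_0Z_1$ is exactly $\ave{XZ_1}\setminus\ave{Y}\neq\emptyset$, which contradicts $\ave{XZ_1}\subseteq\ave{Y}$ outright. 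If $j\geq 2$, then $Z_{j-1}\neq X$, so $XZ_{j-1}$ is a genuine node; were $\ave{Z_{j-1}Z_j}\setminus\ave{X}\neq\emptyset$, the pair $\{XZ_{j-1},XZ_j\}$ would be a prefixed edge, forcing $Z_{j-1}\in R(XY)$ and contradicting the minimality of $j$. Therefore $\ave{Z_{j-1}Z_j}\subseteq\ave{X}$, and since also $\ave{Z_{j-1}Z_j}\subseteq\ave{Z_j}$, I obtain $\ave{Z_{j-1}Z_j}\subseteq\ave{X}\cap\ave{Z_j}=\ave{XZ_j}\subseteq\ave{Y}$, again contradicting the prefixed-edge condition $\ave{Z_{j-1}Z_j}\setminus\ave{Y}\neq\emptyset$.

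The one subtlety I would treat with care is the boundary between the two cases, namely the role of the endpoint $Z_0=X$: the ``push back one step along the path'' argument used to contradict the minimality of $j$ is legitimate only when $Z_{j-1}$ is an admissible second coordinate, which fails precisely at $Z_{j-1}=X$; this is why the case $j=1$ must be handled separately (where, conveniently, the contradiction is even more immediate). Ensuring the path can be taken simple, so that $Z_{j-1}=X$ occurs only at $j=1$, is the other point that needs checking, and I expect this bookkeeping to be the main (modest) obstacle. I note finally that this argument uses only the specialness of $XY$; the specialness of $YX$ that is assumed in \cref{prop:special connected component} is not needed for this claim and should instead enter the later claims.
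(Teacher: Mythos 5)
Your proof is correct and follows essentially the same route as the paper's: a minimal-counterexample descent along the prefixed-edge path from $YX$, using the specialness of $XY$ to force $\ave{XZ}\subseteq\ave{XY}\subseteq\ave{Y}$ for $Z\in R(XY)$ and transferring prefixed edges between the $Y$-centered and $X$-centered nodes (the paper minimizes the path length to an element of $R(XY)\cap R(YX)$, you minimize the index of the first $Z_j\in R(XY)$, which amounts to the same descent). Your closing observation that only the specialness of $XY$ is used also matches the paper's proof.
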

	\begin{proof}
		Suppose, to the contrary, that $R(XY) \cap R(YX) \neq \emptyset$.
		For each $Z \in R(XY)$,
		we have $\ave{XZ} \subseteq \ave{YZ}$
		since $\ave{XZ} = \ave{XY}$ or $\ave{XZ} = \emptyset$.
		
		Let $Z \in R(XY) \cap R(YX)$ be an element such that
		the length $k$ of a path $(YX, \dots, YZ)$ in $G(\mathcal{F})$ is shortest.
		If $k \geq 2$,
		there is a prefixed edge $\{ YZ_k, YZ_{k-1} \}$ and $Z_{k-1} \neq X$.
		That is $\ave{Z_k Z_{k-1}} \setminus \ave{YZ_k} \neq \emptyset$.
		By $\ave{XZ_k} \subseteq \ave{YZ_k}$,
		we obtain $\ave{Z_k Z_{k-1}} \setminus \ave{XZ_k} \neq \emptyset$.
		Hence a prefixed edge $\{XZ_{k}, XZ_{k-1} \}$ exists.
		This means $Z_{k-1} \in R(XY) \cap R(YX)$,
		which contradicts the minimality of $Z = Z_k$.
		Therefore a prefixed edge $\{YX, YZ\}$ exists for some $Z \in R(XY) \cap R(YX)$.
		That is, $\ave{XZ} \setminus \ave{XY} \neq \emptyset$.
		Hence we obtain $\emptyset \neq \ave{XZ} \neq \ave{XY}$.
		This contradicts the assumption that $XY$ is special.
	\end{proof}
	\begin{cl}\label{cl:one swapped =}
		For any $Y' \in R^*(XY)$,
		it holds that
		$R(YX) = R(Y'X)$, and $\ave{Y'Z} = \ave{YZ}$ for any $Z \in R(YX) = R(Y'X)$.
	\end{cl}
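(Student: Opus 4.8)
The plan is to prove the two inclusions $R(Y'X)\subseteq R(YX)$ and $R(YX)\subseteq R(Y'X)$ separately, and to obtain the cutting-support equality $\ave{Y'Z}=\ave{YZ}$ as a byproduct, using \cref{lem:special} as the main tool. First I would record the elementary reduction. Since $\ave{\cdot}$ is symmetric in its two arguments, $\ave{XY}=\ave{YX}=:C$. Because $XY$ is special and $Y'\in R^*(XY)$, the definition of specialness gives $\ave{XY'}=C$, and hence $\ave{Y'X}=\ave{X}\cap\ave{Y'}=\ave{XY'}=C=\ave{YX}$. Moreover, prefixed edges form an undirected graph, so within the subgraph of nodes with first coordinate $X$ the reachability ``$\,\cdot\in R(XY)$'' is symmetric; from $Y'\in R(XY)$ I obtain $R(XY')=R(XY)$ and $Y\in R^*(XY')$.

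Second, one inclusion is immediate. Applying \cref{lem:special} to the special node $YX$ with the replacement $Y\mapsto Y'$ --- legitimate because $\ave{YX}=\ave{Y'X}$ --- yields $R(Y'X)\subseteq R(YX)$ together with $\ave{Y'Z}\supseteq\ave{YZ}$ for every $Z\in R(Y'X)$. If I can symmetrically apply \cref{lem:special} to the node $Y'X$ with replacement $Y'\mapsto Y$, I will get the reverse inclusion $R(YX)\subseteq R(Y'X)$ and the reverse bound $\ave{YZ}\supseteq\ave{Y'Z}$; combining the two bounds forces $\ave{YZ}=\ave{Y'Z}$ on the common set $R(YX)=R(Y'X)$, which is exactly the claim. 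The whole argument thus reduces to the single point that $Y'X$ is itself special.

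The third and decisive step is to prove that $Y'X$ is special. The plan is a simultaneous induction along prefixed-edge paths emanating from $X$, run with first coordinate $Y$ and with first coordinate $Y'$, proving $Z\in R(YX)\Leftrightarrow Z\in R(Y'X)$ and $\ave{YZ}=\ave{Y'Z}$ step by step. The base case $Z=X$ holds by $\ave{YX}=\ave{Y'X}=C$. For the inductive step, if $\ave{YZ}=\ave{Y'Z}$ then for any $W$ one has $\ave{ZW}\setminus\ave{Y}=\ave{W}\cap(\ave{Z}\setminus\ave{YZ})=\ave{W}\cap(\ave{Z}\setminus\ave{Y'Z})=\ave{ZW}\setminus\ave{Y'}$, so a prefixed edge $\{YZ,YW\}$ exists if and only if $\{Y'Z,Y'W\}$ does, which transfers reachability between the two worlds. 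For the support equality at the new node $W$, specialness of $YX$ pins $\ave{YW}\in\{\emptyset,C\}$, and the $\supseteq$ bound from \cref{lem:special} settles the case $\ave{YW}=C$; the remaining case $\ave{YW}=\emptyset$ is where one must exclude $\ave{Y'W}=C$, which I would handle by a shortest-path minimal-counterexample argument paralleling the proof of \cref{lem:special}, using \cref{cl:empty} to keep the two worlds from colliding. This simultaneously gives $\ave{Y'W}=\ave{YW}$ and shows that no node $Y'Z$ has cutting support strictly between $\emptyset$ and $C$, i.e.\ that $Y'X$ is special.

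The main obstacle is precisely this last upgrade: passing from the one-sided inequality $\ave{Y'Z}\supseteq\ave{YZ}$ (supplied for free by \cref{lem:special}) to the equality $\ave{Y'Z}=\ave{YZ}$, equivalently the inheritance of specialness by $Y'X$. The subtlety is that $\ave{Y}$ and $\ave{Y'}$ are known to agree only on their overlap with $\ave{X}$ (both equal $C$ there), so the equality of cutting supports cannot be read off globally and must be propagated one prefixed edge at a time; the blocks that newly enter $\ave{W}\setminus\ave{Z}$ along a step are controlled solely through specialness of $YX$ and the minimal-counterexample device borrowed from \cref{lem:special}.
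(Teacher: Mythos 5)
Your proposal is correct and essentially reproduces the paper's argument: one inclusion and the bound $\ave{Y'Z}\supseteq\ave{YZ}$ come from \cref{lem:special} applied to the special node $YX$ (legitimate since $\ave{YX}=\ave{Y'X}$), and the reverse inclusion together with $\ave{Y'Z}\subseteq\ave{YZ}$ is obtained by induction along prefixed-edge paths, where the decisive exclusion is exactly the \cref{cl:empty} device you sketch --- if $\ave{Y'W}\setminus\ave{YW}\neq\emptyset$ for some $W\in R(YX)$, then $\{YW,YY'\}$ is a prefixed edge, forcing $Y'\in R(YX)\cap R(XY)=\emptyset$. The only slip is your remark that the $\supseteq$ bound from \cref{lem:special} settles the case $\ave{YW}=C$: it gives the wrong inclusion there, and that case requires the same \cref{cl:empty} argument as the case $\ave{YW}=\emptyset$, so no new idea is needed and the plan goes through.
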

	\begin{proof}
		If $R^*(XY) = \{Y\}$,
		the proof is trivial.
		Suppose $R^*(XY) \setminus \{Y\} \neq \emptyset$.
		Take any $Y' \in R^*(XY) \setminus \{Y\}$.
		Then there is a swapped edge $\{ XY', Y'X \}$.
		Therefore, for all $Z \in R(Y'X)$, $XY$ and $Y'Z$ are connected.
		Since $XY$ is special, it holds that $\ave{YX} = \ave{Y'X}$.
		Since $YX$ is special and $\ave{YX} = \ave{Y'X}$,
		by \cref{lem:special},
		we have $R(Y'X) \subseteq R(YX)$ and $\ave{Y'Z} \supseteq \ave{YZ}$ for all $Z \in R(Y'X)$.
		
		In the following,
		we prove that, for each $Z \in R(YX)$, it holds that $Z \in R(Y'X)$ and $\ave{Y'Z} \subseteq \ave{YZ}$,
		which imply $R(Y'X) = R(YX)$ and $\ave{Y'Z} = \ave{YZ}$.
		We show this by induction on the length of a path $(YX = YX_0, YX_1, \dots, YX_{k+1} = YZ)$.
		For $X_0$,
		we have $R(Y'X) \ni X = X_0$ and $\ave{YX_0} = \ave{Y'X_0}$.
		Suppose $R(Y'X) \ni X_{k}$ and $\ave{YX_k} \supseteq \ave{Y'X_k}$ by induction.
		Since a prefixed edge $\{YX_k, YX_{k+1}\}$ exists,
		we have $\ave{X_k X_{k+1}} \setminus \ave{YX_k} \neq \emptyset$.
		By $\ave{YX_k} \supseteq \ave{Y'X_k}$,
		we obtain $\ave{X_k X_{k+1}} \setminus \ave{Y' X_k} \neq \emptyset$.
		Hence there is a prefixed edge $\{Y' X_k, Y' X_{k+1}\}$.
		This means $R(Y'X) \ni X_{k+1} = Z$.
		
		Suppose, to the contrary, that $\ave{YX_{k+1}} \not\supseteq \ave{Y'X_{k+1}}$, i.e.,
		$\ave{Y'X_{k+1}} \setminus \ave{YX_{k+1}} \neq \emptyset$ holds.
		Then there is a prefixed edge $\{ YX_{k+1}, YY' \}$.
		Hence we have $R(YX) \ni Y'$.
		However this contradicts $R(YX) \not\ni Y'$
		by \cref{cl:empty} and $R(XY) \ni Y'$.
		Therefore we obtain $\ave{YX_{k+1}} \supseteq \ave{Y' X_{k+1}}$.
	\end{proof}
	
	\begin{cl}\label{cl:using one swapped}
		For $ZW \in v(XY)$,
		there is a path from $XY$ or $YX$ to $ZW$ containing at most one swapped edge.
	\end{cl}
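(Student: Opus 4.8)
The plan is to argue by induction on the minimum number $m$ of swapped edges occurring in any path of $G(\mathcal{F})$ from $XY$ or from $YX$ to the target node $ZW$; this minimum is well defined because $ZW\in v(XY)$. If $m\le 1$ there is nothing to do, so the whole content lies in a reduction step that lowers $m$ whenever $m\ge 2$. The structural fact I would exploit is that a prefixed edge fixes the first coordinate of a node while a swapped edge interchanges the two coordinates, so every path breaks into maximal runs of prefixed edges separated by single swapped edges. Concretely, a path leaving $XY$ visits in order $XY^{(1)}$ (reached by prefixed edges, so $Y^{(1)}\in R(XY)$), then $Y^{(1)}X$ (first swap, which forces $Y^{(1)}\in R^*(XY)$), then $Y^{(1)}X^{(1)}$ (prefixed, so $X^{(1)}\in R(Y^{(1)}X)$), then $X^{(1)}Y^{(1)}$ (second swap), and so on.

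For the reduction, I would assume without loss of generality that a path attaining $m$ swaps starts at $XY$ (renaming $X\leftrightarrow Y$ if the minimizer starts at $YX$, which is legitimate since both $XY$ and $YX$ are special), and examine its first two swapped edges as above. Applying \cref{cl:one swapped =} with $Y^{(1)}\in R^*(XY)$ gives $R(Y^{(1)}X)=R(YX)$ and $\ave{Y^{(1)}Z}=\ave{YZ}$ for all $Z\in R(YX)$; hence $X^{(1)}\in R(YX)$, and since the second swap needs $\ave{Y^{(1)}X^{(1)}}=\ave{YX^{(1)}}\neq\emptyset$, in fact $X^{(1)}\in R^*(YX)$. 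I would then reroute from $YX$: prefixed edges reach $YX^{(1)}$ (valid as $X^{(1)}\in R(YX)$), one swap reaches $X^{(1)}Y$ (valid as $X^{(1)}\in R^*(YX)$), and prefixed edges reach $X^{(1)}Y^{(1)}$, this last hop being valid because \cref{cl:one swapped =} with the roles of $X,Y$ exchanged gives $R(X^{(1)}Y)=R(XY)\ni Y^{(1)}$. This rerouted path from $YX$ to $X^{(1)}Y^{(1)}$ uses exactly one swapped edge, whereas the original needed two; concatenating it with the tail of the original path from $X^{(1)}Y^{(1)}$ to $ZW$ (which carries $m-2$ swaps) yields a walk from $YX$ to $ZW$ with $1+(m-2)=m-1$ swapped edges. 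This contradicts the minimality of $m$, so $m\le 1$, which is exactly the assertion.

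I expect the only delicate point to be the bookkeeping that keeps the rerouted segment a genuine path of $G(\mathcal{F})$: each of its three hops must correspond to an actual prefixed or swapped edge, and checking this is precisely where the two identities of \cref{cl:one swapped =} — equality of the reachable sets $R$ and invariance of the cutting supports $\ave{\cdot}$ when $Y^{(1)}$ is replaced by $Y$ and $X^{(1)}$ by $X$ — are indispensable, in particular to upgrade $X^{(1)}\in R(YX)$ to $X^{(1)}\in R^*(YX)$ so that the single swap on the rerouted path is actually available. All remaining verifications are immediate from the definitions of $R$, $R^*$, and of the two edge types, so no substantial calculation is required beyond invoking \cref{cl:one swapped =}.
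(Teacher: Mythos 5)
Your argument is correct and is essentially the paper's own proof: both take a swap-minimizing path, isolate the first two swapped edges, invoke \cref{cl:one swapped =} to show the intermediate first coordinate $X^{(1)}$ lies in $R^*(YX)$ and that $R(X^{(1)}Y)=R(XY)$, and then reroute from $YX$ through $X^{(1)}Y^{(1)}$ with a single swap to contradict minimality. The only cosmetic difference is that you phrase the reduction as induction/contradiction on $m$ while the paper argues directly on a minimal counterexample path.
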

	\begin{proof}
		Suppose, to the contrary, that, for some $ZW$, all paths from $XY$ to $ZW$ and from $YX$ to $ZW$ use at least two swapped edges.
		Take such a path $P$ with a minimum number of swapped edges.
		Denote the number of swapped edges in $P$ by $k (\geq 2)$.
		Without loss of generality, we assume that $P$ is a path from $XY$ to $ZW$.
		By $k \geq 2$,
		$P$ has a subpath $(XY = X_0Y_0, \dots, X_0Y_1, Y_1X_0, \dots, Y_1X_1, X_1Y_1)$.
		Note that $Y_1 \in R^*(XY)$, and $X_1 \in R^*(Y_1X) = R^*(YX)$ by \cref{cl:one swapped =}.
		Hence there is a path from $YX$ to $X_1Y_1$ using only one swapped edge.
		Indeed, $(YX = Y_0X_0, \dots, Y_0X_1, X_1Y_0, \dots, X_1Y_1)$ is such a path.
		This means that there is a path from $YX$ to $ZW$ with $k-1$ swapped edges,
		a contradiction to the minimality of $P$.
	\end{proof}
	
	We are now ready to show the statement of \cref{prop:special connected component}~(1).
	If $Z \in R^*(XY)$ and $W \in R(YX)$,
	then there is a path in $G(\mathcal{F})$ such as $(XY, \dots, XZ, ZX, \dots, ZW)$ since $R(YX) = R(ZX)$ by \cref{cl:one swapped =}, implying $ZW \in v(XY)$.
	Conversely, if $ZW \in v(XY)$,
	then there is a path from $XY$ or $YX$ to $ZW$ with at most one swapped edge by \cref{cl:using one swapped}.
	We may assume that there is such a path $P$ from $XY$ to $ZW$.
	If $P$ has no swapped edge,
	then $Z = X \in R^*(YX)$ and $W \in R(XY)$ hold.
	If $P$ has exactly one swapped edge,
	then $Z \in R^*(XY)$ and $W \in R(ZX) = R(YX)$ by \cref{cl:one swapped =}.
	Thus we obtain \cref{prop:special connected component}~(1).
	
	Next we show \cref{prop:special connected component}~(3).
	If $ZW \in v(XY)$,
	then there is a path from $XY$ or $YX$ to $ZW$ with at most one swapped edge by \cref{cl:using one swapped}.
	We may assume that there is such a path $P$ from $XY$ to $ZW$.
	If $P$ has no swapped edge,
	then $\ave{ZW} = \ave{XY}$ or $\ave{ZW} = \emptyset$ holds since $XY$ is special.
	If $P$ has one swapped edge,
	then $\ave{ZW} = \ave{YW}$ holds by \cref{cl:one swapped =}
	and $\ave{YW} = \ave{XY}$ or $\ave{YW} = \emptyset$ holds since $YX$ is special.
	Therefore, if $ZW \in v^*(XY)$, then $\ave{ZW} = \ave{XY}$,
	and $ZW$ is obviously special.
	Thus we obtain \cref{prop:special connected component}~(3).
	
	Finally we show \cref{prop:special connected component}~(2).
	For every $Z \in R^*(XY)$ and $W \in R^*(YX)$,
	we have $\ave{Z} \supseteq \ave{XY} \subseteq \ave{W}$ by (3).
	Hence $\ave{ZW} \neq \emptyset$, implying $ZW \in v^*(XY)$.
	Conversely,
	let $ZW \in v^*(XY)$.
	By \cref{prop:special connected component}~(1),
	we may assume $Z \in R^*(XY)$ and $W \in R(YX)$.
	Since $\ave{ZW} \neq \emptyset$,
	$\ave{ZW} = \ave{XY}$ holds by \cref{prop:special connected component}~(3).
	Hence $\ave{W} \cap \ave{Y} \supseteq \ave{XY} \neq \emptyset$.
	This means $W \in R^*(YX)$.
\end{proof}

\begin{proof}[Proof of \cref{prop:set of special nodes}]
	Let $v(X_1X_2)$ be a special connected component with $\ave{X_1X_2} = A_Q$.
	Take any special connected component $v(Y_1Y_2)$ with $\ave{Y_1Y_2} = A_Q$.
	It suffices to show that, (i) if $R(X_1X_2) \cap R(Y_1Y_2) \neq \emptyset$, we have $R(X_1X_2) = R(Y_1Y_2)$ (this implies $v(Y_1Y_2) = v(Y_1X_2)$ by $Y_2 \in R^*(X_1X_2)$),
	and (ii) if $R(X_1X_2) \cap R(Y_1Y_2) = \emptyset$, there exists a special connected component $v(X_2Y_2)$ with $\ave{X_2Y_2} = A_Q$, $R(X_2Y_2) = R(Y_1Y_2)$, and $R(Y_2X_2) = R(X_1X_2)$.
	
	(i).
	If there exists $Z \in R^*(X_1X_2) \cap R^*(Y_1Y_2)$,
	then $X_1Z$ and $Y_1Z$ are special and $\ave{X_1Z} = \ave{Y_1Z} (= A_Q)$.
	Hence, by \cref{lem:special},
	we have $R(X_1Z) \subseteq R(Y_1Z)$ and $R(X_1Z) \supseteq R(Y_1Z)$,
	i.e., $R(X_1Z) = R(Y_1Z)$.
	This implies $R(X_1X_2) = R(X_1Z) = R(Y_1Z) = R(Y_1Y_2)$, as required.
	Thus, in the following, we show that there exists $Z \in R^*(X_1X_2) \cap R^*(Y_1Y_2)$.
	
	Suppose, to the contrary, $R^*(X_1X_2) \cap R^*(Y_1Y_2) = \emptyset$.
	Note that $R^*(X_1X_2) \cap R^*(Y_1Y_2) = \emptyset$ implies $R^*(X_1X_2) \cap R(Y_1Y_2) = R(X_1X_2) \cap R^*(Y_1Y_2) = \emptyset$.
	Indeed, each $Z \in R^*(X_1X_2) \cap R(Y_1Y_2)$ satisfies $Z \supseteq A_Q$ by $Z \in R^*(X_1X_2)$.
	Hence $Z \in R^*(Y_1Y_2)$ holds by $Z \in R(Y_1Y_2)$ and $\ave{Y_1Z} \neq \emptyset$.
	Let $Z \in R(X_1X_2) \cap R(Y_1Y_2) = (R(X_1X_2) \cap R(Y_1Y_2)) \setminus (R^*(X_1X_2) \cup R^*(Y_1Y_2))$ be an element such that
	the length of a path $(X_1X_2 = X_1Z_0, X_1Z_1, \dots, X_1Z_k = X_1Z)$ is shortest; by the assumption, $k \geq 1$.
	Since a prefixed edge $\{X_1Z_k, X_1Z_{k-1}\}$ exists,
	we have $\ave{Z_k Z_{k-1}} \setminus \ave{X_1Z_k} \neq \emptyset$.
	Furthermore, by $\ave{X_1Z_k} = \ave{Y_1Z_k} = \emptyset$,
	we obtain $\ave{Z_k Z_{k-1}} \setminus \ave{Y_1Z_k} \neq \emptyset$.
	This means that a prefixed edge $\{Y_1Z_k, Y_1Z_{k-1}\}$ exists and $Z_{k-1} \in R(X_1X_2) \cap R(Y_1Y_2)$ holds, a contradiction to the minimality of $k$.
	
	(ii).
	First we show that $\ave{X_2'Y_2'} = A_Q$ or $\ave{X_2'Y_2'} = \emptyset$ holds for any $X_2' \in R(X_1X_2)$ and $Y_2' \in R(Y_1Y_2)$.
	Since, for any $Z \in R(X_1X_2) \cup R(Y_1Y_2)$, $\ave{Z} \supseteq A_Q$ or $\ave{Z} \cap A_Q = \emptyset$ holds by \cref{prop:special connected component}~(3),
	we have $\ave{X_2'Y_2'} \supseteq A_Q$ or $\ave{X_2'Y_2'} \cap A_Q = \emptyset$ for each $X_2' \in R(X_1X_2)$ and $Y_2' \in R(Y_1Y_2)$ with $\ave{X_2'Y_2'} \neq \emptyset$.
	Suppose, to the contrary, that there exist $X_2' \in R(X_1X_2)$ and $Y_2' \in R(Y_1Y_2)$ with $\emptyset \neq \ave{X_2'Y_2'} \neq A_Q$.
	Then $\ave{X_2'Y_2'} \supsetneq A_Q$ or $\ave{X_2'Y_2'} \cap A_Q = \emptyset$ holds.
	Hence we have $\ave{X_2'Y_2'} \setminus \ave{X_1X_2'} \neq \emptyset$ by $\ave{X_1X_2'} = A_Q$ or $\ave{X_1X_2'} = \emptyset$.
	This means that there is a prefixed edge $\{X_1X_2', X_1Y_2'\}$
	and $R(X_1X_2) \cap R(Y_1Y_2) \neq \emptyset$ holds, a contradiction.
	
	By $\ave{X_2} \supseteq A_Q \subseteq \ave{Y_2}$, we have $\ave{X_2Y_2} \neq \emptyset$.
	Hence, by the above argument, we obtain $\ave{X_2Y_2} = A_Q$.
	Furthermore $Y_1Y_2$ is special and $\ave{Y_1Y_2} = \ave{X_2Y_2}$ holds.
	By \cref{lem:special},
	we obtain $R(X_2Y_2) \subseteq R(Y_1Y_2)$.
	By $\ave{X_2Z} = A_Q$ or $\ave{X_2Z} = \emptyset$ for every $Z \in R(X_2Y_2) \subseteq R(Y_1Y_2)$,
	it holds that $X_2Y_2$ is special.
	Furthermore, since $X_2Y_2$ is special,
	we also obtain $R(X_2Y_2) \supseteq R(Y_1Y_2)$ by \cref{lem:special}.
	Hence $R(X_2Y_2) = R(Y_1Y_2)$ holds.
	By a similar argument,
	$Y_2X_2$ is special and
	$R(Y_2X_2) = R(X_1X_2)$ holds.
	Thus, a special component $v(X_2Y_2)$ with $\ave{X_2Y_2} = A_Q$ exists.
\end{proof}

\section*{Acknowledgments}
We thank the referees for helpful comments.
The first author's research was partially supported by JSPS KAKENHI Grant Numbers 25280004,
26330023, 17K00029.
The second author's research was supported by JSPS Research Fellowship for Young Scientists.
The third author's research was supported by The Mitsubishi Foundation,
CREST, JST, Grant Number JPMJCR14D2, Japan, and
JSPS KAKENHI Grant Number 26280004.
The last author's research was supported by a Royal Society University 
Research Fellowship. This project has received funding from the European 
Research Council (ERC) under the European Union's Horizon 2020 research 
and innovation programme (grant agreement No 714532). The paper reflects 
only the authors' views and not the views of the ERC or the European 
Commission. The European Union is not liable for any use that may be 
made of the information contained therein.

\end{document}